\documentclass[11pt]{article}

\usepackage{amsthm}
\usepackage{graphicx} % support the \includegraphics command and options
\usepackage{array} % for better arrays (eg matrices) in maths

\usepackage{amsmath, amssymb, amsfonts, verbatim}
\usepackage{hyphenat,epsfig,subcaption,multirow}
\usepackage{nicefrac}
\usepackage{paralist}

\usepackage[font=footnotesize,labelfont=bf]{caption}

\usepackage[usenames,dvipsnames]{xcolor}
\usepackage[ruled]{algorithm2e}

\DeclareFontFamily{U}{mathx}{\hyphenchar\font45}
\DeclareFontShape{U}{mathx}{m}{n}{
      <5> <6> <7> <8> <9> <10>
      <10.95> <12> <14.4> <17.28> <20.74> <24.88>
      mathx10
      }{}
\DeclareSymbolFont{mathx}{U}{mathx}{m}{n}
\DeclareMathSymbol{\bigtimes}{1}{mathx}{"91}

\usepackage{tcolorbox}
\tcbuselibrary{skins,breakable}
\tcbset{enhanced jigsaw}

\usepackage[normalem]{ulem}
\usepackage[compact]{titlesec}

\definecolor{DarkRed}{rgb}{0.5,0.1,0.1}
\definecolor{DarkBlue}{rgb}{0.1,0.1,0.5}

\usepackage{nameref}
\definecolor{ForestGreen}{rgb}{0.1333,0.5451,0.1333}

\definecolor{Red}{rgb}{0.9,0,0}
\usepackage[linktocpage=true,
	pagebackref=true,colorlinks,
	linkcolor=DarkRed,citecolor=ForestGreen,
	bookmarks,bookmarksopen,bookmarksnumbered]
	{hyperref}
\usepackage[noabbrev,nameinlink]{cleveref}
\crefname{property}{property}{Property}
\creflabelformat{property}{(#1)#2#3}
\crefname{equation}{eq}{Eq}
\creflabelformat{equation}{(#1)#2#3}

\usepackage{bm}
\usepackage{url}
\usepackage{xspace}
\usepackage[mathscr]{euscript}

%%%% TIKZ AND STYLE DEFINITIONS %%%%%%%%%%%
\usepackage{tikz}
\usetikzlibrary{arrows}
\usetikzlibrary{arrows.meta}
\usetikzlibrary{shapes}
\usetikzlibrary{backgrounds}
\usetikzlibrary{positioning}
\usetikzlibrary{decorations.markings}
\usetikzlibrary{patterns}
\usetikzlibrary{calc}
\usetikzlibrary{fit}
\usetikzlibrary{snakes}
\tikzset{vertex/.style={circle, black, fill=Yellow, line width=1pt, draw, minimum width=8pt, minimum height=8pt, inner sep=0pt}}
	
%%%% TIKZ AND STYLE DEFINITIONS %%%%%%%%%%%
	
\usepackage{mdframed}

\usepackage[noend]{algpseudocode}
\makeatletter
\def\BState{\State\hskip-\ALG@thistlm}
\makeatother

\usepackage{cite}
\usepackage{enumitem}

\usepackage[margin=1in]{geometry}

\newtheorem{theorem}{Theorem}
\newtheorem{lemma}{Lemma}[section]
\newtheorem{proposition}[lemma]{Proposition}
\newtheorem{corollary}[lemma]{Corollary}
\newtheorem{claim}[lemma]{Claim}
\newtheorem{fact}[lemma]{Fact}

\newtheorem{definition}[lemma]{Definition}

\newtheorem*{claim*}{Claim}
\newtheorem*{theorem*}{Theorem}
\newtheorem*{proposition*}{Proposition}
\newtheorem*{lemma*}{Lemma}
\newtheorem*{problem*}{Problem}

\crefname{lemma}{Lemma}{Lemmas}
\crefname{claim}{Claim}{Claims}

\newtheorem{mdresult}{Result}
\newenvironment{result}{\begin{mdframed}[backgroundcolor=lightgray!40,topline=false,rightline=false,leftline=false,bottomline=false,innertopmargin=2pt, innerleftmargin=10pt]\begin{mdresult}}{\end{mdresult}\end{mdframed}}

\newtheorem{remark}[lemma]{Remark}
\newtheorem*{remark*}{Remark}

\newtheoremstyle{restate}{}{}{\itshape}{}{\bfseries}{~(restated).}{.5em}{\thmnote{#3}}
\theoremstyle{restate}

\theoremstyle{definition}
\newtheorem{mdalg}{algorithm}

\newtheorem{mddist}{Distribution}
\newenvironment{Distribution}{\begin{tbox}\begin{mddist}}{\end{mddist}\end{tbox}}

\allowdisplaybreaks

\DeclareMathOperator*{\argmax}{arg\,max}

\renewcommand{\qed}{\nobreak \ifvmode \relax \else
      \ifdim\lastskip<1.5em \hskip-\lastskip
      \hskip1.5em plus0em minus0.5em \fi \nobreak
      \vrule height0.75em width0.5em depth0.25em\fi}

\newcommand{\Qed}[1]{\rlap{\qed$_{\textnormal{~~~\Cref{#1}}}$}}

\setlength{\parskip}{3pt}

% General:

\newcommand{\tvd}[2]{\ensuremath{\norm{#1 - #2}_{\mathrm{tvd}}}}
\newcommand{\Ot}{\ensuremath{\widetilde{O}}}
\newcommand{\eps}{\ensuremath{\varepsilon}}

\newcommand{\Bracket}[1]{\Big[#1\Big]}
\newcommand{\bracket}[1]{\left[#1\right]}
\newcommand{\paren}[1]{\ensuremath{\left(#1\right)}\xspace}
\newcommand{\card}[1]{\left\vert{#1}\right\vert}

\newcommand{\IN}{\ensuremath{\mathbb{N}}}

\newcommand{\norm}[1]{\ensuremath{\|#1\|}}

\newcommand{\conc}{\ensuremath{\,||\,}}

\newcommand{\ceil}[1]{{\left\lceil{#1}\right\rceil}}

\newcommand{\set}[1]{\ensuremath{\left\{ #1 \right\}}}
\newcommand{\poly}{\mbox{\rm poly}}
\newcommand{\polylog}{\mbox{\rm  polylog}}

\newcommand{\alg}{\ensuremath{\mathcal{A}}\xspace}

\DeclareMathOperator*{\Exp}{\ensuremath{{\mathbb{E}}}}
\DeclareMathOperator*{\Prob}{\ensuremath{\textnormal{Pr}}}
\renewcommand{\Pr}{\Prob}

\newcommand{\Ex}{\Exp}

% Table: 
\newenvironment{tbox}{\begin{tcolorbox}[
		enlarge top by=5pt,
		enlarge bottom by=5pt,
		 breakable,
		 boxsep=2pt,
                  left=5pt,
                  right=7pt,
                  top=10pt,
                  arc=0pt,
                  boxrule=1pt,toprule=1pt,
                  colback=white
                  ]%%
	}
{\end{tcolorbox}}

% Probabilistic notation:  
\newcommand{\event}{\ensuremath{\mathcal{E}}}

\newcommand{\rv}[1]{\ensuremath{{\mathsf{#1}}}\xspace}
\newcommand{\rA}{\rv{A}}
\newcommand{\rB}{\rv{B}}
\newcommand{\rC}{\rv{C}}
\newcommand{\rD}{\rv{D}}

\newcommand{\rY}{\rv{Y}}
\newcommand{\supp}[1]{\ensuremath{\textnormal{\text{supp}}(#1)}}
\newcommand{\distribution}[1]{\ensuremath{\textnormal{dist}(#1)}\xspace}

\newcommand{\kl}[2]{\ensuremath{\mathbb{D}(#1~||~#2)}}
\newcommand{\II}{\ensuremath{\mathbb{I}}}
\newcommand{\HH}{\ensuremath{\mathbb{H}}}
\newcommand{\mi}[2]{\ensuremath{\def\mione{#1}\def\mitwo{#2}\mireal}}
\newcommand{\mireal}[1][]{
  \ifx\relax#1\relax%
    \II(\mione \,; \mitwo)%
  \else%
    \II(\mione \,; \mitwo\mid #1)%
  \fi
}
\newcommand{\en}[1]{\ensuremath{\HH(#1)}}

\newcommand{\itfacts}[1]{\Cref{fact:it-facts}-(\ref{part:#1})\xspace}

% Paper: 

% \newcommand{\Itvd}{I_{\mathrm{tvd}}}

\newcommand{\PP}[1]{{P}^{#1}}

\newcommand{\prot}{\pi}
\newcommand{\Prot}{\Pi}

\newcommand{\distNGC}{\ensuremath{\mu_{\textnormal{\textsf{NGC}}}}\xspace}
\newcommand{\PC}{\textnormal{\textbf{PC}}\xspace}

\newcommand{\distPC}{\ensuremath{\mathcal{\mu}_{\textsf{PC}}}}

\newcommand{\fxor}{\ensuremath{f^{\oplus \ell}}}

\newcommand{\bias}{\ensuremath{\mathrm{bias}}}

\newcommand{\rX}{\rv{X}}
\newcommand{\rZ}{\rv{Z}}
\newcommand{\rM}{\rv{M}}

\newcommand{\XOR}{\ensuremath{\textnormal{XOR}}\xspace}

\title{Graph Streaming Lower Bounds for Parameter Estimation and Property Testing  via a  Streaming XOR Lemma}
\author{Sepehr Assadi\footnote{(sepehr.assadi@rutgers.edu) Department of Computer Science, Rutgers University.  Supported in part by the NSF CAREER Grant CCF-2047061, and a gift from Google Research.} \and 
Vishvajeet N\footnote{(vishva.jeet@rutgers.edu) Department of Computer Science, Rutgers University. Supported in part by the NSF grant CCF-1814409.}
}

\date{}

\begin{document}
\maketitle

\pagenumbering{roman}

% !TeX root = main.tex 
%!TEX root = main.tex
\begin{abstract}
	
	We study \textbf{space-pass tradeoffs} in graph streaming algorithms for parameter estimation and property testing problems such as 
	estimating the size of maximum matchings and maximum cuts,  weight of minimum spanning trees, or testing if a graph is connected or cycle-free versus being far from these properties.  
	We develop a new lower bound technique that proves that for many problems of interest, including all the above, obtaining a $(1+\eps)$-approximation requires either $n^{\Omega(1)}$ space or $\Omega(1/\eps)$ passes, 
	even on highly restricted families of graphs such as bounded-degree planar graphs. For multiple of these problems, this bound matches those of existing algorithms and is thus (asymptotically) optimal. 
	
	\medskip
	
	Our results considerably strengthen prior lower bounds even for arbitrary graphs: starting from the influential work of [Verbin, Yu; SODA 2011], there has been a plethora of lower bounds
	for single-pass algorithms for these problems; however, the only multi-pass lower bounds proven very recently in [Assadi, Kol, Saxena, Yu; FOCS 2020]  rules out  sublinear-space algorithms
	 with exponentially smaller $o(\log{(1/\eps)})$ passes for these problems. 
	 
	 \medskip
	 
	 One key ingredient of our proofs is a simple \textbf{streaming XOR Lemma}, a generic hardness amplification result, that we prove: informally speaking, if a 
	 $p$-pass $s$-space streaming algorithm can only solve a decision problem with advantage $\delta > 0$ over random guessing, then it cannot solve XOR of $\ell$ independent copies of the problem with advantage much better than $\delta^{\ell}$. 
	This result can be of independent interest and useful for other streaming lower bounds as well.

\end{abstract}

\clearpage

\setcounter{tocdepth}{3}
\tableofcontents

\clearpage

\pagenumbering{arabic}
\setcounter{page}{1}

% !TeX root = main.tex 
%!TEX root = main.tex

\newcommand{\NGC}{\textnormal{\textbf{NGC}}\xspace}

\section{Introduction}\label{sec:intro}

Consider an $n$-vertex undirected graph $G=(V,E)$ whose edges are arriving one by one in a stream. Suppose we want to process $G$ with a streaming algorithm using small space (e.g., $\polylog{(n)}$ bits), and in a few passes (e.g., a small constant). 
 How well can we \emph{estimate}  parameters of $G$ such as size of maximum cuts and maximum matchings, weight of minimum spanning trees, or number of short cycles? 
How well can we perform \emph{property testing} on $G$, say, decide whether it is connected or cycle-free versus being far from having these properties? 
These questions are highly motivated by the growing need in processing massive graphs and have witnessed a flurry of results in recent years: see, e.g.,~\cite{KoganK15,KapralovKS15,KapralovKSV17,BhaskaraDV18,KapralovK19} on maximum cut, \cite{AssadiKL17,EsfandiariHLMO15,ChitnisCEHMMV16,McGregorV16,McGregorV18,CormodeJMM17,KapralovKS14} on maximum matching size,~\cite{Bar-YossefKS02,BravermanOV13,CormodeJ17,McGregorVV16,BeraC17,BulteauFKP16,KallaugherMPV19}
on subgraph counting,~\cite{GuruswamiVV17,GuruswamiT19,ChouGV20} on CSPs, and~\cite{HuangP16,MonemizadehMPS17,PengS18,CzumajFPS19} on property testing, among others (see also~\cite{VerbinY11,CzumajFPS19,AssadiKSY20} for a 
more detailed discussion of this line of work). 

Despite this extensive attention, the answer to these questions have remained elusive; except for a handful of problems and almost exclusively for {single-pass} algorithms, we have not yet found the ``right'' answers. 
For instance, consider property testing of connectivity: given a  {sparse} graph $G$ and a constant $\eps > 0$, find if $G$ is connected or requires at least $\eps \cdot n$ more edges to become so.  
Huang and Peng~\cite{HuangP16} proved that for {single-pass} algorithms, $n^{1-\Theta(\eps)}$ space is sufficient and necessary for this problem. But until very recently, it was even open if one could solve this problem 
in  $O(\log{n})$ space and two passes. This question was partially addressed by the first author, Kol, Saxena, and Yu~\cite{AssadiKSY20} who proved that any algorithm for this problem requires $n^{\Omega(1)}$ space or 
$\Omega(\log{(1/\eps)})$ passes. But this is still  far from the only known upper bound of $\polylog{(n)}$ space and $O(1/\eps)$ passes obtained via a streaming implementation of the  algorithm of~\cite{ChazelleRT05} 
(see~\cite{PengS18}).

Our goal in this paper is to make further progress on understanding the {limits of multi-pass graph streaming algorithms} for parameter estimation and property testing. 
We present a host of new multi-pass streaming lower bounds that in multiple cases such as property testing of connectivity, \textbf{achieve  optimal lower bounds on the space-pass tradeoffs} for the given problems. 
At the core of our results, similar to~\cite{VerbinY11,AssadiKSY20}, is a new lower bound for a \emph{``gap cycle counting''} problem, wherein the goal is to distinguish between graphs 
consisting of only ``short'' cycles or only ``long'' cycles. Our other streaming lower bounds then follow by easy reductions from this problem. 

Our proof techniques  are potentially useful for addressing other questions along these lines. 
We first use a ``decorrelation'' step to break the strong promise in the input graphs (that the cycles are either {all} short or all long) when proving the lower bound; this however comes at a cost of having to 
prove a lower bound for algorithms that succeed with a low probability of $1/2+1/\poly(n)$. The main ingredient of the proof is then a {hardness amplification} step which allows us to obtain such a 
lower bound from any standard lower bound, i.e., a one with not-so-little probability of success. The key to this argument is a \emph{streaming XOR Lemma}, in spirit of classical Yao's XOR
Lemma~\cite{Yao82a}, that we prove in this paper. We elaborate on our results and techniques in details in the following.

\subsection{Gap Cycle Counting  With a Little Bit of ``Noise''}\label{sec:ngc} 

Already a decade ago, Verbin and Yu~\cite{VerbinY11} identified a gap cycle counting problem as an excellent intermediate problem for studying the limitations of graph streaming algorithms for estimation problems: Given a graph $G$ and an integer $k$, 
decide if $G$ is a disjoint union of $k$-cycles or $2k$-cycles. By building on~\cite{GavinskyKKRW07}, they proved that this problem requires $n^{1-O(1/k)}$ space in a single pass and used this to establish lower bounds for several other problems. This work has since been a source of insights and inspirations for numerous other streaming lower bounds, e.g.~\cite{EsfandiariHLMO15,BuryS15,KoganK15,LiW16,HuangP16,GuruswamiVV17,BravermanCKLWY18,KapralovKS15,AssadiKL17,GuruswamiT19,KapralovKSV17,KapralovK19,KallaugherKP18,ChouGV20}. 
These lower bounds were all for single-pass algorithms. Very recently,~\cite{AssadiKSY20} proved that any $p$-pass streaming algorithm for gap cycle counting---and even a variant wherein the goal is 
to distinguish union of $k$-cycles from a  Hamiltonian cycle---requires $n^{1-O(k^{-1/2p})}$ space; in particular, $\Omega(\log{k})$ passes are needed to solve this problem with $\polylog{(n)}$ space. 
The work of~\cite{AssadiKSY20} showed that a large body of graph streaming lower bounds for estimation problems can now be extended to multi-pass algorithms using simple reductions from these gap cycle counting problems. 

A main question that was left explicitly open by both~\cite{VerbinY11,AssadiKSY20} was to determine the \emph{tight} space-pass tradeoff for these gap cycle counting problems (and by extension other streaming problems obtained via  reductions). 
We partially resolve this question by proving an asymptotically tight lower bound for a more relaxed variant that allows for some ``noise'' in the input. In particular, in our \emph{\underline{noisy} gap cycle counting} problem, 
the graph consists of a disjoint union of either $k$-cycles or $2k$-cycles on $\Theta(n)$ vertices, {plus} vertex-disjoint paths of length $k-1$ (the ``noise'') on the remaining vertices; the goal as before is to distinguish between the two cases 
(see~\Cref{def:NGCdef} and~\Cref{fig:k-2k-cycle}). 

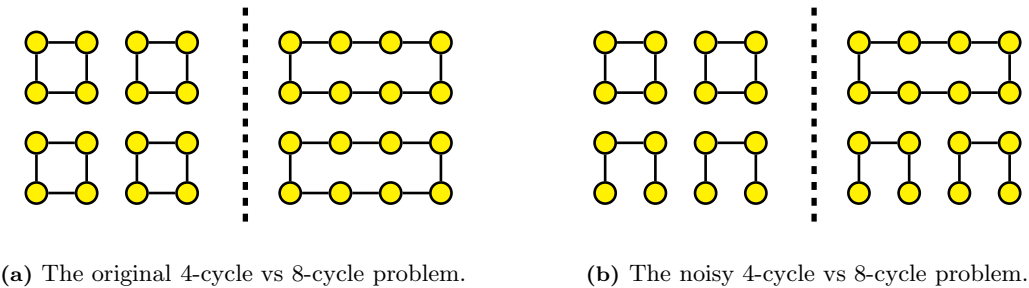
\begin{figure}[h!]	
\centering
\subcaptionbox{\footnotesize The original $4$-cycle vs $8$-cycle problem. \label{fig:k-2k-cycle1}}%
  [.45\linewidth]{% !TeX root = ../main.tex 
%!TEX root = ../main.tex

\begin{tikzpicture}

	\node[vertex] (A11){};
	\node[vertex] (A12)[right=10pt of A11]{};
	\node[vertex] (A13)[below=10pt of A11]{};
	\node[vertex](A14)[right=10pt of A13]{};
	
	\node[vertex](B11)[right=10pt of A12]{};
	\node[vertex] (B12)[right=10pt of B11]{};
	\node[vertex] (B13)[below=10pt of B11]{};
	\node[vertex] (B14)[right=10pt of B13]{};
	
	\node[vertex] (C11)[below=10pt of A13]{};
	\node[vertex] (C12)[right=10pt of C11]{};
	\node[vertex] (C13)[below=10pt of C11]{};
	\node[vertex] (C14)[right=10pt of C13]{};
	
	\node[vertex](D11)[right=10pt of C12]{};
	\node[vertex](D12)[right=10pt of D11]{};
	\node[vertex](D13)[below=10pt of D11]{};
	\node[vertex](D14)[right=10pt of D13]{};
	
	\draw[line width=1pt, black] 
	(A11) -- (A12)
	(A12) -- (A14)
	(A14) -- (A13)
	(A13) -- (A11)
	
	(B11) -- (B12)
	(B12) -- (B14)
	(B14) -- (B13)
	(B13) -- (B11)
	
	(C11) -- (C12)
	(C12) -- (C14)
	(C14) -- (C13)
	(C13) -- (C11)
	
	(D11) -- (D12)
	(D12) -- (D14)
	(D14) -- (D13)
	(D13) -- (D11);

	\node (T) [above right=10pt and 15pt of B12]{};
	\node (B) [below right=10pt and 15pt of D14]{};
	
	\draw[line width=2pt, black, dashed] (T) -- (B);
	
	\node[vertex] (xA11)[right=30pt of B12]{};
	\node[vertex] (xA12)[right=10pt of xA11]{};
	\node[vertex] (xA13)[below=10pt of xA11]{};
	\node[vertex] (xA14)[right=10pt of xA13]{};
	
	\node[vertex] (xB11)[right=10pt of xA12]{};
	\node[vertex] (xB12)[right=10pt of xB11]{};
	\node[vertex] (xB13)[below=10pt of xB11]{};
	\node[vertex] (xB14)[right=10pt of xB13]{};
	
	\node[vertex] (xC11)[below=10pt of xA13]{};
	\node[vertex] (xC12)[right=10pt of xC11]{};
	\node[vertex] (xC13)[below=10pt of xC11]{};
	\node[vertex] (xC14)[right=10pt of xC13]{};
	
	\node[vertex] (xD11)[right=10pt of xC12]{};
	\node[vertex] (xD12)[right=10pt of xD11]{};
	\node[vertex] (xD13)[below=10pt of xD11]{};
	\node[vertex] (xD14)[right=10pt of xD13]{};
	
	\draw[line width=1pt, black] 
	(xA11) -- (xA12)
	(xA12) -- (xB11)
	(xA14) -- (xA13)
	(xA13) -- (xA11)
	
	(xB11) -- (xB12)
	(xB12) -- (xB14)
	(xB14) -- (xB13)
	(xB13) -- (xA14)
	
	(xC11) -- (xC12)
	(xC12) -- (xD11)
	(xC14) -- (xC13)
	(xC13) -- (xC11)
	
	(xD11) -- (xD12)
	(xD12) -- (xD14)
	(xD14) -- (xD13)
	(xD13) -- (xC14);

\end{tikzpicture}}
\subcaptionbox{\footnotesize The noisy $4$-cycle vs $8$-cycle problem.\label{fig:k-2k-cycle2}}%
  [.45\linewidth]{% !TeX root = ../main.tex 
%!TEX root = ../main.tex

\begin{tikzpicture}

	\node[vertex] (A11){};
	\node[vertex] (A12)[right=10pt of A11]{};
	\node[vertex] (A13)[below=10pt of A11]{};
	\node[vertex] (A14)[right=10pt of A13]{};
	
	\node[vertex] (B11)[right=10pt of A12]{};
	\node[vertex] (B12)[right=10pt of B11]{};
	\node[vertex] (B13)[below=10pt of B11]{};
	\node[vertex] (B14)[right=10pt of B13]{};
	
	\node[vertex] (C11)[below=10pt of A13]{};
	\node[vertex] (C12)[right=10pt of C11]{};
	\node[vertex] (C13)[below=10pt of C11]{};
	\node[vertex] (C14)[right=10pt of C13]{};
	
	\node[vertex] (D11)[right=10pt of C12]{};
	\node[vertex] (D12)[right=10pt of D11]{};
	\node[vertex] (D13)[below=10pt of D11]{};
	\node[vertex] (D14)[right=10pt of D13]{};
	
	\draw[line width=1pt, black] 
	(A11) -- (A12)
	(A12) -- (A14)
	(A14) -- (A13)
	(A13) -- (A11)
	
	(B11) -- (B12)
	(B12) -- (B14)
	(B14) -- (B13)
	(B13) -- (B11)
	
	(C11) -- (C12)
	(C12) -- (C14)
%	(C14) -- (C13)
	(C13) -- (C11)
	
	(D11) -- (D12)
	(D12) -- (D14)
%	(D14) -- (D13)
	(D13) -- (D11);

	\node (T) [above right=10pt and 15pt of B12]{};
	\node (B) [below right=10pt and 15pt of D14]{};
	
	\draw[line width=2pt, black, dashed] (T) -- (B);
	
	\node[vertex] (xA11)[right=30pt of B12]{};
	\node[vertex] (xA12)[right=10pt of xA11]{};
	\node[vertex] (xA13)[below=10pt of xA11]{};
	\node[vertex] (xA14)[right=10pt of xA13]{};
	
	\node[vertex] (xB11)[right=10pt of xA12]{};
	\node[vertex] (xB12)[right=10pt of xB11]{};
	\node[vertex] (xB13)[below=10pt of xB11]{};
	\node[vertex] (xB14)[right=10pt of xB13]{};
	
	\node[vertex] (xC11)[below=10pt of xA13]{};
	\node[vertex] (xC12)[right=10pt of xC11]{};
	\node[vertex] (xC13)[below=10pt of xC11]{};
	\node[vertex] (xC14)[right=10pt of xC13]{};
	
	\node[vertex] (xD11)[right=10pt of xC12]{};
	\node[vertex] (xD12)[right=10pt of xD11]{};
	\node[vertex] (xD13)[below=10pt of xD11]{};
	\node[vertex] (xD14)[right=10pt of xD13]{};
	
	\draw[line width=1pt, black] 
	(xA11) -- (xA12)
	(xA12) -- (xB11)
	(xA14) -- (xA13)
	(xA13) -- (xA11)
	
	(xB11) -- (xB12)
	(xB12) -- (xB14)
	(xB14) -- (xB13)
	(xB13) -- (xA14)
	
	(xC11) -- (xC12)
	(xC12) -- (xC14)
	%(xC14) -- (xC13)
	(xC13) -- (xC11)
	
	(xD11) -- (xD12)
	(xD12) -- (xD14)
	%(xD14) -- (xD13)
	(xD13) -- (xD11);

%%	(C11) -- (C12)
%%	(C12) -- (C14)
%%%	(C14) -- (C13)
%%	(C13) -- (C11)
%%	
%%	(D11) -- (D12)
%%	(D12) -- (D14)
%%%	(D14) -- (D13)
%%	(D13) -- (D11);

\end{tikzpicture}}
\caption{An illustration of the graphs in the original gap cycle counting problem for $k=4$ versus the graphs in the new noisy gap cycle counting problem. The actual graph consists of
 $\Theta(n/k)$ copies of these smaller subgraphs.} 
\label{fig:k-2k-cycle}
\end{figure}

We prove the following lower bound for this noisy gap cycle counting problem.

\begin{result}\label{res:ngc}
	For any constant $k > 0$, any $p$-pass streaming algorithm for the noisy gap cycle counting problem requires $n^{1-O(p/k)}$ space to succeed with large constant probability.  
\end{result}

\Cref{res:ngc} obtains asymptotically optimal bounds for noisy gap cycle counting: on one end of the tradeoff, one can solve this problem in just one pass by sampling $\approx n^{1-1/k}$ random vertices and storing
all their edges to find a $k$-cycle or a $(k+1)$-path. On the other end, we can simply ``chase'' the neighborhood of $O(1)$ random vertices in $\approx k$ passes to solve the problem. 
 In the middle of these 
two extremes,  there is  the algorithm that samples $\approx n^{1-p/k}$ vertices and chase all of them in $p$ passes and ``stitch'' them together to form $k$-cycles or $(k+1)$-paths (see~\Cref{sec:alg-cycle}).~\Cref{res:ngc} matches 
all these tradeoffs asymptotically. Moreover, as a corollary, we  obtain that any  algorithm for this problem requires $n^{\Omega(1)}$ space or $\Omega(k)$ passes, \emph{exponentially} improving the bounds of~\cite{AssadiKSY20} (see also~\Cref{sec:technical-comparison} for a brief
technical comparison of our work with that of~\cite{AssadiKSY20}).

We remark that Verbin and Yu conjectured that 
any $p$-pass algorithm for this problem requires $n^{1-2/k}$ space as long as $k < p/2-1$\footnote{The conjecture of~\cite{VerbinY11} is stated more generally for two-party communication protocols and for the no-noise version of the problem;
 the statement here  is an immediate
corollary of this conjecture.}~\cite[Conjecture 5.4]{VerbinY11}. This conjecture as stated is too strong as the  $O(n^{1-p/k})$ space algorithm above refutes it already for $p > 2$. However,~\Cref{res:ngc} settles a qualitatively similar form of this conjecture which allows for an $n^{1-O(p/k)}$-space $p$-pass tradeoff.

\subsection{Graph Streaming Lower Bounds from Noisy Gap Cycle Counting} 

We  use our lower bound in~\Cref{res:ngc} in a similar manner as prior work to prove several new graph streaming lower bounds. The  difference is that we now 
have to  handle the extra noise in the problem; it turns out however that, as expected,  this noise does not have a serious effect on the reductions (it also helps that we prove~\Cref{res:ngc} in a stronger form where, informally speaking,
one endpoint of every noise path is already known to the algorithm; see~\Cref{cor:ngc-strong}). As a result, we can recover \emph{all} graph streaming lower bounds of~\cite{AssadiKSY20}  with a much stronger guarantee:

\begin{result}\label{res:str}
	For any $\eps > 0$, any $p$-pass algorithm for any of the following problems on $n$-vertex graphs requires $n^{1-O(\eps \cdot p)}$ space: 
	\vspace{-5pt}
	\begin{enumerate}[label=$-$]
		\item $(1+\eps)$-approximation of maximum matching size, maximum cut value,  maximum acyclic subgraph, and minimum spanning tree weight; 
		\item property testing of connectivity, bipartiteness, and cycle-freeness for  parameter $\eps$. 
	\end{enumerate}
	Moreover, these lower bounds continue to hold even on bounded-degree planar graphs. 
\end{result}
Prior to our work, $n^{1-O(\eps)}$ space lower bounds for single-pass algorithms have been obtained in \cite{KoganK15,KapralovKS15} for maximum cut, \cite{EsfandiariHLMO15,BuryS15} for maximum matching,  
\cite{GuruswamiVV17,ChouGV20} for maximum acyclic subgraph, \cite{FeigenbaumKMSZ05,HuangP16} for minimum spanning tree, and \cite{HuangP16} for the property testing problems. 
These results were recently extended by~\cite{AssadiKSY20} to $p$-pass algorithms with the space of $n^{1-O(\eps^{1/2p})}$ and thus $\Omega(\log{(1/\eps)})$ passes for $n^{o(1)}$-space algorithms.  
Our~\Cref{res:str}  exponentially improves the dependence on number of passes in~\cite{AssadiKSY20}, and in particular implies that any $n^{o(1)}$-space streaming algorithm for these problems require $\Omega(1/\eps)$ passes. 
For multiple of these problems, this bound can be matched by already known upper bounds and is thus optimal. We elaborate on these results further in~\Cref{sec:reductions}. 

We conclude this part by remarking that many of the problems we consider in~\Cref{res:str} have been also studied in \emph{random order streams}; see, e.g.~\cite{KapralovKS14,KapralovMNT20,CzumajFPS19,PengS18,MonemizadehMPS17}.  
In particular, Monemizadeh et.\,al.~\cite{MonemizadehMPS17} showed that $(1+\eps)$-approximation of matching size (in bounded-degree graphs) can be 
done in $O_{\eps}(\log{n})$ space  and a single pass if the edges  are arriving in a random order; similar bounds were 
obtained by Peng and Sohler~\cite{PengS18} for approximating the weight of minimum spanning tree (in bounded-weight graphs) and property testing of connectedness (see
also the work of Czumaj et.al.~\cite{CzumajFPS19} for a recent generalization of these results). Our~\Cref{res:str} thus demonstrate just how much harder solving these problems is in adversarial-order streams even with almost $1/\eps$ passes.

\subsection{Streaming XOR Lemma}\label{sec:intro-xor}

A key part of our proof of~\Cref{res:ngc} is a general hardness amplification step: Let $f$ be a Boolean function over a distribution $x \sim \mu$; for any integer $\ell > 1$, 
consider the $\ell$-fold-XOR-composition of $f$ over the distribution of inputs $x_1,\ldots,x_\ell \sim \mu^{\ell}$, namely, $\fxor := \XOR_{\ell} \circ f = f(x_1) \oplus \cdots \oplus f(x_\ell)$. How much harder is to compute $\fxor$ compared to $f$? 
Notice that if solving $f$ (with certain resources) has success probability $\leq 1/2+\delta$, and that all the algorithm for $\fxor$ does is to solve each $f(x_i)$ {independently} and take their XOR, then
its success probability would be $\approx 1/2+\delta^{\ell}$. This is simply because XOR of $\ell$ independent random bits with bias $\delta$ only has bias $\approx \delta^{\ell}$ (see~\Cref{sec:xor-bias}).
Can a more clever strategy (with the same resources) beat this naive way of computing $\fxor$?

These questions are generally referred to as \emph{XOR Lemmas} and have been studied extensively in different settings like circuit complexity~\cite{Yao82a,GoldreichNW11,ImpagliazzoW97,Levin85,Impagliazzo95,GirishRZ20}, communication complexity~\cite{ViolaW08,Sherstov11,GirishRZ20}, and query complexity~\cite{Shaltiel03,Sherstov11,BrodyKLS20}. However, despite the extensive attention that similar hardness amplification questions 
such as direct sum and direct product have received in the streaming model (see, e.g.~\cite{Bar-YossefJKS02,JainRS03,JainPY12,BravermanRWY13,RaoS16,GuhaH09,MolinaroWY13,PhillipsVZ12} and references therein), we are not aware of any type of XOR Lemma for streaming algorithms. Thus, an important contribution of our work is to prove exactly such a result; considering its generality, we believe this result to be of independent interest. 

\begin{result}\label{res:xor}
	Suppose any $p$-pass $s$-space streaming algorithm for  $f$ over a distribution $\sigma \sim \mu$ succeeds with probability $\leq 1/2+\delta$. Then, any $p$-pass $s$-space algorithm for $\fxor$ over the 
	\emph{concatenation} of streams $\sigma_1,  \cdots, \sigma_\ell \sim \mu^\ell$ only succeeds with probability $\leq 1/2 \cdot (1+(2\delta)^{\ell})$.
\end{result}

In~\Cref{sec:xor-examples}, we further discuss the notion of ``weak'' vs ``strong'' XOR Lemmas in the context of~\Cref{res:xor} and and in particular show the optimality of this result.  

\medskip
Let us now  mention how~\Cref{res:xor} is used in the proof of~\Cref{res:ngc}. 
Consider the following problem: given a graph $G$ in (noisy) gap cycle counting and a single vertex $v \in G$, ``chase'' the depth-$(k/2)$ neighborhood of $v$ to see if they form a $k$-cycle or a $(k+1)$-path. 
This problem is quite similar to the pointer chasing problem studied extensively in communication complexity and 
streaming, e.g., in~\cite{NisanW91,PonzioRV99,Yehudayoff16,ChakrabartiCM08,GuhaM08,GuhaM09,FeigenbaumKMSZ08,JainRS03,GuruswamiO13,AssadiCK19,BafnaGGS19,GolowichS20} (see~\Cref{def:PC}). 
The gap cycle counting problem then can be thought of as $\approx n/k$ instances of this problem that are highly \emph{correlated}: they are all in the same graph and  they all either form a $k$-cycle or a $(k+1)$-path. 
The first step of our lower bound is an argument that  ``decorrelates'' these instances which implies that one of them should be solved with probability of 
success $1/2+\Omega(k/n)$. This probability of success is still way below the threshold for any of the standard pointer chasing lower bounds to kick in. This is where we use our streaming XOR Lemma: we give a  reduction
that embeds XOR of $\ell$ instances of depth-$(k/2\ell)$ pointer chasing as a single depth-$(k/2)$ instance; applying our~\Cref{res:xor} then reduces our task to proving a 
lower bound for pointer chasing with probability of success $1/2+\Omega((k/n)^{1/\ell})$ (in $k/2\ell$ passes), which brings us to the ``standard'' territory. The last step is then to prove this lower 
bound over our hard instances which are different from standard ones, e.g., in~\cite{GuhaM09,NisanW91,Yehudayoff16}.

% !TeX root = main.tex 
%!TEX root = main.tex

\section{Notation and Preliminaries}\label{sec:prelim}

\paragraph{Notation.} For a Boolean function $f$ and integer $\ell \geq 1$, we use $\fxor$ to denote the composition of $f$ with the $\ell$-fold XOR function, i.e., $\fxor(x_1,\ldots,x_\ell) = f(x_1) \oplus \cdots \oplus f(x_\ell)$. 
Throughout the paper, we denote input stream by $\sigma$ and $\card{\sigma}$ denote the length of the stream. For any two streams $\sigma_1, \sigma_2$, we use $\sigma_1 \conc \sigma_2$ to denote the $\card{\sigma_1} + \card{\sigma_2}$ 
length stream obtained by concatenating $\sigma_2$ at the end of $\sigma_1$. When it can lead to confusion, we use sans serif font for random variables (e.g. \rv X) and normal font for their realization (e.g. X). 
We use $\supp{\rX}$ to denote the support of random variable $\rX$. For a $0/1$-random variable $\rX$, we define the \emph{bias} of $\rX$ as $\bias(\rX) := {\card{\Pr(\rX=0)-\Pr(\rX=1)}}$; see~\Cref{sec:xor-bias} for more details. 

\paragraph{Information theory.} For random variables $\rX,\rY$, $\en{\rX}$ denotes the Shannon entropy of $\rX$, $\mi{\rX}{\rY}$ denotes the mutual information, $\tvd{\rX}{\rY}$ denotes the total variation distance between the distributions of 
$\rX,\rY$, and $\kl{\rX}{\rY}$ is their KL-divergence. \Cref{sec:info} contains the definitions and standard background on these notions that we need in our proofs. 

\paragraph{Streaming algorithms.} For the purpose of our lower bounds, we shall work with a {more powerful} model than what is typically considered the streaming model (this is the common approach when proving streaming
lower bounds; see, e.g.~\cite{GuhaM08,LiNW14,BravermanGW20}). In particular, we shall define streaming algorithms 
as multi-party communication protocols\footnote{We refer the  reader to~\cite{KushilevitzN97} for the standard 
	definitions from communication complexity used in this paper.} as follows.  and then point out the subtle differences with what one typically expect of a streaming algorithm.  

\begin{definition}[\textbf{Streaming algorithms}]\label{def:str}
	For any integers $n,p,s \geq 1$, we define a $p$-pass $s$-space streaming algorithm working on a length-$n$ stream $\sigma=(x_1,\ldots,x_n)$ as a $(n+1)$-player communication protocol between players $P_0,\ldots,P_n$ wherein: 
	\begin{enumerate}[label=$(\roman*)$]
		\item Each player $P_i$ for $i \geq 1$ receives $x_i$ as the \emph{input} and player $P_0$ has no input; the players also have access to \emph{private randomness}.
		\item The players communicate in this order: $P_0$ sends a message to $P_1$ who sends a message to $P_2$ and so on up until $P_n$ who sends a message to $P_0$; this constitutes one \emph{pass} of the algorithm. 
		The players then continue like this for $p$ passes and at the end, $P_0$ outputs the answer. 
		\item Each message of a player in a given round is an \emph{arbitrary} function of its input, 
		\emph{all} the messages received by this player so far, and its private randomness and has size $s$ bits \emph{exactly}. 
	\end{enumerate}
	(We note that this model is non-uniform and is defined for each choice of $n$ individually.)
\end{definition}

Let us  point out  a couple differences with what one may expect of streaming algorithms.  Firstly, we allow our streaming algorithms to do an unbounded amount of work using an unbounded amount of space \emph{between} the
arrival of each stream element; we only bound the space in transition between two elements. Secondly, we allow streaming algorithms to maintain a 
``state'' for each stream element  across multiple passes (as each player of the streaming algorithm ``remembers'' the messages it receives in previous passes as well). 
Finally, a player $P_0$ is introduced for notational convenience so that every pass of the algorithm involves one message per main players $P_1,\ldots,P_n$. 

Clearly, any lower bound proven for streaming algorithms in~\Cref{def:str} will hold also for more restrictive definitions of streaming algorithm, and that is what we use in this paper. 
We shall note that however almost all streaming lower bounds we are aware of directly 
work with this definition and thus we claim no strengthening in proving our lower bounds under this definition; rather, we merely use this formalism to carry out various reductions between our problems.\footnote{We could have alternatively 
presented our results in the (NIH) multi-party communication model. However, considering that our proofs work with varying number of players in different steps
and that our focus is primarily on proving streaming lower bounds, we found it more natural to work with streaming algorithms directly.}

% !TeX root = main.tex 
%!TEX root = main.tex

\section{Streaming XOR Lemma}\label{sec:xor-lemma}

Let $\Sigma_n$ be any collection of length-$n$ input streams and $f: \Sigma_n \rightarrow \set{0,1}$ be a function which can be interpreted as a streaming decision problem: 
Given a length-$n$ stream $\sigma \in \Sigma_n$, output $f(\sigma)$. Using $f$, and for any integer $\ell \geq 1$, we can define another streaming decision problem over 
length $(n \ell)$-streams: For $\sigma_1,\ldots,\sigma_\ell \in {\Sigma_n}^\ell$, compute $\fxor(\sigma_1,\ldots,\sigma_\ell)$ over the stream $\sigma_1 \conc \sigma_2 \conc \cdots \conc \sigma_\ell$. 
We  prove the following Streaming XOR Lemma for computing $\fxor$ in this section. 

\begin{theorem}[\textbf{Streaming XOR Lemma}]\label{thm:xor-lemma}
	Fix any function $f: \Sigma_n \rightarrow \set{0,1}$, any distribution $\mu$ on $\Sigma_n$, and any integer $\ell > 1$. Suppose any $p$-pass $s$-space streaming algorithm can only compute $f$ 
	over $\mu$ with probability at most $\frac{1}{2} + \delta$ for some  $\delta > 0$. Then, any $p$-pass $s$-space algorithm for $\fxor$ on the stream $\sigma_1 \conc \cdots  \conc\sigma_\ell$ for $(\sigma_1,\ldots,\sigma_\ell) \sim \mu^{\ell}$ 
	succeeds with probability at most $\frac{1}{2} \cdot(1 + (2\delta)^{\ell})$. 
\end{theorem}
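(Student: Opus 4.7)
The plan is to prove the Streaming XOR Lemma by induction on $\ell$, leveraging the multi-party framing of streaming algorithms given by~\Cref{def:str}. The base case $\ell = 1$ is immediate: applying the hypothesis both to $\Pi$ and to its bit-flipped output gives $\tfrac12 - \delta \le \Pr[\Pi(\sigma) = f(\sigma)] \le \tfrac12 + \delta$, equivalently $|\E[(-1)^{\Pi(\sigma) \oplus f(\sigma)}]| \le 2\delta$, which matches the desired bound for $\ell=1$.

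For the inductive step $(\ell-1) \Rightarrow \ell$, given a $p$-pass $s$-space algorithm $\Pi$ on the concatenated stream $\sigma_1 \conc \cdots \conc \sigma_\ell$, I would view $\Pi$ as a two-party protocol between ``Alice'' (the $n$ players handling $\sigma_1$) and ``Bob'' (everyone else, including $P_0$). By~\Cref{def:str}, the only messages crossing this cut are the $s$-bit messages at the $\sigma_1/\sigma_2$ boundary in each of the $p$ passes, giving a $2p$-round, $s$-bit-per-round interaction. For any fixed internal randomness $r$, the joint transcript $T$ induces a combinatorial rectangle partition $A_{t,r} \times B_{t,r} \subseteq \Sigma_n \times \Sigma_n^{\ell-1}$ on which the output $O$ (folded into the transcript if needed) is constant. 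Writing $X_i := (-1)^{f(\sigma_i)}$, conditional independence of the two input halves given $T$ lets me decompose the signed bias as
\[
\E\bigl[(-1)^{O \oplus \fxor}\bigr] \;=\; \E_T\bigl[(-1)^{o_T}\,\alpha_T\,\gamma_T\bigr], \qquad \alpha_t := \E[X_1 \mid T=t], \quad \gamma_t := \E[X_2 \cdots X_\ell \mid T=t].
\]

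To control each factor, I would use that each half of $\Pi$ is realizable as a legitimate $p$-pass $s$-space streaming algorithm on its own substream, thanks to the unbounded per-player local compute afforded by~\Cref{def:str}: Alice's side directly yields such an algorithm on $\sigma_1$ (Bob's messages are absorbed as advice, and private randomness can resample anything needed); symmetrically, Bob's side, with $\sigma_1$ and Alice's deterministic strategy resampled from private randomness at the single player $P_0$ between passes, yields one on $\sigma_2 \conc \cdots \conc \sigma_\ell$. Using the supremum characterization of conditional expectation, the base hypothesis bounds $\E_T[|\alpha_T|] \le 2\delta$ and the inductive hypothesis bounds $\E_T[|\gamma_T|] \le (2\delta)^{\ell-1}$.

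The main technical obstacle I expect is combining these averaged bounds into the target multiplicative $(2\delta)^\ell$, since a direct Cauchy--Schwarz across the transcript only yields $(2\delta)^{\ell/2}$. The plan to close this gap is to apply the two hypotheses \emph{sequentially} rather than in parallel: construct from $\Pi$ a single $p$-pass $s$-space algorithm $\Pi'$ on $\sigma_2 \conc \cdots \conc \sigma_\ell$ that samples $\sigma_1$ from private randomness, internally simulates $\Pi$ in the unbounded local memory at $P_0$ (which suffices because in~\Cref{def:str} only the $s$-bit inter-element messages are space-constrained), and outputs $O \oplus f(\sigma_1)$ with $f(\sigma_1)$ computed locally from the sampled $\sigma_1$; the inductive hypothesis applied to $\Pi'$ yields one factor of $(2\delta)^{\ell-1}$, and the remaining factor of $2\delta$ is extracted by viewing the inner averaging over $\sigma_1$ as itself the expected output of a $p$-pass $s$-space streaming algorithm on $\sigma_1$ subject to the base hypothesis. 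Carefully formalizing this iterated peeling---keeping both simulated layers within $p$ passes and $s$ space, and carefully accounting for how randomness is shared between them so that the two $2\delta$-type factors genuinely multiply---will be where the bulk of the technical work lies.
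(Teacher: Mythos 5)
The structure of your argument parallels the paper's up to the crucial combining step, but your proposed fix for that step does not work, and this is where the gap lies.

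The paper also conditions on the transcript, uses a rectangle property to factorize the conditional bias, and bounds each per-coordinate factor by embedding a single-stream streaming algorithm (their \Cref{lem:xor-1-copy}, analogous to your bound $\Ex_T[\card{\alpha_T}]\leq 2\delta$). The difference is that they set up an \emph{$\ell$-player} blackboard game (one player per $\sigma_i$) so that, conditional on the full blackboard $B$, \emph{all} $\ell$ inputs become mutually independent (\Cref{lem:xor-independence}), giving a full product factorization $\bias_\pi(B)=\prod_i\bias_\pi(i,B)$; they then argue that the $\ell$ factors can be combined multiplicatively. Your two-party split only separates $\sigma_1$ from the bundle $(\sigma_2,\ldots,\sigma_\ell)$, which gives you the factorization $\alpha_T\gamma_T$ but leaves you exactly the ``combine two averaged bounds multiplicatively'' obstacle you identify. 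You have correctly located the hard part of the theorem.

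The proposed fix, however, is circular. Your $\Pi'$ on $\sigma_2\conc\cdots\conc\sigma_\ell$ samples $\sigma_1$ privately, simulates $\Pi$, and outputs $O\oplus f(\sigma_1)$. But then
\[
\Pr\bigl[\Pi'=f^{\oplus(\ell-1)}(\sigma_2,\ldots,\sigma_\ell)\bigr]
=\Pr\bigl[O\oplus f(\sigma_1)=f^{\oplus(\ell-1)}(\sigma_{-1})\bigr]
=\Pr\bigl[O=\fxor(\sigma_1,\ldots,\sigma_\ell)\bigr]
=\Pr[\Pi\text{ is correct}],
\]
so applying the inductive hypothesis to $\Pi'$ yields only $\Pr[\Pi\text{ correct}]\leq\frac12(1+(2\delta)^{\ell-1})$ --- the reduction is lossless, and there is no leftover $2\delta$ factor to extract. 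The ``inner averaging over $\sigma_1$'' cannot supply it either: once $\sigma_1$ has been moved into $\Pi'$'s private randomness, there is no $\sigma_1$ input stream against which the base hypothesis applies; the base hypothesis bounds advantage over an \emph{input} drawn from $\mu$, not over a coin the algorithm is free to flip and inspect. (One way to see that no such peeling can work as stated: if $\Pi$ ignores $\sigma_1$ entirely and optimally solves $f^{\oplus(\ell-1)}$, your $\Pi'$ is as good as possible on the inductive instance, yet the true advantage of $\Pi$ on $\fxor$ is the product of $(2\delta)^{\ell-1}$ with the unconditional bias of $f$ --- a factor your reduction never sees.) To close the gap you need an argument that the conditional biases genuinely multiply in expectation over the transcript, which is precisely the role played by \Cref{lem:xor-independence} and the independence structure in \Cref{lem:xor-pi} in the paper's $\ell$-player decomposition; the two-party view plus a blind induction is not enough.
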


Recall the intuition at the beginning of~\Cref{sec:intro-xor} behind any form of XOR Lemma: taking XOR of \emph{independent} bits dampens their biases exponentially and thus the algorithm
for $\fxor$ that computes each $f(\sigma_i)$ individually satisfies~\Cref{thm:xor-lemma}. In general however, we cannot expect the algorithm to approach these subproblems independently as it may 
instead try to \emph{correlate} its success probabilities across different subproblems (say, with probability $1/2+\delta$ all subproblems are correct and with the remaining probability, all are wrong). 
This is the main barrier in proving any form of XOR Lemma and what need to overcome in proving~\Cref{thm:xor-lemma}. 

The main ideas of the proof consist of the following  steps: (a) set up a $\ell$-player ``communication game'', with one player per  $\sigma_i$, whose lower bounds also imply lower bounds for streaming algorithms of $\fxor$, 
(b) give enough extra power to this game so that no player is responsible for compressing the input of another player, and show that the players success in computing each $f(\sigma_i)$ becomes uncorrelated, (c) limit the power of the game
so that streaming lower bounds for $f$ also imply lower bounds for computing each $f(\sigma_i)$ in this game. We now formalize this in the following:

\subsection*{Proof of~\Cref{thm:xor-lemma}}

We setup the following game for proving this theorem (see also~\Cref{fig:game}): 

\begin{tbox}
\begin{enumerate}[label=$(\roman*)$,leftmargin=25pt]
	\item There are a total of $\ell$ players $Q_1,\ldots,Q_\ell$ who receive input streams $\sigma_1,\ldots,\sigma_\ell$, respectively. 
	
	\item The players  communicate with each other in \emph{rounds} via a \emph{blackboard}. In each round, 
	the players go in turn with $Q_1$ writing a message on the board, followed by $Q_2$, all the way to $Q_\ell$; these messages are visible to everyone (and are not altered or erased after  written). 

	\item For any player $Q_i$ and round $j$, we use $M^j_i$ to denote the message written on the board by $Q_i$ in $j$-th round. 
	We additionally use $B^j_i$ to denote the content of the board \emph{before} the message $M^j_i$ is written and $B^j$ to denote the content of the board \emph{after} round $j$. 	 
	
	\item Messages of each $Q_i$ is generated by a \emph{deterministic multi-pass streaming} algorithm $\alg_i$ that runs on $\sigma_i$ (with one inner player per  element of 
	the stream as in~\Cref{def:str}). In each round $j$, the player $P_0$ of $\alg_i$ is additionally given the content of the board $B^j_i$, then $\alg_i$  makes its $j$-th pass over $\sigma_i$, 
	and then $P_0$ of $\alg_i$ outputs $M^j_i$ on the board. 
	
	\item The \emph{cost} of a protocol is the \underline{maximum size of the memory} of any algorithm $\alg_i$. 
\end{enumerate}
\end{tbox}
Let us emphasize that this game is not at all a standard communication complexity problem: in our game, the communication between the players is \emph{unbounded} and the cost of 
the algorithm is instead governed by the memory of streaming algorithms run by each player as opposed to having computationally unbounded players.

\begin{figure}[t!]	
\centering
% !TeX root = ../main.tex 
%!TEX root = ../main.tex

\begin{tikzpicture}

\tikzset{Qp/.style={rectangle, draw, minimum width=100pt, minimum height=40pt, inner sep=5pt]}}

\tikzset{Pp/.style={rectangle, draw, minimum width=25pt, minimum height=15pt, inner sep=0pt]}}

%\node[Qp] (Q1) {};
\node[Pp] (P10){$P_0$}; 
\node[Pp] (P11)[right=10pt of P10]{$P_1$}; 
\node[Pp] (P12)[right=10pt of P11]{$P_2$}; 
\node[Pp] (P13)[right=10pt of P12]{$P_3$}; 

\node[Qp] (Q1) [fit=(P10) (P13)]{};
\node[above=5pt of Q1] {\text{player $Q_1$ with  stream $\sigma_1$}};

\node[Pp] (P20)[right=30pt of Q1]{$P_0$}; 
\node[Pp] (P21)[right=10pt of P20]{$P_1$}; 
\node[Pp] (P22)[right=10pt of P21]{$P_2$}; 
\node[Pp] (P23)[right=10pt of P22]{$P_3$}; 	

\node[Qp] (Q2) [fit=(P20) (P23)]{};
\node[above=5pt of Q2] {\text{player $Q_2$ with  stream $\sigma_2$}};

\node[Pp] (P30)[right=30pt of Q2]{$P_0$}; 
\node[Pp] (P31)[right=10pt of P30]{$P_1$}; 
\node[Pp] (P32)[right=10pt of P31]{$P_2$}; 
\node[Pp] (P33)[right=10pt of P32]{$P_3$}; 	

\node[Qp] (Q3) [fit=(P30) (P33)]{};
\node[above=5pt of Q3] {\text{player $Q_3$ with  stream $\sigma_3$}};

\node[rounded rectangle, draw, fill=gray!10, minimum width=200pt, minimum height=40pt] (bb) [below=15pt of Q2] {Black Board};

\foreach \i in {1,2,3}
{
	\foreach \j in {0,1,2}
	{
        \pgfmathtruncatemacro{\jp}{\j+1};
        \draw[line width=1pt, red, ->]
		(P\i\j) to (P\i\jp);
	}
	\draw[line width=1pt, red, ->]
	(P\i3) to ($(P\i3)+(0pt,15pt)$) to ($(P\i0)+(0pt,15pt)$) to (P\i0);
	
}

	\draw[line width=2pt, blue, dashed]
	(P10) to ($(P10) + (0pt,-55pt)$) to (bb);

	\draw[line width=2pt, blue, dashed]
	(P20) to ($(P20) + (0pt,-35pt)$);
	
	\draw[line width=2pt, blue, dashed]
	(P30) to ($(P30) + (0pt,-55pt)$) to (bb);
\end{tikzpicture}
\caption{An illustration of the communication game in the proof of~\Cref{thm:xor-lemma} for $n=4$ and $\ell=3$. The solid (red) lines draw the messages of players of each inner streaming algorithm, while dashed (blue) lines draw the 
communicated messages between the players of the game and the blackboard (from players $P_0$ of each inner streaming algorithm). } 
\label{fig:game}
\end{figure}
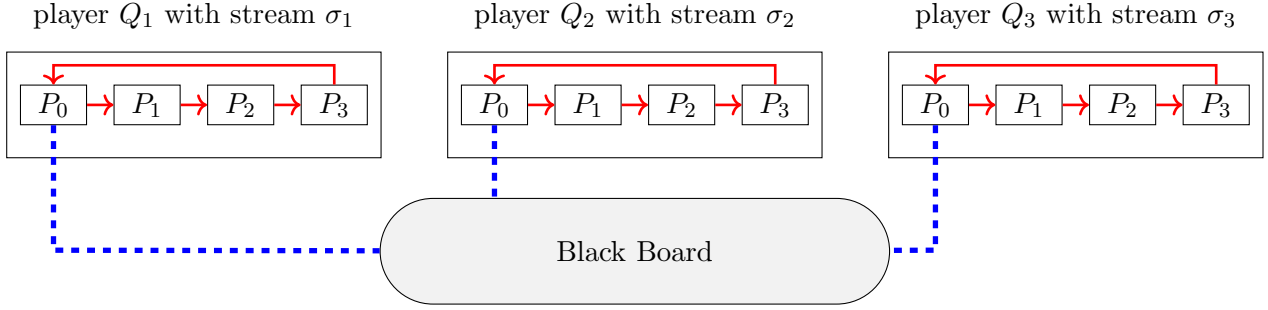

We first show that if we can lower bound the cost of protocols in this game, we immediately get a lower bound for  streaming algorithms of $\fxor$.

\begin{lemma}\label{lem:game-to-stream}
	Any $p$-pass $s$-space algorithm $A$ (deterministic or randomized) for computing $\fxor$  implies a (deterministic) $p$-round protocol $\prot$ with cost at most $s$ and the same probability of success.  
\end{lemma}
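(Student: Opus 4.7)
The plan is to simulate any (possibly randomized) $p$-pass $s$-space algorithm $A$ for $\fxor$ by a deterministic protocol in the blackboard game. The approach has three conceptual steps: (i) derandomize $A$ by fixing the best realization of its private randomness, (ii) partition the players of the resulting protocol along the stream boundaries $\sigma_1,\ldots,\sigma_\ell$, and (iii) let each $Q_i$ simulate its slice of players while using the blackboard to propagate the $s$-bit messages that cross stream boundaries.

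First I would apply an averaging argument over the randomness of $A$ to fix it to a deterministic $p$-pass $s$-space algorithm $A'$ whose success probability (over $\mu^\ell$) is at least that of $A$. By~\Cref{def:str}, $A'$ is an $(n\ell+1)$-player protocol with players $P_0,P_1,\ldots,P_{n\ell}$, where $P_k$ (for $k\geq 1$) holds the $k$-th element of $\sigma_1\conc\cdots\conc\sigma_\ell$. Since the elements of $\sigma_i$ occupy positions $(i-1)n+1,\ldots,in$, I partition these players into $\ell$ consecutive blocks and assign block $i$ to $Q_i$. In each pass $j$ of $A'$ there are exactly $\ell+1$ ``boundary'' messages of $s$ bits each: the message $m^j_0$ from $P_0$ to $P_1$ at the start of the pass, the messages $m^j_i$ from $P_{in}$ to $P_{in+1}$ between consecutive blocks for $1\leq i<\ell$, and the final message $m^j_\ell$ from $P_{n\ell}$ back to $P_0$ at the end of the pass.

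Next I would construct round $j$ of the game as a faithful simulation of pass $j$ of $A'$, with the invariant that each $Q_i$ writes $m^j_i$ to the blackboard (so after round $j$ the board contains all boundary messages from the first $j$ passes). When it is $Q_1$'s turn in round $j$, its inner algorithm $\alg_1$ first computes $m^j_0$ from the board: this depends only on $P_0$'s initial state (for $j=1$) together with the previously recorded $m^{1}_\ell,\ldots,m^{j-1}_\ell$, hence is a function of $B^j_1$ alone. Then $\alg_1$ makes its $j$-th streaming pass over $\sigma_1$, with inner player $P^{(1)}_0$ feeding $m^j_0$ into $P^{(1)}_1$ and each $P^{(1)}_k$ simulating player $P_k$ of $A'$; the final message produced by $P^{(1)}_n$ is written to the blackboard as $m^j_1$. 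For $i\geq 2$, player $Q_i$'s $\alg_i$ reads the most recent board entry $m^j_{i-1}$ and analogously simulates block $i$ of $A'$, outputting $m^j_i$. After the $p$-th round, any player can read the final board and output $A'$'s answer, computed as the function of $m^1_\ell,\ldots,m^p_\ell$ that $P_0$ would have computed.

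It is then immediate that every inner message of every $\alg_i$ carries exactly $s$ bits (since it mirrors an $s$-bit message of $A'$), so the cost of the protocol is at most $s$, and the protocol's output coincides with $A'$'s output on the same stream, giving the same success probability as $A$. The only subtlety worth checking carefully is that players in $A'$ may remember \emph{all} messages received across passes; I therefore need each inner player $P^{(i)}_k$ to receive precisely the same sequence of incoming messages, in the same order, as the corresponding $P_{(i-1)n+k}$ of $A'$. This holds by construction because in each round $j$ the player $P^{(i)}_k$ receives exactly one $s$-bit message from $P^{(i)}_{k-1}$, matching the unique message $P_{(i-1)n+k}$ receives in pass $j$ of $A'$, so the memories stay in lockstep across all $p$ rounds.
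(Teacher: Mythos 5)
Your proposal is correct and follows essentially the same route as the paper's proof: derandomize $A$ by an averaging argument, partition the $n\ell+1$ inner players of the streaming algorithm into $\ell$ consecutive blocks assigned to $Q_1,\ldots,Q_\ell$, and use the blackboard to relay the $s$-bit boundary messages so that round $j$ of the game exactly simulates pass $j$ of $A$. Your added observations — that $m^j_0$ is a function of $B^j_1$ alone, and that the inner players' memories stay in lockstep because each receives the same sequence of incoming messages as its counterpart in $A$ — are correct clarifications of steps the paper leaves implicit, but do not constitute a different argument.
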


\begin{proof}
	Without loss of generality, we can assume $A$ is deterministic 
	as by an averaging argument, there is a fixing of the randomness of the algorithm that gives the same success probability over $\mu^{\ell}$. 
	
	To avoid confusion, let us denote the players of $A$ as $R_0,R_1,\ldots,R_{n \cdot \ell}$. 
	We can generate a protocol $\pi$ from $A$ as follows: 
	\begin{enumerate}[label=$(\roman*)$]
		\item $Q_1$ runs $A$ as $\alg_1$ on $\sigma_1$: $P_0$ (of $\alg_1$) sends a message to $P_1$ and so on until $P_n$ by running the first pass of $A$ over their inputs by simulating $R_0$ to $R_n$ (this incurs a cost of $s$). 
		\item At this point, $P_n$ has the same input and message as player $R_n$ of $A$. Thus, $P_n$ can send the message of $R_n$ to $R_{n+1}$ instead to $P_0$ which finishes the first pass of $\alg_1$ (again by a cost of only $s$ as the message of $R_n$ to $R_{n+1}$ has size $s$). 
		$P_0$ of $\alg_1$ then can post this received message on the blackboard as message $M^1_1$ of player $Q_1$.
		\item Now it is $Q_2$'s turn to run $A$ as $\alg_2$ on $\sigma_2$: $P_0$ (of $\alg_2)$ reads the content of the board and pass it along to $P_1$; this way, $P_1$ to $P_n$ can continue the first pass of $A$ over their inputs by simulating 
		$R_{n+1}$ to $R_{2n}$ (again, cost of only $s$ by the space bound of $A$). Then, like step $(ii)$, the message of $R_{2n}$ to $R_{2n+1}$ will be posted on the board via $P_0$ of $\alg_2$. 
		\item The players continue like this until they run every $p$ passes of $A$ in $p$ rounds over their inputs and output the same answer. 
	\end{enumerate}
	As the  cost of this protocol is $s$ and it fully simulates $A$, we obtain the result. 
\end{proof}

Fix a $p$-round communication protocol $\pi$ with cost at most $s$ in this game and suppose the inputs of players are sampled from the product distribution $\mu^\ell$. For the rest of the proof, we bound 
the probability of success of $\pi$ which will imply~\Cref{thm:xor-lemma} by~\Cref{lem:game-to-stream}. 

To continue we need the following definitions: 
\begin{itemize}
\item	For any $i \in [\ell]$ and any choice of the final board content $\rB^p=B$, we define: 
\[
\bias_{\pi}(i,B) := 2 \cdot \max_{\theta \in \set{0,1}} \Pr_{(\sigma_1,\ldots,\sigma_\ell) \sim \mu^\ell}\paren{f(\sigma_i) = \theta \mid \rB^p=B} - 1;
\]
in other words, $\bias_{\pi}(i,B)$ is equal to $\bias\paren{f(\sigma_i)}$ for  $\sigma_i \sim \mu^{\ell} \mid \rB^p=B$. 
\item For any choice of the final board content $\rB^p=B$, we define:
\[
\bias_{\pi}(B) := 2 \cdot \max_{\theta \in \set{0,1}} \Pr_{(\sigma_1,\ldots,\sigma_\ell) \sim \mu^\ell}\paren{\fxor(\sigma_1,\ldots,\sigma_\ell) = \theta \mid \rB^p=B} - 1; 
\]
in other words, $\bias_{\pi}(B)$ is equal to $\bias\paren{f(\sigma_1) \oplus \cdots \oplus f(\sigma_\ell)}$ for  $\sigma_1,\ldots,\sigma_\ell \sim \mu^{\ell} \mid \rB^p=B$. 
\end{itemize}
With these definitions, we have, 
\begin{align}
	\Pr_{(\sigma_1,\ldots,\sigma_\ell) \sim \mu^\ell}\paren{\text{$\pi$ is correct}} &= \Ex_{B}\bracket{\Pr_{(\sigma_1,\ldots,\sigma_\ell) \sim \mu^\ell}\paren{\text{$\pi$ is correct} \mid \rB^p=B}} \notag \\
	&\leq \Ex_{B}\bracket{\max_{\theta \in \set{0,1}} \Pr_{(\sigma_1,\ldots,\sigma_\ell) \sim \mu^\ell}\paren{\fxor(\sigma_1,\ldots,\sigma_\ell) = \theta \mid \rB^p=B}} \tag{conditioned on $\rB^p=B$, the answer of $\pi$ is fixed to some $\theta \in \set{0,1}$} \\
	&= \Ex_{B}\bracket{\frac{1}{2} \cdot \paren{1+\bias_{\pi}(B)}} = \frac{1}{2} + \frac{1}{2} \cdot \Ex_B\bracket{\bias_\pi(B)} \label{eq:bias-pi}.
\end{align}
As such, $\Ex_B[\bias_\pi(B)]$ measures the {advantage} of $\pi$ in outputting the answer over random guessing. 
Our goal in the remainder of this section is to bound this expectation. In order to do so, 
we first bound each $\Ex_B[\bias_\pi(i,B)]$ for $i \in [\ell]$, and then prove a crucial independence property between these variables that allows us to extend these bounds appropriately to $\Ex_B[\bias_\pi(B)]$ as well. 

In the following, we prove that the protocol $\pi$ is not able to change the bias of any single $f(\sigma_i)$ by more than $2\delta$, or alternatively, it cannot ``solve'' $f(\sigma_i)$ correctly with probability $>1/2+\delta$. Intuitively, this should be  
true as $\pi$ is effectively running a $p$-pass $s$-space streaming algorithm $\alg_i$ on $\sigma_i$ and so we can apply the assumption of~\Cref{thm:xor-lemma}. The catch is that $\pi$ in general is stronger than a streaming algorithm
(which is necessary to establish the other parts of the lower bound) and some additional care is needed to simulate $\pi$ ``projected'' on $\sigma_i$ via a streaming algorithm.

\begin{lemma}\label{lem:xor-1-copy}
	For any $i \in [\ell]$, $\Ex_{B}\bracket{\bias_\pi(i,B)} \leq 2\delta$. 
\end{lemma}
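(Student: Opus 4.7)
The plan is to prove \Cref{lem:xor-1-copy} by a reduction: from any $p$-round protocol $\pi$ with cost $s$ in the game, I will construct a $p$-pass $s$-space streaming algorithm $\alg$ for $f$ over $\mu$ whose success probability is exactly $\frac{1}{2} + \frac{1}{2}\Ex_{B}[\bias_\pi(i,B)]$. The assumption of \Cref{thm:xor-lemma} then caps this by $\frac{1}{2} + \delta$, which rearranges to the desired $\Ex_B[\bias_\pi(i,B)] \leq 2\delta$. The intuition is that the other players $Q_j$ ($j \neq i$) only influence the protocol's output through the blackboard, and their inputs $\sigma_j \sim \mu$ are independent of $\sigma_i$, so $\alg$ can sample them ``in its head'' and carry out their half of the protocol using its unbounded private randomness and local computation.

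Concretely, $\alg$ receives $\sigma_i \sim \mu$ as its stream input. It draws $\sigma_j \sim \mu$ independently for each $j \neq i$ via private randomness, together with the private randomness used by every $\alg_j$. Then $\alg$ simulates $\pi$ on $(\sigma_1,\ldots,\sigma_\ell)$ as follows. Its $n$ inner players (indexed as in \Cref{def:str}) exactly mimic the inner players of $\alg_i$ running on $\sigma_i$, so each inter-player message is precisely $s$ bits, respecting the space budget. The work of the other players is pushed into $P_0$ of $\alg$ between passes, where \Cref{def:str} allows arbitrary computation and unbounded private randomness. In round $r$, $P_0$ first computes $B^r_i$ by simulating $Q_1,\ldots,Q_{i-1}$ from the sampled inputs and the current board; it then hands $B^r_i$ to its inner $P_0$ (as in the game), lets the $r$-th pass of $\alg_i$ execute, reads off $M^r_i$ from inner $P_0$, and finally appends the messages of $Q_{i+1},\ldots,Q_\ell$ likewise. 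After $p$ passes, $P_0$ has the complete final board $B$ and outputs
\[
\arg\max_{\theta \in \{0,1\}} \Pr_{(\sigma_1,\ldots,\sigma_\ell) \sim \mu^\ell}\bracket{f(\sigma_i)=\theta \mid \rB^p = B},
\]
which is a fixed (randomness-free) function of $B$ determined by $\pi$ and $\mu$.

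For the analysis, observe that since $\alg$ generates $\sigma_j$ for $j \neq i$ from $\mu$ exactly, the joint law of $(\sigma_i,\rB^p)$ in $\alg$'s run coincides with that in $\pi$. The Bayes-optimal decision rule therefore gives
\[
\Pr[\alg(\sigma_i) = f(\sigma_i)] \;=\; \Ex_B\bracket{\max_{\theta} \Pr[f(\sigma_i)=\theta \mid \rB^p = B]} \;=\; \frac{1}{2} + \frac{1}{2}\,\Ex_B[\bias_\pi(i,B)].
\]
Since $\alg$ is a valid $p$-pass $s$-space streaming algorithm for $f$ over $\mu$, the hypothesis of \Cref{thm:xor-lemma} forces this quantity to be at most $\frac{1}{2}+\delta$, yielding the lemma.

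The main obstacle is convincing oneself that simulating the foreign players $Q_j$ does not blow up $\alg$'s $s$-bit space budget, since in principle the board $B$ and the messages $M^r_j$ can be very large. The key point I would emphasize is that $\alg$ never has to \emph{store} these explicitly in any inner message: everything beyond $\alg_i$'s own inner state is a deterministic function of $P_0$'s private randomness (which encodes all $\sigma_j$ for $j \neq i$ and all $\alg_j$-randomness) and the $s$-bit-at-a-time inner messages $P_0$ naturally receives and sends. Since \Cref{def:str} only charges $s$ bits to each of these inner messages and imposes no bound on the local computation or private-randomness tape of $P_0$ between stream elements, the simulation fits within the model and the reduction is sound.
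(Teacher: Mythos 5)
Your proof is correct and takes essentially the same approach as the paper: the key step in both is to convert $\pi$ into a $p$-pass $s$-space streaming algorithm for $f$ over $\mu$ by having $P_0$ absorb all blackboard bookkeeping and the other players' contributions locally, while the inner players faithfully run $\alg_i$ on $\sigma_i$. The only difference is cosmetic---the paper argues by contradiction and fixes $\sigma^{*}_{-i}$ to an optimal value via an averaging argument (yielding a deterministic algorithm), whereas you sample $\sigma_{-i}$ via $P_0$'s private randomness and argue directly; both are valid under \Cref{def:str}, though note that there is no ``private randomness used by every $\alg_j$'' to sample, since the $\alg_j$ are deterministic by construction of the game.
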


\begin{proof}
	To prove this lemma, we only need to turn $\pi$ into a $p$-pass $s$-space streaming algorithm $A$ for computing $f(\sigma_i)$ on the stream $\sigma_i \sim \mu$ (and not the entire input); the rest follows directly from the assumption of~\Cref{thm:xor-lemma} on the success probability of 
	streaming algorithms on $\mu$.  
	
	Suppose by way of contradiction that $\Ex_{B}\bracket{\bias_\pi(i,B)} > 2\delta$. Consider the estimator 
	\[
	g(B) := \argmax_{\theta \in \set{0,1}}\, \Pr_{\mu^\ell}\paren{f(\sigma_i)=\theta \mid B^p=B}.
	\]
	 Then, 
	by the definition of $\bias_\pi(i,B)$, we have $
		\Ex_{B}\Pr_{\sigma_i \sim \mu^{\ell}}(f(\sigma_i) = g(B)  \mid B^p=B) > \frac{1}{2} + \delta.$ 
Define $\sigma_{-i} = (\sigma_1,\ldots,\sigma_{i-1},\sigma_{i+1},\ldots,\sigma_\ell)$. 

	By an averaging argument, and since $\sigma_1,\ldots,\sigma_\ell$ are independent, there is a fixing of $\sigma_{-i}$ to some $\sigma^{*}_{-i}$ which results in 
	\begin{align}
		\Pr_{\sigma_i \sim \mu}\paren{f(\sigma_i) = g(B^*)} > \frac{1}{2} + \delta,  \label{eq:1-coordinate-avg}
	\end{align}
	where $B^* = B^*(\sigma^*_{-i},\sigma_i)$ is a random variable for the final content of the board given $(\sigma^*_{-i},\sigma_i)$ over the randomness of $\sigma_i$ only. 
	We now use this to design the streaming algorithm $A$ (with $\sigma^*_{-i}$ ``hard coded'' in the algorithm); it might be helpful to consult~\Cref{fig:game} when reading this part.

	Given the stream $\sigma \sim \mu$, $A$ works as follows: $P_0$ of $A$ will simulate running $\pi$ on $\sigma^*_1,\ldots,\sigma^*_{i-1}$ to obtain $B^1_i$. This allows $P_0$ 
	to start running $\alg_i$ on $\sigma_i=\sigma$ and $P_0,\ldots,P_{n}$ can collectively run the first pass of $\alg_i$ on $\sigma_i$; at the end, $P_0$ knows the message $M^1_i$ of $\pi$ 
	and thus $B^1_{i+1}$; this allows $P_0$ to simulate $\pi$ on $\sigma^*_{i+1},\ldots,\sigma^*_{\ell}$ on its own and obtain $B^1$.  
	This finishes one round of the protocol $\pi$ over $(\sigma^*_1,\ldots,\sigma^*_{-1},\sigma,\sigma^*_{i+1},\ldots,\sigma^*_\ell)$, while the players of $A$ only made one pass over $\sigma$ and communicated $s$ bits each
	(for running $\alg_i$ in space $s$ -- note that here $P_0$ is solely responsible for simulating the blackboard and thus require no further communication).   
	
	The players then continue this to simulate all $p$ rounds of $\pi$ in $p$ passes over the input $\sigma$ and space of $s$ bits. At the end, $P_0$ knows the entire content of the entire board $B$ and can output $g(B)$ as the 
	answer to $f(\sigma)$. Over the randomness of $\sigma \sim \mu$, the distribution of $(\sigma^*_1,\ldots,\sigma^*_{-1},\sigma,\sigma^*_{i+1},\ldots,\sigma^*_\ell)$ and $B$ is the same as $(\sigma^*_{-i},\sigma_i)$ and $B^*$ in~\Cref{eq:1-coordinate-avg}. 
	This means that $A$, which is a $p$-pass $s$-space streaming algorithm, outputs the correct answer to $f(\sigma)$ with probability $>1/2+\delta$ contradicting the assumption of~\Cref{thm:xor-lemma}. 
\end{proof}

To extend the bounds in this lemma to $\bias(B)$, we  like to use the fact that XOR dampens the bias of \emph{independent} bits (see~\Cref{sec:xor-bias}). 
Thus, we need to establish that these $f(\sigma_i)$ bits are not correlated after conditioning on $B$, which is done in the following lemma. This can be seen as an analogue 
of the rectangle property of standard communication protocols on product distributions. 

\begin{lemma}\label{lem:xor-independence}
	For any  $B$, 
	$
	\paren{\sigma_1,\ldots,\sigma_\ell \sim \mu^\ell \mid \rB^p=B} = \bigtimes_{i=1}^{\ell} \paren{\sigma_i \sim \mu^{\ell} \mid \rB^p=B},
	$
	 i.e., the input streams $\sigma_i$'s are independent \underline{even} conditioned on $\rB^p=B$. 
\end{lemma}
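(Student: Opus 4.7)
The plan is to establish the claim by a direct inductive analogue of the classical rectangle property of deterministic communication protocols on product distributions. Recall that in the game set up prior to this lemma, each $\alg_i$ is a \emph{deterministic} multi-pass streaming algorithm, so every message $M^j_i$ written by $Q_i$ on the blackboard is a fixed function of $\sigma_i$ and of the board content $B^j_i$ visible to $Q_i$ when it begins its $j$-th round. Moreover, the final board $B^p$ is simply the concatenation of all messages $M^j_i$ in the natural order $(j,i) = (1,1), \ldots, (1,\ell), (2,1), \ldots, (p,\ell)$, and each $B^j_i$ is itself a prefix of this concatenation.

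First I would prove, by induction on the position $t$ in this ordering, the following stronger statement: for every realization $T$ of the first $t$ messages,
\[
\paren{\sigma_1,\ldots,\sigma_\ell \sim \mu^\ell \mid T} = \bigtimes_{i=1}^{\ell} \paren{\sigma_i \sim \mu^{\ell} \mid T}.
\]
The base case $t=0$ is immediate since $\mu^\ell$ is a product measure. For the inductive step, let the $(t+1)$-st message in the ordering be $M^j_i$; by the protocol $B^j_i = T$, and by determinism of $\alg_i$ there is a function $g$ with $M^j_i = g(\sigma_i, T)$. Thus for any value $m$,
\[
\Pr\bracket{\sigma_1 = s_1, \ldots, \sigma_\ell = s_\ell \mid T, M^j_i = m} \propto \Pr\bracket{\sigma_1 = s_1, \ldots, \sigma_\ell = s_\ell \mid T} \cdot \mathbf{1}\set{g(s_i, T) = m}.
\]
By the inductive hypothesis the first factor factors across coordinates, and the indicator depends on $s_i$ alone; hence conditioning on the longer prefix keeps the joint distribution a product, with only the marginal of $\sigma_i$ being updated. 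Iterating through all $p \cdot \ell$ messages yields the claim for $T = B$.

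The only place where the argument could break is the determinism assumption: in a typical streaming setting one uses private randomness, but in the game $\alg_i$ is explicitly declared deterministic, and~\Cref{lem:game-to-stream} has already absorbed the randomness of the original streaming algorithm into a worst-case fixing before entering the game. The only remaining subtlety is to spell out that $B^j_i$ is a deterministic function of the ordered prefix of messages already written, which is immediate from the protocol's structure. With determinism in hand, the rectangle-style induction goes through cleanly and delivers the lemma.
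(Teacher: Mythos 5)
Your proposal is correct, and it takes a genuinely different route from the paper's proof. You argue directly at the level of conditional distributions: by induction on the ordered prefix of board messages, you observe that conditioning on the next message $M^j_i = g(\sigma_i, B^j_i)$ multiplies the (already product-form) conditional law by an indicator depending only on $\sigma_i$, so the joint law stays a product measure, with only the $i$-th marginal reweighted; iterating over all $p\ell$ messages gives the result. The paper instead works information-theoretically: it shows $\mi{\rX_i}{\rX_{-i}}[\rB^p] = 0$ by ``peeling off'' one board message at a time from the conditioning using \Cref{prop:info-decrease}, then invokes \itfacts{info-zero} to translate zero conditional mutual information into conditional independence. Both arguments march through the messages one at a time and both rest crucially on the same two facts you highlight --- determinism of the $\alg_i$, and that $M^j_i$ is a function of $(\sigma_i, B^j_i)$ alone because $B^j_i$ determines the entire history $\alg_i$ has seen. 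Your Bayes'-rule version is arguably more elementary and makes the rectangle structure explicit, at the cost of some bookkeeping about normalization; the paper's mutual-information version is more modular (reusing the same \Cref{prop:info-decrease} machinery that appears elsewhere, e.g., in \Cref{sec:pc-lower}) but slightly less transparent about exactly how the conditional law factors. The one minor point worth making explicit in your writeup is the one-sentence argument that $\rX_i \perp \rX_{-i} \mid B$ for all $i$ (or, in your version, that each one-coordinate update preserves the full $\ell$-fold product) implies \emph{mutual} independence of all $\ell$ coordinates, though this is immediate either way.
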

\begin{proof}
	The input streams are originally independent, so we need to show that the protocol $\pi$ in this game cannot correlate them after we condition on $B$. 

	Define the following random variables: $\rX_i$ for the input $\sigma_i$ of  player $i$, and $\rM^j_i$, $\rB^j_i$, and $\rB^j$, for $M^j_i$, $B^j_i$, and $B^j$ respectively. 
	Our goal is to prove that $\rX_1,\ldots,\rX_\ell$ are
	independent conditioned on any choice of $\rB^p = B$. To do this, we show that for any $i \in [\ell]$, 
	\begin{align}
		\mi{\rX_i}{\rX_{-i}}[\rB^p] = 0 \label{eq:mi-term}
	\end{align}
	 where $\rX_{-i} = (\rX_1,\ldots,\rX_{i-1},\rX_{i+1},\ldots,\rX_\ell)$. By~\itfacts{info-zero}, this implies that 
	$\rX_i \perp \rX_{-i} \mid \rB^p=B$ for any choice of $B$ and $i \in [\ell]$, which in turn proves the lemma. 
	
	To this end, we are going to peel off the messages written on the board one by one from the conditioning of~\Cref{eq:mi-term} without ever increasing the mutual information term. Then, 
	we will end up with a case when there is no conditioning on any part of $B$ and we can use the fact that $\rX_i$ and $\rX_{-i}$ are originally independent to finalize the proof. Formally, 
	\begin{align*}
		\mi{\rX_i}{\rX_{-i}}[\rB] &= \mi{\rX_i}{\rX_{-i}}[\rB^p_{i+1},\rM^p_{i+1},\ldots,\rM^p_{\ell}] \tag{as the content of the board after $\rB^p_{i+1}$ are only the last messages of players $Q_{i+1}$ to $Q_\ell$} \\
		&\leq \mi{\rX_i}{\rX_{-i}}[\rB^p_{i+1}],
	\end{align*}
	which holds by~\Cref{prop:info-decrease} because $\rX_i \perp \rM^p_{i+1},\ldots,\rM^p_{\ell} \mid \rB^p_{i+1},\rX_{-i}$ so dropping the conditioning can only increase the information. 
	This independence itself is because the messages sent by players after $i$ in the last round are deterministic 
	functions of their inputs and the content of the board after player $i$ speaks, namely, $\rB^p_{i+1}$, and thus in the above term, $(\rM^p_{i+1},\ldots,\rM^p_{\ell})$ is deterministically fixed after conditioning on $\rB^p_{i+1},\rX_{-i}$. 
	We can further write, 
	\begin{align*}
		\mi{\rX_i}{\rX_{-i}}[\rB^p_{i+1}] &= \mi{\rX_i}{\rX_{-i}}[\rB^p_{i},\rM^p_i] \tag{as the content of the board between $\rB^p_{i+1}$ and $\rB^p_i$ changes only by the last message of player $Q_i$} \\
		&\leq \mi{\rX_i}{\rX_{-i}}[\rB^p_{i}], 
	\end{align*}
	which again holds by~\Cref{prop:info-decrease} because $\rX_{-i} \perp \rM^p_{i} \mid \rB^p_{i},\rX_{i}$ as $\rM^p_i$ is a deterministic function of $\rB^p_i$ and $\rX_i$. Finally, 
	\begin{align*}
		\mi{\rX_i}{\rX_{-i}}[\rB^p_{i}] &= \mi{\rX_i}{\rX_{-i}}[\rB^{p-1},\rM^p_1,\ldots,\rM^p_{i-1}] \tag{as the board between $\rB^{p-1}$ and $\rB^p_i$ changes only by the last messages of players $Q_1$ to $i-1$} \\
		&\leq \mi{\rX_i}{\rX_{-i}}[\rB^{p-1}], 
	\end{align*}	
	by~\Cref{prop:info-decrease}, exactly as in the first part above because $\rX_i \perp \rM^p_1,\ldots,\rM^p_{i-1} \mid \rB^{p-1}, \rX_{-i}$, as conditioning on $\rB^{p-1}, \rX_{-i}$ fixes the last messages sent by players $1$ to $i-1$. 
	
	This way, we can shave off one entire round of communication from the conditioning in the LHS of~\Cref{eq:mi-term}. Applying this argument for all $p$ rounds, we have that, 
	\begin{align*}
		\mi{\rX_i}{\rX_{-i}}[\rB^p] \leq \mi{\rX_i}{\rX_{-i}}[\rB^{p-1}] \leq \cdots \leq \mi{\rX_i}{\rX_{-i}}[\rB^0] = \mi{\rX_i}{\rX_{-i}} = 0,
	\end{align*}
	where the last equality is because $\rX_i \perp \rX_{-i}$ in the distribution $\mu^{\ell}$ and thus the mutual information is zero between by~\itfacts{info-zero}. This proves~\Cref{eq:mi-term} and concludes the proof. 
\end{proof}

Finally, we use~\Cref{lem:xor-1-copy,lem:xor-independence} with~\Cref{eq:bias-pi} to bound the success probability of protocol $\pi$.

\begin{lemma}\label{lem:xor-pi}
	Protocol $\pi$ succeeds  with probability at most $\frac{1}{2} \cdot \paren{1+(2\delta)^{\ell}}$. 

\end{lemma}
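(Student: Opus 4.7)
The plan is to combine the three preceding ingredients---\Cref{eq:bias-pi}, \Cref{lem:xor-1-copy}, and \Cref{lem:xor-independence}---together with the XOR bias identity from \Cref{sec:xor-bias} to conclude the $(2\delta)^\ell$ bound on the protocol's advantage.

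Starting from \Cref{eq:bias-pi}, the success probability of $\pi$ equals $\frac{1}{2} + \frac{1}{2}\Ex_B[\bias_\pi(B)]$, so the task reduces to proving $\Ex_B[\bias_\pi(B)] \leq (2\delta)^\ell$. By \Cref{lem:xor-independence}, conditioning on the final board $\rB^p=B$ leaves the input streams $\sigma_1,\ldots,\sigma_\ell$---and therefore the bits $f(\sigma_1),\ldots,f(\sigma_\ell)$---mutually independent. Applying the standard XOR bias identity (that $\bias(X_1 \oplus \cdots \oplus X_\ell) = \prod_i \bias(X_i)$ for independent $0/1$ variables, as discussed in \Cref{sec:xor-bias}) conditionally on $B$, one obtains the pointwise factorization $\bias_\pi(B) = \prod_{i=1}^{\ell} \bias_\pi(i,B)$ for every $B$, and taking expectation over $B$ yields $\Ex_B[\bias_\pi(B)] = \Ex_B\bigl[\prod_i \bias_\pi(i,B)\bigr]$.

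The remaining (and crucial) step is to bound $\Ex_B\bigl[\prod_i \bias_\pi(i,B)\bigr] \leq (2\delta)^\ell$ using only the per-coordinate bound $\Ex_B[\bias_\pi(i,B)] \leq 2\delta$ from \Cref{lem:xor-1-copy}. A naive factorization of the expectation into a product of marginal expectations is not valid, since the $\bias_\pi(i,B)$'s are correlated functions of the same random $B$. My plan is to work with the signed conditional expectations $s_i(B) := \Ex\bracket{(-1)^{f(\sigma_i)} \mid B}$, for which $\bias_\pi(i,B) = |s_i(B)|$, and use the conditional independence to rewrite
\[
\Ex_B\Bigl[\prod_{i=1}^{\ell}\bias_\pi(i,B)\Bigr] = \Ex\Bigl[\epsilon(B) \cdot \prod_{i=1}^{\ell}(-1)^{f(\sigma_i)}\Bigr], \qquad \epsilon(B) := \prod_{i=1}^{\ell}\mathrm{sgn}(s_i(B)) \in \{-1,+1\}.
\]
This lets us view $\epsilon(B)$ as a single $\pm 1$-valued $B$-measurable predictor applied to the product of independent $\chi_i := (-1)^{f(\sigma_i)}$. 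I would then peel off one coordinate at a time: fixing $\sigma_{-i}$, the function $\epsilon(B)$ becomes a $B$-measurable $\pm 1$ predictor of $\chi_i$ driven by the residual streaming protocol on $\sigma_i$ (exactly as in the proof of \Cref{lem:xor-1-copy}), whose correlation with $\chi_i$ is at most $2\delta$; each such peel contributes a factor of $2\delta$, and iterating over all $\ell$ coordinates delivers the target $(2\delta)^\ell$ bound.

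The main obstacle is making the iterated peeling rigorous: after the first integration we are left with a $[-2\delta,2\delta]$-valued residual function of the remaining inputs, rather than a $\pm 1$-valued predictor of the form \Cref{lem:xor-1-copy} directly applies to. I expect to address this by rescaling and expressing the $[-1,1]$-valued normalized residual as a convex combination of $\pm 1$-valued predictors (equivalently, an average over auxiliary randomness), pushing the linearity of expectation through and reapplying \Cref{lem:xor-1-copy} to each summand. Once the $\ell$ factors of $2\delta$ accumulate, substituting back into \Cref{eq:bias-pi} yields the claimed success probability bound of $\frac{1}{2}(1+(2\delta)^\ell)$.
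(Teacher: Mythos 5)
Your first two paragraphs coincide with the paper's proof: reduce to showing $\Ex_B[\bias_\pi(B)]\leq(2\delta)^\ell$ via \Cref{eq:bias-pi}, then use \Cref{lem:xor-independence} together with \Cref{prop:xor-dampen-bias} to get the pointwise identity $\bias_\pi(B)=\prod_{i}\bias_\pi(i,B)$. The divergence is at the last step. The paper closes by writing $\Ex_B[\prod_i\bias_\pi(i,B)]=\prod_i\Ex_B[\bias_\pi(i,B)]$, asserting that the $\bias_\pi(i,B)$ are independent as random variables over the draw of $B$, and then invoking \Cref{lem:xor-1-copy} factor by factor. You instead treat the correlation among the $\bias_\pi(i,\cdot)$'s as a genuine obstacle and propose a peeling argument. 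The concern is not unreasonable --- each $\bias_\pi(i,B)$ is a function of the entire board, which depends on all inputs, so independence of these quantities over $B$ is a different statement from \Cref{lem:xor-independence} --- but your replacement argument does not go through as written.

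The gap is in iterating the peel. The first peel is fine: for fixed $\sigma_2,\dots,\sigma_\ell$, the map $\sigma_1\mapsto\epsilon(B)$ is realizable as a $p$-pass $s$-space streaming predictor of $f(\sigma_1)$ (by the simulation in the proof of \Cref{lem:xor-1-copy}), so $h(\sigma_{\geq 2}):=\Ex_{\sigma_1}[\epsilon(B)\chi_1]$ satisfies $|h|\leq 2\delta$ pointwise. To continue you must bound $\Ex_{\sigma_{\geq 2}}[h\prod_{i\geq 2}\chi_i]$ by $2\delta\cdot(2\delta)^{\ell-1}$, and your proposed device --- normalize $\tilde h=h/(2\delta)$ and write it as a mixture of $\pm1$-valued functions such as $\eta_r=\mathrm{sgn}(\tilde h-r)$ --- produces functions outside the class the hypothesis controls: $\tilde h$ is a rescaled average over $\sigma_1$ of board-measurable quantities, and the sign of such an average is, for fixed $\sigma_{\geq 3}$, an arbitrary $\pm1$-valued function of $\sigma_2$, not the output of an $s$-space streaming algorithm reading a simulated board. \Cref{lem:xor-1-copy} says nothing about such predictors; an arbitrary $\pm1$ function of $\sigma_2$ can have correlation $1$ with $\chi_2$. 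The decomposition of $h$ that does stay inside the streaming class --- namely $h=\Ex_{\sigma_1}[\epsilon(B)\chi_1]$ as an average over $\sigma_1$ of $\pm1$-valued board functions --- yields $|\Ex_{\sigma_{\geq 2}}[h\prod_{i\geq 2}\chi_i]|\leq(2\delta)^{\ell-1}$ by induction but forfeits the factor of $2\delta$ you just extracted. The two facts you need about $h$ (pointwise bound $2\delta$, and membership in the class of averaged streaming predictors) cannot be combined multiplicatively by convexity alone, so the argument stalls one factor of $2\delta$ short at every peel after the first. To close the lemma you must either establish the independence (or at least the product rule for expectations) that the paper invokes at this point, or find a genuinely different mechanism; the proposal currently supplies neither.
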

\begin{proof}
	We will prove that $\Ex_{B}\bracket{\bias(B)} \leq (2\delta)^{\ell}$ which implies the lemma by~\Cref{eq:bias-pi}. 
	Fix any $B$ and consider the random variables $f(\sigma_1),\ldots,f(\sigma_\ell)$ for $(\sigma_1,\ldots,\sigma_\ell) \sim \mu^\ell \mid \rB^p=B$. 
	By~\Cref{lem:xor-independence}, even in the distribution $\mu^{\ell} \mid \rB^p=B$, $\sigma_i$'s are independent which implies that 
	$f(\sigma_1),\ldots,f(\sigma_\ell)$ are also independent random variables conditioned on $B$. As such, for any $B$, 
	\begin{align*}
		\bias_{\pi}(B) = \bias(f(\sigma_1) \oplus \ldots \oplus f(\sigma_\ell) \mid \rB^p=B) = \prod_{i=1}^{\ell} \bias(f(\sigma_i) \mid \rB^p=B) = \prod_{i=1}^{\ell} \bias_{\pi}(i,B), 
	\end{align*}
	where the first and last equalities are by the definitions of $\bias_{\pi}(B)$ and $\bias_{\pi}(i,B)$, and the middle equality is by~\Cref{prop:xor-dampen-bias} and the independence of $f(\sigma_i)$'s conditioned on $B$, namely, the fact that XOR 
	dampens the biases of independent random bits.  
	Finally, 
	\begin{align*}
		\Ex_B\bracket{\bias_\pi(B)} = \Ex_B\bracket{\prod_{i=1}^{\ell} \bias_{\pi}(i,B)} = \prod_{i=1}^{\ell} \Ex_b\bracket{\bias_{\pi}(i,B)} \leq (2\delta)^{\ell}, 
	\end{align*}
	where the last equality is by the independence of $\bias_{\pi}(i,B)$ and the  inequality by~\Cref{lem:xor-1-copy}. 
\end{proof}

\Cref{thm:xor-lemma} now follows immediately from~\Cref{lem:game-to-stream,lem:xor-pi}.

% !TeX root = main.tex 
%!TEX root = main.tex
\newcommand{\rProt}{\rv{\Pi}}
\newcommand{\rP}{\rv{P}}
\newcommand{\unif}{\mathcal{U}}

\newcommand{\width}{w}
\newcommand{\depth}{d}

\newcommand{\pcw}{m}
\newcommand{\pcd}{b}

\newcommand{\istar}{\ensuremath{i^*}}

\newcommand{\hM}{\hat{M}}

\newcommand{\hpcw}{\hat{\pcw}}
\newcommand{\hpcd}{\hat{\pcd}}
\newcommand{\hG}{\hat{G}}

\section{The Lower Bound for the Noisy Gap Cycle Counting Problem}\label{sec:cycle}

We prove our main streaming lower bound for the noisy gap cycle counting problem in this section, defined formally as follows (see also~\Cref{fig:k-2k-cycle} in the Introduction for an illustration). 

\begin{definition}[\textbf{Noisy Gap Cycle Counting Problem (NGC)}]\label{def:NGCdef} Let $k,t \in \IN^+$ and $n=6 t \cdot k$. In $\NGC_{n,k}$, we have an $n$-vertex graph $G$ 
 with the promise that $G$ either contains $(i)$  $2t$ vertex-disjoint $k$-cycles, or $(ii)$ $t$ vertex-disjoint $(2k)$-cycles; in both cases, the remaining vertices of $G$ are partitioned into 
$4t$ vertex-disjoint paths of length $k-1$ (the ``noise'' part of the graph). The goal is to distinguish between these two cases. 
\end{definition}

We prove the following lower bound for this problem that formalizes~\Cref{res:ngc}. 

\begin{theorem}\label{thm:ngc}
	For every  $k \in \IN^+$, any $p$-pass streaming algorithm for Noisy Gap Cycle Counting $\NGC_{n,k}$ with probability of success at least $2/3$ requires $\Omega\paren{\frac{1}{p^5} \cdot (n/k)^{1-O(p/k)}}$ space. 
\end{theorem}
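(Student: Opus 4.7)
The plan is to follow the three-step strategy outlined in Section~\ref{sec:intro-xor}. First, I would set up a hard distribution $\mu_{\NGC}$ on NGC inputs: partition the $n$ vertices into $t = \Theta(n/k)$ ``slots'' of $2k$ vertices each, and within each slot either form two $k$-cycles arranged as two parallel $k$-paths glued at their endpoints (Yes case) or a single $(2k)$-cycle (No case), together with a canonical layout of the $4$ length-$(k-1)$ noise paths. The NGC promise forces every slot to be of the same type, but I would pass to the decorrelated product distribution $\mu^{*}$ in which each slot independently chooses its type uniformly. By a standard hybrid/averaging argument (essentially: the NGC answer is the majority/AND of per-slot answers, so an algorithm with bias $\Omega(1)$ on $\mu_{\NGC}$ must correlate with a random slot's type on $\mu^{*}$), any $p$-pass $s$-space algorithm solving NGC with probability $\tfrac23$ yields, for a uniformly random slot $\istar$, an algorithm that predicts the type of slot $\istar$ on $\mu^{*}$ with advantage at least $\Omega(1/t) = \Omega(k/n)$.

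Second, I would reinterpret the single-slot subproblem as a streaming pointer chasing instance of depth roughly $k/2$ and width $m = \Theta(n/k)$: the algorithm is effectively handed an entry vertex of slot $\istar$ and, through the layered presentation of the graph's edges, must decide whether the $k/2$-hop walk from that vertex closes back to form a $k$-cycle or continues into a $(k+1)$-path. To enter the regime where standard pointer chasing lower bounds apply, I would \emph{embed} $\ell$ independent copies of depth-$(k/2\ell)$ pointer chasing into a single depth-$(k/2)$ instance by concatenating $\ell$ small walks whose cycle/path answer is the XOR of the small answers (the noise paths of NGC are critical here, providing the cover needed to hide the boundaries between successive small chains). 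Then, applying Theorem~\ref{thm:xor-lemma} in contrapositive form, if every $p$-pass $s$-space algorithm solves the single depth-$(k/2\ell)$ instance on the induced hard distribution with advantage at most $\delta$, the $\ell$-fold XOR has advantage at most $\tfrac12(2\delta)^{\ell}$; combined with the bias $\Omega(k/n)$ from step one, this forces
\[
\delta \;\geq\; \Omega\!\big((k/n)^{1/\ell}\big).
\]

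Third, I would prove a direct streaming lower bound for depth-$b$, width-$m$ pointer chasing under the cycle-vs.-path hard distribution: whenever $p \leq b-1$, any algorithm with advantage $\delta$ must use space at least $\Omega\!\big(m\,\delta^{2}/\polylog(m,p)\big)$. I would establish this through an information-complexity / round-elimination argument tracking how a single pass can only reduce the conditional entropy of the endpoint of the partial walk by at most the message size, together with a Pinsker-style conversion from information cost to advantage that produces the quadratic $\delta^{2}$ factor. Setting $\ell = \Theta(k/p)$ so that $b = k/(2\ell) \geq p+1$, and plugging in $m = \Theta(n/k)$ and the lower bound on $\delta$ from step two, gives
\[
s \;\geq\; \Omega\!\Big(\tfrac{1}{p^{5}}\,(n/k)^{1-O(p/k)}\Big),
\]
matching the claim; the $p^{5}$ factor (and the constant inside $O(p/k)$) absorb the polynomial losses in the round-elimination and the XOR composition.

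The main obstacle will be the third step: standard pointer chasing lower bounds are stated for uniform ``functional'' instances with constant advantage, whereas here the hard distribution comes from the geometric cycle-vs.-path gadget of NGC and we need a bound that is quadratic in the (very small) advantage $\delta$, plus care that the noise paths do not accidentally leak information to the algorithm. Carrying out a bias-preserving round-elimination on this specific gadget, while keeping the polynomial overhead in $p$ small enough that the final exponent is $1 - O(p/k)$ rather than, say, $1 - O(\sqrt{p/k})$ as in prior work, is the technically delicate part of the argument.
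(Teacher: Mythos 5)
Your three-step plan (decorrelate, apply the streaming XOR Lemma, prove a single-copy pointer-chasing bound) is the same high-level strategy the paper uses, and the final parameter balancing $\ell=\Theta(k/p)$, $\hat b = p+1$, $\hat m = \Theta(n/k)$ matches the paper's calculation. But several intermediate details are off in ways that would matter if you tried to execute the plan.

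First, you attribute the role of the noise paths to Step~2 (``providing the cover needed to hide the boundaries between successive small chains''), but in the paper the noise is essential to \emph{Step~1}, not Step~2. In the decorrelation step, after fixing the paths $\PP(v_j)$ for $j\neq \istar$, the number of remaining $X$ and $Y$ vertices is unbalanced; the noise vertices in $V_1\setminus S$ are used to absorb this discrepancy (step~(2) of Claim~\ref{clm:sprocess}) so that what remains is a \emph{clean random layered graph} on which the algorithm is solving a genuine pointer-chasing instance. Step~2 then operates entirely on clean layered-graph $\PC$ instances with no noise at all. Second, the XOR composition of Step~2 is not a simple ``concatenation of $\ell$ small walks''. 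The paper's XOR product replicates each subgraph $G_r$ into \emph{four} identical copies and wires their $X$- and $Y$-boundary sets with identity matchings in a parity-encoding pattern (Claim~\ref{clm:xor-product}); doing this carefully is what makes $\PC(H)=\oplus_r \PC(G_r)$ hold while keeping $H$ a random layered graph (after a relabeling step), so that~\Cref{thm:xor-lemma} applies to a product of i.i.d.\ $\distPC$ samples. A naive chaining of walks does not have this self-correcting structure. Third, your single-copy bound has the wrong exponent on the advantage: the paper gets $s=\Omega(\gamma^4 \pcw / \pcd^5)$, not $\gamma^2$. The extra square comes precisely from the subtlety you flag at the end: the layers are random perfect \emph{matchings}, not independent random functions, so one cannot use plain subadditivity of entropy (which would give a useless $\en{\rM}-\en{\rProt}$ bound); the paper instead invokes the Entropy Subset Inequality (\Cref{fact:esi}), which, combined with the Pinsker step, produces the nested square roots and hence the $\gamma^4$. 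The weaker $\gamma^4$ still yields the same asymptotic $n^{1-O(p/k)}$ bound after substitution, so this is not fatal, but claiming $\gamma^2$ without an argument for how to beat the matching correlations is a gap.

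Finally, one smaller imprecision: in your decorrelated distribution you describe $\mu^{*}$ as ``each slot independently chooses its type uniformly'' and frame the NGC answer as a ``majority/AND'' of per-slot answers. The paper's hybrid is more specific: it interpolates through $f^0,\dots,f^t$ (flipping one coordinate at a time) and finds a pivotal $\istar$ where adjacent hybrids $\mu(f^{\istar-1})$ and $\mu(f^{\istar})$ are $\Omega(1/t)$-distinguished; the final $\mu^*$ is the two-point mixture $\tfrac12\mu(f^{\istar-1})+\tfrac12\mu(f^{\istar})$, i.e., exactly one slot is random and the rest are pinned. This pinning is what lets the averaging argument reduce cleanly to a single $\PC$ instance via the sampling process of Claim~\ref{clm:sprocess}. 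The fully-product $\mu^*$ you describe would need a further argument before it reduces to a single $\PC$ instance.
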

Note that for this lower bound to be non-trivial, both $k$ needs to be at least some large constant, and $p$ should be smaller than $k$ by a similar factor.  
The rest of this section is organized as follows. We first design a hard input distribution for $\NGC_{n,k}$ and prove its useful properties for our purpose. We then give a high level plan of the lower bound for this distribution, followed
by the details of each step, and  the proof of the theorem. We start out with some extra definitions as we need some structure on the family of graphs we work with in order to describe our hard distribution.

\begin{definition}[\textbf{Layered Graph}]\label{def:lg}
	For any integers $\width,\depth \geq 1$, we define a \emph{${(\width,\depth)}$-layered graph}, with \emph{width} $\width$ and \emph{depth} $\depth$, as any graph $G=(V,E)$ with the following properties: 
	\begin{enumerate}[label=$(\roman*)$]
		\item Vertices $V$ consist of $\depth+1$ \emph{layers} of vertices $V_1,\ldots,V_{\depth+1}$, each of size $\width$.  
		\item Edges $E$ consist of $\depth$ \emph{matchings} $M_1,\ldots,M_{\depth}$ where $M_i$ is a perfect matching between $M_i,\,M_{i+1}$. 
	\end{enumerate}
	For any vertex $v \in V_1$, we use $P(v)$ to denote the \emph{unique} vertex reachable from $v$ in $V_{\depth+1}$. 
	
	Moreover, by a \emph{random} layered graph, we mean a layered graph whose matchings are chosen \emph{uniformly at random} and \emph{independently} but the partitioning of vertices into the layers is fixed. 
\end{definition}

In our proof, we also work the \emph{pointer chasing} problem (although with several non-standard aspects). We define this problem as follows:

\begin{definition}[\textbf{Pointer Chasing (PC)}]\label{def:PC} Let $\pcw,\pcd \in \IN^+$. In $\PC_{\pcw,\pcd}$, we have a $(\pcw,\pcd)$-layered graph on layers $V_1,\ldots,V_{\pcd+1}$, an arbitrary vertex $s \in V_1$, and 
an arbitrary equipartition $X,Y$ of $V_{\pcd+1}$. The goal is to decide whether $P(s) \in V_{\pcd+1}$ belongs to $X$ (a $X$-instance) or $Y$  (a $Y$-instance). 
\end{definition}

\Cref{fig:pc} gives an illustration of this problem. 

\bigskip

\begin{figure}[h!]	
\centering
\subcaptionbox{\footnotesize $X$-instance of $\PC_{6,4}$}%
  [.45\linewidth]{% !TeX root = ../main.tex 
%!TEX root = ../main.tex

\begin{tikzpicture}

\tikzset{choose/.style={rectangle, draw, rounded corners=5pt, line width=1pt, inner xsep=5pt, inner ysep=2pt]}}
\tikzset{layer/.style={rectangle, rounded corners=8pt, draw, black, line width=1pt, inner sep=3pt]}}

\node[vertex] (v11){};
\foreach \j in {2,...,6}
{
	\pgfmathtruncatemacro{\jp}{\j-1};
	\node[vertex] (v1\j) [below=1pt of v1\jp]{};
}
\node[layer] (V1) [fit=(v11) (v16)]{};
\foreach \i in {2,3,4,5}
{
	\pgfmathtruncatemacro{\ip}{\i-1};
	\node[vertex] (v\i1) [right=30pt of v\ip1]{};
	\foreach \j in {2,...,6}
	{
		\pgfmathtruncatemacro{\jp}{\j-1};
		\node[vertex] (v\i\j) [below=1pt of v\i\jp]{};
	}
	\node[layer] (V\i) [fit=(v\i1) (v\i6)]{};
}
\foreach \i in {1,...,4}
{
	\pgfmathtruncatemacro{\ip}{\i+1};
	\foreach \j in {1,...,6}
	{
		\pgfmathtruncatemacro{\jp}{Mod(\i+\j+1,6)+1};
		
		\draw[black, line width=0.75pt]
			(v\i\j) to (v\ip\jp);
	}
}

\begin{scope}[on background layer]
\node [left=5pt of v11]{$s$};
\node[choose, blue!25, fill=blue!25] (S) [fit=(v11) (v11)]{};

\node[choose, green!50, fill=green!50] (X) [fit=(v51) (v53)]{};
\node[choose, red!25, fill=red!25] (Y) [fit=(v54) (v56)]{};
\node [right=3pt of v51] {$P(s)$};
\node [below=10pt of v16] {$V_1$};
\node [below=10pt of v56] {$V_5$};
\end{scope}

\draw[red, line width=1pt]
	(v11) to (v24)
	(v24) to (v32)
	(v32) to (v41)
	(v41) to (v51);

\end{tikzpicture}}
  \hspace{1cm}
\subcaptionbox{\footnotesize $Y$-instance of $\PC_{6,4}$}%
  [.45\linewidth]{% !TeX root = ../main.tex 
%!TEX root = ../main.tex

\begin{tikzpicture}

\tikzset{choose/.style={rectangle, draw, rounded corners=5pt, line width=1pt, inner xsep=5pt, inner ysep=2pt]}}
\tikzset{layer/.style={rectangle, rounded corners=8pt, draw, black, line width=1pt, inner sep=3pt]}}

\node[vertex] (v11){};
\foreach \j in {2,...,6}
{
	\pgfmathtruncatemacro{\jp}{\j-1};
	\node[vertex] (v1\j) [below=1pt of v1\jp]{};
}
\node[layer] (V1) [fit=(v11) (v16)]{};
\foreach \i in {2,3,4,5}
{
	\pgfmathtruncatemacro{\ip}{\i-1};
	\node[vertex] (v\i1) [right=30pt of v\ip1]{};
	\foreach \j in {2,...,6}
	{
		\pgfmathtruncatemacro{\jp}{\j-1};
		\node[vertex] (v\i\j) [below=1pt of v\i\jp]{};
	}
	\node[layer] (V\i) [fit=(v\i1) (v\i6)]{};
}
\foreach \i in {1,...,3}
{
	\pgfmathtruncatemacro{\ip}{\i+1};
	\foreach \j in {1,...,6}
	{
		\pgfmathtruncatemacro{\jp}{Mod(\i+\j+4,6)+1};
		
		\draw[black, line width=0.75pt]
			(v\i\j) to (v\ip\jp);
	}
}

\begin{scope}[on background layer]
\node [left=5pt of v11]{$s$};
\node[choose, blue!25, fill=blue!25] (S) [fit=(v11) (v11)]{};
\node[choose, green!50, fill=green!50] (X) [fit=(v51) (v53)]{};
\node[choose, red!25, fill=red!25] (Y) [fit=(v54) (v56)]{};
\node [right=3pt of v56] {$P(s)$};
\node [below=10pt of v16] {$V_1$};
\node [below=10pt of v56] {$V_5$};
\end{scope}

\draw[black, line width=0.75pt]
(v41)to(v53)
(v42)to(v54)
(v43)to(v55)
(v44)to(v56)
(v45)to(v51)
(v46)to(v52);

\draw[red, line width=1pt]
	(v11) to (v21)
	(v21) to (v32)
	(v32) to (v44)
	(v44) to (v56); 

\end{tikzpicture}}
\caption{An illustration of $\PC_{\pcw,\pcd}$ for $\pcw=6$ and $\pcd=4$. The edges are perfect matchings that go between consecutive vertex layers. The start vertex $s$ is depicted in the vertex layer $V_1$, and the sets $X,Y$ are marked in the vertex layer $V_5$. $P(s)$, the unique vertex in $V_5$ reachable from $s$ is also shown.} 
\label{fig:pc}
\end{figure}
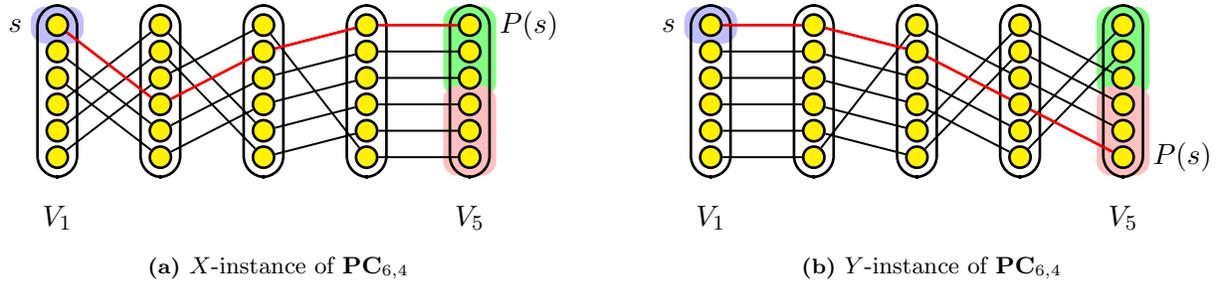

\subsection{A Hard Distribution for \NGC}\label{sec:ngc-dist}

We can now define our hard input distribution. We sample a random $(\width,\depth)$-layered graph $G_0$ for parameter $\width=3t$ and $\depth=\frac{k-2}{2}$ for $\NGC_{n,k}$ \emph{conditioned on the following event}: 
\begin{itemize}
\item Let $S \subseteq V_1$ be a fixed subset of size $t$, and $X,Y$ be a fixed equipartition of 
$V_{\depth+1}$ (say, both are the lexicographically-first option); then $\set{P(v) \mid v \in S}$ is either a subset of $X$ or $Y$. 
\end{itemize}
We construct the final graph $G$ from four \emph{identical} copies of $G_0$ (on disjoint sets of vertices), plus some fixed gadget so that it satisfies the following property: when all vertices $v \in S$ have $P(v) \in X$, the resulting graph $G$ has 
$2t$ cycles of length $k$ each; otherwise, it has $t$ cycles of length $2k$ instead; in both cases, the graph $G$ also has $4t$ paths of length $k-1$. 
We now present the formal description of the distribution (see~\Cref{fig:ngc} for an illustration).

\begin{Distribution}\label{distngc}
The distribution $\distNGC$ for $\NGC_{n,k}$ for given parameters $n,k$ and $t = n/6k$.

\begin{enumerate}[label=$(\roman*)$]
	\item Let $\depth :=\frac{(k-2)}{2}$ and sample a random $(3t,\depth)$-layered graph $G_0$ on vertices $V_1,\ldots,V_{\depth+1}$ and matchings $M_1,\ldots,M_\depth$ \emph{conditioned} on the following event:
	\begin{itemize}[leftmargin=10pt]
		\item Let $S$ be a fixed $t$-subset of $V_1$ and $X,Y$ be a fixed equipartition of $V_{\depth+1}$.  
		Then, $\set{P(v) \mid v \in S}$ is entirely a subset of $X$ (a \emph{$X$-instance}) or a subset of $Y$ (a \emph{$Y$-instance}).
	\end{itemize}
	\item Create the following graph $G=(V,E)$ on groups of vertices $V^{i}_j$ for $i \in [4]$ and $j \in [\depth+1]$ using four identical copies of the graph sampled $G_0$ above:
	\begin{enumerate}[leftmargin=10pt]
		\item For every $j \in [\depth+1]$, let $V^i_j$ be the copies of $V_j$ in $G_0$ and define $S^i, X^i,Y^i$ as copies of $S,X,Y$, respectively (the same for all $i \in [4]$) -- for any vertex $v \in G_0$ and $i \in [4]$, we use $copy(v,i)$ to denote the copy of $v$ in $V^i$.  
		\item\label{line:rand} Connect $V^i_j$ to $V^i_{j+1}$ for any $i,j$ by a matching $M^i_j$ corresponding to $M_j$ of $G_0$.
		\item\label{line:det} Connect $S^1$ to $S^3$, and $S^2$ to $S^4$ using identity perfect matchings, respectively. Similarly, connect $X^1$ to $X^3$ and $X^2$ to $X^4$, and $Y^1$ to $Y^4$ and $Y^2$ to $Y^3$ using identity perfect matchings,
		respectively (note the crucial change between the treatment of $X^i$ and $Y^i$). 
	\end{enumerate}
	\item The input stream consists of edges inserted in~\eqref{line:det} in some arbitrary order, followed by $M^i_j$ in \emph{decreasing} order of $j$ and increasing order of $i$ (the order inside each $M^i_j$ is arbitrary), i.e.,
	this part of the stream is $M^1_{\depth} \conc \cdots \conc M^1_1 \conc \cdots \conc M^4_{\depth} \conc \cdots \conc M^4_1$. 
\end{enumerate}
\end{Distribution}

  \begin{figure}[t!]	
\centering
% !TeX root = ../main.tex 
%!TEX root = ../main.tex

\begin{tikzpicture}
	
%%\tikzset{Block/.style={rectangle, rounded corners=5pt, draw, minimum width=13pt, minimum height=20pt, inner sep=0pt]}}
%%\tikzset{bBlock/.style={rectangle, rounded corners=5pt, draw, minimum width=15pt, minimum height=31pt, inner sep=0pt]}}
%%\tikzset{fBlock/.style={rectangle, rounded corners=5pt, minimum width=13pt, minimum height=20pt, inner sep=0pt]}}

\tikzset{choose/.style={rectangle, draw, rounded corners=5pt, line width=1pt, inner xsep=5pt, inner ysep=2pt]}}
\tikzset{layer/.style={rectangle, rounded corners=8pt, draw, black, line width=1pt, inner sep=3pt]}}

\node[vertex] (v11){};
\foreach \j in {2,...,6}
{
	\pgfmathtruncatemacro{\jp}{\j-1};
	\node[vertex] (v1\j) [below=1pt of v1\jp]{};
}
\node[layer] (V1) [fit=(v11) (v16)]{};
\foreach \i in {2,3,4,5}
{
	\pgfmathtruncatemacro{\ip}{\i-1};
	\node[vertex] (v\i1) [right=30pt of v\ip1]{};
	\foreach \j in {2,...,6}
	{
		\pgfmathtruncatemacro{\jp}{\j-1};
		\node[vertex] (v\i\j) [below=1pt of v\i\jp]{};
	}
	\node[layer] (V\i) [fit=(v\i1) (v\i6)]{};
}
\foreach \i in {1,...,4}
{
	\pgfmathtruncatemacro{\ip}{\i+1};
	\foreach \j in {1,...,6}
	{
		\pgfmathtruncatemacro{\jp}{Mod(\i+\j+1,6)+1};
		
		\draw[black, line width=0.75pt]
			(v\i\j) to (v\ip\jp);
	}
}

\begin{scope}[on background layer]
\node[choose, dashed, line width=1pt, gray, fill=gray!10] (G1) [fit=(V1) (V5)]{};
\node [left=5pt of G1] {$G_1$};
\node[choose, blue!25, fill=blue!25] (S) [fit=(v11) (v12)]{};
\node[choose, green!50, fill=green!50] (X) [fit=(v51) (v53)]{};
\node[choose, red!25, fill=red!25] (Y) [fit=(v54) (v56)]{};
\end{scope}

\node[vertex] (2v11) [below=70pt of v11]{};
\foreach \j in {2,...,6}
{
	\pgfmathtruncatemacro{\jp}{\j-1};
	\node[vertex] (2v1\j) [below=1pt of 2v1\jp]{};
}
\node[layer] (2V1) [fit=(2v11) (2v16)]{};
\foreach \i in {2,3,4,5}
{
	\pgfmathtruncatemacro{\ip}{\i-1};
	\node[vertex] (2v\i1) [right=30pt of 2v\ip1]{};
	\foreach \j in {2,...,6}
	{
		\pgfmathtruncatemacro{\jp}{\j-1};
		\node[vertex] (2v\i\j) [below=1pt of 2v\i\jp]{};
	}
	\node[layer] (2V\i) [fit=(2v\i1) (2v\i6)]{};
}
\foreach \i in {1,...,4}
{
	\pgfmathtruncatemacro{\ip}{\i+1};
	\foreach \j in {1,...,6}
	{
		\pgfmathtruncatemacro{\jp}{Mod(\i+\j+1,6)+1};
		
		\draw[black, line width=0.75pt]
			(2v\i\j) to (2v\ip\jp);
	}
}

\begin{scope}[on background layer]
\node[choose, dashed, line width=1pt, gray, fill=gray!10] (G2) [fit=(2V1) (2V5)]{};
\node [left=5pt of G2] {$G_2$};
\node[choose, blue!25, fill=blue!25] (S) [fit=(2v11) (2v12)]{};
\node[choose, green!50, fill=green!50] (X) [fit=(2v51) (2v53)]{};
\node[choose, red!25, fill=red!25] (Y) [fit=(2v54) (2v56)]{};
\end{scope}

\node[vertex] (3v11) [right=350pt of v11]{};
\foreach \j in {2,...,6}
{
	\pgfmathtruncatemacro{\jp}{\j-1};
	\node[vertex] (3v1\j) [below=1pt of 3v1\jp]{};
}
\node[layer] (3V1) [fit=(3v11) (3v16)]{};
\foreach \i in {2,3,4,5}
{
	\pgfmathtruncatemacro{\ip}{\i-1};
	\node[vertex] (3v\i1) [left=30pt of 3v\ip1]{};
	\foreach \j in {2,...,6}
	{
		\pgfmathtruncatemacro{\jp}{\j-1};
		\node[vertex] (3v\i\j) [below=1pt of 3v\i\jp]{};
	}
	\node[layer] (3V\i) [fit=(3v\i1) (3v\i6)]{};
}
\foreach \i in {1,...,4}
{
	\pgfmathtruncatemacro{\ip}{\i+1};
	\foreach \j in {1,...,6}
	{
		\pgfmathtruncatemacro{\jp}{Mod(\i+\j+1,6)+1};
		
		\draw[black, line width=0.75pt]
			(3v\i\j) to (3v\ip\jp);
	}
}

\begin{scope}[on background layer]
\node[choose, dashed, line width=1pt, gray, fill=gray!10] (G3) [fit=(3V1) (3V5)]{};
\node [right=5pt of G3] {$G_3$};
\node[choose, blue!25, fill=blue!25] (S) [fit=(3v11) (3v12)]{};
\node[choose, green!50, fill=green!50] (X) [fit=(3v51) (3v53)]{};
\node[choose, red!25, fill=red!25] (Y) [fit=(3v54) (3v56)]{};
\end{scope}

\node[vertex] (4v11) [below=70pt of 3v11]{};
\foreach \j in {2,...,6}
{
	\pgfmathtruncatemacro{\jp}{\j-1};
	\node[vertex] (4v1\j) [below=1pt of 4v1\jp]{};
}
\node[layer] (4V1) [fit=(4v11) (4v16)]{};
\foreach \i in {2,3,4,5}
{
	\pgfmathtruncatemacro{\ip}{\i-1};
	\node[vertex] (4v\i1) [left=30pt of 4v\ip1]{};
	\foreach \j in {2,...,6}
	{
		\pgfmathtruncatemacro{\jp}{\j-1};
		\node[vertex] (4v\i\j) [below=1pt of 4v\i\jp]{};
	}
	\node[layer] (4V\i) [fit=(4v\i1) (4v\i6)]{};
}
\foreach \i in {1,...,4}
{
	\pgfmathtruncatemacro{\ip}{\i+1};
	\foreach \j in {1,...,6}
	{
		\pgfmathtruncatemacro{\jp}{Mod(\i+\j+1,6)+1};
		
		\draw[black, line width=0.75pt]
			(4v\i\j) to (4v\ip\jp);
	}
}

\begin{scope}[on background layer]
\node[choose, dashed, line width=1pt, gray, fill=gray!10] (G4) [fit=(4V1) (4V5)]{};
\node [right=5pt of G4] {$G_4$};
\node[choose, blue!25, fill=blue!25] (S) [fit=(4v11) (4v12)]{};
\node[choose, green!50, fill=green!50] (X) [fit=(4v51) (4v53)]{};
\node[choose, red!25, fill=red!25] (Y) [fit=(4v54) (4v56)]{};
\end{scope}

\foreach \j in {1,...,3}
{
	\pgfmathtruncatemacro{\jp}{\j+3};
	\draw[red, line width=1.5pt]
		(v5\j.east) to (3v5\j.west)
		(v5\jp.east) to (4v5\jp.west)
		(2v5\jp.east) to (3v5\jp.west)
		(2v5\j.east) to (4v5\j.west);
}

\draw[red, line width=1.5pt]
	(v11.west) to ($(v11)+(-15pt,0pt)$) to ($(v11)+(-15pt,18pt)$) to ($(3v11)+(15pt,18pt)$) to ($(3v11)+(15pt,0pt)$) to (3v11.east)
	(v12.west) to ($(v12)+(-20pt,0pt)$) to ($(v12)+(-20pt,18pt)$) to ($(3v12)+(20pt,18pt)$) to ($(3v12)+(20pt,0pt)$) to (3v12.east)
	(2v11.west) to ($(2v11)+(-15pt,0pt)$) to ($(2v11)+(-15pt,-58pt)$) to ($(4v11)+(15pt,-58pt)$) to ($(4v11)+(15pt,0pt)$) to (4v11.east)
	(2v12.west) to ($(2v12)+(-20pt,0pt)$) to ($(2v12)+(-20pt,-58pt)$) to ($(4v12)+(20pt,-58pt)$) to ($(4v12)+(20pt,0pt)$) to (4v12.east);

\end{tikzpicture}
\caption{An illustration of~\Cref{distngc} for $t=2$ and $k=10$. The thin (black) edges are formed by the random input matchings while thick (red) edges are input-independent. The final graph is obtained via four identical copies of 
a $(6,4)$-layered graph and the sets $S,X,Y$ are marked in each one. The input stream consists of the edges these matchings ordered from the inner matchings to the outer ones. The input drawn shows a $k$-cycle instance.} 
\label{fig:ngc}
\end{figure}
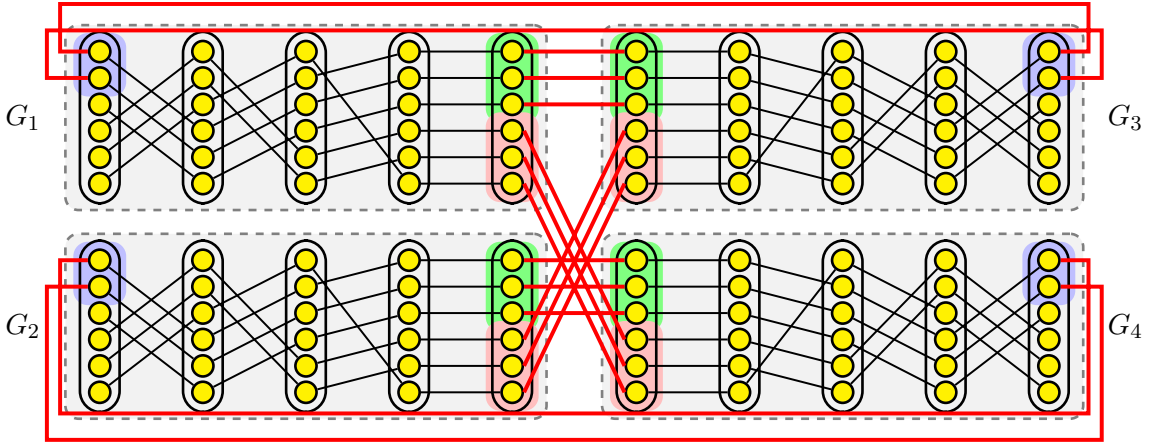

 A careful reader may have noticed that~\Cref{distngc} already impose some constraints on the choice of $k$, e.g., $k$ needs to be even and more constraints will also follow (we also need $t$ to be an even number so $V_{\depth+1}$ admits an equipartition). Yet~\Cref{thm:ngc} is supposed to work 
for all choices of $k$, in particular, we need odd values of $k$ for some of our reductions. We will however fix this later by some simple padding argument  when concluding the proof of~\Cref{thm:ngc}.

We start by proving that this distribution indeed outputs valid instances of $\NGC_{n,k}$. 

\begin{lemma}\label{lem:distngc-valid}
	Let $G$ be a graph sampled from~\Cref{distngc}. Then, 
	\begin{enumerate}[label=$(\roman*)$]
		\item if $G_0$ is a $X$-instance, then $G$ consists of $2t$ cycles of length $k$ and $4t$ paths of length $k-1$;
		\item if $G_0$ is a $Y$-instance, then $G$ consists of $t$ cycles of length $2k$ and $4t$ paths of length $k-1$.
	\end{enumerate}
\end{lemma}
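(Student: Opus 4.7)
The plan is to argue purely structurally: show that every vertex of $G$ has degree $1$ or $2$, identify the degree-$1$ vertices, and then trace each connected component. Since the only degree-$1$ vertices will be $V^i_1 \setminus S^i$ across all four copies ($8t$ in total), the components of $G$ consist of exactly $4t$ paths plus some cycles, and the only remaining task is to compute their lengths.

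First I would verify the degree structure. Every internal-layer vertex $V^i_j$ with $1 < j < \depth+1$ has degree $2$ from matchings $M^i_{j-1}$ and $M^i_j$. Vertices in $V^i_{\depth+1}$ have one matching edge from $M^i_\depth$ and exactly one edge from the identity matchings among $X^1\leftrightarrow X^3$, $X^2\leftrightarrow X^4$, $Y^1\leftrightarrow Y^4$, $Y^2\leftrightarrow Y^3$; so they have degree $2$. Vertices in $S^i$ likewise have degree $2$ (one from $M^i_1$, one from the identity $S^1\leftrightarrow S^3$ or $S^2\leftrightarrow S^4$). Only $V^i_1 \setminus S^i$ is left with degree $1$.

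Next I would trace each component by exploiting the fact that the four copies share the same underlying graph $G_0$: if one enters copy $i$ at a vertex $v \in V^i_1$ and exits the same copy from $V^i_{\depth+1}$, one exits at $P(v)$'s copy, and traversing a parallel copy backwards from that same $P(v)$-image returns to the $v$-image. This is the key identity that makes the cycle/path lengths clean. For the $X$-instance, starting from $v \in S^1$, I follow $v \to P(v) \in X^1 \to P(v) \in X^3 \to v \in S^3 \to v \in S^1$; the length is $\depth+1+\depth+1 = k$. The analogous trace from $S^2$ gives a $k$-cycle through copies $2,4$, yielding $2t$ cycles of length $k$. For vertices $v \in V^i_1\setminus S^i$, the component is a path from $v$ through $P(v)$'s copy to the parallel-copy image of $v$ (in $V^{i'}_1 \setminus S^{i'}$, where $i'$ depends on whether $P(v)\in X^i$ or $Y^i$), of length $\depth+1+\depth=k-1$. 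Counting is then an easy bookkeeping exercise using $|S|=t$, $|X|=|Y|=3t/2$, and the fact that in an $X$-instance $S$ hits $t$ vertices of $X$ and $0$ of $Y$: exactly $t/2 + t/2 + 3t/2 + 3t/2 = 4t$ paths arise from $X^1\leftrightarrow X^3$, $X^2\leftrightarrow X^4$, $Y^1\leftrightarrow Y^4$, and $Y^2\leftrightarrow Y^3$ respectively.

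For the $Y$-instance, the same trace from $v \in S^1$ now goes $v \in S^1 \to P(v) \in Y^1 \to P(v) \in Y^4 \to v \in S^4 \to v \in S^2 \to P(v) \in Y^2 \to P(v) \in Y^3 \to v \in S^3 \to v \in S^1$, which threads all four copies; the length is $4(\depth+1) = 2k$, so we get $t$ cycles of length $2k$. The paths for $v \in V^i_1\setminus S^i$ still have length $k-1$ by the same argument, and the accounting $3t/2+3t/2+t/2+t/2=4t$ uses that in a $Y$-instance $S$ hits $t$ vertices of $Y$ and $0$ of $X$.

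The only delicate step I anticipate is being careful that, in both traces, a parallel traversal of copy $i'$ really returns to the $v$-image and not to some other vertex; this uses the fact that the four copies share the \emph{identical} matching data (step \eqref{line:rand}) and that the cross-copy edges $X^i\leftrightarrow X^{i'}$, $Y^i\leftrightarrow Y^{i'}$, $S^i\leftrightarrow S^{i'}$ are identity matchings on the fixed sets (step \eqref{line:det}). Everything else is routine counting.
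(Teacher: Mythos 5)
Your proposal is correct and follows essentially the same route as the paper: trace the cycle starting from $copy(v,1)\in S^1$ through $X^1\leftrightarrow X^3$ (or the $Y$-matchings, threading all four copies) using the fact that the four copies are identical, then count paths via the degree-$1$ vertices in $V^i_1\setminus S^i$. The only difference is cosmetic: you make the degree argument and the $4t$-path bookkeeping fully explicit upfront, whereas the paper leaves the degree structure implicit and argues the remaining vertices form paths because "all edges going out of the cycle have been accounted for."
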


\begin{proof}

Suppose first that $G_0$ is a $X$-instance. Fix a vertex $v \in S$ of $G_0$ and consider $copy(v,1) \in S^1$. 
We claim that $copy(v,1)$  belongs to the following $k$-cycle in $G$: 
\[
	copy(v,1) \in S^1 \leadsto copy(P(v),1) \in X^1 \rightarrow copy(P(v),3) \in X^3 \leadsto copy(v,3) \in S^3 \rightarrow copy(v,1) \in S^1;
\]
the first path exists by the definition of $P(v) \in X$ in $G_0$, the next edge by the identity perfect matching between $X^1$ and $X^3$, the next path by reversing the identical path $P(v) \in X$ to $v$ in $G_0$, and the final edge by the identity perfect matching
between $S^1$ and $S^3$. Note that this cycle in particular also contains $copy(v,3) \in S^3$ and its length is $\depth+1+\depth+1 = 2 \cdot \frac{(k-2)}{2} + 2 = k$.  

As a result, there are $\card{S}=t$ disjoint $k$-cycles, each including a pair of vertices from $S^1$ and $S^3$. By symmetry, this is also happening for another $t$ disjoint $k$-cycles for vertices of $S^2$ and $S^4$, which gives us $2t$ disjoint $k$-cycles as desired.
Moreover, as these $k$-cycles cover, for every $i \in [4]$, all vertices of $S^i$ and all other remaining vertices in $V^i$ have degree one, there is no other cycle; instead, vertices of $V^i \setminus S^i$ belong to a path of length $k-1$, giving 
us $4t$ disjoint $(k-1)$-paths. 

Now suppose that $G_0$ is a $Y$-instance.  Again, fix a vertex $v \in S$ of $G_0$ and consider $copy(v,1) \in S^1$. 
We now claim that $copy(v,1)$ belongs to the following $2k$-cycle in $G$: 
\begin{align*}
	copy(v,1) \in S^1 &\leadsto copy(P(v),1) \in Y^1 \rightarrow copy(P(v),4) \in Y^4 \leadsto copy(v,4) \in S^4 \rightarrow copy(v,2) \in S^2, \\
	copy(v,2) \in S^2 &\leadsto copy(P(v),2) \in Y^2 \rightarrow copy(P(v),3) \in Y^3 \leadsto copy(v,3) \in S^3 \rightarrow copy(v,1) \in S^1;
\end{align*}
this is exactly as in case $(i)$ except that the switch in connecting $Y^1$ to $Y^4$ and $Y^2$ to $Y^3$ instead, results in $copy(v,1)$ reaching $copy(v,2)$  and thus it takes another ``round'' for $copy(v,2)$ to reach $copy(v,1)$ also, 
doubling the length of the cycle. The rest of the argument is exactly as before and we omit the details (but, notice that this doubling effect will not happen for the $(k-1)$-paths as their endpoints have degree one only).
This concludes the proof. 
\end{proof}

In addition to proving the validity of distribution \Cref{distngc}, \Cref{lem:distngc-valid} also has a  message for our lower bound: any algorithm that can solve $\NGC_{n,k}$ over the distribution $\distNGC$, 
can also decided whether a given graph $G_0$ sampled in $\distNGC$ is a $X$-instance or a $Y$-instance (since the instance is defined deterministically given $G_0$, and the remaining edges are input-independent, we can just
do a straightforward reduction between the two problems). This latter problem is what we prove our lower bound for.

Before moving on, let us also make the following important remark about the source of hardness of $\NGC$ (over $\distNGC$) that will be used in several of our reductions in the subsequent section. 

\begin{remark}\label{rem:distngc-known}
	The following information is known  by any streaming algorithm that solves $\NGC$ over the distribution $\distNGC$: 
	\begin{enumerate}[label=$(\roman*)$]
		\item One endpoint of every noise path in the graph $G$;
		\item A set of $t$ four-tuples of vertices $(u_1,u_2,u_3,u_4)$ such that in the $k$-cycle case $u_1,u_2$ and $u_3,u_4$ belong to two disjoint $k$-cycles each, while in the $2k$-cycle case, all  belong to the same $2k$-cycle. 
	\end{enumerate}
	Notice that these vertices cover all paths and cycles of the graph in both cases. 
\end{remark}

The first point is because the set $S$ in $G_0$ is fixed and we can take $copy(v,i)$ for $v \notin S$ and $i \in \set{1,2}$ as endpoints of every path. The second point is similarly because 
we can take the tuples $(u_1,u_2,u_3,u_4)$ to be $u_1=copy(v,1),u_2=copy(v,3),u_3=copy(v,2),u_4=copy(v,4)$ and use the proof of~\Cref{lem:distngc-valid} to see that these vertices belong to desired cycles. 

\subsection{The High Level Plan} \label{sec:plan}

Our plan for proving~\Cref{thm:ngc} involves the following three steps (the reader is encouraged to check the algorithm for $\NGC$ in~\Cref{sec:alg-cycle} as it helps with the intuition of this proof).  

\paragraph{Step one: decorrelating the distribution.} Recall that by our previous discussion, our task at this point is to prove a lower bound for the following problem: 
Given a $(3t,\depth)$-layered graph $G_0$ and a set $S$ of $t$ vertices in the first layer, decide whether following edges of $G_0$ takes  these vertices to $X$ or $Y$ in the last layer. If we look at a single vertex $v \in S$, this problem 
is a pointer chasing problem along the edges of the $\depth$ matchings of $G_0$.  The challenge here is that we are not solving any one pointer chasing problem though, but rather a collection of $t$ \emph{correlated} ones. This problem
is quite simpler (algorithmically) than original pointer chasing as we only need to get ``lucky enough'' to chase one of them. Concretely, an algorithm that samples $\approx t^{1-1/\depth}$ edges of the graph has a constant probability of 
finding one complete path and solves the problem. 

To bypass this challenge, we consider a generalized version of the distribution $\distNGC$ wherein every vertex in $S$ has \emph{almost} the same probability of ending
up in the set $X$ or $Y$, independent of the choice of other starting vertices. We prove that even though these distributions do not correspond to valid instances of $\NGC$, still, if we run any algorithm for $\NGC$ over these inputs, it has to do
some ``non-trivial work'': informally speaking, it will be able to solve the pointer chasing instance corresponding to one of the starting vertices with a probability of $1/2+\Omega(1/t)$ -- this time however, this instance is 
independent of the choice of remaining vertices in $S$ (owing to the introduction of noise). This hybrid argument allows us to reduce the problem to a low probability [of success] pointer chasing problem, which we tackle in the next step. 

It is worth pointing out that this step matches the intuition that to solve $\NGC$, we need to ``find'' at least one $k$-cycle or a $(k+1)$-path in the graph. 

\paragraph{Step two: applying the streaming XOR Lemma.} The pointer chasing problem we now need to prove a lower bound for requires a really low probability of success, which is way below the threshold for any of 
the standard lower bounds to kick in. Our next step is then to apply our hardness amplification result in~\Cref{thm:xor-lemma} to reduce this to a more standard pointer chasing problem with higher probability of success. This requires us to 
cast our pointer chasing instance as an XOR of several other \emph{independent} pointer chasing instances. 

To do this for $p$-pass algorithms, we ``chop'' the layered graph into $\approx k/p$ \emph{consecutive} groups of $\approx p$ layers. we show how one can carefully connect these groups together
 to get $\approx k/p$ \emph{independent} instances of pointer chasing in each group, so that the XOR of their answers determine the answer to the original problem. This step uses similar
ideas as definition of~\Cref{distngc} and some further randomization tricks. 
We can now apply our streaming XOR Lemma (\Cref{thm:xor-lemma}) and reduce the problem to proving a lower bound for pointer chasing on depth $\approx p$ layered graphs with probability of success $1/2+1/t^{\Theta(p/k)}$, which is 
the content of the next step. 

This step also matches the intuition that to solve $\NGC$ in $p$ passes, we need to be able to ``create'' roughly $k/p$  paths of length $p$ {inside} the \emph{same} $k$-cycle or $(k+1)$-path. 

\paragraph{Step three: a lower bound for the single-copy problem.} We are now in the familiar territory in which the goal is to prove a lower bound for a depth $\approx p$ pointer chasing problem with probability of success 
$1/2+1/t^{\Theta(p/k)}$. The main difference is that our distribution do not match that of standard lower bounds, say~\cite{GuhaM09,NisanW91,PonzioRV99,Yehudayoff16}, which can be made to work with layered graphs but need random degree-one graphs instead of 
random matchings (so higher entropy inputs). Nevertheless, we show that this can be managed with some further crucial modifications. 

All in all, this step allows us to prove 
that solving pointer chasing on depth $(p+1)$ layered graphs in $p$ passes and $n^{1-o(p/k)}$ space does not lead to success probability of $1/2+1/t^{\Theta(p/k)}$. Tracing back these steps and plugging in the parameters, concludes the proof of the theorem. 

Finally, this step also matches the standard intuition that ``finding'' a path of length $>p$ in $p$ passes, with large probability of success, is not possible in much less than near-linear space.  

\subsection{Step One: Decorrelating~\Cref{distngc}}\label{sec:ngc-decor} 

Our goal in this section is to reduce $\NGC$ over $\distNGC$ to solving $\PC$ (over a slightly smaller graph) albeit with a much lower probability of success. Consider the following distribution for $\PC$: 
\begin{Distribution}\label{distpc}
The distribution $\distPC$ for $\PC_{\pcw,\pcd}$ for given width and depth parameters $\pcw,\pcd$. 

\begin{enumerate}[label=$(\roman*)$]
	\item Sample a random $(\pcw,\pcd)$-layered graph with an arbitrary vertex $s \in V_1$ and an arbitrary equipartition $X,Y$ of $V_{d+1}$ (say, both are lexicographically-first options). 
	\item Let the input stream be $M_{\pcd} \conc \cdots \conc M_1$ (with arbitrary orderings in each matching). 
\end{enumerate}
\end{Distribution}

We prove the following lemma in this section. 
\begin{lemma}\label{lem:ngc-decor}
	Suppose there is a $p$-pass $s$-space streaming algorithm $A$ for $\NGC_{n,k}$ on $\distNGC$  that succeeds with probability at least $2/3$. Then, there is a $p$-pass $s$-space streaming algorithm for $\PC_{\pcw,\pcd}$ on $\distPC$
	 for some \underline{even} $\pcw := \Theta(n/k)$ and $\pcd := \frac{k-2}{2}$ with probability of success at least $\frac12 + \frac1{6\pcw}$. 
\end{lemma}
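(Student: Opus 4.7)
The plan has two main parts. First, a \emph{hybrid} (or decorrelation) argument reduces distinguishing the two cases of $\distngc$ to a single-coordinate change with small advantage. Second, an \emph{embedding} realizes this single-coordinate problem as a $\PC$ instance drawn from $\distpc$. By~\Cref{lem:distngc-valid}, the algorithm $A$ for $\NGC$ over $\distngc$ equivalently distinguishes X-instances from Y-instances of the underlying layered graph $G_0$ with probability $\geq 2/3$. I would define hybrids $\nu_0,\nu_1,\ldots,\nu_t$, where $\nu_i$ is a random $(3t,\depth)$-layered graph conditioned on $P(v_j) \in Y$ for $j \leq i$ and $P(v_j) \in X$ for $j > i$, so that $\nu_0$ is the X-instance distribution and $\nu_t$ is the Y-instance. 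Setting $p_i := \Pr_{G_0 \sim \nu_i}[A \text{ outputs X-instance}]$, we have $p_0 - p_t \geq 1/3$, so by a telescoping/pigeonhole argument some $\istar \in [t]$ satisfies $p_{\istar-1} - p_{\istar} \geq 1/(3t)$. Crucially, $\nu_{\istar-1}$ and $\nu_{\istar}$ differ in exactly one coordinate -- whether $P(v_{\istar}) \in X$ or $P(v_{\istar}) \in Y$ -- so distinguishing them is a \emph{single-chain} pointer-chasing problem.

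Fix this $\istar$. Given a sample $(G',s)$ from $\distpc$ of width $\pcw = \Theta(t)$ (an even value, say near $2t$) and depth $\pcd = \depth$, I would design a PC algorithm $B$ that uses private randomness to construct, in a streaming-compatible fashion, a graph distributed as $\nu_{\istar-1}$ when $P_{G'}(s) \in X'$ and as $\nu_{\istar}$ when $P_{G'}(s) \in Y'$, then simulates $A$ on that graph. Concretely, $B$ samples uniform random injections $\phi_l : V'_l \hookrightarrow V_l$ for each $l \in [\pcd+1]$ subject to $\phi_1(s) = v_{\istar}$, $\phi_{\pcd+1}(X') \subseteq X$, $\phi_{\pcd+1}(Y') \subseteq Y$, and $v_j \notin \phi_1(V'_1)$ for $j \neq \istar$. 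Then $B$ samples private random paths for the remaining $t-1$ start chains (routing $v_j$'s chain through $V_l \setminus \phi_l(V'_l)$ at each layer to a uniformly random free vertex in $X$ if $j > \istar$ or in $Y$ if $j < \istar$) and for the $2t$ non-start chains (with no endpoint restriction). Each observed PC edge $(u, M'_l(u))$ is replayed as the NGC edge $(\phi_l(u), \phi_{l+1}(M'_l(u)))$; combining these with the private edges assembles the NGC matching $M_l$, which $B$ feeds to $A$ in the stream order of~\Cref{distngc}. Since the PC answer is preserved under the embedding, the advantage from the hybrid step translates directly to $B$'s success probability $\frac12 + \frac{p_{\istar-1} - p_{\istar}}{2} \geq \frac12 + \frac{1}{6t} \geq \frac12 + \frac{1}{6\pcw}$ whenever $\pcw \geq t$.

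The main technical obstacle is verifying that the joint distribution of $(M_1,\ldots,M_{\pcd})$ produced above is \emph{exactly} the conditional distribution of a uniform random $(3t,\depth)$-layered graph given the hybrid constraints defining $\nu_{\istar-1}$ or $\nu_{\istar}$. The argument rests on a ``two-stage factorization'' of uniform random perfect matchings: a uniform random matching on $V_l$ can be sampled by first choosing a uniform random bijection between a uniform random $\pcw$-subset $U \subseteq V_l$ and a uniform random $\pcw$-subset $U' \subseteq V_{l+1}$, then extending by an independent uniform random matching on the respective complements. Taking $\phi_l$ uniform and combining the PC-derived submatching with a conditionally-uniform private extension -- subject to routing the remaining start chains to the prescribed endpoint halves $X$ or $Y$ -- yields a matching with the correct marginal, and coordinating this construction layer by layer while respecting all start-vertex endpoint constraints is the main bookkeeping burden. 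A minor secondary check is choosing $\pcw$ even and close enough to $t$ so that the inequality $\pcw \geq t$ (needed for the $1/(6\pcw)$ bound) holds while simultaneously $\pcw = \Theta(n/k)$.
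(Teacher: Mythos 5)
Your proof follows the same two-step template as the paper: a telescoping hybrid argument over the vectors $f^0,\ldots,f^t$ to locate an index $\istar$ with distinguishing advantage $\geq 1/(3t)$, and then an embedding that realizes the single-coordinate distinguishing problem as a $\PC$ instance. The hybrid step is essentially identical to the paper's \Cref{clm:informative-index} (the paper uses a TVD/max-likelihood formulation, your $p_{\istar-1}-p_{\istar}$ bound is a special case, and both give the same $1/(6t)$ advantage), so that part is fine.

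The gap is in the embedding. You describe it as a forward injection of the $\PC$ stream into an \NGC stream via random layerwise injections $\phi_l$, and you flag the resulting distributional verification as ``the main bookkeeping burden'' and the choice of $\pcw$ as ``a minor secondary check.'' But the choice of $\pcw$ is not minor and is not free: after placing the $\pcw$ PC chains (occupying $\pcw/2$ of $X$ and $\pcw/2$ of $Y$ at the last layer) and routing the $t-1$ non-$\istar$ start chains ($t-\istar$ into $X$, $\istar-1$ into $Y$), the remaining free vertices in $X$ and $Y$ are \emph{always unequal} (by $c := |2\istar-1-t|$, which is odd since $t$ is even). If $\pcw$ is chosen too large, there are not enough free slots on one side for the routed start chains; if $\pcw$ is chosen generically (e.g.\ ``near $2t$''), the leftover non-start chains do not fill the remaining slots in a way that reproduces a uniform layered graph conditioned on the hybrid's constraints, so the resulting $G_0$ is \emph{not} distributed as $\mu^* = \tfrac12\nu_{\istar-1}+\tfrac12\nu_{\istar}$. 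The paper handles exactly this point in the proof of \Cref{clm:sprocess}: it performs an explicit ``balancing'' step, routing $c$ additional noise paths from $V_1\setminus S$ to the heavier side of $V_{\pcd+1}$ before declaring the remainder to be the $\PC$ instance, which forces $\pcw = 2t+1-c$ (ranging from $t+2$ up to $2t$ depending on $\istar$, always even, always $\geq t+2$). Your phrases ``an even value, say near $2t$'' and later ``close enough to $t$'' are inconsistent with each other and both miss that $\pcw$ is determined by $\istar$ via $c$. Without this balancing (or the equivalent forced choice of $\pcw$), your claimed two-stage factorization of the uniform matching does not recover $\mu^*$, and the correctness of $B$ does not follow. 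Secondarily, the paper extracts the $\PC$ instance from $\mu^*$ (sampling the noise first, then observing the residual is $\distPC$) rather than injecting $\distPC$ into $\mu^*$; your reverse direction can be made to work, but it requires the same balancing and a layer-by-layer conditional-independence argument that you do not supply.
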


For the rest of this section, we fix the algorithm $A$ in~\Cref{lem:ngc-decor} to use it in a reduction. Our reduction uses a hybrid argument and thus is going to be \emph{algorithm-dependent}, i.e., use $A$ 
in a non-black-box way. To do so, we first need to define a family of hybrid distributions. 

Recall the parameters $t=(n/6k)$ and $\depth=\frac{(k-2)}{2}$ in the definition of $\distNGC$. For any vector $f = (f_1,\ldots,f_t) \in \set{0,1}^t$, we define the distribution $\mu(f)$ as follows: 

\begin{itemize}
	\item \textbf{Hybrid distribution} $\bm{\mu(f)}$: Sample a random $(3t,\depth)$-layered graph $G_0$ (with fixed $S \subseteq V_1$ and equipartition $X,Y$ of $V_{\depth+1}$) conditioned on the following event: ``for any vertex $v_i \in S$, 
	$P(v_i)$ belongs to $X$ if $f_i = 0$ and belongs to $Y$ if $f_i=1$''. Plug this graph $G_0$ in~\Cref{distngc} instead and return the resulting stream for the created graph $G$. 
\end{itemize}

With this definition, we have that $\distNGC = \frac12 \cdot \mu(0^{t}) + \frac12 \cdot \mu(1^t)$. The problem of working with $\mu(0^t)$ and $\mu(1^t)$ directly is that their $\PC$ instances are highly correlated (all vertices in $S$ either go to $X$ or to $Y$). Thus, it is unclear which instance is actually ``solved''. On the other hand, remaining distributions $\mu(f)$ may generate graphs that are not in the support of $\distNGC$ or even 
well-defined for $\NGC$. Nevertheless, we will show that $A$ still needs to do something non-trivial over these distributions: there is a pair of neighboring vectors $g,h$ that differ in {exactly one} coordinate such that 
$A$ is still able to distinguish between them, namely, ``solve'' the pointer chasing instance on their differing index (although with a much lower probability). We now formalize this. 

Let $mem(A)$ denote the final content of the memory of $A$. Let $\mu(g)$ and $\mu(h)$ be any two distributions in the family above. With a slight abuse of notation, we say that $A$ \emph{distinguishes} between $\mu(g)$ and $\mu(h)$ with
probability $p > 0$, if given a sample from either $\mu(g)$ or $\mu(h)$, we can run $A$ over the sample and use maximum likelihood estimation of $mem(A)$ to decide which distribution it was sampled from with probability at least $p$.  
Define the following $t+1$ vectors: 
	\[
	f^0 = (0,\ldots,0), ~~ f^1 = (1, 0,\ldots,0), \quad \cdots \quad f^i = (\underbrace{1,\ldots, 1}_i,  0,\ldots,  0), \quad \cdots \quad f^{t} = (1,\ldots,1).
	\]
We prove that $A$ distinguishes between two consecutive distributions in this sequence. 

\begin{claim}\label{clm:informative-index}
	There is an index $\istar \in [t]$ such that $A$ distinguishes between $\mu(f^{\istar-1})$ and $\mu(f^{\istar})$ with probability at least $ \frac12 + \frac{1}{6t}$. 
\end{claim}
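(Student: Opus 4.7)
The plan is to use a standard hybrid (telescoping) argument coupled with a conversion between success probability on a binary mixture and total-variation distance between the induced distributions on $\mathrm{mem}(A)$.

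First I would exploit \Cref{lem:distngc-valid} to observe that $\distNGC$ is exactly the equal mixture $\tfrac12\,\mu(f^0) + \tfrac12\,\mu(f^t)$, and that the correct answer to $\NGC$ on a sample is fully determined by whether the sample came from $\mu(f^0)$ (a $k$-cycle instance) or from $\mu(f^t)$ (a $2k$-cycle instance). Since $A$ succeeds on $\distNGC$ with probability at least $2/3$, and its output is a function of $\mathrm{mem}(A)$, this means that $A$ distinguishes $\mu(f^0)$ from $\mu(f^t)$ with probability at least $2/3$. By the standard equivalence for maximum-likelihood distinguishing of two distributions, this gives
\[
\tvd{\mu(f^0)_{\mathrm{mem}(A)}}{\mu(f^t)_{\mathrm{mem}(A)}} \;\geq\; 2\Paren{\tfrac{2}{3} - \tfrac{1}{2}} \;=\; \tfrac{1}{3},
\]
where $\mu(\cdot)_{\mathrm{mem}(A)}$ denotes the distribution of the final memory content of $A$ when the stream is sampled from $\mu(\cdot)$. (In fact the MLE distinguisher itself achieves this success probability, which is precisely the form of distinguishing used in the statement of the claim.)

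Next I would telescope the total variation distance along the chain $f^0, f^1, \ldots, f^t$ using the triangle inequality:
\[
\tfrac{1}{3} \;\leq\; \tvd{\mu(f^0)_{\mathrm{mem}(A)}}{\mu(f^t)_{\mathrm{mem}(A)}} \;\leq\; \sum_{i=1}^{t} \tvd{\mu(f^{i-1})_{\mathrm{mem}(A)}}{\mu(f^{i})_{\mathrm{mem}(A)}}.
\]
By averaging, there exists an index $\istar \in [t]$ with
\[
\tvd{\mu(f^{\istar-1})_{\mathrm{mem}(A)}}{\mu(f^{\istar})_{\mathrm{mem}(A)}} \;\geq\; \tfrac{1}{3t}.
\]
Applying the MLE distinguishing bound in the other direction, the maximum-likelihood estimator that inspects $\mathrm{mem}(A)$ distinguishes $\mu(f^{\istar-1})$ from $\mu(f^{\istar})$ with probability at least $\tfrac{1}{2} + \tfrac{1}{2}\cdot\tfrac{1}{3t} = \tfrac{1}{2} + \tfrac{1}{6t}$, which is exactly the claim.

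There is no real obstacle here beyond bookkeeping: the main care is in ensuring that the reduction from $A$'s $\NGC$-success on $\distNGC$ to an $\mathrm{mem}(A)$-based distinguisher is valid (the output of $A$ is a deterministic function of $\mathrm{mem}(A)$ plus, if needed, the internal randomness, which we can absorb into the distribution by conditioning), and that the telescoping neighbors differ only in a single coordinate so that in subsequent steps a ``single-copy'' pointer-chasing instance can be isolated at coordinate $\istar$. Both are ensured by the definition of the chain $f^0, \ldots, f^t$.
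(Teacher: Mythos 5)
Your proposal is correct and follows essentially the same approach as the paper: apply \Cref{fact:tvd-sample} to translate $A$'s $2/3$ success on the mixture $\distNGC = \tfrac12\mu(f^0) + \tfrac12\mu(f^t)$ into $\tvd{(\mathrm{mem}(A)\mid\mu(f^0))}{(\mathrm{mem}(A)\mid\mu(f^t))}\geq \tfrac13$, telescope via the triangle inequality over the hybrid chain, average to find the index $\istar$, and convert back to distinguishing probability via the same fact. The paper's proof is exactly this argument, with the same invocations and the same constants.
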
 
\begin{proof}
	Since $A$ can solve $\NGC$ on instances drawn from $\distNGC = \frac12 \cdot \mu(0^{t}) + \frac12 \cdot \mu(1^t)$ with probability of success
	 at least $2/3$, we have that (see~\Cref{fact:tvd-sample}),
	\begin{align}
		\tvd{\paren{mem(A) \mid \mu(f^0)}}{\paren{mem(A) \mid \mu(f^t)}} \geq 1/3, \label{eq:mempi-hybrid}
	\end{align}
	as the algorithm uses only $mem(A)$ at the end to output the answer (here $(mem(A) \mid \mu)$ denotes the distribution of $mem(A)$ conditioned on the input sampled from $\mu$).

	Suppose we run algorithm $A$ on each of the distributions $\mu(f^i)$ (even though beside $f^0,f^{t}$, neither of them correspond to an $\NGC$ instance). Then, by triangle inequality, 
	\[
		\tvd{\paren{mem(A) \mid \mu(f^{0})}}{\paren{mem(A) \mid \mu(f^t)}} \leq \sum_{i=1}^{t} \tvd{\paren{mem(A) \mid \mu(f^{i-1})}}{\paren{mem(A) \mid \mu(f^{i})}},
	\]
	which, together with~\Cref{eq:mempi-hybrid}, implies that there is an index $i^* \in [t]$ such that 
	\begin{align}
		\tvd{\paren{mem(A) \mid \mu(f^{i^{*}-1})}}
	{\paren{mem(A) \mid \mu(f^{i^{*})}}} \geq \frac{1}{3t}. 
	\end{align}
Thus, $A$ distinguishes between $\mu(f^{i^{*}-1})$ and $\mu(f^{i^*})$ with probability $\geq \frac12+\frac{1}{6t}$ 
by~\Cref{fact:tvd-sample}.
\end{proof}

Let us define our final distribution $\mu^* := \frac12 \cdot \mu(f^{\istar-1}) + \frac12 \cdot \mu(f^{\istar})$. 
\Cref{clm:informative-index} suggests a way of solving instances of $\PC$ by embedding them in the index $\istar$ of $\mu^*$ and running $A$ over the resulting input. 
We now give a process for sampling from $\mu^*$ which is crucial for this embedding. 	

\renewcommand{\PP}{\mathcal{P}}

\begin{claim}\label{clm:sprocess}
The following process samples a $(3t,\depth)$-layered graph $G_0$ from the distribution of $\mu^*$: 

\begin{enumerate}[label=$(\arabic*)$]
\item\label{s1} Sample $(t-1)$ vertex-disjoint paths from vertices in $S \setminus v_{\istar}$ to vertices in $V_{\depth+1}$  conditioned on the event that ``for any vertex $v_i \in S \setminus \set{v_{\istar}}$, 
$P(v_i)$ is in $X$ if $f^{\istar}_i = 0$ and in $Y$  if $f^{\istar}_i = 1$''.

\item\label{s2} Let $c := \card{(\istar-1) - (t-\istar)}$, the discrepancy in the number of $0$'s and $1$'s in both vectors $f^{\istar-1},f^{\istar}$ when we ignore index $\istar$. 
Sample $c$ random vertex-disjoint path starting from $V_1 \setminus S$ to the remaining vertices of $X$ in $V_{\depth+1}$ if $0$'s of $f^{\istar}$ are fewer that $1$'s, and to $Y$ otherwise. 

\item\label{s3} Sample a random $(2t+1-c,\depth)$-layered graph on the remaining vertices. 

\end{enumerate}
\end{claim}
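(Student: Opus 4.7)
The plan is to verify that the process outputs a graph $G_0$ distributed as $\mu^* = \frac{1}{2}\mu(f^{i^*-1}) + \frac{1}{2}\mu(f^{i^*})$ by conditioning on $P(v_{i^*})$. The argument has three parts: first, check that the balancing in step $(2)$ leaves equally many $X$- and $Y$-vertices remaining in $V_{d+1}$; second, use this balance together with the uniform layered subgraph in step $(3)$ to argue that $P(v_{i^*})$ lands in $X$ or $Y$ with equal probability $\frac{1}{2}$; and third, show that conditional on $P(v_{i^*}) \in X$ (resp.\ $\in Y$), the resulting distribution on $G_0$ matches $\mu(f^{i^*-1})$ (resp.\ $\mu(f^{i^*})$).

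For the first step, assume without loss of generality that $i^* > t/2$, so $c = 2i^* - t - 1$ and step $(2)$ routes paths into $X$ (the other case is symmetric). Step $(1)$ consumes $t - i^*$ vertices of $X$ and $i^* - 1$ of $Y$ in $V_{d+1}$, and step $(2)$ consumes an additional $c$ of $X$, bringing the totals consumed in $X$ and $Y$ to exactly $i^* - 1$ each; so $\frac{3t - 2i^* + 2}{2}$ vertices of $V_{d+1}$ remain in each of $X$ and $Y$, and $2t + 1 - c$ vertices remain in each layer (in particular $v_{i^*}$ is among the remaining $V_1$ vertices). The second step is then immediate: step $(3)$ samples a uniform random $(2t+1-c,d)$-layered subgraph, which induces a uniform random bijection on the remaining vertices, so $P(v_{i^*})$ is a uniformly random vertex of the remaining $V_{d+1}$ and lands in $X$ or $Y$ with equal probability.

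The main content lies in the third step. Since $\mu(f^{i^*-1})$ is, by definition, the uniform distribution on $(3t,d)$-layered graphs whose paths from $S$ respect $f^{i^*-1}$, it suffices to show that the process, conditional on $P(v_{i^*}) \in X$, produces each such graph with equal probability. For a target graph $G_0$, everything in the process except the choice of the $c$-subset $W \subseteq V_1 \setminus S$ selected in step $(2)$ is pinned down by $G_0$ (step $(1)$'s endpoints by $G_0$'s paths from $S \setminus \{v_{i^*}\}$, step $(2)$'s images by the required set $P(W) \subseteq X$, and step $(3)$'s subgraph by $G_0$ restricted to the remaining vertices). The number of choices of $W$ consistent with $G_0$ equals $\binom{k}{c}$, where $k$ is the number of vertices of $V_1 \setminus S$ mapping into $X$ under $G_0$. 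The key observation is that $k$ is forced to be the same across all valid targets: the endpoint counts imposed by $f^{i^*-1}$ with $P(v_{i^*}) \in X$ put exactly $t - i^* + 1$ vertices of $S$ into $X$, leaving exactly $k = \frac{3t}{2} - (t - i^* + 1) = \frac{t}{2} + i^* - 1$ vertices of $X$ to be covered by $V_1 \setminus S$. Hence the multiplicity $\binom{k}{c}$ is constant across targets, so the resulting distribution is uniform, matching $\mu(f^{i^*-1})$.

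The main obstacle is recognizing that the apparent extra correlation introduced by step $(2)$ (pinning $c$ vertices of $V_1 \setminus S$ into $X$) contributes only a constant overcounting factor across all graphs in the support of $\mu(f^{i^*-1})$, because the number of non-$S$ vertices forced into $X$ is determined entirely by the endpoint counts rather than by any graph-specific structure. The $Y$-conditional analysis is entirely symmetric and yields $\mu(f^{i^*})$; combining the two halves with the $\frac{1}{2}$-$\frac{1}{2}$ split from step two proves the claim.
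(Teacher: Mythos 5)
Your proof is correct and proceeds in the same underlying spirit as the paper's: sample the $S \setminus \{v_{i^*}\}$ paths first, observe that step~(2) exactly equalizes the residual $X$- and $Y$-masses, and conclude that the uniform fill-in of step~(3) sends $P(v_{i^*})$ to either side with probability $\tfrac12$. Where you diverge from the paper is in how you justify that the process is exactly (not just marginally) $\mu^*$: the paper argues this informally by a "ball-discarding" analogy ("keeping the distribution intact, using the randomness in choices of these $c$ vertices"), whereas you make it precise by conditioning on $P(v_{i^*})$ and then doing the explicit multiplicity count, showing that every $G_0$ in the support of $\mu(f^{i^*-1})$ is produced through exactly $\binom{k}{c}$ equiprobable process trajectories with $k = \tfrac{t}{2}+i^*-1$ forced by the endpoint constraints. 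Your argument is therefore a tightening of the paper's heuristic rather than a genuinely new route, and it is arguably a better proof: it identifies the one potentially worrisome source of non-uniformity (the choice of $W$ in step~(2)) and verifies that its overcounting factor is constant. One small point you keep implicit — that the normalizers $N_1, N_2, N_3$ of the three sampling steps depend only on the ambient layer sizes and not on the target $G_0$ — is needed for "constant multiplicity $\Rightarrow$ uniform distribution," but it is immediate from the uniformity of each sampling step, so this is not a gap.
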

\begin{proof}
For any vertex $v \in V_1$, define $\PP(v)$ as the path starting from $v$ and ending in $V_{\depth+1}$. Considering $G_0$ is a $(3t,\depth)$-layered graph consisting of perfect matchings between the layers, the paths $\PP(v)$ are 
vertex-disjoint and of length $\depth+1$. As such, we can think of the process of sampling $G_0$ in $\mu^*$ as sampling these vertex-disjoint paths. 

In step $(1)$, we are sampling $\PP(v)$ for all $v \in S \setminus v_{\istar}$. As $f^{\istar-1}_j=f^{\istar}_j$ for all $j \neq \istar$, this step can just sample these paths uniformly at random conditioned on an appropriate endpoint in $X$ and $Y$ for them. 
Thus far, the sampling process is the same as $\mu^*$. 

Let us now examine what happens to the choice of $\PP(v_{\istar})$ at this point. Since $\mu^*$ is a uniform mixture of $f^{\istar-1},f^{\istar}$, the path $\PP(v_{\istar})$ should end up at either $X$ or $Y$ with the same probability. However, 
considering we already conditioned on $\PP(v_j)$ for $j \neq \istar$, the number of remaining $X$ and $Y$ vertices are not equal. This means that $\PP(v_{\istar})$ is \emph{not} a uniformly random path in the rest of the graph. 
The goal of step $(2)$ is to fix this\footnote{A simple analogy may help here: suppose we have four red balls and two green balls and we want to sample a ball uniformly so that its color is red or green with the same probability. We can first
sample two red balls uniformly and throw them out and then sample a ball uniformly  from the rest.}. We can first sample $\PP(v)$ for $c$ vertices in $V_1 \setminus S$ so that they all end up in an $X$ or $Y$ vertex, depending on which
of the sets has more remaining vertices. This equalizes the size of the remaining $X$ and $Y$ vertices, while keeping the distribution intact, using the randomness in choices of these $c$ vertices. 

Finally, at this point, we need to sample $\PP(v)$ for remaining vertices conditioned on $\PP(v_{\istar})$ having the same probability of landing in $X$ or $Y$. Considering sizes of remainder of $X$ and $Y$ are equal, this can be done by sampling a uniform set of vertex-disjoint paths, 
or alternatively, a random layered graph on the remaining vertices which are $3t-(t-1)-c = 2t+1-c$. This is precisely what is done in step $(3)$, concluding the proof.  
\end{proof}

\newcommand{\rR}{\rv{R}}
\newcommand{\rH}{\rv{H}}

A simple corollary of the process in~\Cref{clm:sprocess} is the following conditional independence: let $\rR_1,\rR_2,\rR_3$  denote the random variables for choices in steps $(1),(2),(3)$ of this process; then, conditioned on
any choice $R_1,R_2$ of $\rR_1,\rR_2$, the variable $\rR_3$ is distributed as a random $(2t+1-c,\depth)$-layered graph on the remaining vertices, with independent choice of edges, now that we conditioned on its vertices. Let $H_0 \subseteq G_0$, denote this subgraph. By~\Cref{clm:informative-index} and an averaging argument, there is a choice of $R_1,R_2$ such that, 
 \begin{align}
	\Pr_{H_0 \sim \rR_3} \paren{\text{$A$ distinguishes between $\mu(f^{\istar-1}),\mu(f^{\istar})$} \mid R_1,R_2} \geq \frac12 + \frac1{6t}. \label{eq:avg-cond}
\end{align} 
Distinguishing between $\mu(f^{\istar-1}),\mu(f^{\istar})$ is to decide whether $v_{\istar} \in V_1$, has $P(v_{\istar})$ in $X$ or $Y$ -- this is equivalent to solving $\PC$ over the graph $H_0$ for $s=v_{\istar}$. 
 We now use this to finalize our reduction and prove~\Cref{lem:ngc-decor}. 

\begin{proof}[Proof of \Cref{lem:ngc-decor}]
	Let $c$ be the parameter in~\Cref{clm:sprocess} and note that since $t$ is even (by construction of $\distNGC$), $c$ should be odd (as $c=\card{t-2\istar+1}$).
	Let $\pcw := 2t+1-c$ which is an even number as desired and $\pcd := \depth = \frac{(k-2)}{2}$; moreover note that since $c \leq t-1$, $\pcw \geq t+2 = \Theta(n/k)$. 
	We design a streaming algorithm $B$ from $A$ for $\PC_{\pcw,\pcd}$ for over the distribution $\distPC$. 
	
	Given $G \sim \distPC$, algorithm $B$ uses~\Cref{clm:sprocess} to create the graph 
	\[
	G_0 \sim \mu^* \mid \rR_1=R_1,\rR_2=R_2, \rH_0 = G,
	\]
	 where $R_1,R_2$ are the choices in~\Cref{eq:avg-cond}. To be precise, by setting $H_0 = G$, we mean that the players of $B$ pick a canonical mapping between vertices of $G$ and $H_0$ such that $s=v_{\istar}$, the
	  $X$-vertices (resp. $Y$-vertices) of $G$ are mapped to $X$-vertices ($Y$-vertices), and each player in $A$ with $e \in H_0$ has a unique player in $B$ that simulates it. 
	  The players  then run $A$ over $G_0$ to distinguish $\mu(f^{\istar-1})$ from $\mu(f^{\istar})$ and output $P(s) \in X$ if the answer of $A$ was $\mu(f^{\istar-1})$ and otherwise output $P(s) \in Y$. 
	
	The algorithm $B$ is still a $p$-pass $s$-space algorithm (recall that in~\Cref{def:str}, the players are computationally unbounded and so can do their part of creating the graph $G_0$ without any communication). 
	By the independence property argued for~\Cref{eq:avg-cond}, the distribution of graphs $G_0$ above matches that of this equation. As such, $B$ outputs the correct answer with probability $\frac12+\frac1{6t} \geq \frac12+\frac{1}{6m}$, 
	finalizing the proof. 
\end{proof}

%%%%%%%%%%%%%%
%%%%%%%%%%%%%%
%%%%%%%%%%%%%%

\subsection{Step Two: Applying the Streaming XOR Lemma}\label{sec:ngc-xor} 

By~\Cref{lem:ngc-decor}, our task is reduced to proving a low-probability lower bound for $\PC_{\pcw,\pcd}$ over the distribution $\distPC$. Our goal in this step, is to use our streaming XOR Lemma in~\Cref{thm:xor-lemma}, to reduce this problem
to another $\PC_{\hat{\pcw},\hat{\pcd}}$ problem over distribution $\distPC$ (for  choices of $\hat{\pcw},\hat{\pcd}$ as functions of $\pcw,\pcd$). We prove the following lemma in this section, which realizes our goal.

\begin{lemma}\label{lem:ngc-xor}
	For every $\pcw,\pcd,\ell \in \IN^+$ such that $\pcw$ is even and  $2\ell$ divides $\pcd-1$, the following holds.  
	Suppose there is a $p$-pass $s$-space algorithm $A$ for $\PC_{\pcw,\pcd}$ that succeeds with probability at least $1/2+\delta$ on $\distPC$. 
	Then, there is a $p$-pass $s$-space algorithm for $\PC_{\hpcw,\hpcd}$ over $\distPC$, for some $\hpcw = \frac{\pcw}{2}$ and $\hpcd = \frac{\pcd-1}{2\ell}-1$, that 
	succeeds with probability at least $\frac12 \cdot (1+(2\delta)^{1/\ell})$. 
\end{lemma}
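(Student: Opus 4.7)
The plan is to reduce $\fxor = \XOR_{\ell}\circ f$ to a single instance of $\PC_{\pcw,\pcd}$, where $f = \PC_{\hpcw,\hpcd}$ on $\distPC$, and then invoke the contrapositive of the Streaming XOR Lemma (\Cref{thm:xor-lemma}). Concretely, I would design an embedding $\Phi$ that takes $\ell$ independent small instances $(G_1,s_1,X_1,Y_1),\dots,(G_\ell,s_\ell,X_\ell,Y_\ell)$ with $G_i\sim\distPC$ on parameters $(\hpcw,\hpcd)$, together with independent auxiliary uniform random matchings, and outputs a single $(\pcw,\pcd)$-layered graph $G$ with designated start $s$ and equipartition $(X,Y)$ satisfying: (i) $G\sim\distPC$ on parameters $(\pcw,\pcd)$; (ii) $\mathbf{1}[P(s)\in X] = \bigoplus_i \mathbf{1}[P_i(s_i)\in X_i]$; and (iii) the stream $M_\pcd\conc\cdots\conc M_1$ of $G$ can be generated pass-by-pass from the streams of the $G_i$'s together with locally produced auxiliary randomness, with no blowup in pass complexity or space.

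For the embedding itself, I decompose $\pcd=2\ell(\hpcd+1)+1$ as one ``seed'' matching plus $2\ell$ \emph{blocks} of $\hpcd+1$ matchings, and I use the width doubling $\pcw=2\hpcw$ to carry, within each layer, a one-bit ``side'' (top or bottom of the $2\hpcw$ vertices) that tracks the running XOR $\bigoplus_{j\le i}\mathbf{1}[P_j(s_j)\in Y_j]$ as we traverse the graph. Blocks $2i-1$ and $2i$ embed the $i$-th small instance $G_i$: one block places the $\hpcd$ matchings of $G_i$ on a designated half of the width with fresh independent uniform matchings filling the other half, and the other block reverses the role, joined through a gluing matching arranged to act as identity on $X_i\leftrightarrow X_i$ and $Y_i\leftrightarrow Y_i$ pairs and to swap sides on the cross pairs. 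Each pair of blocks therefore toggles the current side exactly when the $i$-th small answer is $Y_i$, so after all $2\ell$ blocks plus the seed matching the final side of the pointer equals $\bigoplus_i\mathbf{1}[P_i(s_i)\in Y_i]$; declaring $X$ to be the top half of $V_{\pcd+1}$ and $Y$ the bottom half gives (ii). The parity conditions ``$\pcw$ even'' and ``$2\ell\mid(\pcd-1)$'' in the lemma statement are exactly what make this block decomposition fit.

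For (i), each $G_i$'s matchings are uniform and independent, and so are all the filler and gluing matchings, so jointly they form $\pcd$ independent uniform matchings on $2\hpcw$ vertices, matching $\distPC$ on $(\pcw,\pcd)$. For (iii), the reverse order $M_\pcd\conc\cdots\conc M_1$ of $G$ is compatible, by the block arrangement above, with the reverse per-instance orders $m^{(i)}_{\hpcd}\conc\cdots\conc m^{(i)}_1$ of the $G_i$'s, so each pass over $G$'s stream can be executed by one pass over each $G_i$'s stream while the auxiliary edges are generated on the fly. Consequently, any $p$-pass $s$-space algorithm $A$ for $\PC_{\pcw,\pcd}$ with success $\ge\tfrac12+\delta$ on $\distPC$ yields a $p$-pass $s$-space algorithm for $\fxor$ on $\mu^\ell$ with the same success $\ge\tfrac12(1+2\delta)$, where $\mu=\distPC$ on $(\hpcw,\hpcd)$.

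To close, suppose for contradiction that every $p$-pass $s$-space algorithm for $f$ on $\mu$ succeeds with probability at most $\tfrac12+\delta'$ for $\delta'=\tfrac12(2\delta)^{1/\ell}$; then \Cref{thm:xor-lemma} forces every such algorithm for $\fxor$ to succeed with probability at most $\tfrac12(1+(2\delta')^\ell)=\tfrac12(1+2\delta)$, contradicting the success guarantee of our reduction. Hence some $p$-pass $s$-space algorithm solves $\PC_{\hpcw,\hpcd}$ on $\distPC$ with probability at least $\tfrac12(1+(2\delta)^{1/\ell})$, as claimed. The chief technical obstacle lies in the construction of $\Phi$: encoding the XOR \emph{exactly} via the pointer's final side, realizing the induced distribution on $G$ as \emph{exactly} $\distPC$ rather than merely close, and simultaneously respecting the outer-to-inner stream order across all $\ell$ small instances — which is precisely why the parameters $\pcw,\pcd,\hpcw,\hpcd,\ell$ are coupled as in the lemma.
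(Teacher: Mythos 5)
Your high-level plan matches the paper's: cast $\PC_{\pcw,\pcd}$ as $\fxor=\XOR_\ell\circ\PC_{\hpcw,\hpcd}$ via a graph ``XOR product,'' use the doubled width to carry a one-bit running parity, invoke the contrapositive of \Cref{thm:xor-lemma}, and worry about the stream order and the induced distribution. The closing contrapositive argument is exactly right. But the embedding $\Phi$ you propose is not the paper's, and as described it breaks at a crucial synchronization step.

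In your construction, block $2i-1$ carries $G_i$'s matchings only on one half of the width and fills the other half with fresh independent uniform matchings; block $2i$ ``reverses the role.'' The difficulty is that the pointer you need to chase can land on the \emph{noise} half. Concretely, suppose blocks $2i-1$ and $2i$ place $G_i$ on the top and bottom half, respectively. If the pointer enters block $2i-1$ on the $G_i$ half and the answer is $X_i$, the gluing keeps it on top, which is precisely the noise half of block $2i$; after chasing $\hpcd$ fresh random matchings it lands at a uniformly random vertex, not at a copy of $s_{i+1}$, and there is no way for a deterministic gluing matching to collapse a random position back to the fixed start of $G_{i+1}$. The construction in the paper avoids this entirely by placing $G_r$ on \emph{both} halves (four copies $U^{r,1},U^{r,2},U^{r,3},U^{r,4}$, two forward and two backward): the pointer always chases $G_r$ forward to $\mathrm{copy}(P(s_r))$, has its side flipped by the $X\!\leftrightarrow\!X$ / $Y\!\leftrightarrow\!Y$ gluing matchings, and then chases $G_r$ \emph{backward} to $\mathrm{copy}(s_r)$ so that it re-enters the next block at the fixed vertex $s_{r+1}$ regardless of whether the $r$-th answer was $X_r$ or $Y_r$. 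This forward-then-backward traversal, not noise filler, is the structure that makes the exact XOR identity $\PC(H)=\oplus_r\PC(G_r)$ go through.

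A second issue is your claim (i) that the embedded graph is distributed exactly as $\distPC$. Even granting the noise-half trick, the gluing matchings between blocks are forced to realize the $X\!\leftrightarrow\!X$ / $Y\!\leftrightarrow\!Y$ / cross-side logic and are therefore far from uniform perfect matchings, so the product distribution on the $\pcd$ matchings of $G$ is not the uniform one in $\distPC$. The paper does not try to make the XOR product itself uniform; instead, it observes that a layered graph becomes uniformly distributed after independently relabeling each layer by a uniformly random permutation, runs the algorithm on the relabeled graph, and then fixes a good choice of permutations by averaging. You will need that additional step (or an equivalent trick) no matter how the product is constructed, since the structural gluing edges are inherently non-uniform.
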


The key to the proof of~\Cref{lem:ngc-xor} is the streaming XOR Lemma that we already established; however, 
to be able to apply the XOR Lemma, we first need to cast $\PC_{\pcw,\pcd}$ as an XOR problem, which we do in the following, using a simple graph product.  

Let us start by defining a simple graph product, which we call the \emph{XOR product}: Given 
$\ell$ layered graphs $G_1,\ldots,G_\ell$ as instances of $\PC$ problem, this product generates a graph $H := \oplus_{i=1}^{\ell} G_i$ such that the answer to a $\PC$ problem on $H$ is equal to XOR of answers to $\PC$ on $G_1,\ldots,G_\ell$.  
We define this product formally as follows. 

\paragraph{XOR product.} Suppose we have a set of $V:= (V_1,\ldots,V_{\depth+1})$ of vertices, each of size $\width$, and an equipartition $X,Y$ of $V_{\depth+1}$. 
Consider $\ell$ different $(\width,\depth)$-layered graphs $G_1,\ldots,G_\ell$ on these sets of vertices. The XOR product graph $H := \oplus_{i=1}^{\ell} G_i$ is the following graph (see also~\Cref{fig:xor-product}): 
\begin{itemize}[label=$-$]
	\item \textbf{Vertex-set:} Create vertex-sets $U^{r,i}_j$ for $r \in [\ell]$, $i \in [4]$,  and $j \in [\depth+1]$, such that for every choice of $r,i$, $U^{r,i}_j$ is a copy of $V_j$. 
	For any  $v \in V_j$, ${copy}(v,r,i)$ denotes the copy of $v$ in $U^{r,i}_j$. Additionally, we define $X^{r,i},Y^{r,i}$ as copies of $X$ and $Y$ in $U^{r,i}_{\depth+1}$. 
	
	\item \textbf{Edge-set:} The first part creates four identical copies of each $G_r$ on $U^{r,i}$-vertices for $i \in [4]$. 
	More formally, for any edge $(u,v)$ in $G_r$, we connect ${copy}(u,r,i)$ to ${copy}(v,r,i)$ for all $i \in [4]$. 
	
	The second part connects these  separate graphs. For every $r \in [\ell]$, connect $X^{r,1}$ to $X^{r,3}$ and $X^{r,2}$ to $X^{r,4}$ using identity perfect matchings. Conversely, 
	connect $Y^{r,1}$ to $Y^{r,4}$ and $Y^{r,2}$ to $Y^{r,3}$ using identity perfect matchings. Finally, for every $r \in [\ell-1]$, connect $U^{r,3}_1$ to $U^{r+1,1}_1$ and $U^{r,4}_1$ to $U^{r+1,2}_1$ 
	using identity perfect matchings. 
\end{itemize}
The outcome of this product is another layered graph with width $2\width$ and depth $\ell \cdot (2\cdot\depth+1) + \ell - 1$. This concludes the description of the XOR product. 
 
 We now state the main property of this product. In the following, for an instance $G$ of the $\PC$ problem, we write $\PC(G) \in \set{0,1}$ to denote the answer of $\PC$ on $G$ which is $0$ if $G$ is a $X$-instance and is $1$ if 
 $G$ is a $Y$-instance. Suppose we have $\ell$ different instances of $\PC_{\hpcw,\hpcd}$ as graphs $G_1,\ldots,G_{\ell}$ on the same set of vertices (and same $s_r,X_r,Y_r$ across all $r \in [\ell]$). 
Consider the $(\pcw,\pcd)$-layered graph $H := \oplus_{r=1}^{\ell} G_i$ and define the following $\PC_{\pcw,\pcd}$ instance over $H$. For 
\[
s = copy(v,1,1), \qquad X = U_1^{\ell,3}, \qquad Y = U_1^{\ell,4},
\]
 decide whether $P(s)$ (in $H$) belongs to $X$ or $Y$ (it is worth pointing out that the sets $X,Y$ defined for $H$ are completely unrelated to sets $X_r,Y_r$ for $r \in [\ell]$). 
 We now have the following claim (a ``proof by picture'' is illustrated in~\Cref{fig:xor-product}). 

 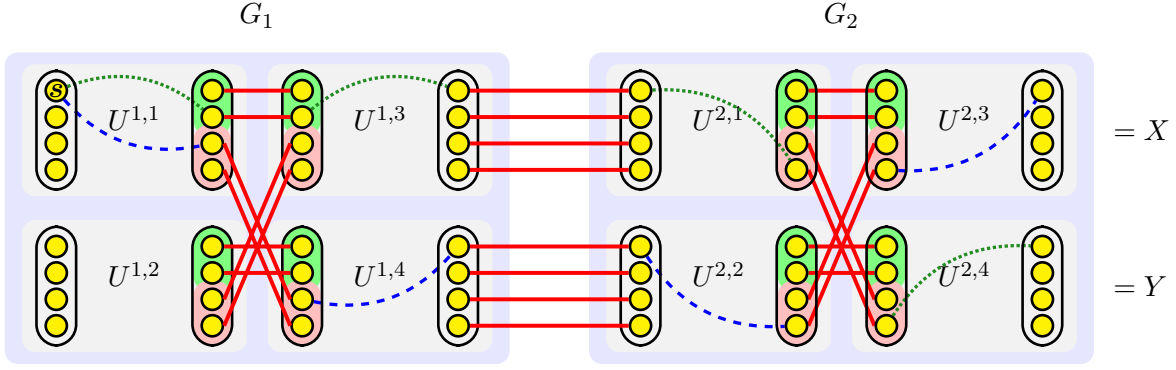
\begin{figure}[t!]	
\centering
% !TeX root = ../main.tex 
%!TEX root = ../main.tex

\begin{tikzpicture}
	
%%\tikzset{Block/.style={rectangle, rounded corners=5pt, draw, minimum width=13pt, minimum height=20pt, inner sep=0pt]}}
%%\tikzset{bBlock/.style={rectangle, rounded corners=5pt, draw, minimum width=15pt, minimum height=31pt, inner sep=0pt]}}
%%\tikzset{fBlock/.style={rectangle, rounded corners=5pt, minimum width=13pt, minimum height=20pt, inner sep=0pt]}}

\tikzset{choose/.style={rectangle, draw, rounded corners=5pt, line width=1pt, inner xsep=5pt, inner ysep=2pt]}}
\tikzset{layer/.style={rectangle, rounded corners=8pt, draw, black, line width=1pt, inner sep=3pt]}}

%% Graph G %%% 

\node[vertex] (1v11){$\bm{s}$};
\node[vertex] [below=50pt of 1v11](1v21){};
\node[vertex] [right=50pt of 1v11](1v31){};
\node[vertex] [below=50pt of 1v31](1v41){};

\foreach \i in {1,2,3,4}
{
\foreach \j in {2,...,4}
{
	\pgfmathtruncatemacro{\jp}{\j-1};
	\node[vertex] (1v\i\j) [below=1pt of 1v\i\jp]{};
}
\node[layer] (1V\i) [fit=(1v\i1) (1v\i4)]{};
}

\node[vertex] [right=25pt of 1v31](2v11){};
\node[vertex] [below=50pt of 2v11](2v21){};
\node[vertex] [right=50pt of 2v11](2v31){};
\node[vertex] [below=50pt of 2v31](2v41){};

\foreach \i in {1,2,3,4}
{
\foreach \j in {2,...,4}
{
	\pgfmathtruncatemacro{\jp}{\j-1};
	\node[vertex] (2v\i\j) [below=1pt of 2v\i\jp]{};
}
\node[layer] (2V\i) [fit=(2v\i1) (2v\i4)]{};
}

\begin{scope}[on background layer]
\node[choose, draw=none, line width=1pt, blue, fill=blue!10, inner ysep=10pt, inner xsep=15pt] (B1) [fit=(1v11) (2v44)]{};
\node [above=5pt of B1] {$G_1$};
\foreach \r in {1,2}
{
\pgfmathtruncatemacro{\rp}{2*\r-1};
\pgfmathtruncatemacro{\rpp}{2*\r};
\node[choose, draw=none, line width=0pt, black, fill=gray!10] (\r G1) [fit=(\r V1) (\r V3)]{$U^{1,\rp}$};
\node[choose, draw=none, line width=0pt, black, fill=gray!10] (\r G2) [fit=(\r V2) (\r V4)]{$U^{1,\rpp}$};
}
\end{scope}

\begin{scope}[on background layer]

\node[choose, green!50, fill=green!50, inner sep=2pt] (1Xt) [fit=(1v31) (1v32)]{};
\node[choose, green!50, fill=green!50, inner sep=2pt] (1Xb) [fit=(1v41) (1v42)]{};
\node[choose, red!25, fill=red!25, inner sep=2pt]  (1Yt) [fit=(1v33) (1v34)]{};
\node[choose, red!25, fill=red!25, inner sep=2pt] (1Yb) [fit=(1v43) (1v44)]{};

\node[choose, green!50, fill=green!50, inner sep=2pt] (2Xt) [fit=(2v11) (2v12)]{};
\node[choose, green!50, fill=green!50, inner sep=2pt] (2Xb) [fit=(2v21) (2v22)]{};
\node[choose, red!25, fill=red!25, inner sep=2pt]  (2Yt) [fit=(2v13) (2v14)]{};
\node[choose, red!25, fill=red!25, inner sep=2pt] (2Yb) [fit=(2v23) (2v24)]{};

\end{scope}

\draw[red, line width=1.5pt]
	(1v31.east) to (2v11.west)
	(1v32.east) to (2v12.west)
	(1v33.east) to (2v23.west)
	(1v34.east) to (2v24.west)
	(1v41.east) to (2v21.west)
	(1v42.east) to (2v22.west)
	(1v43.east) to (2v13.west)
	(1v44.east) to (2v14.west);
%% Graph H %%% 

\node[vertex] [right=60pt of 2v31] (1x11){};
\node[vertex] [below=50pt of 1x11](1x21){};
\node[vertex] [right=50pt of 1x11](1x31){};
\node[vertex] [below=50pt of 1x31](1x41){};

\foreach \i in {1,2,3,4}
{
\foreach \j in {2,...,4}
{
	\pgfmathtruncatemacro{\jp}{\j-1};
	\node[vertex] (1x\i\j) [below=1pt of 1x\i\jp]{};
}
\node[layer] (1X\i) [fit=(1x\i1) (1x\i4)]{};
}

\node[vertex] [right=25pt of 1x31](2x11){};
\node[vertex] [below=50pt of 2x11](2x21){};
\node[vertex] [right=50pt of 2x11](2x31){};
\node[vertex] [below=50pt of 2x31](2x41){};

\foreach \i in {1,2,3,4}
{
\foreach \j in {2,...,4}
{
	\pgfmathtruncatemacro{\jp}{\j-1};
	\node[vertex] (2x\i\j) [below=1pt of 2x\i\jp]{};
}
\node[layer] (2X\i) [fit=(2x\i1) (2x\i4)]{};
}

\node [right=15pt of 2X3] {$ = X$};
\node [right=15pt of 2X4] {$ = Y$};

\begin{scope}[on background layer]
\node[choose, draw=none, line width=1pt, blue, fill=blue!10, inner ysep=10pt, inner xsep=15pt] (xB1) [fit=(1x11) (2x44)]{};
\node [above=5pt of xB1] {$G_2$};
\foreach \r in {1,2}
{
\pgfmathtruncatemacro{\rp}{2*\r-1};
\pgfmathtruncatemacro{\rpp}{2*\r};
\node[choose, draw=none, , line width=0pt, black, fill=gray!10] (\r xG1) [fit=(\r X1) (\r X3)]{$U^{2,\rp}$};
\node[choose, draw=none, line width=0pt, black, fill=gray!10] (\r xG2) [fit=(\r X2) (\r X4)]{$U^{2,\rpp}$};
}
\end{scope}

\begin{scope}[on background layer]

\node[choose, green!50, fill=green!50, inner sep=2pt] (1xXt) [fit=(1x31) (1x32)]{};
\node[choose, green!50, fill=green!50, inner sep=2pt] (1xXb) [fit=(1x41) (1x42)]{};
\node[choose, red!25, fill=red!25, inner sep=2pt]  (1xYt) [fit=(1x33) (1x34)]{};
\node[choose, red!25, fill=red!25, inner sep=2pt] (1xYb) [fit=(1x43) (1x44)]{};

\node[choose, green!50, fill=green!50, inner sep=2pt] (2xXt) [fit=(2x11) (2x12)]{};
\node[choose, green!50, fill=green!50, inner sep=2pt] (2xXb) [fit=(2x21) (2x22)]{};
\node[choose, red!25, fill=red!25, inner sep=2pt]  (2xYt) [fit=(2x13) (2x14)]{};
\node[choose, red!25, fill=red!25, inner sep=2pt] (2xYb) [fit=(2x23) (2x24)]{};

\draw[red, line width=1.5pt]
	(1x31.east) to (2x11.west)
	(1x32.east) to (2x12.west)
	(1x33.east) to (2x23.west)
	(1x34.east) to (2x24.west)
	(1x41.east) to (2x21.west)
	(1x42.east) to (2x22.west)
	(1x43.east) to (2x13.west)
	(1x44.east) to (2x14.west);
	
\draw[blue, line width=1.25pt,  dashed, bend right]
	(1v11) to (1v33) 
	(2v23) to (2v41)
	(1x21) to (1x44)
	(2x14) to (2x31);

\draw[ForestGreen, line width=1.25pt, densely dotted, bend left]
	(1v11) to (1v32) 
	(2v12) to (2v31)
	(1x11) to (1x34)
	(2x24) to (2x41);
		
\end{scope}

\draw[red, line width=1.5pt]
	(2v31.east) to (1x11.west)
	(2v32.east) to (1x12.west)
	(2v33.east) to (1x13.west)
	(2v34.east) to (1x14.west)
	(2v41.east) to (1x21.west)
	(2v42.east) to (1x22.west)
	(2v43.east) to (1x23.west)
	(2v44.east) to (1x24.west);
%%	(1v32.east) to (2v12.west)
%%	(1v33.east) to (2v23.west)
%%	(1v34.east) to (2v24.west)
%%	(1v41.east) to (2v21.west)
%%	(1v42.east) to (2v22.west)
%%	(1v43.east) to (2v13.west)
%%	(1v44.east) to (2v14.west);

\end{tikzpicture}
\caption{An illustration of the XOR product $H$ of two graphs $G_1,G_2$. Here, the $X^{r,i},Y^{r,i}$ sets are specified for each graph -- these sets are entirely unrelated to equipartition $X,Y$ of $H$ drawn on the right. 
Moreover two potential paths out of $s$ are drawn: $(i)$ the dotted (green) one corresponds to $\PC(G_1)=0,\PC(G_2)=1$ and so $\PC(G_1 \oplus G_2) = \PC(G_1) \oplus \PC(G_2) = 1$ which is true as $s$ reaches $Y$ in $H$; 
$(ii)$ the dashed (blue) one corresponds to $\PC(G_1)=\PC(G_2) = 1$ and so $\PC(G_1 \oplus G_2) = \PC(G_1) \oplus \PC(G_2) = 0$ which is true as $s$ reaches $X$ in $H$.}
\label{fig:xor-product}
\end{figure}

\begin{claim}\label{clm:xor-product}
	For any integer $\ell$ and $H := \oplus_{r=1}^{\ell} G_r$, we have $\PC_{\pcw,\pcd} (H) = \oplus_{r=1}^{\ell} \PC_{\hpcw,\hpcd}(G_r)$. 
\end{claim}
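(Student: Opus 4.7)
The plan is to trace the unique path starting at $s$ through $H$ and track how it migrates between the ``upper'' copies ($i \in \{1,3\}$) and ``lower'' copies ($i \in \{2,4\}$) as it passes through each block $G_r$. Since $H$ is made up of perfect matchings and $s$ has degree one, the $s$-to-$P(s)$ path is well-defined, and all I need is to show that it terminates in $X = U_1^{\ell, 3}$ iff $\bigoplus_{r=1}^{\ell} \PC(G_r) = 0$.

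To make the bookkeeping clean, I will introduce a state variable $\pi_r \in \{1, 2\}$ recording which upper/lower copy the path enters block $r$ in, and prove by induction on $r$ that $\pi_r - 1 = \bigoplus_{j<r} \PC(G_j)$. The base case is $\pi_1 = 1$ since $s = copy(v, 1, 1) \in U^{1,1}$. For the inductive step, a complete traversal of block $r$ consists of four sub-segments whose net effect on the state I need to determine. First, because $U^{r, \pi_r}$ is an isomorphic copy of $G_r$ started at $copy(v, r, \pi_r)$, the embedded path follows the $\PC$-computation of $G_r$ and reaches $X^{r, \pi_r}$ if $\PC(G_r) = 0$ and $Y^{r, \pi_r}$ if $\PC(G_r) = 1$. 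Second, the cross matching carries the path to a mirror copy: the $X$-matchings $X^{r,1} \leftrightarrow X^{r,3}$ and $X^{r,2} \leftrightarrow X^{r,4}$ preserve the upper/lower parity, while the $Y$-matchings $Y^{r,1} \leftrightarrow Y^{r,4}$ and $Y^{r,2} \leftrightarrow Y^{r,3}$ flip it. Third, the path retraces the isomorphic copy of $G_r$ inside this mirror copy back to its layer $1$. Finally, when $r < \ell$, the inter-block matchings $U^{r,3}_1 \leftrightarrow U^{r+1,1}_1$ and $U^{r,4}_1 \leftrightarrow U^{r+1,2}_1$ again preserve the upper/lower parity. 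Composing the four sub-segments yields $\pi_{r+1} - 1 = (\pi_r - 1) \oplus \PC(G_r)$, completing the induction.

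The final block $r = \ell$ differs only in that the inter-block step is absent: the path stops at layer $1$ of $U^{\ell, 3}$ or $U^{\ell, 4}$. Applying the same case analysis to the first three sub-segments of block $\ell$ shows that the terminal copy-index is $3$ iff $(\pi_\ell - 1) \oplus \PC(G_\ell) = 0$, i.e., iff $\bigoplus_{r=1}^{\ell} \PC(G_r) = 0$. Since $X = U_1^{\ell, 3}$ and $Y = U_1^{\ell, 4}$, this gives $\PC_{\pcw, \pcd}(H) = \bigoplus_{r=1}^{\ell} \PC(G_r)$, as claimed.

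The only real obstacle is careful bookkeeping: one must verify the direction (forward vs.\ reverse) in which $G_r$ is traversed inside each of the two copies $U^{r, \pi_r}$ and its mirror, and confirm that the eight matching rules combine exactly as a parity flip conditioned on $\PC(G_r)$. The two paths already drawn in Figure~\ref{fig:xor-product} illustrate both XOR outcomes for $\ell = 2$ and serve as a convenient sanity check for the induction.
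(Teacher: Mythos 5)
Your proof is correct and follows essentially the same approach as the paper's: both arguments trace the unique $s$-to-$P(s)$ path through the blocks and track which of the two "entry copies" (your $\pi_r$; the paper's $i_r$) the path occupies at the start of each block, observing that this index is preserved when $\PC(G_r)=0$ (the path crosses via the $X$-matchings and then the inter-block matchings, both of which respect the upper/lower split) and flipped when $\PC(G_r)=1$ (the path crosses via the $Y$-matchings, which swap the split). Your reformulation of the flip rule as the explicit invariant $\pi_r - 1 = \bigoplus_{j<r}\PC(G_j)$, proved by induction, is a slightly crisper packaging of the same counting-the-flips argument in the paper, but there is no substantive difference in method.
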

\begin{proof}
	Recall that in any layered graph, each vertex is part of a unique path from the first layer to the last one; for any $v \in H$, we denote this path by $\PP(v)$. 
	Consider any graph $G_r$ for $r \in [\ell]$ and its starting vertex $s_r$. Notice that there are four copies of $s_r$ across subgraphs $U^{r,i}$ for $i \in [4]$. Let us denote these copies by $s^{r,i} = copy(s,r,i)$ for $i \in [4]$. 
	
	If $\PC(G_r) = 0$, then $P(s_r) \in X_r$ by definition. We claim that in this case, $\PP(s^{r,1})$ contains $\PP(s^{r,3})$ because of the following path in $H$: 
	\begin{align*}
		s^{r,1} = copy(s_r,r,1) \leadsto copy(P(s_r),r,1) \in X^{r,1} \rightarrow copy(P(s_r),r,3) \in X^{r,3} \leadsto copy(s_r,r,3) = s^{r,3};
	\end{align*}
	the first path exists because $U^{r,1}_1$ to $U^{r,1}_{\hpcd}$ in $H$ are connected the same as $G_r$, the edge exists by the definition of $H$, and again $U^{r,3}_{\hpcd}$ goes to $U^{r,3}_1$ the same as $G_r$. 
	By the same exact reason, $\PP(s^{r,2})$ contains $\PP(s^{r,4})$. 
	
	Conversely, if $\PC(G_r) = 1$, then $P(s_r) \in Y_r$ by definition. We claim that in this case, $\PP(s^{r,1})$ contains $\PP(s^{r,4})$ instead because of the following path in $H$: 
	\begin{align*}
		s^{r,1} = copy(s_r,r,1) \leadsto copy(P(s_r),r,1) \in Y^{r,1} \rightarrow copy(P(s_r),r,4) \in Y^{r,4} \leadsto copy(s_r,r,4) = s^{r,4};
	\end{align*}
	this is exactly as before except for the fact that $Y^{r,1}$ is instead connected to $Y^{r,4}$ by a perfect matching. Again, we also have that $\PP(s^{r,2})$ contains $\PP(s^{r,3})$ in this case. 
	
	Now, consider the path $\PP(s)$ in $H$. Recall that $s = s^{1,1}$ by definition. By the above argument, the path $\PP(s^{1,1})$ then either contains $s^{1,3}$ in $U^{1,3}_1$ or $s^{1,4}$ in $U^{1,4}_1$. In the first case, 
	$\PP(s^{1,1})$ will then contain $\PP(s^{2,1})$ and in the second case, it will next contain $\PP(s^{2,2})$ by the construction of the last set of edges added to $H$ in its definition. Continuing this inductively from 
	$s^{2,1}$ and $s^{2,2}$ and their corresponding paths $\PP(s^{2,1})$ and $\PP(s^{2,2})$, we get that $\PP(s)$ goes through a collection of vertices $s^{r,i_r}$ for \emph{every} $r \in [\ell]$ and \emph{exactly one} $i_r \in [2]$ until 
	it eventually ends up at either $s^{\ell,3} \in X$ or $s^{\ell,4} \in Y$ (which, we can think of as $s^{\ell+1,1}$ and $s^{\ell+1,2}$, respectively, for the ease of notation). 
	
	Next, consider any pair $s^{r,i_r}$ and $s^{r+1,i_{r+1}}$ on $\PP(s)$. By the argument above, if $\PC(G_r) = 0$, then $i_{r+1} = i_r$ while if $\PC(G_r) = 1$, then $i_{r+1} = 3-i_r$, i.e., it ``flips''. Thus, starting from $s = s^{1,i_1} = s^{1,1}$,
	$\PP(s)$ ends up at $s^{\ell+1,1} = s^{\ell,3} \in X$ if the number of flips is even and at $s^{\ell+1,2} = s^{\ell,4} \in Y$ if it is odd. This means that $\PC_{\pcw,\pcd} (H) = \oplus_{r=1}^{\ell} \PC_{\hpcw,\hpcd}(G_r)$, proving the claim. 
\end{proof}

Equipped with the XOR product and~\Cref{clm:xor-product}, we can now prove~\Cref{lem:ngc-xor}. 

\begin{proof}[Proof of~\Cref{lem:ngc-xor}]
	Consider $\ell$ \emph{independent} instances $G_1,\ldots,G_{\ell}$ of $\PC_{\hpcw,\hpcd}$ sampled from ${\distPC}^{\ell}$. Define $H := \oplus_{r=1}^{\ell} G_r$, which is a 
	$(\pcw,\pcd)$-layered graph for $\pcw = 2\hpcw$ and $\pcd = \ell \cdot (2\cdot\hpcd+1) + \ell - 1$ (these parameters match those of~\Cref{lem:ngc-xor}). By~\Cref{clm:xor-product}, 
	we have $\PC(H) = \oplus_{r=1}^{\ell} \PC(G_r)$. 
	
	We can now apply (the contrapositive of) our streaming XOR Lemma (\Cref{thm:xor-lemma}) to obtain: if we have a $p$-pass $s$-space streaming algorithm for $\PC$ over the distribution of $H = \oplus_{r=1}^{\ell} G_r$ over ${\distPC}^{\ell}$ with 
	probability of success $1/2+\delta$, then we will also have a $p$-pass $s$-space streaming algorithm for $\PC_{\hpcw,\hpcd}$ over $\distPC$ with $1/2 \cdot (1+(2\delta)^{1/\ell})$ (by re-parameterizing $\delta$ to match that of~\Cref{thm:xor-lemma}). 
	
	We are still however not done because the algorithm $A$ in the lemma works on the distribution $\distPC$ while the distribution $H$ induced by~${\distPC}^{\ell}$ does not match $\distPC$ due to the reduction. However, this is easy to fix. 
	Pick random permutations $\pi_1,\ldots,\pi_{\pcd+1}$ and use $\pi_i$ to relabel vertices of layer $V_{i}$ of the layered graph $H$ to obtain a graph $G$. The distribution of $G$ is now a random $(\pcw,\pcd)$-layered graph and thus we can run $A$
	over this graph for checking if $\pi_1(s)$ is in $\pi_{\pcd+1}(X)$ or $\pi_{\pcd+1}(Y)$ instead and obtain the answer to $H$ as well. An averaging argument for fixing a choice of $\pi_1,\ldots,\pi_{\pcd+1}$ finalizes the proof. 
\end{proof}

%%%%%%%%%%%%%%%%%%% 3 %%%%%%%%%%%%%%%%%%%%%%%%%
%%%%%%%%%%%%%%%%%%% 3 %%%%%%%%%%%%%%%%%%%%%%%%%
%%%%%%%%%%%%%%%%%%% 3 %%%%%%%%%%%%%%%%%%%%%%%%%

\subsection{Step Three: A Lower Bound for the Single-Copy Problem}\label{sec:ngc-1copy} 

The previous step allows us to instead of proving a lower bound for XOR of $\ell$ copies of the problem, prove a weaker lower bound for a single copy, which translates to a ``standard'' lower bound for pointer chasing. Our goal in this step is to prove this weaker lower bound. We prove the following lemma in this section.

\begin{lemma}\label{lem:ngc-1copy}
	Let $A$ be a $p$-pass $s$-space streaming algorithm for $\PC_{\hpcw,\hpcd}$ over $\distPC$ with probability of success at least $\frac12 + \frac{1}{10\hpcw^{1/\ell}}$. Then, either $p > \hpcd-1$ or $s = \Omega(\frac1{\hpcd^5} \cdot \hpcw^{1-4/\ell})$. 
\end{lemma}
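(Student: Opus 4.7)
The plan is to adapt the standard streaming pointer chasing lower bound template (in the style of~\cite{NisanW91,GuhaM09,Yehudayoff16}) to our particular setting, which differs from the classical one in three respects: the graph consists of random matchings (rather than random functions), the target is the binary event $P(s) \in X$ versus $P(s) \in Y$ (rather than the full value of $P(s)$), and we need hardness even against protocols whose advantage over random guessing is only $\Theta(\hpcw^{-1/\ell})$.

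First, I would recast the $p$-pass $s$-space streaming algorithm as a $p\hpcd$-round communication protocol among $\hpcd$ \emph{matching players} $Q_1,\ldots,Q_{\hpcd}$, where $Q_j$ holds the matching $M_j$. Because the stream order is $M_{\hpcd} \conc \cdots \conc M_1$, a single pass corresponds to a chain $Q_{\hpcd} \to Q_{\hpcd-1} \to \cdots \to Q_1$ of $s$-bit messages, and $p$ passes produce $p\hpcd$ such messages in total. A reduction analogous to~\Cref{lem:game-to-stream} transfers any lower bound in this communication model back to streaming. This casting is important because the matchings arrive in the order \emph{opposite} to the chase direction $V_1 \to V_2 \to \cdots \to V_{\hpcd+1}$, so each pass advances the ``causal reach'' of the algorithm's memory by at most one additional layer, formalizing the intuition that $p \le \hpcd - 1$ passes cannot complete the chase.

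Second, I would carry out an information-theoretic round-elimination argument tailored to this reversed stream order. The key step is to show that when $p \le \hpcd - 1$ there exists a layer $j^*$ whose matching $M_{j^*}$ influences the final output essentially only through a single $s$-bit boundary message. Formalizing this via an averaging argument to select $j^*$, followed by a hybrid that resamples $M_{j^*}$ conditioned on the transcript at the pass boundary, one can argue, via Pinsker's inequality, that the algorithm's advantage over $\tfrac{1}{2}$ in predicting the correct half of $V_{\hpcd+1}$ is bounded by $O(\sqrt{s/\hpcw})$ — provided that $\Omega(\hpcw\log\hpcw)$ bits of entropy in $M_{j^*}$ remain ``live'' after conditioning, which holds because a uniformly random matching of width $\hpcw$ carries $\Theta(\hpcw\log\hpcw)$ bits of entropy. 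An analogue of~\Cref{lem:xor-independence} will be needed to establish the conditional independence that makes the resampled $M_{j^*}$ behave like a fresh random matching conditioned on the transcript.

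Third, I would track the parameter losses carefully. Each pass contributes a factor of $O(\hpcd)$ from averaging over the choice of $j^*$, Pinsker loses a square root when passing from KL-divergence to total variation, and an additional loss arises when amplifying a single-round advantage across multiple rounds. Accumulating these losses, the best advantage any $p$-pass $s$-space protocol can achieve on $\distPC$ is $O\paren{(p\hpcd)^{O(1)} \cdot (s/\hpcw)^{1/4}}$; setting this to be at least $1/(10\hpcw^{1/\ell})$ and solving for $s$ yields $s = \Omega(\hpcw^{1-4/\ell}/\hpcd^5)$, matching the statement.

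The main obstacle will be the second step: correctly engineering the hybrid so that resampling $M_{j^*}$ preserves both the marginal transcript distribution and the conditional independence needed to drop the algorithm's advantage down to $O(\sqrt{s/\hpcw})$. The classical rectangle property of communication on product distributions provides the template, but here we must also handle the sequential pass-boundary structure of the streaming protocol. A secondary technical nuisance is that the target is the symmetric binary $X/Y$ bit, which is insensitive to permutations of $V_{\hpcd+1}$; this forces the advantage analysis to go through the probability that $P(s)$ lies in a \emph{specific} half of $V_{\hpcd+1}$, which after the resampling is close to $1/2$ up to $O(1/\hpcw)$ fluctuations, giving us exactly the regime in which the Pinsker-based bound is meaningful.
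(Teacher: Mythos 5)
You are effectively proposing a proof of \Cref{prop:pc-lower}, which is where the substance lies; the paper's proof of \Cref{lem:ngc-1copy} itself is a one-line instantiation of that proposition with $\pcw=\hpcw$, $\pcd=\hpcd$, $\gamma = \tfrac{1}{10\hpcw^{1/\ell}}$. Your high-level setup (one matching player per layer, reversed stream order, Pinsker-based hybrid, conditional independence claims in the style of \Cref{lem:xor-independence}) does match the paper's, so the framing is sound.

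However, there is a genuine gap at the technical heart of the argument. You note that a random matching of width $\hpcw$ carries $\Theta(\hpcw\log\hpcw)$ bits of entropy, and treat this as sufficient to run the Pinsker bound. But the relevant quantity is the \emph{per-coordinate} information $\mi{\rM(v)}{\rProt}$ averaged over $v$, and the naive way to bound it — subadditivity giving $\sum_v \HH(\rM(v)\mid\rProt) \geq \HH(\rM)-\HH(\rProt)$ — is useless here, because $\HH(\rM)=\log(\hpcw!)=\hpcw\log\hpcw-\Theta(\hpcw)$. Even with \emph{zero} communication this yields only $\HH(\rM(v)\mid\rProt) \geq \log\hpcw - \Theta(1)$ on average, i.e.\ an information bound of $\Theta(1)$, far too weak to give any space lower bound. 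The paper's proof explicitly calls this out and routes around it via the Entropy Subset Inequality (Han's inequality, \Cref{fact:esi}) applied to $\beta$-subsets of the domain with the optimized choice $\beta = \sqrt{r\cdot s\cdot \hpcw}$; this is precisely what produces the $\sqrt{rs/\hpcw}$ information bound and, after Pinsker, the $1/4$-th power in the final bound $s=\Omega(\gamma^4\hpcw/\hpcd^5)$. Your accounting instead attributes the extra square root to "amplifying a single-round advantage across multiple rounds," which is not where it comes from and would not by itself recover the bound. Relatedly, the paper does not pick a single informative layer $j^*$ and resample it; it runs an inductive argument (\Cref{lem:pc-ind}) showing the TVD of the $(r{+}2)$-th pointer from uniform, conditioned on the transcript through round $r$, grows additively by at most $\gamma/\hpcd$ per round, accumulating to $\gamma$. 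A one-layer round-elimination might be salvageable, but without the Han's-inequality step the central per-edge information bound fails, so as written the proposal has a hole exactly where the difficulty lies.
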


The proof of this lemma is similar to the known communication complexity lower bounds for pointer chasing such as~\cite{GuhaM09,NisanW91,PonzioRV99,Yehudayoff16}; the catch however is that these lower bounds are for the distribution wherein each
vertex in a layer $V_i$ \emph{independently} samples a neighbor in $V_{i+1}$ (vertices of $V_{i+1}$ can receive more than one edge) as opposed to a random matching. This independence between the choice of vertices is crucial in these
lower bounds but at the same time working with such a distribution breaks multiple of our reductions (there are other minor differences as well, for instance, we will consider the lower bound directly for streaming algorithms to obtain (slightly) improved bounds
but this is similar to~\cite{GuhaM09}). As such, this final step of our proof is to show a new lower bound for the pointer chasing over the desired distribution, following
the proofs of~\cite{GuhaM09,Yehudayoff16} with some key modifications, in particular to allow for handling random matchings. We prove the following proposition, which implies~\Cref{lem:ngc-1copy}.

\begin{proposition}\label{prop:pc-lower}
	Consider a $p$-pass $s$-space streaming algorithm for $\PC_{\pcw,\pcd}$ over random $(\pcw,\pcd)$-layered graphs with matchings $M_1,\ldots,M_{\pcd}$ given in the stream $M_{\pcd} \conc \cdots \conc M_1$. 
	For  $\gamma \in (0,1/2)$,  if the algorithm succeeds with probability at least $1/2+\gamma$ then either $p > \pcd-1$ or $s=\Omega\paren{{\frac{\gamma^4}{\pcd^5}} \cdot \pcw}$. 
\end{proposition}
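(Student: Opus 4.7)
The plan is to prove Proposition \ref{prop:pc-lower} via the information-theoretic round-elimination framework for pointer chasing originating in \cite{NisanW91} and refined in \cite{PonzioRV99, GuhaM09, Yehudayoff16}, with the main deviation from prior work being that we must carry out the argument over the random-matching distribution $\distPC$ rather than the (technically easier) random-function distribution in which each vertex independently picks a neighbor in the next layer. The high-level strategy is as follows: from a $p$-pass $s$-space algorithm $A$ for $\PC_{\pcw,\pcd}$ with success probability $\tfrac{1}{2} + \gamma$, I would construct a $(p-1)$-pass $s$-space algorithm $A'$ for $\PC_{\pcw,\pcd-1}$ with success probability at least $\tfrac{1}{2} + \gamma - \eta$ for an appropriate per-step loss $\eta$. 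Iterating this reduction $p$ times yields a $0$-pass algorithm for $\PC_{\pcw,\pcd-p}$ with success probability at least $\tfrac{1}{2} + \gamma - p\eta$; since $\pcd - p \geq 1$ under the assumption $p \leq \pcd - 1$, the residual problem still has a uniformly random answer that a zero-pass algorithm cannot guess better than $\tfrac{1}{2}$, forcing $\gamma \leq p\eta$ and hence the claimed lower bound on $s$ after rearrangement.

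The heart of the argument is bounding $\eta$ per step. Let $\Pi_1$ denote the $s$-bit state carried from the first to the second pass, and let $M_1, \ldots, M_{\pcd}$ denote the independent uniform matchings. By the chain rule of mutual information,
\[
\sum_{u \in V_1} I\paren{\Pi_1 \,;\, M_1(u) \mid M_2, \ldots, M_{\pcd}} \,\leq\, I\paren{\Pi_1 \,;\, M_1 \mid M_2, \ldots, M_{\pcd}} \,\leq\, s,
\]
so by averaging over the (wlog uniform) starting vertex $s$, the state $\Pi_1$ conveys at most $s/\pcw$ bits about the pointer $M_1(s)$. The reduction $A'$ receives a depth-$(\pcd-1)$ instance with starting vertex $s'$, internally samples $M_1$ uniformly subject to $M_1(s) = s'$, simulates the first pass of $A$ on $(M_1, M_2, \ldots, M_{\pcd})$ to produce $\Pi_1$, and hands control to the remaining $(p-1)$ passes executed over the actual stream $M_{\pcd} \conc \cdots \conc M_2$. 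Pinsker's inequality turns the per-pointer mutual-information bound above into a total-variation bound of order $\sqrt{s/\pcw}$ between the correct and simulated distributions of $\Pi_1$, giving a per-step loss of $\eta = O(\sqrt{s/\pcw})$ and, after iterating, the clean bound $s = \Omega(\gamma^2 \pcw / p^2)$.

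Closing the gap to the claimed $s = \Omega(\gamma^4 \pcw / \pcd^5)$ requires two further refinements. First, eliminating one edge from a uniform matching leaves a \emph{not quite} uniform matching on the remaining $\pcw - 1$ vertices; this drift is handled by a symmetry-and-resampling trick in the spirit of \cite{Yehudayoff16}, costing only $O(1/\pcw)$ additive TVD per step. Second and more seriously, naive TVD composition across $p$ rounds loses polynomial factors; I would instead route the analysis through Hellinger distance (again following \cite{Yehudayoff16}), exploiting its sub-additivity under product-form conditional distributions to compose more tightly and converting back to TVD only at the end. The polynomial slack $\pcd^5/\gamma^2$ in the statement absorbs the suboptimality of the matching-specific resampling at each of the $\pcd$ layers. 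The main obstacle I anticipate is verifying this Hellinger sub-additivity in the streaming model of Definition \ref{def:str}: the classical cut-and-paste / rectangle lemma must be re-derived for streaming transcripts, and the lack of per-vertex independence in a uniform matching requires carefully bounding the correlations that the matching structure introduces between the pointers at consecutive layers.
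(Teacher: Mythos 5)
Your high-level intuition (a per-round information argument that shows the protocol is always ``one pointer behind'') matches the paper, but the concrete reduction you propose has a pass-accounting gap that the paper deliberately sidesteps. You build a $(p-1)$-pass algorithm $A'$ for $\PC_{\pcw,\pcd-1}$ by privately sampling $M_1$, ``simulating the first pass of $A$ to produce $\Pi_1$,'' and then running the remaining $p-1$ passes of $A$. But simulating that first pass of $A$ itself requires a pass over $M_{\pcd}\conc\cdots\conc M_2$ (followed by the appended fake $M_1$): by the time $A'$ knows $\Pi_1$ it has already consumed one of its own passes, and it still needs $p-1$ more passes to simulate passes $2,\ldots,p$ of $A$. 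Total: $p$ passes, not $p-1$, so the ``round'' is not actually eliminated. One also cannot pre-sample $\Pi_1$ from its marginal: in the streaming model of Definition~\ref{def:str}, $\Pi_1$ is a function of the \emph{entire} stream, so it is correlated with $M_2,\ldots,M_{\pcd}$ and cannot be fixed independently of $A'$'s actual input. The paper avoids this issue by first relaxing to a $\pcd$-party blackboard protocol in which each matching is a separate player's input; in that model the round-$1$ message of $Q_i$ depends only on $M_i$ and the preceding round-$1$ messages, and the proof tracks the total-variation distance between $\rP_{r+2}(s)$ and its conditional law given the round-$r$ transcript directly by induction (Lemma~\ref{lem:pc-ind}), never constructing a smaller-depth protocol at all.

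Two of your proposed ``refinements'' also run in a direction the paper did not need. The worry that ``naive TVD composition across $p$ rounds loses polynomial factors'' is misplaced here: the paper's inductive statement carries a TVD budget of $\gamma\cdot r/\pcd$ after $r$ rounds, which composes additively and cleanly without any Hellinger machinery. Conversely, the one place where a genuinely new idea \emph{is} needed --- correcting for the fact that a random matching has entropy $\log(\pcw!) = \pcw\log\pcw - \Theta(\pcw)$, so that the naive sub-additivity bound leaves a $\Theta(1)$ additive excess in $I(\rM(v);\rProt)$ even when no communication occurs --- is not addressed by a ``symmetry-and-resampling trick costing $O(1/\pcw)$''. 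The paper instead invokes the Entropy Subset Inequality (\Cref{fact:esi}) with a tuned block size $\beta$ to obtain $\frac{1}{\pcw}\sum_v I(\rM(v);\rProt) = O(\sqrt{rs/\pcw})$; without this (or an equivalent concentration-of-entropy argument for partial matchings), the per-round information bound you would derive is off by an additive constant, which destroys the lower bound. In short: the structural idea is sound, but the explicit protocol reduction does not typecheck in the streaming/pass model, and the matching-distribution fix is the one technically delicate step and is not supplied.
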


This is a good place to point out concretely why we need step two of our approach in~\Cref{lem:ngc-xor}, instead of simply applying~\Cref{prop:pc-lower} directly to our original $\PC$ problem in the reduction of~\Cref{lem:ngc-decor}. This is because the best advantage over random guessing
i.e., parameter $\gamma$, this lower bound can provide is $\ll (\frac1\pcw)^{1/4}$ to give any meaningful space bound. Indeed, none of the other pointer chasing lower bounds such as~\cite{GuhaM09,NisanW91,PonzioRV99,Yehudayoff16} can provide
any meaningful guarantees when $\gamma \approx \frac1\pcw$ while we need an almost-linear lower bound for $\gamma < \frac1\pcw$ to apply our reduction in~\Cref{lem:ngc-decor}. As such, the hardness amplification
step of~\Cref{lem:ngc-xor} is the crucial step in our approach. 

We postpone the proof of~\Cref{prop:pc-lower} to~\Cref{sec:pc-lower} to keep the flow of the current argument and instead show how~\Cref{lem:ngc-1copy} follows immediately from this.

\begin{proof}[Proof of~\Cref{lem:ngc-1copy}]
	Let $\pcw=\hpcw$, $\pcd=\hpcd$, and $\gamma = \frac{1}{10\hpcw^{1/\ell}}$. The distribution $\distPC$ is the same as 
	the hard distribution of~\Cref{prop:pc-lower}. As such, if $A$ solves $\PC_{\hpcw,\hpcd}$ with probability of success at least $\frac12 + \frac{1}{10\hpcw^{1/\ell}} = \frac12 + \gamma$, 
	then, by~\Cref{prop:pc-lower}, either $p > \hpcd-1$ or 
	\[
	s = \Omega(\frac{\gamma^4}{\hpcd^5} \cdot \hpcw) = \Omega(\frac1{\hpcd^5} \cdot \hpcw^{1-4/\ell}),
	\]
	concluding the proof. 
\end{proof}

\subsection{Putting Everything Together: Proof of~\Cref{thm:ngc}}\label{sec:every}

We are now ready to prove~\Cref{thm:ngc}.

\begin{proof}[Proof of~\Cref{thm:ngc}]
 Let $A$ be a $p$-pass $s$-space streaming algorithm for $\NGC_{n,k}$ with probability of success at least $2/3$ over the distribution $\distPC$. Let us go over each of the three steps in our approach below. 

\begin{itemize}[leftmargin=15pt]
	\item \textbf{Step one:} By~\Cref{lem:ngc-decor}, existence of $A$ implies a $p$-pass $s$-space streaming algorithm $B$ for $\PC_{\pcw,\pcd}$ on $\distPC$ for \emph{even} $\pcw := \Theta(n/k)$ and $\pcd := \frac{k-2}{2}$ with probability
	 of success at least $\frac12 + \frac1{6\pcw}$.
	 \item \textbf{Step two:} Pick $\ell := \frac{b-1}{2p+4}$. By~\Cref{lem:ngc-decor}, existence of $B$ implies a $p$-pass $s$-space streaming algorithm $C$ for $\PC_{\hpcw,\hpcd}$ on $\distPC$ for $\hpcw=\frac{\pcw}{2}$ and 
	 $\hpcd = \frac{\pcd-1}{2\ell}-1 = p+1$ with probability of success at least $\frac12 \cdot \paren{1+(\frac{2}{6\pcw})^{1/\ell}} \geq \frac12+\frac{1}{10\hpcw^{1/\ell}}$. 
	 
	 Note that $\pcw$ is even by the guarantee of previous part and $\pcd-1$ divides $2\ell$ by the choice of $\ell$ so we can indeed apply~\Cref{lem:ngc-decor} in this step. 
	 \item \textbf{Step three:} By~\Cref{lem:ngc-1copy}, considering $C$ is a $p$-pass $s$-space streaming algorithm for $\PC_{\hpcw,\hpcd}$ with probability of success at least $\frac12+\frac{1}{10\hpcw^{1/\ell}}$ and $p = \hpcd-1$, 
	 we have that $s = \Omega(\frac1{\hpcd^5} \cdot \hpcw^{1-4/\ell})$. 
	 
\end{itemize}

 We can now retrace these parameters  to the original parameters $n,k$ of $\NGC_{n,k}$. Firstly, $\hpcw = \Theta(\pcw) = \Theta(n/k)$ and $\hpcd = p+1$. Secondly, 
 \[
 	\ell = \frac{b-1}{2p+4} = \frac{(k-2)/2-1}{2p+4} = \frac{k-4}{4p+8}. 
 \]
 As such, the lower bound on the space complexity of all algorithms $A,B$ and $C$ above translates to
 \[
 	s = \Omega\paren{\frac1{p^5} \cdot (n/k)^{1-\frac{4p+8}{k-4}}} = \Omega\paren{\frac1{p^5} \cdot (n/k)^{1-O(p/k)}}.
 \]
 This proves~\Cref{thm:ngc} for infinitely many values of $k \in \IN^+$, i.e., the ones where $\frac{k-4}{4p+8}$ is an integer. 
 
We now extend this lower bound to all values of $k$. Given any integer $k$, find the largest integer $\tilde{k} \leq k$ so that $\frac{\tilde{k}-4}{4p+8}$ is an integer. Clearly, 
 $\tilde{k} = \Theta(k)$. Sample a graph $G$ from $\distNGC$ of~\Cref{distngc} for parameters $n$ and $\tilde{k}$. Recall that in step~\eqref{line:det} of~\Cref{distngc}, we connect the sets $S^1$ to $S^3$, and $S^2$ to $S^4$ using identity perfect matchings. 
We now replace each of these edges with a path of length $k-\tilde k + 1$ (or equivalently, put $k-\tilde k$ new vertices in the middle of each path). This can only increase the number of vertices by a constant factor.

In this new graph, the length of each original 
$\tilde{k}$-cycle becomes $(\tilde{k}-1)+k-\tilde k + 1= k$, and each $2\tilde{k}$-cycle becomes $2(\tilde{k}-1)+2(k-\tilde k+1) = 2k$. As such, we can apply the lower bound for parameters $n$ and $\tilde{k}$ to this graph as well and since 
the number of vertices and $k$ are asymptotically the same, we obtain the desired lower bound. This finalizes the proof of~\Cref{thm:ngc}. 
 \end{proof}
 
 \smallskip
 
 We conclude this section by stating a corollary (of the proof) of~\Cref{thm:ngc} and~\Cref{rem:distngc-known} that we will use in some of our reductions (and can also be useful for future reductions from \NGC). 
 
 \begin{corollary}\label{cor:ngc-strong}
	For every  $k \in \IN^+$, the lower bound of~\Cref{thm:ngc} continues to hold even if we additionally provide the following information to the algorithm beforehand: 
	\begin{enumerate}[label=$(\roman*)$]
		\item One endpoint of every noise path in the graph $G$;
		\item A set of $t$ four-tuples of vertices $(u_1,u_2,u_3,u_4)$ such that in the $k$-cycle case $u_1,u_2$ and $u_3,u_4$ belong to two disjoint $k$-cycles each, while in the $2k$-cycle case, all  belong to the same $2k$-cycle. 
	\end{enumerate}
	Moreover, this lower bound also hold when the graph is \emph{directed} with directed $k$-cycles and $(k-1)$-paths or directed $2k$-cycles and $(k-1)$-paths. 

 \end{corollary}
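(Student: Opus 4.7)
The plan is to observe that all three strengthenings are essentially ``for free'': the advice in (i) and (ii) depends only on the fixed part of $\distNGC$ (the sets $S, X, Y$ and the copy-indexing), and a direction rule for (iii) can likewise be chosen based only on the fixed structure. Since the randomness of the instance lies entirely in the matchings $M_1,\ldots,M_\depth$ of $G_0$, any deterministic function of the fixed part can simply be hardwired into the description of a streaming algorithm without costing any space or passes. The content of the proof is then to exhibit such hardwired choices that realize the claimed properties.

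For (i), take the set $T := \bigcup_{i \in \{1,2\}}\paren{V^{i}_1 \setminus S^{i}}$ of $4t$ vertices; this set is fixed by the distribution. A short case analysis of the four path types that arise in the proof of~\Cref{lem:distngc-valid}, obtained by combining the four cross-matchings ($X^1\!\leftrightarrow\! X^3$, $X^2\!\leftrightarrow\! X^4$, $Y^1\!\leftrightarrow\! Y^4$, $Y^2\!\leftrightarrow\! Y^3$) with the direction of travel through copies $1,2$, shows that each of the $4t$ noise paths has exactly one endpoint in $T$. For (ii), for each $v \in S$ let $(u_1,u_2,u_3,u_4) := (copy(v,1),copy(v,3),copy(v,2),copy(v,4))$; the cycle decompositions established in~\Cref{lem:distngc-valid} already verify that in the $X$-instance $\{u_1,u_2\}$ and $\{u_3,u_4\}$ lie on disjoint $k$-cycles (the cycles assembled from copies $\{1,3\}$ and $\{2,4\}$ respectively), while in the $Y$-instance all four vertices lie on the single $2k$-cycle built from all four copies. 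Given any $p$-pass, $s$-space algorithm that consumes this advice together with the stream, we obtain a $p$-pass, $s$-space algorithm on the original stream by preloading the (fixed) advice into its starting state; the success probability over $\distNGC$ is unchanged, so~\Cref{thm:ngc} applies verbatim.

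For (iii), the plan is to orient every edge of the constructed graph according to a rule that depends only on which matching (random or cross) it belongs to: edges of $M^{i}_j$ with $i\in\{1,2\}$ go from $V^{i}_j$ to $V^{i}_{j+1}$, edges of $M^{i}_j$ with $i\in\{3,4\}$ go from $V^{i}_{j+1}$ to $V^{i}_j$, and the cross-matchings are directed $S^3\!\to\! S^1$, $S^4\!\to\! S^2$, $X^1\!\to\! X^3$, $X^2\!\to\! X^4$, $Y^1\!\to\! Y^4$, $Y^2\!\to\! Y^3$. Tracing the $k$-cycle in the $X$-instance and the $2k$-cycle in the $Y$-instance exactly as in~\Cref{lem:distngc-valid} shows that each cycle in each case becomes a consistently directed cycle, and the analogous trace for noise paths shows they become directed paths from $V^{i}_1 \setminus S^i$ to $V^{i'}_1 \setminus S^{i'}$ for the appropriate partner copy $i'$. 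The main (and only) thing that needs checking is this compatibility between the traversal order used in the proof of~\Cref{lem:distngc-valid} and the local orientations assigned above; it is a finite case check over the four copy-orientations and six cross-matchings and poses no difficulty.

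The upshot is that the modified distribution supplies identical streaming behavior to $\distNGC$ up to the data-independent orientation, so any directed algorithm yields an undirected one with the same resources by simply stripping directions, and the lower bound transfers. The only mild obstacle is getting the direction rule in (iii) exactly right so that every cycle/path is consistently oriented; once chosen, the reduction in all three parts is immediate because no random information is ever revealed by the additional advice or by the orientations.
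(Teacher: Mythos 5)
Your proposal is correct and follows essentially the same approach as the paper: the first two parts amount to spelling out the content of Remark~\ref{rem:distngc-known} (the advice depends only on the fixed vertex labeling, not on the random matchings, so it can be hardwired), and the third part picks a data-independent orientation matching the paper's ``from the first layer to the last and back'' rule. The paper's proof is terser, but your explicit orientation rule and trace through Lemma~\ref{lem:distngc-valid} are the right verification; the only slightly misleading phrase is ``stripping directions,'' since the reduction actually has the undirected algorithm internally \emph{add} the (data-independent) directions before simulating the directed one.
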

 \begin{proof}
 	The first two parts follow immediately from~\Cref{rem:distngc-known} as when proving the lower bound for~\Cref{distngc}, we anyway assume this information was known by the streaming algorithm. The last 
	part follows because we can alter~\Cref{distngc} to direct the edges from the first layer to the last one and back (left-to-right for ``inside'' edges in~\Cref{fig:ngc} and right-to-left for ``outside'' ones), which makes the cycles and paths directed. Again, the lower bound holds verbatim for this distribution 
	as well because the partitioning of vertices in the layers are fixed in~\Cref{distngc} and so the direction of the edges does not reveal any new information to the algorithm. 
 \end{proof}

% !TeX root = main.tex 
%!TEX root = main.tex

\section{Other Graph Streaming Lower Bounds via Reductions}\label{sec:reductions}

We now present several reductions from $\NGC$ for establishing graph streaming lower bounds. These results collectively formalize~\Cref{res:str}. 
Our reductions are  similar in spirit to the ones in prior work and particularly~\cite{AssadiKSY20} (based on their \emph{One-or-Many-Cycles problem} which is 
another variant of gap cycle counting problems);  the main novelty here
is how we can handle the ``noise'' part of $\NGC$ in the reductions but this is not particularly challenging (and become mostly relevant to problems such as minimum weight spanning tree or property testing of connectivity). 
In the following, we will define each problem and present the most relevant prior work, our new result, and a short discussion on whether our result is/seems to be optimal or not. 
We refer the interested reader to~\cite[Section 7 and Appendix B]{AssadiKSY20} for a comprehensive
summary of the prior results on the problems studied in this section. 

Before moving on, let us note that the work of~\cite{AssadiKSY20} also considers some lower bounds beyond graph streams for problems such as Schatten norms of matrices or Sorting-by-Block-Interchange; our new lower bounds 
also apply to these problems with some additional work but we opted to focus primarily on graph streaming lower bounds in this paper (with the exception of the Matrix Rank problem which follows immediately from our results).

 \subsection{Minimum Spanning Tree} 

Given an undirected graph $G=(V,E)$, with edge-weights $w: E \rightarrow \{1,2, \dots, W \}$, the minimum spanning tree problem asks for an estimate to the weight of a spanning tree in $G$ with the least weight, denoted by 
MST of $G$. This is one the earliest problems studied in the streaming setting~\cite{FeigenbaumKMSZ08,AhnGM12a,HuangP16} and it is known that $O(n\log{(nW)})$ space and a single-pass suffices for finding
an \emph{exact} MST~\cite{FeigenbaumKMSZ08} and $n^{1-\Theta(\eps/W)}$ and a single-pass for finding a $(1+\eps)$-approximation~\cite{HuangP16}. On the lower bound front, $\Omega(n)$ and $n^{1-O(\eps/W)}$ lower bounds 
for exact answer and approximate answer via single-pass algorithms where proved in~\cite{FeigenbaumKMSZ08} and~\cite{HuangP16}, respectively. The only known multi-pass lower bound for $(1+\eps)$-approximation is that of~\cite{AssadiKSY20} that proves 
that $n^{o(1)}$-space needs $\Omega(\log{(1/\eps)})$ passes.

We present the following lower bound for this problem:
\begin{theorem}\label{thm:MST}
For $\eps \in (0,1)$ and $W \in \IN^+$, any $p$-pass streaming algorithm for $(1+\eps)$-approximation of weight of MST on $n$-vertex graphs of maximum weight $W$ with probability  at least $2/3$ requires 
$\Omega\paren{\frac1{p^5} \cdot (\eps \cdot n/W)^{1-O(\eps \cdot p/W)}}$ space. This lower bound continues to hold even on bounded-degree planar graphs 
and also implies that $\Omega(1/\eps)$ passes are needed for $n^{o(1)}$-space  even for $W=O(1)$. 
\end{theorem}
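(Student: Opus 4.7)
The plan is to reduce from $\NGC_{n,k}$ with $k=\Theta(W/\eps)$, using a small, input-independent weighted structure to force the MST weight to encode whether we are in the $k$-cycle or $2k$-cycle case. Given an instance $G\sim\distNGC$ together with the side information granted by \Cref{cor:ngc-strong}---one endpoint of each of the $4t$ noise paths and the $t$ four-tuples $(u_1,u_2,u_3,u_4)$---I would build a weighted graph $H$ on the same vertex set by assigning weight $1$ to every edge of $G$ (these form the stream) and prepending to the stream a fixed path $P$ of $6t-1$ weight-$W$ edges through $6t$ representatives $v_1,\ldots,v_{6t}$. I would take $v_{2i-1}=u_1^{(i)}$ and $v_{2i}=u_3^{(i)}$ for the $i$-th tuple ($i\in[t]$), and the final $4t$ representatives to be the known noise-path endpoints. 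Since the side information comes for free by \Cref{cor:ngc-strong}, this is a legitimate $p$-pass $s$-space reduction.

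Next I would evaluate the MST weight of $H$ via Kruskal (light edges first). In the $k$-cycle case $G$ has $2t+4t=6t$ components, and because each representative sits in a distinct one the light spanning forest contributes $6tk-6t$ and all $6t-1$ heavy edges of $P$ are needed, giving $\text{MST}_1=(6tk-6t)+(6t-1)W$. In the $2k$-cycle case $G$ has only $t+4t=5t$ components, since \Cref{cor:ngc-strong} guarantees that each tuple's $u_1$ and $u_3$ share the same $2k$-cycle; the light forest has $6tk-5t$ edges and exactly the $t$ heavy edges $v_{2i-1}v_{2i}$ are redundant, so $\text{MST}_2=(6tk-5t)+(5t-1)W$. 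The additive gap is $t(W-1)$ on a total of $\Theta(t(k+W))$, so the multiplicative gap is $\Theta((W-1)/(k+W))$, which exceeds $2\eps$ exactly when $k=O(W/\eps)$. Setting $k=\Theta(W/\eps)$ and invoking \Cref{thm:ngc} then gives the space lower bound $\Omega(\tfrac{1}{p^5}(n\eps/W)^{1-O(\eps p/W)})$; for $W=O(1)$, the $\Omega(k)$-pass requirement needed to push the NGC bound below $n^{o(1)}$ becomes $\Omega(1/\eps)$ passes.

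The last step is to verify that $H$ fits into a bounded-degree planar graph, which I expect to be routine: the NGC graph is already a disjoint union of cycles and paths (planar, max-degree $2$), while $P$ is a single path that raises the degree of each representative by at most $2$ and can be drawn by laying the components of $G$ out in a row and routing $P$ through them without crossings. I expect the main obstacles to be purely bookkeeping rather than conceptual: matching $k=\Theta(W/\eps)$ to the parity and divisibility constraints built into \Cref{distngc} (handled as at the end of the proof of \Cref{thm:ngc} by rounding $k$ up to the next admissible value and padding), and confirming the arithmetic goes through uniformly for $W\ge 2$ (when $W=1$ every edge has the same weight and the MST weights collapse, but the approximation statement is trivial there).
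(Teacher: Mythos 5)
Your proof is correct, and it takes a genuinely different route from the paper's. The paper builds $H$ by adding both weight-$1$ ``interconnection'' edges among the tuple vertices (via a fixed-point-free bijection $\phi$ connecting $u^i_1$ to $u^{\phi(i)}_3$, plus edges from each $u^i_1$ to three noise-path endpoints) and a set of $t$ weight-$W$ ``bridge'' edges $u^i_2u^i_4$. Its gap is then driven entirely by the connectivity of the \emph{light} subgraph $H_1$: in the $k$-cycle case $H_1$ decomposes into $t$ components and the MST must buy $t-1$ heavy bridges, while in the $2k$-cycle case the light edges already make $H_1$ connected (this is where $\phi$ does work) and the MST weight is exactly $n-1$. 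Your construction, by contrast, adds \emph{only} a single heavy path $P$ through the $6t$ representatives and adds no light edges at all; the gap comes from the fact that the $t$ chord edges $v_{2i-1}v_{2i}$ become redundant exactly in the $2k$-cycle case because Corollary~\ref{cor:ngc-strong}(ii) places $u^i_1$ and $u^i_3$ on the same $2k$-cycle, dropping the number of heavy edges used by Kruskal from $6t-1$ to $5t-1$. Both yield the same $\Theta(t(W-1))$ additive gap on an MST of weight $\Theta(tk)$, hence the same $k=\Theta(W/\eps)$, and both preserve bounded degree and planarity (your $P$ adds degree $\le 2$ per representative, and is drawable without crossings by arranging components in a row with each in-cycle chord between antipodal vertices drawn inside the cycle). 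Your version avoids the $\phi$ bookkeeping and the light-connectivity case analysis, so it is arguably cleaner; the only thing you implicitly rely on beyond the paper is the observation that the $5t-1$ non-redundant heavy edges already form a spanning tree of the contracted $5t$-component graph (they trace a Hamiltonian path on those components), which is immediate from your ordering of $v_1,\ldots,v_{6t}$ but worth stating explicitly. Your handling of $W=1$ and the divisibility padding also mirrors the paper's.
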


\begin{proof}
Consider  a $\NGC$ instance $G$ on $n=6t \cdot k$ vertices  from $\distNGC$ for largest integer $k \leq \frac{W-1}{12\eps}$. We crucially use \Cref{cor:ngc-strong}  to perform this reduction.

Using~\Cref{cor:ngc-strong} allows us to assume that the algorithm for $G$ has the extra knowledge of the following: a collection
of $t$ tuples $(u^i_1,u^i_2,u^i_3,u^i_4)$ for $i \in [t]$ (covering the cycles) and $4t$ vertices $v_1,\ldots, v_t$ (covering the paths), with the properties specified by~\Cref{cor:ngc-strong}. 

Pick any arbitrary bijection $\phi: [t] \rightarrow [t]$ such that $\phi(i) \neq i$ for $i \in [t]$, and any arbitrary injective one-to-three mapping $\psi: [t] \rightarrow [4t]$. Create the following graph from $G$ (see~\Cref{fig:mst}): 
\begin{enumerate}[label=$(\roman*)$]
	\item For any $i \in [t]$, connect $u^i_1$ to $u^{\phi(i)}_3$ with a new edge of weight $1$. 
	\item For any $i \in [t]$, connect $u^i_2$ to $u^i_4$ with a new edge of weight $W$. 
	\item For any $i \in [t]$, let $\psi(i) = (a^i_1,a^i_2,a^i_3,a^i_4)$ and connect $u^i_1$ to $v_{a^i_1},v_{a^i_2},v_{a^i_3},v_{a^i_4}$. 
\end{enumerate}
This addition to the graph can be created by any streaming algorithm of~\Cref{def:str} without spending any extra space. In the following, we use $H$ to refer to this new graph obtained from $G$ and $H_1$ to be the subgraph of 
$H$ consists of only the edges of weight $1$ in $H$.

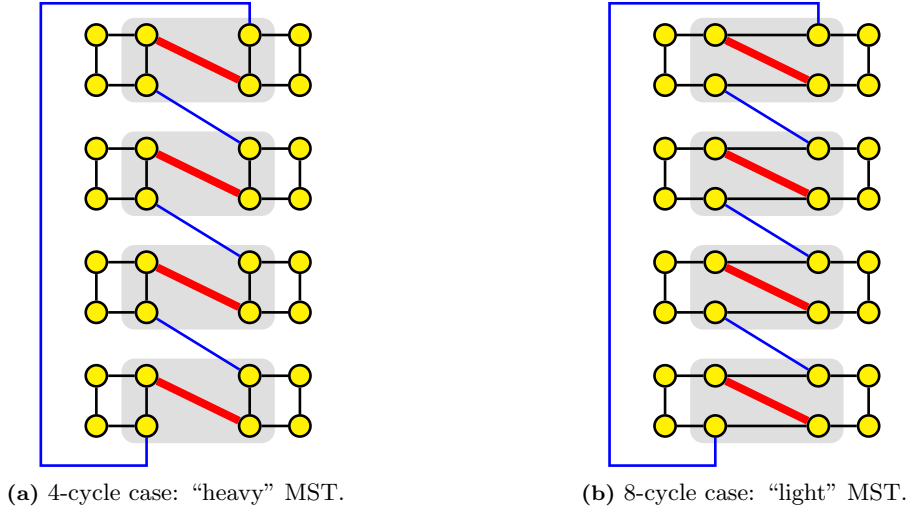
\begin{figure}[h!]	
\centering
\subcaptionbox{\footnotesize $4$-cycle case: ``heavy'' MST.}%
  [.45\linewidth]{% !TeX root = ../main.tex 
%!TEX root = ../main.tex

\begin{tikzpicture}
	
%%\tikzset{Block/.style={rectangle, rounded corners=5pt, draw, minimum width=13pt, minimum height=20pt, inner sep=0pt]}}
%%\tikzset{bBlock/.style={rectangle, rounded corners=5pt, draw, minimum width=15pt, minimum height=31pt, inner sep=0pt]}}
%%\tikzset{fBlock/.style={rectangle, rounded corners=5pt, minimum width=13pt, minimum height=20pt, inner sep=0pt]}}

\tikzset{choose/.style={rectangle, draw, rounded corners=5pt, line width=1pt, inner xsep=5pt, inner ysep=2pt]}}
\tikzset{layer/.style={rectangle, rounded corners=8pt, draw, black, line width=1pt, inner sep=3pt]}}

%% Graph G %%% 

\node[vertex] (v111){};
\node[vertex] [below=10pt of v111](v112){};
\node[vertex] [right=10pt of v111](v113){};
\node[vertex] [below=10pt of v113](v114){};

\foreach \i in {2,3,4}
{

\pgfmathtruncatemacro{\ip}{\i-1};

\node[vertex] [below=15pt of v1\ip2](v1\i1){};
\node[vertex] [below=10pt of v1\i1](v1\i2){};
\node[vertex] [right=10pt of v1\i1](v1\i3){};
\node[vertex] [below=10pt of v1\i3](v1\i4){};
}

\node[vertex] [right=30pt of v113](v211){};
\node[vertex] [below=10pt of v211](v212){};
\node[vertex] [right=10pt of v211](v213){};
\node[vertex] [below=10pt of v213](v214){};

\foreach \i in {2,3,4}
{

\pgfmathtruncatemacro{\ip}{\i-1};

\node[vertex] [below=15pt of v2\ip2](v2\i1){};
\node[vertex] [below=10pt of v2\i1](v2\i2){};
\node[vertex] [right=10pt of v2\i1](v2\i3){};
\node[vertex] [below=10pt of v2\i3](v2\i4){};
}
\begin{scope}[on background layer]
\foreach \i in {1,2,3,4}
{
	\node[choose, draw=none, fill=gray!25, fit=(v1\i3) (v1\i4) (v2\i1) (v2\i2)] {};
}
\end{scope}

\foreach \i in {1,2,3}
{
	\pgfmathtruncatemacro{\ip}{\i+1}

	\draw[blue, line width=1pt]
		(v1\i4) to (v2\ip1);
}

\foreach \i in {1,2,3,4}
{
	\draw[red, line width=3pt]
		(v1\i3) to (v2\i2);
}
\draw[blue, line width=1pt]
	(v144) to ($(v144)+(0pt,-15pt)$) to ($(v144)+(-40pt,-15pt)$) to ($(v144)+(-40pt,160pt)$) to ($(v211)+(0pt,12pt)$) to (v211);

\foreach \x in {1,2}
{
\foreach \i in {1,2,3,4}
{
\draw[black, line width=1pt]
	(v\x\i1) to (v\x\i2)
	(v\x\i2) to (v\x\i4)
	(v\x\i4) to (v\x\i3)
	(v\x\i1) to (v\x\i3);
}
}

\end{tikzpicture}}
\subcaptionbox{\footnotesize $8$-cycle case: ``light'' MST.}%
  [.45\linewidth]{% !TeX root = ../main.tex 
%!TEX root = ../main.tex

\begin{tikzpicture}
	
%%\tikzset{Block/.style={rectangle, rounded corners=5pt, draw, minimum width=13pt, minimum height=20pt, inner sep=0pt]}}
%%\tikzset{bBlock/.style={rectangle, rounded corners=5pt, draw, minimum width=15pt, minimum height=31pt, inner sep=0pt]}}
%%\tikzset{fBlock/.style={rectangle, rounded corners=5pt, minimum width=13pt, minimum height=20pt, inner sep=0pt]}}

\tikzset{choose/.style={rectangle, draw, rounded corners=5pt, line width=1pt, inner xsep=5pt, inner ysep=2pt]}}
\tikzset{layer/.style={rectangle, rounded corners=8pt, draw, black, line width=1pt, inner sep=3pt]}}

%% Graph G %%% 

\node[vertex] (v111){};
\node[vertex] [below=10pt of v111](v112){};
\node[vertex] [right=10pt of v111](v113){};
\node[vertex] [below=10pt of v113](v114){};

\foreach \i in {2,3,4}
{

\pgfmathtruncatemacro{\ip}{\i-1};

\node[vertex] [below=15pt of v1\ip2](v1\i1){};
\node[vertex] [below=10pt of v1\i1](v1\i2){};
\node[vertex] [right=10pt of v1\i1](v1\i3){};
\node[vertex] [below=10pt of v1\i3](v1\i4){};
}

\node[vertex] [right=30pt of v113](v211){};
\node[vertex] [below=10pt of v211](v212){};
\node[vertex] [right=10pt of v211](v213){};
\node[vertex] [below=10pt of v213](v214){};

\foreach \i in {2,3,4}
{

\pgfmathtruncatemacro{\ip}{\i-1};

\node[vertex] [below=15pt of v2\ip2](v2\i1){};
\node[vertex] [below=10pt of v2\i1](v2\i2){};
\node[vertex] [right=10pt of v2\i1](v2\i3){};
\node[vertex] [below=10pt of v2\i3](v2\i4){};
}
\begin{scope}[on background layer]
\foreach \i in {1,2,3,4}
{
	\node[choose, draw=none, fill=gray!25, fit=(v1\i3) (v1\i4) (v2\i1) (v2\i2)] {};
}
\end{scope}

\foreach \i in {1,2,3}
{
	\pgfmathtruncatemacro{\ip}{\i+1}

	\draw[blue, line width=1pt]
		(v1\i4) to (v2\ip1);
}

\foreach \i in {1,2,3,4}
{
	\draw[red, line width=3pt]
		(v1\i3) to (v2\i2);
}
\draw[blue, line width=1pt]
	(v144) to ($(v144)+(0pt,-15pt)$) to ($(v144)+(-40pt,-15pt)$) to ($(v144)+(-40pt,160pt)$) to ($(v211)+(0pt,12pt)$) to (v211);

\foreach \i in {1,2,3,4}
{
\draw[black, line width=1pt]
	(v1\i1) to (v1\i2)
	(v1\i2) to (v1\i4)
	(v1\i3) to (v2\i1)
	(v1\i1) to (v1\i3)
	(v2\i2) to (v2\i4)
	(v2\i4) to (v2\i3)
	(v2\i1) to (v2\i3)
	(v1\i4) to (v2\i2);

}

\end{tikzpicture}}
\caption{An illustration of the MST reduction. The vertices inside each block (gray) are one of the four tuples $(u^i_1,u^i_2,u^i_3,u^i_4)$. The middle thick (red) edges have weight $W$ while middle thin (blue) edges have weight $1$; both these groups
of edges are added as part of the reduction. The outer thin (black) edges are the original edges of the $4$-cycle vs $8$-cycle problem. To avoid clutter, we have not drawn the noise paths, however, they can be thought of as being partitioned into $t$ groups 
of size four and connecting group $i$ to vertex $u^i_1$ of each tuple using an edge of weight $1$; clearly, this does not break planarity. The graph on the left has $4$ connected component without the heavy edges while the right one is connected even without those edges. } 
\label{fig:mst}
\end{figure}

Firstly, suppose that $G$ belongs to the $k$-cycle case. We claim that in this case, $H_1$ consists of $t$ connected components. These components are, for all $i \in [t]$, $u^i_1$ and its cycle in $G$ (including $u^i_2$), plus $u^{\phi(i)}_3$ 
and its cycle in $G$ (including $u^{\phi(i)}_3$), plus $v_{a^i_1},v_{a^i_2},v_{a^i_3},v_{a^i_4}$ and their paths in $G$ for $ (a^i_1,a^i_2,a^i_3,a^i_4) = \psi(i)$; all these vertices belong to the same connected and all edges going out of this component
only has weight $W$ and thus does not belong to $H_1$. 

On the other hand, we claim that when $G$ belongs to the $2k$-cycle case, $H_1$ becomes connected. For all $i \in [t]$, the cycle of $u^i_1$ also contains $u^i_3$; for the sake of argument 
suppose we add a new edge of weight $1$ between $u^i_1$ and $u^i_3$ (this cannot change the connectivity of the graph $H_1$). Now we have two edge-disjoint perfect matchings between $\set{u^i_1 \mid i \in [t]}$ on one side
and $\set{u^i_3 \mid i \in [t]}$ on the other side, one from the new (artificial) edges we just added and another from the edges in $H_1$ between $u^i_1$ and $u^{\phi(i)}_3$. These two matchings form a Hamiltonian cycle over 
these sets thus connecting them all together. As all other vertices of the graph are connected to some $u^i_1$, the entire graph becomes connected. 

Now in the first case, any MST of $H$ needs to pick at least $t-1$ edges from $H \setminus H_1$ which are all of weight $W$, 
making its weight at least $n-t + (t-1) \cdot W$ (the fact that $H$ itself is connected is exactly the same $H_1$ being connected in the second case). In the second case, every MST of $H$ has weight $n-1$ as $H_1$ is already connected. 
The choice of $k$ ensures that the weight of MST of $H$ in the first case is $(1+\eps)$ times larger that the second case.

As such a $p$-pass streaming algorithm for $(1+\eps)$-approximation MST can be used as a distinguisher for the two cases of the graph $G$. By~\Cref{cor:ngc-strong}, 
we have that that space complexity of the algorithm should be
\[
\Omega\paren{\frac1{p^5} \cdot (n/k)^{1-O(p/k)}} = \Omega\paren{\frac1{p^5} \cdot (\eps \cdot n/W)^{1-O(\eps \cdot p/W)}}.
\]

The fact that the lower bound holds on bounded-degree planar graphs is simply because the distribution $\distNGC$ in~\Cref{distngc} we use here is supported on graphs which are disjoint-union of cycles and paths and the set of edges we added 
increase the degree by at most a constant factor and does not break the planarity (see~\Cref{fig:mst} for an illustration). 
\end{proof}

\paragraph{Optimality of~\Cref{thm:MST}?}The bounds obtained by our algorithm are actually \emph{asymptotically optimal} for any constant $W \in \IN^+$. To obtain the upper bound, we can run a streaming implementation of 
the query algorithm of~\cite{ChazelleRT05} in $O(1/\eps)$ passes and $O_{\eps}(\poly\!\log{n})$ space (see~\cite{HuangP16} and~\cite{MonemizadehMPS17} for details on simulating these query algorithms in the streaming model -- note
that in general, by allowing $p$ passes over the input, we can simulate $p$ rounds of adaptive querying in a straightforward way).

\subsection{Maximum Matching Size and Matrix Rank}
In the maximum matching size problem, our goal is to output an estimate to the \emph{size} of the maximum matching of the input undirected graph $G(V,E)$, i.e. the largest set of vertex-disjoint edges in $G$. 
Maximum matching is among the most studied problems in the streaming setting and listing the prior results on this problem is beyond the scope of our work. 
We only note that there are various algorithms for approximating matching size in $\polylog{(n)}$ space in arbitrary graphs in random-order streams~\cite{MonemizadehMPS17,KapralovKS14,KapralovMNT20} or planar graphs
 in adversarial streams~\cite{EsfandiariHLMO15,ChitnisCEHMMV16,CormodeJMM17,McGregorV18}. The best single-pass lower bounds for this problem rule out $<(3/2)$-approximation in $o(\sqrt{n})$ space~\cite{EsfandiariHLMO15}
 and $(1+\eps)$-approximation in $n^{1-O(\eps)}$ space~\cite{BuryS15} (and  $n^{2-O(\eps)}$ space on dense graphs~\cite{AssadiKL17}); the only known multi-pass lower bound is that
 $(1+\eps)$-approximation in $n^{o(1)}$-space needs $\Omega(\log{(1/\eps)})$ passes~\cite{AssadiKSY20}.

We present the following lower bound for this problem:

\begin{theorem}\label{thm:maxmatching}
For any $\eps \in (0,1)$, any $p$-pass streaming algorithm for $(1+\eps)$-approximation of size of maximum matching  on $n$-vertex graphs with probability  at least $2/3$ requires 
$\Omega\paren{\frac1{p^5} \cdot (\eps \cdot n )^{1-O(\eps \cdot p)}}$ space. Moreover, this lower bound continues to hold even on bounded-degree planar graphs 
and also implies that $\Omega(1/\eps)$ passes are needed for any $n^{o(1)}$-space algorithm. 
\end{theorem}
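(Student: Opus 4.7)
The plan is to reduce directly from $\NGC_{n,k}$ with $k$ chosen to be an odd integer of order $\Theta(1/\eps)$; specifically, let $k$ be the largest odd integer with $k \leq 1/(6\eps)$. Since the hard instances from $\distNGC$ are disjoint unions of cycles and paths, they are already planar with maximum degree two, so no modification of the stream is required: the streaming algorithm for matching size is simply run on the $\NGC$ stream, and its output is used to distinguish the two cases of $\NGC_{n,k}$.

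The key observation is a matching-size calculation that crucially exploits the parity of $k$. For $k$ odd, a $k$-cycle has maximum matching of size $(k-1)/2$, a $2k$-cycle has a perfect matching of size $k$, and a path of length $k-1$ (with $k$ vertices) has a maximum matching of size $(k-1)/2$. Hence in the $k$-cycle case of $\NGC_{n,k}$ the maximum matching of $G$ has size
\[
M_1 \;=\; 2t \cdot \frac{k-1}{2} + 4t \cdot \frac{k-1}{2} \;=\; 3t(k-1),
\]
while in the $2k$-cycle case it has size
\[
M_2 \;=\; t \cdot k + 4t \cdot \frac{k-1}{2} \;=\; 3tk - 2t.
\]
The ratio $M_2/M_1 = (3k-2)/(3k-3) = 1 + 1/(3k-3)$ exceeds $1+2\eps$ for our choice of $k$, so any $(1+\eps)$-approximation of the matching size distinguishes the two cases. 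Applying \Cref{thm:ngc} then yields the space lower bound
\[
\Omega\paren{\frac{1}{p^5} \cdot (n/k)^{1-O(p/k)}} \;=\; \Omega\paren{\frac{1}{p^5} \cdot (\eps n)^{1-O(\eps p)}},
\]
and the $\Omega(1/\eps)$-pass corollary for $n^{o(1)}$-space algorithms follows by setting the exponent to $1-o(1)$.

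The reduction itself is entirely trivial, so there is no substantive technical obstacle; the only point requiring any care is the parity of $k$. Indeed, if $k$ is \emph{even}, then both the $k$-cycles and the $2k$-cycles admit perfect matchings on their vertices, the path contributions are identical in the two cases, and the maximum matching has size exactly $3tk$ in both, so no gap exists whatsoever. It is therefore essential that \Cref{thm:ngc} holds for \emph{all} positive integers $k$ (which is established at the very end of the proof of \Cref{thm:ngc} via a padding argument), so that we may freely take $k$ odd of the desired magnitude. Bounded-degree planarity of the hard instance is immediate because the underlying $\NGC$ graph is a disjoint union of cycles and paths, hence has maximum degree two and is planar.
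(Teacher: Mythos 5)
Your proof is correct and follows essentially the same route as the paper's: reduce directly from $\NGC_{n,k}$ with odd $k=\Theta(1/\eps)$, compute the maximum matching sizes in the two cases ($3t(k-1)$ vs.\ $3tk-2t$), observe the multiplicative gap exceeding $1+\eps$, and invoke Theorem~\ref{thm:ngc}; the paper picks the largest odd $k\le 1/(3\eps)$ while you use $1/(6\eps)$, a harmless constant-factor difference. Your explicit remark that the parity of $k$ is essential (an even $k$ would collapse the gap entirely, so one needs Theorem~\ref{thm:ngc} to hold for all $k$, which the paper secures via padding) is a correct and helpful clarification of a point the paper handles only implicitly by specifying ``odd.''
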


\begin{proof}

Consider a $\NGC$ instance $G$ on $n=6t \cdot k$ vertices  from $\distNGC$, for largest \emph{odd} integer $k\leq \frac 1{3 \eps}$.

In one case, $G$ contains $2t$ vertex-disjoint \emph{odd} cycles of length $k$ each, and $4t$ vertex disjoint paths of  length $k-1$. Any maximum matching of $G$ can match $\frac{k-1}{2}$ edges from each odd cycle 
and $\frac{k-1}{2}$ from each path of length $k-1$. Thus the value of maximum matching in this case is $t \cdot (k-1)+2t \cdot (k-1)$, which equals $(n/2)-(n/2k)$.

In the other case, $G$ contains $t$ vertex-disjoint \emph{even} cycles of length $2k$ each, and $4t$ vertex-disjoint paths of length $k-1$. Any maximum matching of $G$ can match $k$ edges from each odd cycle 
and $\frac{k-1}{2}$ from each path of length $k-1$. Thus the value of maximum matching in this case is $t \cdot k+2t \cdot (k-1)$, which equals $(n/2)-(n/2k) + (n/6k) > (n/2)-(n/2k)+\eps (n/2)$.

As such a $p$-pass streaming algorithm for $(1+\eps)$-approximation of maximum matching can be used as a distinguisher for the two cases of the graph $G$. By~\Cref{thm:ngc}, 
we have that that space complexity of the algorithm should be
\[
\Omega\paren{\frac1{p^5} \cdot (n/k)^{1-O(p/k)}} = \Omega\paren{\frac1{p^5} \cdot (\eps \cdot n )^{1-O(\eps \cdot p)}}.
\]

The fact that the lower bound holds on bounded-degree planar is simply because the distribution $\distNGC$ in~\Cref{distngc} we use here is supported on graphs which are disjoint-union of cycles and paths, and thus clearly are both 
bounded-degree and planar. 
\end{proof}

As a consequence of the standard equivalence between estimating  matching size and computing the rank of the Tutte matrix~\cite{Tutte47} with entries chosen randomly established in \cite{Lovasz79}
(\cite{BuryS15} performs this reduction in the streaming model), we get the following result as well. 
\begin{corollary}\label{cor:matrixnorm}
For any $\eps \in (0,1)$, any $p$-pass streaming algorithm for $(1+\eps)$-approximation of rank $n$-by-$n$ matrices with probability  at least $2/3$ requires 
$\Omega\paren{\frac1{p^5} \cdot (\eps \cdot n )^{1-O(\eps \cdot p)}}$ space. Moreover, this lower bound continues to hold even on matrices with $O(1)$ entries per row and column
and also implies that $\Omega(1/\eps)$ passes are needed for any $n^{o(1)}$-space algorithm. 
\end{corollary}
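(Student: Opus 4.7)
The plan is to reduce the matching-size lower bound of~\Cref{thm:maxmatching} to matrix rank estimation via the standard Tutte-matrix construction, simulated in the streaming model as in~\cite{BuryS15}. Concretely, given a graph $G = (V,E)$ on $n$ vertices arriving as an edge stream, I will construct on the fly an $n \times n$ matrix $T_G$ over a finite field $\mathbb{F}_q$ of size $q = \poly(n)$, defined by $T_G[u,v] = r_{uv}$ and $T_G[v,u] = -r_{uv}$ for each $uv \in E$ with $u < v$, and zero otherwise, where $r_{uv} \in \mathbb{F}_q \setminus \{0\}$ is chosen uniformly and independently. By the classical theorem of Tutte~\cite{Tutte47} combined with Lov\'asz's randomized evaluation~\cite{Lovasz79}, the rank of $T_G$ over $\mathbb{F}_q$ equals $2 \cdot \nu(G)$ with probability at least $1 - O(n/q)$, where $\nu(G)$ is the size of a maximum matching. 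Thus, any $(1+\eps)$-approximation of the rank yields a $(1+\eps)$-approximation of $\nu(G)$ up to a factor of $2$, which then suffices to distinguish the two cases of the $\NGC$-based hard distribution produced in the proof of~\Cref{thm:maxmatching}.

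To execute this as a streaming reduction inside our communication-style model of~\Cref{def:str}, I would interpret the input stream of edges of $G$ as producing a stream of matrix entries of $T_G$: when the edge $uv$ arrives, the corresponding player samples a fresh $r_{uv}$ and emits the two (signed) entries of $T_G$ in its local computation before passing any message. Because the players are computationally unbounded between stream elements and may use private randomness (and by standard averaging may be derandomized via shared randomness fixed ahead of time), this transformation does not increase the space used beyond the rank-estimation algorithm's own working space. Hence, a $p$-pass $s$-space streaming algorithm for $(1+\eps)$-approximate rank yields a $p$-pass $s$-space algorithm for $(1+\eps)$-approximate matching size on $G$, and the lower bound $\Omega(\tfrac{1}{p^5} \cdot (\eps n)^{1 - O(\eps p)})$ follows directly from~\Cref{thm:maxmatching}.

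For the second half of the statement, note that the hard graph instances produced by~\Cref{thm:maxmatching} are disjoint unions of cycles and paths and hence have maximum degree two. Therefore, every row (equivalently, column) of $T_G$ contains at most two nonzero entries, giving the claimed $O(1)$ sparsity per row and column. Finally, the $\Omega(1/\eps)$ pass lower bound for any $n^{o(1)}$-space algorithm is obtained by the same parameter-setting argument as in~\Cref{thm:maxmatching}: for any $p = o(1/\eps)$, the bound $(\eps n)^{1 - O(\eps p)}$ is $n^{\Omega(1)}$, which exceeds $n^{o(1)}$.

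The main conceptual issue is purely the soundness of the Schwartz--Zippel style randomized evaluation in the streaming setting, and this is already handled by taking $q = \poly(n)$ so the failure probability is $O(n/q) = o(1)$, absorbable into the constant success probability slack of $2/3$. Everything else is a verbatim rerun of the reduction behind~\Cref{thm:maxmatching}.
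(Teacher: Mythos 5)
Your proposal follows the same route the paper takes: reduce matching-size estimation to rank estimation via the randomized Tutte matrix of~\cite{Tutte47,Lovasz79}, simulated in the stream as in~\cite{BuryS15}, then invoke~\Cref{thm:maxmatching}. The argument is correct, and your version simply fills in the details (on-the-fly sampling of the entries by the edge-player, the rank $= 2\nu(G)$ identity, the degree-two structure of the $\NGC$ hard instances giving $O(1)$ nonzeros per row/column) that the paper condenses into a one-line citation.

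One small phrasing nit: you say a $(1+\eps)$-approximation of the rank gives a $(1+\eps)$-approximation of $\nu(G)$ "up to a factor of $2$." Since $\mathrm{rank}(T_G) = 2\nu(G)$ with high probability, dividing the rank estimate by two yields an exact $(1+\eps)$-approximation to $\nu(G)$; no extra factor-of-$2$ loss is incurred. This does not affect the validity of the argument, but the phrasing should be tightened so the reader is not left wondering whether the approximation guarantee degrades.
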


This result considerably strengthen prior bounds in~\cite{BuryS15,LiW16,AssadiKSY20} for this fundamental problem. 

\paragraph{Optimality of~\Cref{thm:maxmatching}?}~\Cref{thm:maxmatching} provides the currently best multi-pass lower bound for $(1+\eps)$-approximation of maximum matching in \emph{any} family of graphs. 
We do not know whether there is a matching upper bound as well, namely, an algorithm with $n^{o(1)}$-space and $O(1/\eps)$ passes on general graphs or the lower bound can be improved further. 

There are  however already known algorithms for this problem on bounded degree graphs. In particular,~\cite{MonemizadehMPS17} give a streaming implementation of the query algorithm of~\cite{NguyenO08} that achieves a $\pm \eps n$ \emph{additive} 
approximation to maximum matching in $O_{\eps}(\log{n})$ bits of space in a single-pass in \emph{random-streams}; the same exact algorithm can also be implemented in this much space and $O_{\eps}(1)$ passes in \emph{arbitrary} streams although the
dependence is much worse than $\Omega(1/\eps)$ in our lower bound. Closing this gap remains a fascinating open question. 

\subsection{Maximum Cut}
In the maximum cut problem, our goal is to estimate the largest \emph{value} of a cut in an input graph $G(V,E)$ i.e. output an estimate of the size of a bi-partition of vertices maximizing the number of crossing edges.
This problem has been studied extensively in the graph streaming model in~\cite{KapralovKS15,KoganK15,BhaskaraDV18,KapralovK19,KapralovKSV17,AssadiKSY20}, with best lower bound of $<2$-approximation in $\Omega(n)$ space
in single-pass graphs~\cite{KapralovK19} and $(1+\eps)$-approximation in $n^{o(1)}$-space in $\Omega(\log{(1/\eps)})$ passes~\cite{AssadiKSY20}. 

We present the following lower bound for this problem.

\begin{theorem}\label{thm:maxcut}
For any $\eps \in (0,1)$, any $p$-pass streaming algorithm for $(1+\eps)$-approximation of value of maximum cut  on $n$-vertex graphs with probability  at least $2/3$ requires 
$\Omega\paren{\frac1{p^5} \cdot (\eps \cdot n )^{1-O(\eps \cdot p)}}$ space. Moreover, this lower bound continues to hold even on bounded-degree planar graphs 
and also implies that $\Omega(1/\eps)$ passes are needed for any $n^{o(1)}$-space algorithm. 
\end{theorem}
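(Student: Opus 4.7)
The plan is to directly reduce from $\NGC_{n,k}$ (over the distribution $\distNGC$), exactly in the spirit of the matching reduction in~\Cref{thm:maxmatching}, by exploiting the parity gap between the value of MAX-CUT on an odd cycle versus an even cycle. Choose $k$ to be the largest \emph{odd} integer with $k \le \frac{1}{3\eps}+1$, and let the input graph $G$ of $\NGC_{n,k}$ (on $n=6tk$ vertices) be fed to a purported $p$-pass, $s$-space $(1+\eps)$-approximation streaming algorithm $A$ for MAX-CUT. No modifications to $G$ are needed, so the reduction incurs no space overhead and the input remains bounded-degree planar (since $\distNGC$ is supported on disjoint unions of cycles and paths).

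The key computation is the MAX-CUT value in the two cases. Recall that a cycle $C_m$ has MAX-CUT $=m$ when $m$ is even and $m-1$ when $m$ is odd, while a path on $m$ edges (being bipartite) has MAX-CUT $=m$. In the $k$-cycle case, $k$ is odd, so each of the $2t$ cycles contributes $k-1$ and each of the $4t$ noise paths of length $k-1$ contributes $k-1$, yielding
\[
\OPT_1 \;=\; 2t(k-1) + 4t(k-1) \;=\; 6t(k-1).
\]
In the $2k$-cycle case, each of the $t$ even cycles is bipartite and contributes $2k$, while the $4t$ noise paths again contribute $k-1$ each, yielding
\[
\OPT_2 \;=\; 2tk + 4t(k-1) \;=\; 6tk - 4t.
\]
Therefore
\[
\frac{\OPT_2}{\OPT_1} \;=\; 1 + \frac{2t}{6t(k-1)} \;=\; 1 + \frac{1}{3(k-1)} \;>\; 1+\eps
\]
by our choice of $k$, so any $(1+\eps)$-approximation to the MAX-CUT value distinguishes the two cases.

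Composing this with~\Cref{thm:ngc} and substituting $k = \Theta(1/\eps)$ gives the claimed space lower bound
\[
\Omega\!\left(\tfrac{1}{p^5} \cdot (n/k)^{1 - O(p/k)}\right) \;=\; \Omega\!\left(\tfrac{1}{p^5} \cdot (\eps n)^{1 - O(\eps p)}\right),
\]
and the statement that $n^{o(1)}$-space algorithms require $\Omega(1/\eps)$ passes follows immediately. Planarity and bounded degree are inherited from $\distNGC$, exactly as in~\Cref{thm:maxmatching}. The only delicate point, which matches what already happens in the matching reduction, is the requirement that $k$ be odd; this is handled by the padding argument already built into the proof of~\Cref{thm:ngc} (which extends the lower bound from a sparse set of admissible values of $k$ to all $k \in \IN^+$), so no new obstacle arises. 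I expect no substantive difficulty here beyond verifying the arithmetic above—the real work has already been done in~\Cref{thm:ngc}.
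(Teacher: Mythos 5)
Your reduction is essentially identical to the paper's: both take $k$ to be the largest odd integer of order $\Theta(1/\eps)$, feed the $\NGC_{n,k}$ instance directly to the MAX-CUT algorithm, and use that the $k$-cycle case (odd cycles) loses one edge per cycle while the $2k$-cycle case is bipartite and achieves every edge, so the gap is $2t$ edges out of $\approx 6t(k-1)$, and then invoke~\Cref{thm:ngc}. One small slip: with your choice $k \le \frac{1}{3\eps}+1$ the ratio $1+\frac{1}{3(k-1)}$ is only $\ge 1+\eps$ (equality when $k-1 = 1/(3\eps)$), so you should instead take $k \le \frac{1}{3\eps}$ as the paper does to guarantee a strict $(1+\eps)$ gap.
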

\begin{proof}
Consider a $\NGC$ instance $G$ on $n=6t \cdot k$ vertices  from $\distNGC$ for largest \emph{odd} integer $k\leq \frac 1{3 \eps}$.

In one case, $G$ contains $2t$ vertex-disjoint \emph{odd} cycles of length $k$ each, and $4t$ vertex disjoint paths of  length $k-1$. Any cut of $G$ must leave out one edge from each cycle, and thus, the value of maximum cut 
in this case is $2t \cdot (k-1)+4t \cdot (k-1)$, which equals $n-n/k$.

In the other case, $G$ contains $t$ vertex-disjoint \emph{even} cycles of length $2k$ each, and $4t$ vertex-disjoint paths of length $k-1$. Thus, \emph{$G$ is bipartite} and there exists a cut such that all edges in the graph cross that cut. The value of 
maximum cut in this case is $t\cdot 2k +4t \cdot (k-1)$, which equals $n-2n/3k = (n-n/k)+n/3k \geq (n-n/k) + \eps n$.

As such a $p$-pass streaming algorithm for $(1+\eps)$-approximation of maximum cut can be used as a distinguisher for the two cases of the graph $G$. By~\Cref{thm:ngc}, 
we have that that space complexity of the algorithm should be
\[
\Omega\paren{\frac1{p^5} \cdot (n/k)^{1-O(p/k)}} = \Omega\paren{\frac1{p^5} \cdot (\eps \cdot n )^{1-O(\eps \cdot p)}}.
\]

The fact that the lower bound holds on bounded-degree planar is simply because the distribution $\distNGC$ in~\Cref{distngc} we use here is supported on graphs which are disjoint-union of cycles and paths, and thus clearly are both 
bounded-degree and planar. 
\end{proof}

\paragraph{Optimality of~\Cref{thm:maxcut}?}~\Cref{thm:maxcut} provides the currently best multi-pass lower bound for $(1+\eps)$-approximation of maximum cut in \emph{any} family of graphs. We are not sure if it is possible to obtain
a $n^{o(1)}$-space algorithm with $O(1/\eps)$-passes (or for that matter, even independent of $n$) for this problem on general graphs and thus the lower bound can perhaps be improved further. We shall remark however that a $(1+\eps)$-approximation  of maximum cut is possible in a single pass
and $\Ot(n/\eps^2)$ space or even $o(n)$ space for sufficiently dense graphs~\cite{BhaskaraDV18} (by computing a cut sparsifier in streaming; see~\cite{AhnG09}), when we allow exponential time to the algorithms. 

On the other hand, we suspect that such an algorithm should be possible 
for planar graphs (or at least for the bounded-degree ones -- note that finding maximum cut in planar graphs is closely tied to the maximum weight matching problem and is also solvable in polynomial time~\cite{jHadlock75}); we leave this question 
as an interesting open problem for future work. 

\subsection{Maximum Acyclic Subgraph}
Given a directed graph $G(V,E)$, the maximum acyclic subgraph problem asks for an estimate to the \emph{size} of the largest acyclic subgraph in $G$ measured in the number of edges. 
This is a canonical CSP problem (alongside maximum cut) and has been studied in the streaming model in~\cite{GuruswamiT19,GuruswamiVV17,ChakrabartiG0V20,AssadiKSY20,ChouGV20}, with
the best lower bound of $\Omega(\sqrt{n})$ for single-pass algorithm with $<(7/8)$-approximation, and $\Omega(\log{(1/\eps)})$ pass lower bound for $(1+\eps)$-approximation algorithms with $n^{o(1)}$-space~\cite{AssadiKSY20}. 

We present the following lower bound for this problem:

\begin{theorem} \label{thm:acyclicsubgraph}
For any $\eps \in (0,1)$, any $p$-pass streaming algorithm for $(1+\eps)$-approximation of size of a largest acyclic subgraph on $n$-vertex directed graphs with probability  at least $2/3$ requires 
$\Omega\paren{\frac1{p^5} \cdot (\eps \cdot n )^{1-O(\eps \cdot p)}}$ space. Moreover, this lower bound continues to hold even on bounded-degree planar graphs 
and also implies that $\Omega(1/\eps)$ passes are needed for any $n^{o(1)}$-space algorithm. 
\end{theorem}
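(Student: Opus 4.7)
The plan is to follow the template of the previous reductions from $\NGC$, using the \emph{directed} variant of noisy gap cycle counting guaranteed by~\Cref{cor:ngc-strong}. Fix $\eps \in (0,1)$ and let $k$ be the largest integer with $k \leq \tfrac{1}{6\eps}+1$, so that $k = \Theta(1/\eps)$. I will sample a directed $\NGC$ instance $G$ on $n = 6t\cdot k$ vertices from the directed version of $\distNGC$. The reduction itself is trivial at the streaming level: no edges need to be added or transformed, so the algorithm simply runs on $G$ as given. Moreover, the hard instances are by construction vertex-disjoint unions of (directed) cycles and paths, hence automatically bounded-degree and planar.

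The main step is computing the value of the maximum acyclic subgraph in each case. The total number of edges is $2tk + 4t(k-1) = 6tk - 4t$ in both cases. Since each connected component is either a directed $\ell$-cycle (for which a subgraph is acyclic iff it omits at least one edge) or a directed $(k-1)$-path (which is already acyclic), the maximum acyclic subgraph size is exactly the total edge count minus the number of directed cycles in $G$. In the $k$-cycle case this gives $6tk - 4t - 2t = 6tk - 6t$, while in the $2k$-cycle case it gives $6tk - 4t - t = 6tk - 5t$. The multiplicative gap is
\[
\frac{6tk - 5t}{6tk - 6t} \;=\; 1 + \frac{1}{6k-6} \;\geq\; 1+\eps
\]
by the choice of $k$, so any $(1+\eps)$-approximation algorithm distinguishes the two cases of $\NGC$.

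Plugging $k = \Theta(1/\eps)$ into the lower bound of~\Cref{thm:ngc} (invoked through~\Cref{cor:ngc-strong}) then gives the claimed bound $\Omega\bigl(\tfrac{1}{p^5}\cdot (n/k)^{1-O(p/k)}\bigr) = \Omega\bigl(\tfrac{1}{p^5}\cdot(\eps n)^{1-O(\eps p)}\bigr)$, and in particular forces $\Omega(1/\eps)$ passes for any $n^{o(1)}$-space algorithm. I expect no serious obstacles; the reduction is essentially as direct as the one for maximum cut in~\Cref{thm:maxcut}, with the directed orientation supplied for free by~\Cref{cor:ngc-strong}. Any divisibility issues about $k$ or $t$ (e.g.\ ensuring $t$ is an even integer) can be handled by the same padding argument used at the end of~\Cref{sec:every}.
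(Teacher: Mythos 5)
Your proof is correct and takes essentially the same route as the paper's: reduce from the directed version of $\NGC$ via \Cref{cor:ngc-strong}, observe that the maximum acyclic subgraph in a vertex-disjoint union of directed cycles and paths equals the edge count minus the number of cycles, and compute the resulting $(1+\eps)$-gap. The only cosmetic difference is that you pick $k \leq \tfrac{1}{6\eps}+1$ and verify the multiplicative ratio $1+\tfrac{1}{6k-6} \geq 1+\eps$ directly, whereas the paper picks $k \leq \tfrac{1}{6\eps}$ and argues through the additive gap $n/(6k) \geq \eps n$; both yield the same asymptotic bound.
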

\begin{proof}
The proof is along the previous lines, except that we are going to apply it to the \emph{directed} version of the problem in~\Cref{cor:ngc-strong}. 
Consider a $\NGC$ instance $G$ on $n=6t \cdot k$ vertices drawn from $\distNGC$, for largest integer $k\leq \frac 1{6 \eps}$. In line with~\Cref{cor:ngc-strong}, we assume that $G$ is directed. 

Since cycles of $G$ are disjoint, the size of the largest acyclic subgraph of $G$ is exactly equal to the number of edges of $G$ minus its number of cycles. The number of edges in our graphs is $2t \cdot k + 4t \cdot (k-1)$ in both cases. 
The number of cycles is $2t$ in one case and $t$ in another. Thus, the size of largest acyclic subgraph in one case is $6t \cdot k - 6t = n-(n/k)$ and in the other case it is $6t \cdot k - 5t = n-(n/k)+(n/6k) = n-(n/k) + \eps \cdot n$. 

As such a $p$-pass streaming algorithm for size of a largest acyclic subgraph can be used as a distinguisher for the two cases of the directed graph $G$. By~\Cref{cor:ngc-strong}, 
we have that that space complexity of the algorithm should be
\[
\Omega\paren{\frac1{p^5} \cdot (n/k)^{1-O(p/k)}} = \Omega\paren{\frac1{p^5} \cdot (\eps \cdot n )^{1-O(\eps \cdot p)}}.
\]

The fact that the lower bound holds on bounded-degree planar is simply because the distribution $\distNGC$ in~\Cref{distngc} we use here is supported on graphs which are disjoint-union of cycles and paths, and thus clearly are both 
bounded-degree and planar. 
\end{proof}

\paragraph{Optimality of~\Cref{thm:acyclicsubgraph}?}~\Cref{thm:acyclicsubgraph} provides the currently best multi-pass lower bound for $(1+\eps)$-approximation of maximum acyclic graph in \emph{any} family of graphs. 
However, we are not aware of any algorithmic work on this problem in the streaming setting except for~\cite{ChakrabartiG0V20} who considered the closely related problem of feedback arc set: minimum number of edges that needs to be deleted
from the graph before making it acyclic (this number is equal to the number of edges minus the answer to our original problem). As such, at this point, we do not know much about the complexity of this problem and consequently optimality of~\Cref{thm:acyclicsubgraph} (note however that a $2$-approximation in $O(\log{n})$ space is trivial by returning half the number of edges).

\subsection{Property Testing: Connectivity, Bipartiteness, and Cycle-freeness} 

Given a graph property $P$ and an $\eps \in (0,1)$, an $\eps$-property tester for $P$ is an algorithm that decides whether an input $G$ has the property $P$ or is \emph{$\eps$-far} from having $P$. We define a graph $G$ as being $\eps$-far from the properties we consider as follows:

\begin{itemize}
\item \textbf {Connectivity:} If at least $\eps\cdot n$ edges need to be inserted to $G$ to make it connected, then $G$ is said to be $\eps$-far from being connected;
\item \textbf {Bipartiteness:} If at least $\eps\cdot n$ edges need to be deleted from $G$ to make it bipartite, then $G$ is said to be $\eps$-far from being bipartite;
\item \textbf {Cycle-freeness:} If at least $\eps\cdot n$ edges need to be deleted from $G$ to remove all its cycles, then $G$ is said to be $\eps$-far from being cycle-free.
\end{itemize}
Traditionally, these problems have been studied in the query complexity model, but more recently, they also received an extensive attention in the streaming model~\cite{HuangP16,CzumajFPS19,PengS18,MonemizadehMPS17}. 
In particular,~\cite{HuangP16} gave an upper bound of $n^{1-\Theta(\eps)}$-space and single-pass for the first two-problems and $n^{1-\Theta(\eps^2)}$ and single-pass for the latter problem on \emph{planar} graphs. 
From the lower bound perspective,~\cite{HuangP16} proved $n^{1-O(\eps)}$ space lower bounds for these problems in single-pass streams and~\cite{AssadiKSY20} proved that $n^{o(1)}$-space algorithms require $\Omega(\log{(1/\eps)})$ passes. 

We prove the following lower bound for these problems: 

\begin{theorem} \label{thm:propertytesting}
For any $\eps \in (0,1)$, any $p$-pass streaming algorithm which is a $\eps$-property tester for connectivity, bipartiteness, and cycle-freeness on $n$-vertex graphs with probability at least $2/3$ requires 
$\Omega\paren{\frac1{p^5} \cdot (\eps \cdot n )^{1-O(\eps \cdot p)}}$ space. Moreover, this lower bound continues to hold even on bounded-degree planar graphs 
and also implies that $\Omega(1/\eps)$ passes are needed for any $n^{o(1)}$-space algorithm. 
\end{theorem}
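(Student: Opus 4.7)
The plan is to give three separate streaming reductions from $\NGC_{n,k}$ with $k=\Theta(1/\eps)$, one per property; each reduction will preserve the number of passes and add only $O(\log n)$ space, so plugging the result into~\Cref{thm:ngc} will deliver $\Omega(\frac{1}{p^5}(\eps n)^{1-O(\eps p)})$. Planarity and bounded-degree will be inherited automatically because $\distNGC$ is already supported on disjoint unions of cycles and paths and any gadget we add contributes $O(1)$ edges per vertex in a planar way, exactly as in~\Cref{fig:mst}. Throughout, I will freely use the side information guaranteed by~\Cref{cor:ngc-strong}, namely the tuples $(u^i_1,u^i_2,u^i_3,u^i_4)$ covering the cycles and the endpoints $v_1,\dots,v_{4t}$ covering the noise paths.

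For bipartiteness, I would take $k$ to be the largest odd integer at most $1/(3\eps)$ and forward the NGC stream unchanged. The $k$-case graph then contains $2t=n/(3k)\geq \eps n$ vertex-disjoint odd cycles and is thus $\eps$-far from bipartite, while the $2k$-case graph is a disjoint union of even-length cycles and noise paths and is therefore bipartite; any $\eps$-tester distinguishes the two.

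For cycle-freeness, I would take $k\leq 1/(3\eps)$ and forward the NGC stream after \emph{suppressing} the two identity matchings $Y^1\leftrightarrow Y^4$ and $Y^2\leftrightarrow Y^3$ from step (ii)(c) of~\Cref{distngc}; since these matchings are determined by the fixed vertex labels and arrive at the start of the stream, removing them requires no extra state. The proof of~\Cref{lem:distngc-valid} shows that every $k$-cycle traverses only $X$-identity edges, so the $k$-case graph will still contain $2t\geq \eps n$ disjoint $k$-cycles and is $\eps$-far from cycle-free; every $2k$-cycle on the other hand traverses exactly one $Y^1Y^4$ edge and one $Y^2Y^3$ edge, so suppressing these will split each $2k$-cycle into two paths and the modified $2k$-case graph will be a forest.

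For connectivity, I would reuse the weight-$1$ gadget of the MST proof (\Cref{thm:MST}): for a fixed derangement $\phi$ of $[t]$ and a fixed injection $\psi$ assigning each $i\in[t]$ to four distinct noise endpoints, I add the edges $\{u^i_1 u^{\phi(i)}_3\}_{i\in[t]}$ together with the edges connecting each $u^i_1$ to its four assigned noise endpoints, and discard the weight-$W$ edges. The exact computation done for $H_1$ in the proof of~\Cref{thm:MST} then shows that the augmented $k$-case graph has exactly $t$ connected components while the augmented $2k$-case graph is connected, so choosing $k\leq 1/(6\eps)$ will make the former $\eps$-far from connected. The step I expect to require the most care is the cycle-freeness reduction, since it depends on the specific identity-matching structure of~\Cref{distngc} and on the fact that the $X$- and $Y$-identity matchings play asymmetric roles in the two cases; the other two reductions are essentially packaging exercises combined with the parameter substitution $k=\Theta(1/\eps)$ in~\Cref{thm:ngc}.
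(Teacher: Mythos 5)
Your connectivity and bipartiteness reductions coincide with the paper's: for connectivity the paper likewise reuses the MST gadget of \Cref{thm:MST} with the weight-$W$ edges dropped (so the augmented $k$-case graph has $t\geq\eps n$ components while the $2k$-case is connected), and for bipartiteness the paper likewise forwards the raw $\NGC$ stream for odd $k\leq 1/(3\eps)$ and appeals to the odd-cycle versus even-cycle dichotomy as in \Cref{thm:maxcut}. Those two parts are essentially identical.

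Your cycle-freeness reduction, however, takes a genuinely different route. The paper uses the side information of \Cref{cor:ngc-strong} to identify the tuple vertices $u^i_1$ and then deletes one arbitrary edge incident on each $u^i_1$; this breaks exactly $t$ of the $2t$ $k$-cycles in the short-cycle case (leaving $t\geq\eps n$ intact for $k\leq 1/(6\eps)$) and breaks every $2k$-cycle in the long-cycle case, since each contains some $u^i_1$. You instead filter out the two deterministic $Y$-identity matchings from the head of the stream. Your observation---that in \Cref{distngc} every $k$-cycle traverses only $X$-identity and $S$-identity edges while every $2k$-cycle traverses exactly one $Y^1\leftrightarrow Y^4$ edge and one $Y^2\leftrightarrow Y^3$ edge, as the chase in \Cref{lem:distngc-valid} shows---is correct, so dropping the $Y$-matchings preserves all $2t$ $k$-cycles and splits every $2k$-cycle into two paths (while only further shortening some noise paths, which stay acyclic). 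This buys you the slightly larger $k\leq 1/(3\eps)$, avoids invoking \Cref{cor:ngc-strong} for this part (you only need the fixed labeling of $Y^1,\dots,Y^4$, which is built into the distribution), and is arguably cleaner because it leans directly on the structural asymmetry between $X$- and $Y$-edges rather than on the covering property of the tuples. Both yield the same asymptotic bound; the choice is essentially a matter of taste, with yours being the tighter packaging for this particular property.
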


\begin{proof}
The proofs of each of these parts are very similar to the previous results of this section and thus we only briefly mention each one.

\paragraph{Connectivity:} The proof of this part is identical to that of~\Cref{thm:MST} with the difference that we no longer add edges of weight $W$. Thus, in one case the graph has $\eps n$ connected components and in the other case it is connected. 
The rest follows verbatim as ~\Cref{thm:MST}. 

\paragraph{Bipartiteness:} The proof of this part is identical to that of~\Cref{thm:maxcut}. The graphs in~\Cref{thm:maxcut} in one case miss $\eps n$ edge from any cut, thus need $\eps n$ of their edges removed to become bipartite, 
while in the other case they are bipartite.  The rest follows verbatim as ~\Cref{thm:maxcut}.

\paragraph{Cycle-freeness:} Consider a $\NGC$ instance $G$ on $n=6t \cdot k$ vertices drawn from $\distNGC$ for largest integer $k \leq \frac{1}{6\eps}$. We crucially use \Cref{cor:ngc-strong}  to perform this reduction.

Using~\Cref{cor:ngc-strong} allows us to assume that the algorithm for $G$ has the extra knowledge of the following: a collection
of $t$ tuples $(u^i_1,u^i_2,u^i_3,u^i_4)$ for $i \in [t]$ (covering the cycles) with the properties specified by~\Cref{cor:ngc-strong} (we do not need the extra knowledge of the paths here).  

Given a graph $G$ sampled from $\distNGC$, simply remove one arbitrary edge incident on $u^i_1$ for all $i \in [t]$. This way, $u^i_1$ cannot be part of any cycle in $G$. Now, in one case, $G$ still has $t$ edge-disjoint $k$-cycles and thus we need 
to remove $t = (n/6k) = \eps n$ edges from $G$ to make it cycle-free. In the other case, every cycle of $G$ has lost an edge and thus it is cycle-free. The lower bound now follows from~\Cref{cor:ngc-strong} as all our other lower bounds in this section. 
\end{proof}

\paragraph{Optimality of~\Cref{thm:propertytesting}?} The bounds obtained by our algorithm for connectivity and cycle-freeness are asymptotically optimal for \emph{any graph} by the reductions in~\cite{HuangP16} and 
our discussion for~\Cref{thm:MST}. In particular, both problems can be solved in $O(1/\eps)$ passes and $O_{\eps}(\polylog{(n)})$ space by simulating the algorithm of~\cite{ChazelleRT05} in the streaming model. 
For bipartiteness, we  are not aware of a non-trivial algorithm in arbitrary graphs. However, for \emph{planar} graphs, the approach of~\cite{HuangP16} based on the query algorithm of~\cite{CzumajMOS11} also immediately 
implies a tester in $O(1/\eps^2)$ passes and $O_{\eps}(\polylog{(n)})$ space for this problem. Bridging this gap remains an open question. 

\medskip

This concludes the list of all lower bounds in~\Cref{res:str}. We remark that for all these problems, no lower bound better than $\Omega(\log{(1/\eps)})$ passes for $n^{o(1)}$-space algorithms were known even for arbitrary graphs. Our results 
thus exponentially improves over prior work and in multiple cases lead to asymptotically optimal bounds as discussed above.

% !TeX root = main.tex 
%!TEX root = main.tex

\section{A Streaming Lower Bound for Pointer Chasing}\label{sec:pc-lower} 

We present the proof of~\Cref{prop:pc-lower} in this section.

\begin{proposition*}[Restatement of~\Cref{prop:pc-lower}]
	Consider a $p$-pass $s$-space streaming algorithm for $\PC_{\pcw,\pcd}$ over random $(\pcw,\pcd)$-layered graphs with matchings $M_1,\ldots,M_{\pcd}$ given in the stream $M_{\pcd} \conc \cdots \conc M_1$. 
	For  $\gamma \in (0,1/2)$,  if the algorithm succeeds with probability at least $1/2+\gamma$ then either $p > \pcd-1$ or $s=\Omega\paren{{\frac{\gamma^4}{\pcd^5}} \cdot \pcw}$. 
\end{proposition*}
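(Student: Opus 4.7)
The plan is to adapt the round-elimination / information-complexity lower bound for streaming pointer chasing of Guha--McGregor and Yehudayoff to the present setting, in which the layered graph is built from uniformly random perfect matchings rather than independent random degree-one functions. The crucial structural observation is that the stream $M_{\pcd} \conc \cdots \conc M_1$ reveals matchings in the exact reverse order of a forward chase from $s \in V_1$, so within any single pass the algorithm can advance the chase by at most one layer beyond its previous state. The natural $\pcd$-pass algorithm already uses only $O(\log\pcw)$ space, and our task is to show that whenever $p \le \pcd-1$, at least $\Omega(\gamma^4 \pcw/\pcd^5)$ bits of space are necessary.

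First I would recast the $p$-pass $s$-space streaming algorithm as a cyclic $\pcd$-party communication protocol in which party $P_i$ holds matching $M_i$, the parties speak in each pass in the order $P_\pcd \to P_{\pcd-1} \to \cdots \to P_1$, and every message has length at most $s$ bits. This reduction is valid by \Cref{def:str}: the streaming memory between any two consecutive stream items is bounded by $s$, so the message each party hands off to the next is bounded likewise. Writing $\rY_i$ for the image of $s$ in $V_{i+1}$ under $M_1,\ldots,M_i$ (so that $\rY_\pcd = P(s)$), an advantage of $\gamma$ in the decision problem together with Pinsker's inequality and the near-uniformity of the marginal of $\rY_\pcd$ gives the lower bound $\mi{\rProt}{\rY_\pcd} = \Omega(\gamma^2)$ on the mutual information between the transcript $\rProt$ and the true pointer.

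The technical heart of the proof is a per-round information inequality of the following form: for each pass $j$ and each party $P_i$, the $s$-bit message transmitted in that round increases the algorithm's information about $\rY_{i+1}$ beyond what it already knew about $\rY_i$ by at most $O(\sqrt{s/\pcw}\cdot\poly(\pcd))$. For random \emph{functions}, where the edges of $M_i$ are marginally independent, this is essentially the classical round-elimination step. The obstacle in our setting is that the edges of a random matching are \emph{correlated}, so conditioning on the edges of $M_i$ already exposed by earlier messages no longer leaves the remaining edges of $M_i$ marginally uniform. I plan to handle this by showing that after conditioning on any fixed $s$-bit message, the residual distribution of $M_i$ remains within $O(\sqrt{s/\pcw})$ in total variation of a uniform matching on the unexposed vertices, and then paying this cost via the chain rule for KL divergence at each application of the round inequality.

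Composing the per-round loss across all $p$ passes, all $\pcd$ parties, and all $\pcd$ chase layers---and using the key observation that with $p \le \pcd - 1$ at least one layer must remain essentially untouched in the information-theoretic sense---yields an upper bound of the form $\mi{\rProt}{\rY_\pcd} = O\bigl((p\cdot s/\pcw)^{1/2}\cdot\poly(\pcd)\bigr)$. Matching this against the $\Omega(\gamma^2)$ lower bound and rearranging gives $s = \Omega(\gamma^4 \pcw/\pcd^5)$. The main obstacle, and the source of the $\pcd^5$ rather than $\pcd^2$ factor, is the careful bookkeeping of the matching-versus-function discrepancy at each round, which requires an explicit coupling of the conditional matching distribution to a uniform matching on a residual vertex set together with a Pinsker bound on the coupling error that must propagate through the telescoping sum.
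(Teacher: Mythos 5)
Your high-level plan matches the paper's: cast the streaming algorithm as a $\pcd$-party blackboard protocol where party $Q_i$ holds $M_i$, prove by induction on the round $r$ that the protocol's knowledge of the pointer ``lags behind'' by one layer, and use Pinsker to convert information bounds into total-variation bounds that feed the next round of the induction. The parameter arithmetic you sketch (matching $(p \cdot s/\pcw)^{1/2}\cdot\poly(\pcd)$ against $\gamma$ to solve for $s$) also lines up with how~\Cref{thm:ngc} ultimately gets its $\gamma^4\pcw/\pcd^5$ bound.

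However, there is a genuine gap in the step you identify as ``the main obstacle.'' You propose to handle the matching-versus-function discrepancy by showing that, conditioned on any $s$-bit message, ``the residual distribution of $M_i$ remains within $O(\sqrt{s/\pcw})$ in total variation of a uniform matching on the unexposed vertices,'' and then to couple. This does not work as stated: an $s$-bit message does \emph{not} partition the vertex set into a clean set of exposed versus unexposed vertices, so the object ``uniform matching on the unexposed vertices'' is not well-defined. A message could, say, reveal linear constraints on the matching (parities of edge endpoints) rather than whole edges, leaving a residual distribution that is far from any uniform matching on a subdomain while still having small total information content. The real difficulty is even more basic — and the paper spells it out in a remark inside the proof of~\Cref{lem:clutter}: the naive sub-additivity bound
$\sum_v \en{\rM(v)\mid\rProt} \geq \en{\rM} - \en{\rProt}$
fails because $\en{\rM} = \log(\card{D}!) = \card{D}\log\card{D} - \Theta(\card{D})$, so even with zero communication the per-vertex bound you get is $\log\card{D} - \Theta(1)$, a constant too lossy. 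The paper's actual fix is the Entropy Subset Inequality (Han's inequality,~\Cref{fact:esi}): it lower-bounds the average of $\en{\rM(v)\mid\rProt}$ by the average of $\en{\rM_S\mid\rProt}/\beta$ over all $\beta$-subsets $S\subseteq D$, and a uniform \emph{partial} matching of size $\beta$ has entropy $\log\!\bigl(\frac{\card{D}!}{(\card{D}-\beta)!}\bigr) \geq \beta\log(\card{D}-\beta)$, so the constant-factor entropy deficit of a full matching disappears once $\beta \ll \card{D}$. Optimizing $\beta = \sqrt{r \cdot s \cdot \card{D}}$ then gives the needed $O(\sqrt{r s/\pcw})$ bound. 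Without the subset inequality (or an equivalent) your argument has no rigorous way to pay the per-edge entropy deficit; the coupling you sketch cannot be carried out because the conditional law of $M_i$ given the transcript has no residual-uniform-matching structure to couple to.
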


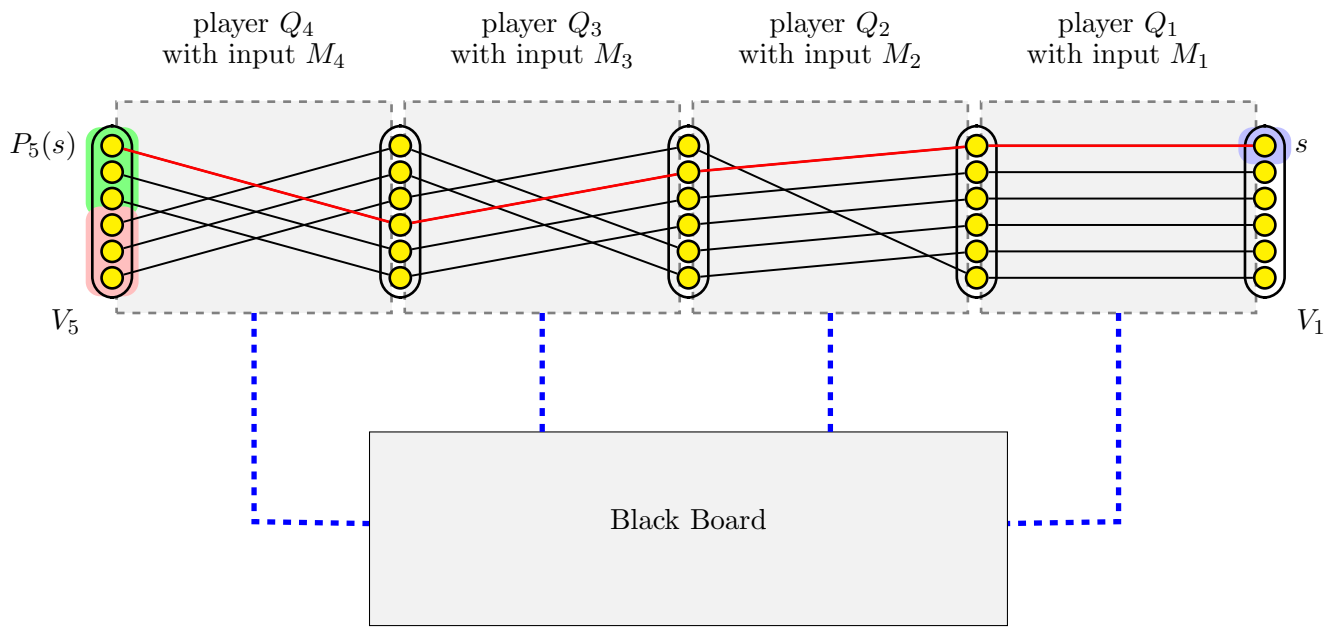
\begin{figure}[t!]	
\centering
% !TeX root = ../main.tex 
%!TEX root = ../main.tex

\begin{tikzpicture}

\tikzset{choose/.style={rectangle, draw, rounded corners=5pt, line width=1pt, inner xsep=5pt, inner ysep=2pt]}}
\tikzset{layer/.style={rectangle, rounded corners=8pt, draw, black, line width=1pt, inner sep=3pt]}}

\tikzset{Qp/.style={rectangle, draw, dashed, line width=1pt, gray, minimum width=104pt, fill=gray!10, minimum height=80pt, inner sep=5pt]}}

\node[vertex] (v11){};
\foreach \j in {2,...,6}
{
	\pgfmathtruncatemacro{\jp}{\j-1};
	\node[vertex] (v1\j) [below=1pt of v1\jp]{};
}
\node[layer] (V1) [fit=(v11) (v16)]{};
\foreach \i in {2,3,4,5}
{
	\pgfmathtruncatemacro{\ip}{\i-1};
	\node[vertex] (v\i1) [right=100pt of v\ip1]{};
	\foreach \j in {2,...,6}
	{
		\pgfmathtruncatemacro{\jp}{\j-1};
		\node[vertex] (v\i\j) [below=1pt of v\i\jp]{};
	}
	\node[layer] (V\i) [fit=(v\i1) (v\i6)]{};
}
\foreach \i in {1,...,4}
{
	\pgfmathtruncatemacro{\ip}{\i+1};
	\foreach \j in {1,...,6}
	{
		\pgfmathtruncatemacro{\jp}{Mod(\i+\j+1,6)+1};
		
		\draw[black, line width=0.75pt]
			(v\i\j) to (v\ip\jp);
	}
}

\begin{scope}[on background layer]

\node[Qp] (Q1)[above right=-37pt and -2pt of v14] {};%\text{$Q_1$}}
\node[Qp] (Q2)[above right=-37pt and -2pt of v24] {};%\text{$Q_1$}}
\node[Qp] (Q3)[above right=-37pt and -2pt of v34] {};%\text{$Q_1$}}
\node[Qp] (Q4)[above right=-37pt and -2pt of v44] {};%\text{$Q_1$}}

\end{scope}

\begin{scope}[on background layer]
\node [left=5pt of v11]{$P_5(s)$};
\node[choose, blue!25, fill=blue!25] (S) [fit=(v51) (v51)]{};

\node[choose, green!50, fill=green!50] (X) [fit=(v11) (v13)]{};
\node[choose, red!25, fill=red!25] (Y) [fit=(v14) (v16)]{};
\node [right=3pt of v51] {$s$};
\node [below left=5pt and 5pt of v16] {$V_5$};
\node [below right=5pt and 5pt of v56] {$V_1$};
\end{scope}

\draw[red, line width=1pt]
	(v11) to (v24)
	(v24) to (v32)
	(v32) to (v41)
	(v41) to (v51);

\node[rectangle, draw, fill=gray!10, minimum width=60pt] (bb) [below=120pt of V3][fit=(V2) (V4)] {Black Board};

\draw[line width=2pt, blue, dashed]
	($(Q1) + (0pt,-40pt)$) to ($(Q1) + (0pt,-119pt)$)to (bb);
	\draw[line width=2pt, blue, dashed]
	($(Q2) + (0pt,-40pt)$) to ($(Q2) + (0pt,-85pt)$);

	\draw[line width=2pt, blue, dashed]
	($(Q3) + (0pt,-40pt)$) to ($(Q3) + (0pt,-85pt)$);

	\draw[line width=2pt, blue, dashed]
	($(Q4) + (0pt,-40pt)$) to ($(Q4) + (0pt,-119pt)$) to (bb);

\node [above=20pt of Q1]{player $Q_4$};
\node [above=8pt of Q1]{with input $M_4$};
\node [above=20pt of Q2]{player $Q_3$};
\node [above=8pt of Q2]{with input $M_3$};
\node [above=20pt of Q3]{player $Q_2$};
\node [above=8pt of Q3]{with input $M_2$};
\node [above=20pt of Q4]{player $Q_1$};
\node [above=8pt of Q4]{with input $M_1$};

\end{tikzpicture}
\caption{An illustration of the communication game in the proof of~\Cref{prop:pc-lower} for $\pcd=4$. The $\pcw+1$ internal players corresponding to the stream $M_4$ (resp. $M_3$, $M_2$, $M_1$) are merged into a single player $Q_4$ (resp. $Q_3$, $Q_2$, $Q_1$), who gets the matching $M_4$ (resp. $M_3$, $M_2$, $M_1$) as input. The dashed (blue) lines draw the communicated messages between the players of the game and the blackboard.} 
\label{fig:game-pc}
\end{figure}

Recall that a $(\pcw,\pcd)$-layered graph consists of $\pcd+1$ layers $V_1,\ldots,V_{\pcd+1}$ of size $\pcw$ each and $\pcd$ matchings $M_1,\ldots,M_{\pcd}$ where $M_i$ is between $V_{i}$ and $V_{i+1}$. 
We shall consider each matching as a bijection $M_i: V_i \rightarrow V_{i+1}$, where for any vertex $v \in V_i$, $M_i(v)$ denotes the matched pair of $v$ in $V_{i+1}$. 

An instance of $\PC_{\pcw,\pcd}$ consists a $(\pcw,\pcd)$-layered graph, a fixed starting vertex $s \in V_1$, and a fixed equipartition $X,Y$ of $V_{\pcd+1}$. We define the  pointers $P_1(s),\ldots,P_{d+1}(s)$, 
where $P_i(s)$ is the unique vertex reachable in $V_i$ from $s$. This way, we have $P_{i+1}(s) = M_i(P_i(s))$. 
Initially, the pointer $P_1(s)$ is known and the goal is to decide whether $P_{\pcd+1}(s)$ belongs to $X$ or $Y$. % for a fixed equipartition $X,Y$ of $V_{\pcd+1}$. 

We prove the lower bound more generally for $\pcd$-party communication protocols in the following: % (see also~\Cref{fig:game-pc}):

\begin{tbox}
\begin{enumerate}[label=$(\roman*)$,leftmargin=25pt]
	\item There are $\pcd$ players $Q_\pcd,\ldots,Q_1$ who receive input matchings $M_\pcd, \ldots , M_1$, respectively. 
	
	\item The players communicate with each other in \emph{rounds} via a \emph{blackboard}. In each round, 
	the players go in turn with $Q_\pcd$ writing a message on the board, followed by $Q_{\pcd-1}$, all the way to $Q_1$; these messages are visible to everyone (and are not altered or erased after  written). 

	\item For any player $Q_i$ and round $j$, we use $\Prot^j_i$ to denote the message written on the board by $Q_i$ in the $j$-th round. 	 
	\item At the end of $p$ rounds, player $Q_1$ outputs the answer on the board, i.e. the last message $\Prot^p_1$ is the answer to the problem.
	\item The \emph{communication cost} of a protocol is the \underline{maximum number of bits} communicated by any player in any round.

\end{enumerate}
\end{tbox}

\Cref{fig:game-pc} gives an illustration of this $\pcd$ player game. Recall that our~\Cref{def:str} corresponds to a $(\pcw \cdot \pcd)$-party communication protocol. Considering the input stream in~\Cref{prop:pc-lower} is $M_{\pcd} \conc \cdots \conc M_1$, a $p$-pass $s$-space streaming algorithm for this stream induces a $p$-round communication protocol with cost $s$ in the above game via a simulation argument, if we ``merge'' the $\pcw$ players of each matching $M_i$ in the streaming algorithm into a single player $Q_i$. In other words, we impose the space limitation only on consecutive players between two different matchings. We prove a lower bound for any $p$-round communication protocol with cost $s$ that solves the above game over the input distribution\footnote{Note that proving a lower bound in the message passing model is enough to prove a lower bound for a streaming algorithm, yet here we are proving it in the blackboard model for simplicity; this can only strengthen our result.}.

For the rest of the proof, let $\Prot$ denote a $(\pcd-1)$-round protocol with probability of success at least $1/2+\gamma$ over the distribution of inputs. We are going to lower bound the communication cost of this protocol. By the easy direction of Yao's minimax principle (an averaging argument), 
we can assume $\Prot$ is deterministic. We use $\Prot^{r}$ to denote all the messages communicated in round $r$, and $\Prot^{r}_{<i}$, $\Prot^{r}_{>i}$, and $\Prot^r_{-i}$ to denote, respectively, the messages communicated by players 
$Q_{1},\ldots,Q_{i-1}$, players $Q_{i+1},\ldots,Q_{\pcd}$, and players $Q_1,\ldots,Q_{i-1},Q_{i+1},\ldots,Q_{\pcd}$ in round $r$. Finally, for any round $r \in [p]$, define: 
\[
Z_r = (P_1(s),\ldots,P_{r+1}(s),\Prot^1,\ldots,\Prot^{r}), 
\]
namely, the information ``revealed'' to \emph{all} players \emph{after} round $r$

Throughout the proof, we use sans serif fonts to denote the random variables for the definitions above, e.g., $\rProt^r_i$ is a random variable for $\Prot^r_i$. Also, with a slight
abuse of notation, we may use random variables and their distributions interchangeably.  
We can now start the proof.

The main part of the proof is the following inductive lemma. Roughly speaking, it states that the distribution of $(r+2)$-th pointer after $r$ rounds of communication is  
close to its original distribution (the protocol is ``lagging behind''  the pointer it needs to chase -- note that the first pointer to chase is $P_2(s)$ as $P_1(s)$ is known globally), even conditioned on the information available to all the players. 
\begin{lemma}\label{lem:pc-ind}
Suppose communication cost of $\pi$ is $s < \paren{\frac1{100r} \cdot (\frac\gamma \pcd)^4 \cdot \pcw}$, then, for any $r \in [p]$, 
\begin{align}
	\Ex_{Z_r}\tvd{\rP_{r+2}(s)}{(\rP_{r+2}(s) \mid Z_r)} < \gamma \cdot \frac{r}{\pcd}.  \label{eq:pc-ind}
\end{align}
\end{lemma}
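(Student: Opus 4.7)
The proof is by induction on $r$, with the base case $r = 0$ being immediate: $\rP_2(s) = \rM_1(\rP_1(s))$ is exactly uniform on $V_2$ given $\rZ_0 = \rP_1(s)$, since $\rM_1$ is an independent uniform matching. For the inductive step, the first move is to convert the target TVD statement into a mutual information statement: by Pinsker's inequality combined with Jensen's inequality,
\begin{align*}
\Big(\Ex_{\rZ_r}\tvd{\rP_{r+2}(s)}{(\rP_{r+2}(s) \mid \rZ_r)}\Big)^2 \;\leq\; \tfrac{1}{2}\,\mi{\rP_{r+2}(s)}{\rZ_r},
\end{align*}
so it suffices to establish $\mi{\rP_{r+2}(s)}{\rZ_r} = O\big((\gamma r/\pcd)^2\big)$.

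Writing $\rZ_r = (\rZ_{r-1},\rP_{r+1}(s),\rProt^r)$ and applying the chain rule, the information splits into a past part $\mi{\rP_{r+2}(s)}{\rZ_{r-1},\rP_{r+1}(s)}$ and a round-$r$ part $\mi{\rP_{r+2}(s)}{\rProt^r}[\rZ_{r-1},\rP_{r+1}(s)]$. The past part is handled by the inductive hypothesis: since $\rP_{r+2}(s) = \rM_{r+1}(\rP_{r+1}(s))$ and, \emph{marginally}, $\rM_{r+1}$ is independent of everything generated by the other $\pcd-1$ matchings, the only ways $\rZ_{r-1}$ can influence $\rP_{r+2}(s)$ are (i) through what $\rZ_{r-1}$ reveals about $\rP_{r+1}(s)$, which is directly controlled by the IH at level $r-1$, and (ii) through the $\leq (r-1)\cdot s$ bits that player $Q_{r+1}$ already sent about $\rM_{r+1}$ in earlier rounds; this second leakage, combined with the near-uniformity of $\rP_{r+1}(s)$, contributes only $O((r-1)s/\pcw)$ bits of information about the single entry $\rM_{r+1}(\rP_{r+1}(s))$ by the ``spread-out matching'' argument described next.

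The main obstacle is bounding the round-$r$ contribution. Among the $\pcd$ players speaking in round $r$, only $Q_{r+1}$ holds $\rM_{r+1}$, so only its $s$-bit message $\rProt^r_{r+1}$ directly encodes information about $\rM_{r+1}(\rP_{r+1}(s))$; the other $\pcd-1$ messages depend on $\rM_{r+1}$ only through $\rZ_{r-1}$ and are absorbed into the past contribution. The crucial quantitative claim is: since $\rP_{r+1}(s)$ is close to uniform on $V_{r+1}$ by the IH and is nearly independent of $\rM_{r+1}$ conditioned on $\rZ_{r-1}$, any $s$-bit function of $\rM_{r+1}$ can reveal at most $O(s/\pcw)$ bits about $\rM_{r+1}(\rP_{r+1}(s))$ on average over the random location. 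The subtle point is that $\rM_{r+1}$ is a uniformly random \emph{perfect matching}, so the values $\rM_{r+1}(v)$ across $v$ are correlated rather than independent; to handle this, one couples $\rM_{r+1}$ with a random function, or introduces a random ``probe'' set of vertices whose images $\rProt^r_{r+1}$ effectively fixes, and shows that with high probability the near-uniform $\rP_{r+1}(s)$ lies outside this probe set, so $\rM_{r+1}(\rP_{r+1}(s))$ remains close to uniform. After a careful per-round accounting that absorbs the Pinsker square-root losses and the imperfect-uniformity error from the IH, summing across all $r$ rounds and invoking the assumption $s < \frac{1}{100r}(\gamma/\pcd)^4\pcw$ yields the desired $O\big((\gamma r/\pcd)^2\big)$ bound on $\mi{\rP_{r+2}(s)}{\rZ_r}$, completing the induction.
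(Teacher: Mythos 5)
Your high-level architecture — induction on $r$, reducing the round-$r$ contribution to the single player $Q_{r+1}$, and exploiting near-uniformity of the pointer — matches the paper's. But there are two genuine gaps, one structural and one quantitative, and the quantitative one is the heart of the argument.

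The structural issue: you propagate the induction through $\mi{\rP_{r+2}(s)}{\rZ_r}$ rather than through TVD. That is not a cosmetic change. The paper never bounds the mutual information between the pointer and the \emph{entire} transcript-so-far; it works with TVD throughout, and only passes to KL inside a self-contained single-player lemma (\Cref{lem:clutter}) after the $v \sim \rP_{r+1}(s)\mid Z_{r-1}$ expectation has already been swapped for $v \sim \unif_{r+1}$ at a cost of $\gamma(r-1)/\pcd$ in TVD (\Cref{eq:pc-good-one}). Your decomposition $\mi{\rP_{r+2}(s)}{\rZ_r} = \mi{\rP_{r+2}(s)}{\rZ_{r-1},\rP_{r+1}(s)} + \mi{\rP_{r+2}(s)}{\rProt^r}[\rZ_{r-1},\rP_{r+1}(s)]$ does not simply ``absorb'' the past into the IH: the IH controls $\mi{\rP_{r+1}(s)}{\rZ_{r-1}}$, while the first term involves $\rP_{r+2}(s)=\rM_{r+1}(\rP_{r+1}(s))$ and the messages $\Prot^1_{r+1},\ldots,\Prot^{r-1}_{r+1}$ that $Q_{r+1}$ already sent about $\rM_{r+1}$. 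You acknowledge this leakage, but handling it correctly requires exactly the rectangle-type independence claims (\Cref{clm:pc-ind1,clm:tag1,clm:tag2,clm:qr+1}) that you skip; without them the conditioning does not cleanly reduce to $Q_{r+1}$'s view.

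The quantitative gap is the fatal one. Your crucial claim is that an $s$-bit message about $\rM_{r+1}$ can reveal at most $O(s/\pcw)$ bits about $\rM_{r+1}(v)$ on average over a random $v$. This is precisely the step the paper flags as the delicate point (see the remark just after~\Cref{eq:pc-above}): because $\rM_{r+1}$ is a random \emph{matching}, naive subadditivity of entropy gives only a useless $\Theta(1)$-bit bound (since $\en{\rM}=\log(\pcw!) = \pcw\log\pcw - \Theta(\pcw)$ by Stirling), not $O(s/\pcw)$. The paper resolves this via the Entropy Subset Inequality (\Cref{fact:esi}) and a $\beta$-optimization, but that argument delivers only $O(\sqrt{rs/\pcw})$ per coordinate, not $O(rs/\pcw)$ — which is exactly why the lemma's hypothesis carries a $(\gamma/\pcd)^4$ (two nested square roots, one from Pinsker and one from the ESI step) rather than the $(\gamma/\pcd)^2$ your accounting would yield. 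The two workarounds you offer do not close this gap: coupling a random matching to a random function has enormous coupling error (a uniformly random function is a permutation with probability about $e^{-\pcw}$), and the ``probe set'' heuristic is not sound because an $s$-bit message need not pin down the images of $\approx s/\log\pcw$ vertices — it may spread partial information over all of $D$, which a probe set cannot capture. If your $O(s/\pcw)$ claim were provable, it would actually strengthen the theorem by improving $(\gamma/\pcd)^4$ to $(\gamma/\pcd)^2$; that improvement should itself warn you that the claim needs a real proof, not a coupling gesture. Without it, the induction does not go through.
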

The proof of this lemma is by induction. The base case for $r=0$ holds because $Z_0 = P_1(s)$ which is fixed deterministically and thus conditioning on it is irrelevant,  and $\Prot^0$ is defined to be empty (as there is no communication yet); so both
LHS and RHS are zero. Suppose~\Cref{eq:pc-ind} is true up to round $r$ and we prove it for round $r$ itself.

We first show that in round $r$, the only player that we need to focus on is $Q_{r+1}$\footnote{Of course, player $Q_r$ would  reveal the next pointer; the interesting question we need to understand is that whether player $Q_{r+1}$ can also reveal the subsequent pointer in this round.}.
In other words, we can remove the contribution of all other players in the conditioning of~\Cref{eq:pc-ind}.

\begin{claim}\label{clm:pc-ind1}
	 For any choice of $Z_r = (Z_{r-1},\Prot^r_{-(r+1)},\Prot^r_{r+1},P_{r+1}(s))$, 
	 \[
	 \rP_{r+2}(s) \perp \rProt^r_{-(r+1)}  \mid Z_{r-1},\Prot^r_{r+1},P_{r+1}(s).
	 \] 
\end{claim}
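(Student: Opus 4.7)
The plan is to reduce the conditional independence in Claim~6.3 to the simpler statement that, under the same conditioning, the matching $M_{r+1}$ is independent of the remaining matchings $M_{-(r+1)}$. Without loss of generality the protocol $\pi$ is deterministic (by fixing its private randomness), so every message and every pointer is a deterministic function of the input matchings $(M_1,\ldots,M_\pcd)$. I would first record the two relevant functional dependencies. On the one hand, $P_{r+2}(s) = M_{r+1}(P_{r+1}(s))$, so once $P_{r+1}(s)$ is in the conditioning, $P_{r+2}(s)$ is a function of $M_{r+1}$ alone. On the other, by the speaking order $Q_\pcd,Q_{\pcd-1},\ldots,Q_1$ in each round, every message $\Pi^r_i$ with $i\neq r+1$ is a deterministic function of $M_i$ and the blackboard at $Q_i$'s turn; unrolling one player at a time within round $r$ gives that the whole tuple $\Pi^r_{-(r+1)}$ is a deterministic function of $M_{-(r+1)}$ together with the conditioning data $Z_{r-1}$ and $\Pi^r_{r+1}$.

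The heart of the argument is then to establish $M_{r+1}\perp M_{-(r+1)}\mid Z_{r-1},\Pi^r_{r+1},P_{r+1}(s)$. The matchings are \emph{a priori} independent and uniform, and I would track what each piece of the conditioning does to their joint distribution. The classical rectangle property of deterministic blackboard protocols says that conditioning on the full prior transcript $(\Pi^1,\ldots,\Pi^{r-1})$ inside $Z_{r-1}$ only refines each player's marginal, preserving the product structure across players; conditioning on $P_{r+1}(s)=M_r(P_r(s))$ with $P_r(s)$ already in $Z_{r-1}$ is a single-coordinate constraint on $M_r$; and conditioning on $\Pi^r_{r+1}$ should, given the portion of round-$r$ history visible to $Q_{r+1}$ when it speaks, refine only the $M_{r+1}$-marginal. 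Composing these refinements yields a product decomposition of the joint conditional law of $(M_1,\ldots,M_\pcd)$, and in particular $M_{r+1}$ is independent of the tuple $M_{-(r+1)}$.

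With the independence of matchings in hand, the claim is immediate from the functional dependencies above: a function of $M_{r+1}$ and a function of $M_{-(r+1)}$ (with all other arguments already in the conditioning) are conditionally independent. The step I expect to be the main obstacle is the handling of $\Pi^r_{r+1}$ in the middle paragraph: at the moment $Q_{r+1}$ speaks, the blackboard already contains the messages $\Pi^r_{>r+1}$ of the players preceding it in round $r$, and those are \emph{not} in our conditioning, so the raw constraint imposed by $\Pi^r_{r+1}$ is a priori a joint constraint on $M_{r+1}$ and $M_{>r+1}$. Resolving this cleanly will require either writing out the joint density of $(M_1,\ldots,M_\pcd)$ under the conditioning as a product of per-coordinate factors and checking that marginalizing over the intervening round-$r$ messages still factors correctly, or an information-theoretic peeling argument in the spirit of the proof of~\Cref{lem:xor-independence}, where we shave off the round-$r$ messages of players $Q_\pcd,\ldots,Q_{r+2}$ one at a time from the conditioning without inflating the mutual information between $P_{r+2}(s)$ and $\Pi^r_{-(r+1)}$.
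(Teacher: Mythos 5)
Your proposal matches the paper's proof: two applications of data processing reduce the claim to showing $\rM_{r+1}\perp\rM_{-(r+1)}$ conditioned on $Z_{r-1},\Prot^r_{r+1},P_{r+1}(s)$, after which one peels off the conditioning variables one at a time using \Cref{prop:info-decrease} (and \Cref{prop:info-increase} to re-introduce missing round-$r$ messages before peeling), exactly in the style of \Cref{lem:xor-independence} and of the more explicit chain spelled out in \Cref{clm:qr+1}.

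The obstacle you isolate in your last paragraph is the substantive part of the argument, and it is worth noting that the paper's own one-line justification (``each conditioned variable is a function of exactly one side of the mutual information term'') does not literally apply to $\Prot^r_{r+1}$: with $\Prot^r_{>r+1}$ absent from the conditioning, $\rProt^r_{r+1}$ given $Z_{r-1}$ is a joint function of $\rM_{r+1}$ \emph{and} $\rM_{>r+1}$, so the direct appeal to \Cref{prop:info-decrease} fails. The add-then-peel pattern from \Cref{clm:qr+1} is the right fix, but it requires first verifying the hypothesis of \Cref{prop:info-increase} for $D=\rProt^r_{>r+1}$ given $C = Z_{r-1},\Prot^r_{r+1}$; this is not immediate, since conditioning on $\Prot^r_{r+1}$ (which the player computed after seeing $\Prot^r_{>r+1}$) can a priori entangle $\rM_{r+1}$ with $\rProt^r_{>r+1}$. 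Your sketch ends precisely where the real work begins: to complete the proof, you would need to write out the add-then-peel chain and check the conditional-independence hypothesis at each link, which is what the paper defers to as the ``tedious calculations'' of \Cref{lem:xor-independence}.
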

\begin{proof}
	Define $M^{-}_{r+1}:= (M_1,\ldots,M_{r},M_{r+2},\ldots,M_{\pcd})$. We have, 
	\begin{align*}
		\mi{\rP_{r+2}(s)}{\rProt^r_{-(r+1)} \mid \rZ_{r-1},\rProt^r_{r+1},\rP_{r+1}(s)} &\leq \mi{\rM_{r+1}}{\rM^{-}_{r+1} \mid \rZ_{r-1},\rProt^r_{r+1},\rP_{r+1}(s)} \tag{by applying (\itfacts{data-processing}) twice as new variables
		determine old ones (conditionally)} \\
		&\leq \mi{\rM_{r+1}}{\rM^{-}_{r+1}}, 
	\end{align*}
	where the second inequality is a repeated application of~\Cref{prop:info-decrease} as each conditioned variable is a function of exactly one side of the mutual information term (conditioned on the remaining ones); this is exactly the same argument 
	as in~\Cref{lem:xor-independence} and we omit the tedious calculations. 
	
	We now  have $\mi{\rM_{r+1}}{\rM^{-}_{r+1}} = 0$ as in the input, the matchings are chosen independently. This means the first term 
	is also zero, proving the desired independence (by~\itfacts{info-zero}). 
\end{proof}
By~\Cref{clm:pc-ind1}, we can simplify the LHS of~\Cref{eq:pc-ind} to the following, i.e., get rid of the messages of all players other than $Q_{r+1}$.
\begin{align}
	\Ex_{Z_r}\tvd{\rP_{r+2}(s)}{(\rP_{r+2}(s) \mid Z_r)} = \Ex_{\substack{Z_{r-1},\Prot^r_{r+1}, \\P_{r+1}(s)}}\tvd{\rP_{r+2}(s)}{(\rP_{r+2}(s) \mid Z_{r-1},\Prot^r_{r+1},P_{r+1}(s))}. \label{eq:dist1}
\end{align}
We further simplify the RHS of~\Cref{eq:dist1} as follows, 
\begin{align}
	&\Ex_{\substack{Z_{r-1},\Prot^r_{r+1}, \\P_{r+1}(s)}}\tvd{\rP_{r+2}(s)}{(\rP_{r+2}(s) \mid Z_{r-1},\Prot^r_{r+1},P_{r+1}(s))} \notag \\
	&\quad = \Ex_{Z_{r-1},\Prot^r_{r+1}}\,\Ex_{v \sim \rP_{r+1}(s) \mid Z_{r-1},\Prot^r_{r+1}}\tvd{\rP_{r+2}(s)}{(\rM_{r+1}(v) \mid Z_{r-1},\Prot^r_{r+1},\rP_{r+1}(s)=v)} \tag{as $P_{r+2}(s) = M_{r+1}(P_{r+1}(s))$}  \\
	&\quad = \Ex_{Z_{r-1},\Prot^r_{r+1}}\,\Ex_{v \sim \rP_{r+1}(s) \mid Z_{r-1},\Prot^r_{r+1}}\tvd{\rP_{r+2}(s)}{(\rM_{r+1}(v) \mid Z_{r-1},\Prot^r_{r+1})} \label{eq:tag1} \\
	&\quad = \Ex_{Z_{r-1},\Prot^r_{r+1}}\,\Ex_{v \sim \rP_{r+1}(s) \mid Z_{r-1}}\tvd{\rP_{r+2}(s)}{(\rM_{r+1}(v) \mid Z_{r-1},\Prot^r_{r+1})}, \label{eq:tag2} 
\end{align}
where~\Cref{eq:tag1} and~\Cref{eq:tag2} hold by the following two claims, respectively. 

\begin{claim}\label{clm:tag1}
	For any choice of $Z_{r-1},\Prot^r_{r+1}$ and $v \in V_{r+1}$, 
	\[
	\rM_{r+1}(v) \perp (\rP_{r+1}(s)=v) \mid Z_{r-1},\Prot^r_{r+1}.
	\]
\end{claim}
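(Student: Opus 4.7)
The plan is to prove the claim via the \emph{rectangle property} of deterministic communication protocols over product distributions, combined with a careful accounting of which matchings each random variable depends on.

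First, I will observe that once $Z_{r-1}$ is fixed, the pointer $\rP_{r+1}(s) = \rM_r(P_r(s))$ is a deterministic function of $\rM_r$, since $P_r(s)$ is already determined by $Z_{r-1}$. Thus it suffices to establish the stronger conditional independence $\rM_{r+1}(v) \perp \rM_r \mid Z_{r-1}, \Prot^r_{r+1}$; the claim then follows immediately, because the event $\rP_{r+1}(s) = v$ is a function of $\rM_r$ (given $Z_{r-1}$), and any function of $\rM_r$ is independent of $\rM_{r+1}(v)$ under this stronger statement.

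Second, I will argue that $\rM_r \perp (\rM_{r+1},\ldots,\rM_{\pcd}) \mid Z_{r-1}$. The input matchings $\rM_1,\ldots,\rM_{\pcd}$ come from a product distribution, and $\prot$ is a deterministic multi-party protocol with one matching per player, so the standard rectangle property gives that for any fixing of the transcript $(\Prot^1,\ldots,\Prot^{r-1})$ the matchings remain mutually independent. The remaining components of $Z_{r-1}$, namely $P_1(s),\ldots,P_r(s)$, are deterministic functions of $(\rM_1,\ldots,\rM_{r-1})$ alone, so additionally conditioning on them can only reweight the joint law of the earlier matchings, not introduce any correlation between $\rM_r$ and the later matchings.

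Third, I will track which matchings $\rProt^r_{r+1}$ depends on. The round-$r$ speaking order is $Q_{\pcd},Q_{\pcd-1},\ldots,Q_{1}$, so when $Q_{r+1}$ speaks it has read only $Z_{r-1}$ and the round-$r$ messages of $Q_{\pcd},\ldots,Q_{r+2}$. Those earlier round-$r$ messages are in turn deterministic functions of $Z_{r-1}$ and the matchings $\rM_{r+2},\ldots,\rM_{\pcd}$. Hence $\rProt^r_{r+1}$ is a deterministic function of $(Z_{r-1},\rM_{r+1},\rM_{r+2},\ldots,\rM_{\pcd})$, carrying no new dependence on $\rM_r$ beyond what is already in $Z_{r-1}$. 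Combining this with the previous paragraph yields $\rM_r \perp (\rM_{r+1},\rProt^r_{r+1}) \mid Z_{r-1}$, and from this a direct factorization of the conditional joint distribution shows $\rM_{r+1}(v) \perp \rM_r \mid Z_{r-1},\Prot^r_{r+1}$, which is the claim. No substantive obstacle is anticipated here: the argument is purely structural, exploiting only the product form of the input distribution and the asymmetric round-$r$ ordering that keeps $\rM_r$ out of the support of $\rProt^r_{r+1}$; it requires no information-theoretic inequality and no bound on the communication cost of $\prot$.
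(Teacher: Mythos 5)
Your proof is correct, and the reasoning holds at every step. You correctly reduce to showing the stronger statement $\rM_{r+1}(v) \perp \rM_r \mid Z_{r-1},\Prot^r_{r+1}$ (via the observation that $\rP_{r+1}(s)$ is a deterministic function of $(\rM_r,Z_{r-1})$), establish $\rM_r \perp (\rM_{r+1},\ldots,\rM_\pcd) \mid Z_{r-1}$ by the rectangle property plus the observation that $P_1(s),\ldots,P_r(s)$ are functions of $M_1,\ldots,M_{r-1}$ alone, track that $\rProt^r_{r+1}$ is a function of $(Z_{r-1},\rM_{r+1},\ldots,\rM_\pcd)$ via the round-$r$ speaking order, and factorize.

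The route is genuinely different in \emph{language} from the paper's, though the underlying content is the same. The paper argues in the information-theoretic idiom: it applies the data processing inequality to pass from $\mi{\rM_{r+1}(v)}{\rP_{r+1}(s)\mid \cdot}$ to $\mi{\rM_{r+1}}{\rM_r\mid\cdot}$, then repeatedly applies \Cref{prop:info-decrease} to strip the conditioning and arrive at $\mi{\rM_{r+1}}{\rM_r}=0$, and finally invokes the equivalence of zero mutual information and conditional independence. Your version replaces data processing by a direct ``function of an independent variable'' argument and replaces the \Cref{prop:info-decrease} peeling by the multi-party rectangle property stated as a fact about the conditional joint law, followed by a factorization. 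These are essentially the same argument viewed in two different coordinate systems: the paper's proof is internally consistent with the information-theoretic toolkit it reuses for \Cref{lem:xor-independence}, \Cref{clm:pc-ind1}, \Cref{clm:tag2}, and \Cref{clm:qr+1}, so reusing the same moves keeps the whole section uniform; your version is more elementary and self-contained, making it clearer that the claim is purely structural (no bound on communication cost is used, as you correctly point out). Either formulation is fine here; the information-theoretic version is the one the paper needs anyway for the later, quantitative parts (notably \Cref{lem:clutter}), which is presumably why the authors keep that vocabulary throughout.
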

\begin{proof}
	We have, 
	\begin{align*}
		\mi{\rM_{r+1}(v)}{\rP_{r+1}(s) \mid \rZ_{r-1},\rProt^r_{r+1}} &\leq \mi{\rM_{r+1}}{\rM_r \mid \rZ_{r-1},\rProt^r_{r+1}} 
		 \tag{by data processing inequality (\itfacts{data-processing}) as $\rM_r$ and $\rZ_{r-1}$ determine $\rP_{r+1}(s)$ (conditionally)} \\
		&\leq \mi{\rM_{r+1}}{\rM_r}, 
	\end{align*}
	where the inequality, and the rest of the proof is exactly as in~\Cref{clm:pc-ind1}. 
\end{proof}

\begin{claim}\label{clm:tag2}
	For any choice of $Z_{r-1}$, 
	\[
	\rP_{r+1}(s) \perp \rProt^r_{r+1} \mid Z_{r-1}.
	\]
\end{claim}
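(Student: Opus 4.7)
The plan is to show $\mi{\rP_{r+1}(s)}{\rProt^r_{r+1} \mid \rZ_{r-1}} = 0$, which by \itfacts{info-zero} immediately yields the desired conditional independence. This will follow the exact same template used in \Cref{lem:xor-independence}, \Cref{clm:pc-ind1}, and \Cref{clm:tag1}: first replace both sides of the mutual information with ``simpler'' variables via the data-processing inequality, and then peel off the conditioning on the transcript $\rZ_{r-1}$ using \Cref{prop:info-decrease} until one is left with an unconditional mutual information that vanishes by the independence of the input matchings.

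First I would observe two deterministic-functional relationships. On one hand, conditioned on $\rZ_{r-1}$ (which contains $\rP_r(s)$), we have $\rP_{r+1}(s) = \rM_r(\rP_r(s))$, so $\rP_{r+1}(s)$ is a function of $\rM_r$. On the other hand, $\rProt^r_{r+1}$ is produced by player $Q_{r+1}$, who in round $r$ has input $\rM_{r+1}$ and has observed the board content $(\rProt^1,\ldots,\rProt^{r-1},\rProt^{r}_{>(r+1)})$; inductively, $\rProt^{r}_{>(r+1)}$ is a deterministic function of $\rM_{r+2},\ldots,\rM_{\pcd}$ given the earlier board. Thus, given $\rZ_{r-1}$, the message $\rProt^r_{r+1}$ is a deterministic function of $(\rM_{r+1},\rM_{r+2},\ldots,\rM_{\pcd})$. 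By two applications of \itfacts{data-processing},
\[
\mi{\rP_{r+1}(s)}{\rProt^r_{r+1} \mid \rZ_{r-1}} \leq \mi{\rM_r}{\rM_{r+1},\ldots,\rM_{\pcd} \mid \rZ_{r-1}}.
\]

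Next I would peel the conditioning off $\rZ_{r-1} = (\rP_1(s),\ldots,\rP_r(s),\rProt^1,\ldots,\rProt^{r-1})$ one variable at a time, exactly as in the proof of \Cref{lem:xor-independence}. The point at each step is that every entry of $\rZ_{r-1}$ is a deterministic function of variables lying entirely on one side of the mutual information term: each $\rP_i(s)$ for $i \leq r$ is determined by $\rM_1,\ldots,\rM_{i-1}$ and the fixed $\rP_1(s)$, hence by the ``left'' side $\rM_r$ together with the already-retained conditioning; and each message $\rProt^j_i$ is determined by $\rM_i$ together with the messages preceding it on the board, so by induction on the peeling order it is a function of one side plus earlier conditioning. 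Hence dropping each item from the conditioning only increases the mutual information (by \Cref{prop:info-decrease}), and we end up with $\mi{\rM_r}{\rM_{r+1},\ldots,\rM_{\pcd}}$. This is zero because the matchings are independently sampled in the hard distribution, which completes the proof by \itfacts{info-zero}.

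The only subtle point I anticipate is the bookkeeping in the peeling step, since the transcript mixes messages from players on both sides of the cut $\{r\}$ vs.\ $\{r+1,\ldots,\pcd\}$: some $\rProt^j_i$ for $i \leq r$ are functions of the ``left'' side $\rM_r$ (together with $\rM_1,\ldots,\rM_{r-1}$, whose effect is absorbed by the $\rP$-pointers already conditioned upon), while those for $i \geq r+1$ are functions of the ``right'' side $\rM_{r+1},\ldots,\rM_\pcd$. As long as the peeling is carried out from the most recently written message backwards (so that each item being dropped is indeed determined by one side together with the remaining conditioning), \Cref{prop:info-decrease} applies at every step. This is a routine variant of the calculations already executed in \Cref{lem:xor-independence} and \Cref{clm:pc-ind1}, so I do not expect any genuinely new technical difficulty here.
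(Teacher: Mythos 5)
Your proof follows the paper's template: apply the data-processing inequality to reduce to a mutual information between the matchings, peel off the conditioning $\rZ_{r-1}$ using \Cref{prop:info-decrease}, and conclude from independence of the matchings. Your data-processing step is in fact more careful than the paper's own annotation, which asserts that ``$\rM_{r+1}$ determines $\rProt^r_{r+1}$ conditioned on $\rZ_{r-1}$''; that is not quite right, since the board read by $Q_{r+1}$ in round $r$ also contains $\rProt^r_{>r+1}$, which is a function of $\rM_{r+2},\ldots,\rM_{\pcd}$ not present in $\rZ_{r-1}$. Your bound $\mi{\rM_r}{\rM_{r+1},\ldots,\rM_{\pcd}}[\rZ_{r-1}]$ is exactly what data processing delivers, so you improve on the paper's presentation here.

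The justification you offer for the peeling step, however, does not hold as written. You claim $\rP_i(s)$ for $i \le r$ is determined ``by the left side $\rM_r$ together with the already-retained conditioning,'' with $\rM_1,\ldots,\rM_{r-1}$ ``absorbed by the $\rP$-pointers.'' Neither is correct: for $i\le r$, $\rP_i(s)$ is a function of $\rM_1,\ldots,\rM_{r-1}$ only (not of $\rM_r$), and the pointers expose just a single path through $\rM_1,\ldots,\rM_{r-1}$, not those matchings themselves, so a round-$j$ message $\rProt^j_i$ with $i<r$ is not determined by the retained conditioning either. The peeling template of \Cref{lem:xor-independence}, \Cref{clm:pc-ind1}, and \Cref{clm:qr+1} relies on the two sides of the mutual-information term \emph{partitioning} all the players' inputs, so that every conditioned variable is a function of exactly one side plus the remaining conditioning (after first completing the transcript via \Cref{prop:info-increase}, as \Cref{clm:qr+1} spells out). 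In your term, $\rM_1,\ldots,\rM_{r-1}$ lie on neither side of the cut $\{\rM_r\}$ versus $\{\rM_{r+1},\ldots,\rM_{\pcd}\}$. The short repair is to pass first to $\mi{\rM_r}{\rM_{-r}}[\rZ_{r-1}]$, where $\rM_{-r}:=(\rM_1,\ldots,\rM_{r-1},\rM_{r+1},\ldots,\rM_{\pcd})$ (which only increases the mutual information by the chain rule), and then peel as in \Cref{clm:qr+1} to reach $\mi{\rM_r}{\rM_{-r}}=0$.
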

\begin{proof}
	We have, 
	\begin{align*}
		\mi{\rP_{r+1}(s)}{\rProt^r_{r+1} \mid \rZ_{r-1}} &\leq \mi{\rM_{r}}{\rM_{r+1} \mid \rZ_{r-1}} 
		 \tag{by applying (\itfacts{data-processing}) as $\rM_{r+1}$ determines $\rProt^r_{r+1}(s)$		  conditioned on $\rZ_{r-1}$} \\
		&\leq \mi{\rM_{r+1}}{\rM_r}, 
	\end{align*}
	where the inequality and the rest of the proof is exactly as in~\Cref{clm:pc-ind1}. 
\end{proof}

So far, we have proved that, 
\begin{align}
	\Ex_{Z_r}\tvd{\rP_{r+2}(s)}{(\rP_{r+2}(s) \mid Z_r)} \leq \Ex_{Z_{r-1},\Prot^r_{r+1}}\,\Ex_{v \sim \rP_{r+1}(s) \mid Z_{r-1}}\tvd{\rP_{r+2}(s)}{(\rM_{r+1}(v) \mid Z_{r-1},\Prot^r_{r+1})} \label{eq:pc-so-far}. 
\end{align}
We now apply the induction hypothesis to the expected term of $v \sim \rP_{r+1}(s) \mid Z_{r-1}$. In particular, let $\unif_{r+1}$ denote the uniform distribution over $V_{r+1}$ which is also $\rP_{r+1}(s)$ \emph{without} any conditioning. 
By~\Cref{fact:tvd-small}, we have that, 
\begin{align}
	&\Ex_{Z_{r-1},\Prot^r_{r+1}}\,\Ex_{v \sim \rP_{r+1}(s) \mid Z_{r-1}}\tvd{\rP_{r+2}(s)}{(\rM_{r+1}(v) \mid Z_{r-1},\Prot^r_{r+1})} \notag \\
	&\qquad \leq \Ex_{Z_{r-1},\Prot^r_{r+1}}\bracket{\Ex_{v \sim \unif_{r+1}}\tvd{\rP_{r+2}(s)}{(\rM_{r+1}(v) \mid Z_{r-1},\Prot^r_{r+1})} + \tvd{\unif_{r+1}}{(\rP_{r+1}(s) \mid Z_{r-1})}}\notag \\
	&\qquad = \Ex_{Z_{r-1},\Prot^r_{r+1}}\Ex_{v \sim \unif_{r+1}}\tvd{\rP_{r+2}(s)}{(\rM_{r+1}(v) \mid Z_{r-1},\Prot^r_{r+1})} + \gamma \cdot \frac{r-1}{\pcd}, \label{eq:pc-good-one}
\end{align}
by the induction hypothesis for round $r-1$ as $\unif_{r+1} = \rP_{r+1}(s)$. 

This conclude the first half of the proof of~\Cref{lem:pc-ind}. In the remaining half, we bound the first term in the RHS of~\Cref{eq:pc-good-one}. It is worth mentioning that in the first part, we did not deal with the fact that input of each player is a random matching as opposed to a random function. So, this part is more or less an extension of two-party approaches for pointer chasing, say~\cite{NisanW91,Yehudayoff16}, to the multi-party setting. From here however, we 
depart from the prior approaches to take into account the distribution of each  $M_i$ being a random matching as opposed to a random function. 

The following claim shows that we can further narrow down the task to player $Q_{r+1}$ 
by removing everything on the RHS that is not function of this player even from prior rounds. Formally, 

\begin{claim}\label{clm:qr+1}
	For any choice of $(\Prot^{1}_{r+1},\ldots,\Prot^r_{r+1})$ (messages of $Q_{r+1}$ so far), and any $v \in V_{r+1}$, 
	\[
		\rM_{r+1}(v) \perp \rZ_{r-1} \mid (\Prot^{1}_{r+1},\ldots,\Prot^r_{r+1}). 
	\]
\end{claim}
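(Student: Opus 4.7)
The plan is to apply the same mutual-information-based strategy used in \Cref{clm:pc-ind1}, \Cref{clm:tag1}, and \Cref{clm:tag2}, combined with the kind of round-by-round peeling employed in \Cref{lem:xor-independence}. Write $\Pi := (\rProt^1_{r+1},\ldots,\rProt^r_{r+1})$ and $\rM_{-(r+1)} := (\rM_1,\ldots,\rM_r,\rM_{r+2},\ldots,\rM_\pcd)$; I will argue that $\mi{\rM_{r+1}(v)}{\rZ_{r-1}}[\Pi]=0$, from which the claimed conditional independence follows via \itfacts{info-zero}.

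The first step is a data-processing reduction. Conditioned on $\Pi$, the random variable $\rZ_{r-1}=(\rP_1(s),\ldots,\rP_r(s),\rProt^1,\ldots,\rProt^{r-1})$ becomes a deterministic function of $\rM_{-(r+1)}$: the pointers depend only on $\rM_1,\ldots,\rM_{r-1}$, which are all contained in $\rM_{-(r+1)}$, and the messages through round $r-1$ can be replayed inductively from $\rM_{-(r+1)}$ together with the already-fixed $Q_{r+1}$-messages that each other player would read off the blackboard. Since $\rM_{r+1}(v)$ is trivially a function of $\rM_{r+1}$, applying \itfacts{data-processing} on both sides yields
\[
\mi{\rM_{r+1}(v)}{\rZ_{r-1}}[\Pi] \;\leq\; \mi{\rM_{r+1}}{\rM_{-(r+1)}}[\Pi].
\]

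The second step is to peel the components of $\Pi$ off the conditioning one round at a time, in reverse order, via repeated applications of \Cref{prop:info-decrease}, in exact analogy with the peeling argument in the proof of \Cref{lem:xor-independence}. The key observation is that each $\rProt^j_{r+1}$ is a deterministic function of $\rM_{r+1}$ together with the blackboard $B^j_{r+1}$ that $Q_{r+1}$ observes at the start of round $j$, and $B^j_{r+1}$ is in turn a deterministic function of $\rM_{-(r+1)}$ and the earlier messages $\rProt^{<j}_{r+1}$. This dependence structure supplies exactly the conditional independence needed to strip each $\rProt^j_{r+1}$ from the conditioning without increasing the right-hand side, and iterating from $j=r$ down to $j=1$ eventually leaves $\mi{\rM_{r+1}}{\rM_{-(r+1)}}$, which equals $0$ since the input matchings across distinct layers are sampled independently.

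The delicate point—mirroring that in \Cref{lem:xor-independence}—is to verify the hypothesis of \Cref{prop:info-decrease} at each peeling step: because $\rProt^j_{r+1}$ genuinely mixes information about $\rM_{r+1}$ and $\rM_{-(r+1)}$ through the blackboard, one must carefully track the order in which messages are removed and use the fact that $\rProt^j_{r+1}$ is deterministically fixed once the other variables already in scope are conditioned on, so that its removal cannot raise the information. This is where the specific order-of-speakers structure enters—$Q_{r+1}$ only ever sees an already-assembled prefix of the round's messages before writing—mirroring the analogous bookkeeping in \Cref{lem:xor-independence} and requiring no additional ideas beyond it.
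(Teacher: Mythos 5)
There is a genuine gap in your second step. You propose to directly peel the messages $\rProt^j_{r+1}$ off the conditioning in $\mi{\rM_{r+1}}{\rM_{-(r+1)}}[\Pi]$, from $j=r$ down to $j=1$, via repeated use of \Cref{prop:info-decrease}. But the hypothesis of \Cref{prop:info-decrease} (namely that the variable being removed is conditionally independent of one side of the mutual-information term given the other side and the remaining conditioning) is exactly what fails here, and you in fact diagnose the reason yourself: ``$\rProt^j_{r+1}$ genuinely mixes information about $\rM_{r+1}$ and $\rM_{-(r+1)}$ through the blackboard.'' Concretely, given $\rM_{-(r+1)}$ and $\Pi_{<j}$ the board $\rB^j_{r+1}$ is fixed, so $\rProt^j_{r+1}$ is a non-trivial function of $\rM_{r+1}$, which means $\rM_{r+1}\not\perp\rProt^j_{r+1}\mid\rM_{-(r+1)},\Pi_{<j}$; and given $\rM_{r+1}$ and $\Pi_{<j}$ the board $\rB^j_{r+1}$ still varies with $\rM_{-(r+1)}$, so $\rM_{-(r+1)}\not\perp\rProt^j_{r+1}\mid\rM_{r+1},\Pi_{<j}$. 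Neither orientation of \Cref{prop:info-decrease} applies, and there is no ordering of removals that rescues it. The reason the peeling in \Cref{lem:xor-independence} goes through is that there the conditioning contains the \emph{entire} board $\rB^p$, so when a message is peeled, the board content that precedes it is still present in the conditioning and renders the message a deterministic function of a single player's input. Here you only condition on $Q_{r+1}$'s part of the board, so the preceding board content is missing and that structure is not available.

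The paper resolves this by moving in the opposite direction first: it uses \Cref{prop:info-increase} to \emph{add} the missing messages $\rProt^j_{<r+1},\rProt^j_{>r+1}$ of the other players to the conditioning, reconstructing the full rounds $\rProt^1,\ldots,\rProt^r$, and only then peels the whole board off as in \Cref{clm:pc-ind1} (equivalently, as in \Cref{lem:xor-independence}). In other words, the correct sequence is data-processing $\to$ \emph{enlarge} the conditioning to the full blackboard $\to$ peel the full blackboard; whereas you attempt data-processing $\to$ peel the partial conditioning $\Pi$ directly. Your first (data-processing) step matches the paper, but without the enlargement step the second step has no valid conditional-independence justification and the proof does not close.
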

\begin{proof}
	We have, 
	\begin{align*}
		\mi{\rM_{r+1}(v)}{\rZ_{r-1} \mid \rProt^1_{r+1},\ldots,\rProt^r_{r+1}} &= \mi{\rM_{r+1}(v)}{\rProt^1_{-(r+1)},\ldots,\rProt^r_{-(r+1)},\rP_1(s),\ldots,\rP_{r}(s) \mid \rProt^1_{r+1},\ldots,\rProt^r_{r+1}} 
		\tag{these are the remaining variables in $\rZ_{r-1}$ after conditioning} \\
		&\leq \mi{\rM_{r+1}}{\rM_{-(r+1)} \mid \rProt^1_{r+1},\ldots,\rProt^r_{r+1}} 
		\tag{by data processing inequality (\itfacts{data-processing}) as new variables determine old ones} \\
		&\leq \mi{\rM_{r+1}}{\rM_{-(r+1)} \mid \rProt^1_{>r+1},\rProt^1_{r+1},\ldots,\rProt^r_{r+1}} 
		\tag{by~\Cref{prop:info-increase} as $\rM_{r+1} \perp \rProt^1_{>r+1}$ by the independence of players' inputs} \\
		&\leq \mi{\rM_{r+1}}{\rM_{-(r+1)} \mid \rProt^1_{<r+1},\rProt^1_{>r+1},\rProt^1_{r+1},\ldots,\rProt^r_{r+1}} 
		\tag{by~\Cref{prop:info-increase} as $\rM_{r+1} \perp \rProt^1_{<r+1} \mid \rProt^1_{>r+1},\rProt^1_{r+1}$ by the independence of players' inputs} \\
		&= \mi{\rM_{r+1}}{\rM_{-(r+1)} \mid \rProt^1, \rProt^2_{r+1}, \ldots,\rProt^r_{r+1}} \tag{as $\rProt^1_{<r+1},\rProt^1_{>r+1},\rProt^1_{r+1} = \rProt^1$} \\
		&\leq \mi{\rM_{r+1}}{\rM_{-(r+1)} \mid \rProt^1, \ldots, \rProt^r} \tag{using exactly the same argument as above} \\
		&\leq \mi{\rM_{r+1}}{\rM_{-(r+1)}}, 
	\end{align*}
	where the inequality and the rest of the proof is exactly as in~\Cref{clm:pc-ind1}. 
\end{proof}
\noindent
By~\Cref{clm:qr+1}, we can simplify the first term of RHS of~\Cref{eq:pc-good-one} as follows:
\begin{align*}
	&\Ex_{Z_{r-1},\Prot^r_{r+1}}\Ex_{v \sim \unif_{r+1}}\tvd{\rP_{r+2}(s)}{(\rM_{r+1}(v) \mid Z_{r-1},\Prot^r_{r+1})} \\
	&\qquad = \Ex_{Z_{r-1},\Prot^r_{r+1}}\Ex_{v \sim \unif_{r+1}}\tvd{\rP_{r+2}(s)}{(\rM_{r+1}(v) \mid \Prot^1_{r+1},\ldots,\Prot^r_{r+1})} \\
	&\qquad =  \Ex_{\Prot^1_{r+1},\ldots,\Prot^r_{r+1}}\Ex_{v \sim \unif_{r+1}}\tvd{\rP_{r+2}(s)}{(\rM_{r+1}(v) \mid \Prot^1_{r+1},\ldots,\Prot^r_{r+1})}.
\end{align*}
Notice that now the RHS of the above term is purely a function of player $Q_{r+1}$. Abstractly, the question is to understand how much $Q_{r+1}$ can change the distribution of a \emph{random} edge in their input by their messages in the first $r$ rounds. 
Intuitively, this should not be much if the size of messages are small. We now formalize this in the following lemma. 

\begin{lemma}\label{lem:clutter}
	Assuming  communication cost of $\pi$ is $s < \paren{\frac1{100r} \cdot (\frac\gamma \pcd)^4 \cdot \pcw}$, 
	\[
		\Ex_{\Prot^1_{r+1},\ldots,\Prot^r_{r+1}}\Ex_{v \sim \unif_{r+1}}\tvd{\rP_{r+2}(s)}{(\rM_{r+1}(v) \mid \Prot^1_{r+1},\ldots,\Prot^r_{r+1})} < \gamma \cdot \frac{1}{\pcd}. 
	\]
\end{lemma}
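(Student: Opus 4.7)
By the conditional independence established in \Cref{clm:qr+1}, the transcript $T := (\rProt^1_{r+1},\ldots,\rProt^r_{r+1})$ can be treated as a function of $\rM := \rM_{r+1}$ alone (any dependence on the other players' messages can be absorbed into $Q_{r+1}$'s private randomness and then fixed by averaging), with total output length at most $rs$ bits. Also, $\rP_{r+2}(s)$ is uniform on $V_{r+2}$ since $\rM$ is a uniformly random matching and $s \in V_1$ is fixed. The lemma thus reduces to a self-contained matching-compression inequality: for such $\rM$ and $T = T(\rM)$,
\[
\Ex_{T,v}\tvd{\rM(v)\mid T}{\unif_{V_{r+2}}} \leq O\bigl((rs/\pcw)^{1/4}\bigr),
\]
which under the hypothesis $s < (\gamma/\pcd)^4\pcw/(100\,r)$ is strictly less than $\gamma/\pcd$ as required.

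My plan to prove this compression inequality is a KL chain-rule/Pinsker argument tailored to the matching structure. The naive route---Pinsker on each marginal $\rM(v)\mid T$ summed over $v$---fails, because for matchings $\sum_v \mi{\rM(v)}{T}$ carries an intrinsic additive $\Theta(\pcw)$ baseline from $\sum_v H(\rM(v)) - H(\rM) = \pcw\log\pcw - \log(\pcw!) = \Theta(\pcw)$ that swamps the useful $\mi{\rM}{T} \leq rs$. To sidestep this, I would use the KL chain rule along a vertex ordering $v_1,\ldots,v_{\pcw}$ of $V_{r+1}$:
\[
\sum_{i=1}^{\pcw}\Ex_{T,\rM(v_{<i})}\kl{\rM(v_i)\mid T, \rM(v_{<i})}{\unif_{V_{r+2}\setminus \rM(v_{<i})}} \;=\; \mi{\rM}{T} \;\leq\; rs,
\]
where the conditional measure on the left is compared against the \emph{residual} uniform on the $\pcw-i+1$ values not yet used, which eliminates the $\Theta(\pcw)$ baseline cleanly. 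Applying Pinsker to each summand and Cauchy--Schwarz across $i$ gives a ``prefix-conditional'' TV bound of order $\sqrt{rs/\pcw}$.

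The principal obstacle is then translating this prefix-conditional bound into the \emph{marginal} TV bound $\tvd{\rM(v)\mid T}{\unif_{V_{r+2}}}$ that the lemma asks about---the two quantities differ because the conditional is uniform over a support of size $\pcw - i + 1$ whereas the marginal is compared against the full $\unif_{V_{r+2}}$, and the gap between them does not vanish even unconditionally (it reflects the matching baseline). I would handle this via a coupling/triangle argument: couple $\rM(v)\mid T$ to $\rM(v)\mid T, \rM(v_{<})$ through a sample of the prefix posterior, paying an incremental TV cost which itself satisfies an analogous prefix-conditional bound by an inductive application. Summing these incremental couplings along the chain and applying Jensen introduces one additional square root, degrading the rate from $\sqrt{rs/\pcw}$ to $(rs/\pcw)^{1/4}$. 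Plugging in $rs < (\gamma/\pcd)^4\pcw/100$ yields $\Ex_{T,v}\tvd \leq O\bigl((\gamma/\pcd)/100^{1/4}\bigr) < \gamma/\pcd$, finishing the proof. The hardest part is controlling the prefix-to-marginal transition in a way that respects the matching baseline, and this is exactly where the weaker $1/4$-th-power exponent (reflected in the $(\gamma/\pcd)^4$ in the lemma's hypothesis) appears.
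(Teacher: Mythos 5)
You correctly identify the central obstacle (the $\Theta(\pcw)$ matching baseline in $\sum_v\HH(\rM(v))-\HH(\rM)$, which is exactly the content of the Remark inside the paper's proof), and your opening reduction to a self-contained matching-compression inequality is sound, including the averaging over the other players' matchings to make the transcript a deterministic function of $\rM_{r+1}$. The paper, however, takes a different route to kill the baseline: it applies the Entropy Subset Inequality (Fact~\ref{fact:esi}) to bound $\frac{1}{|D|}\sum_v\HH(\rM(v)\mid\rProt)$ from below by $\frac{1}{\beta}\big(\HH(\rM_S)-\HH(\rProt)\big)$ averaged over $\beta$-subsets $S$, computes $\HH(\rM_S)=\log\frac{|D|!}{(|D|-\beta)!}$ directly for a random partial matching, and optimizes $\beta=\sqrt{rs|D|}$. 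This bounds the \emph{marginal} entropy $\HH(\rM(v)\mid\rProt)$ in one shot, with no reference to a vertex ordering.

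Your alternative is a genuine gap at the prefix-to-marginal step, and I do not believe the sketched coupling/induction closes. The chain rule cleanly gives $\sum_i \Ex_{T,\rM(v_{<i})}\kl{\rM(v_i)\mid T,\rM(v_{<i})}{\unif(V_{r+2}\setminus\rM(v_{<i}))}=\mi{\rM}{\rProt}\le rs$, and Pinsker turns each summand into a TV bound against the \emph{residual} uniform. But the lemma asks for $\tvd{\rM(v)\mid T}{\unif_{V_{r+2}}}$. Marginalizing the prefix gives
\[
\distribution{\rM(v_i)\mid T}=\Ex_{\rM(v_{<i})\mid T}\big[\distribution{\rM(v_i)\mid T,\rM(v_{<i})}\big],
\]
and the triangle inequality leaves a second term $\tvd{\Ex_{\rM(v_{<i})\mid T}[\unif(V_{r+2}\setminus\rM(v_{<i}))]}{\unif_{V_{r+2}}}$, which is precisely a measure of how far the \emph{joint} prefix posterior $\rM(v_{<i})\mid T$ is from being balanced over $V_{r+2}$. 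This is the same kind of quantity you are trying to bound (it is not smaller, and it is about a size-$(i-1)$ set rather than a single coordinate), so the proposed ``inductive application'' is circular rather than recursive in a decreasing parameter. The claimed extra square root does not fall out of any Jensen step you have written down; it is an artifact of the paper's $\beta$-optimization and Pinsker, not of a prefix-to-marginal coupling. If you want to pursue your decomposition, you would need to first prove (with a separate argument) that $\rM(v_{<i})\mid T$ is approximately exchangeable/balanced, at which point you would essentially be rederiving a Shearer-type inequality — the tool the paper invokes directly.
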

\begin{proof}
	To avoid the clutter, throughout this proof (only), we are going to drop all the subscripts and superscripts on $r$ and in particular denote: 
	\begin{align*}
		Q &:= Q_{r+1}  &\text{(the player $Q_{r+1}$ we are focusing on)} \\
		\Prot &:= \Prot^1_{r+1},\ldots,\Prot^r_{r+1}  &\text{(the messages of player $Q_{r+1}$)} \\
		M &:= M_{r+1} &\text{(the input of player $Q_{r+1}$ which is a random matching)} \\
		D &:= V_{r+1}&\text{(domain of $M_{r+1}$ interpreted as a function)} \\
		R &:= V_{r+2}&\text{(range of $M_{r+1}$ interpreted as a function)} \\
		\unif_D &= \unif_{r+1} &\text{(uniform distribution on the domain $D$)} \\
		\unif_R &= \rP_{r+2}(s) &\text{(uniform distribution on the range $R$)}. 
	\end{align*}
	Thus, with this notation, our goal is to prove that 
	\[
		\Ex_{\Prot}\,\Ex_{v \sim \unif_{D}}\tvd{\unif_R}{(\rM(v) \mid \Prot)} \leq \gamma \cdot \frac{1}{\pcd}. 
	\]
	Firstly, by Pinsker's inequality (\Cref{fact:pinskers}), 
\begin{align}
	&\Ex_{\Prot}\,\Ex_{v \sim \unif_{D}}\tvd{\unif_R}{(\rM(v) \mid \Prot)} = \Ex_{\Prot}\,\Ex_{v \sim \unif_{D}}\tvd{(\rM(v) \mid \Prot)}{\rM(v)} \tag{symmetry of $\tvd{}{}$ and since unconditional distribution of $\rM(v) = \unif_R$ for all $v$} \\
	&\hspace{2cm}\leq \Ex_{\Prot}\,\Ex_{v \sim \unif_{D}}\sqrt{\frac12 \cdot \kl{\rM(v) \mid \Prot}{\rM(v)}} \tag{by~\Cref{fact:pinskers}} \\
	&\hspace{2cm}\leq \sqrt{\frac12 \Ex_{\Prot}\,\Ex_{v \sim \unif_{D}}\kl{\rM(v) \mid \Prot}{\rM(v)}} \tag{by concavity of $\sqrt{\cdot}$ and Jensen's inequality} \\
	&\hspace{2cm}= \sqrt{\frac12 \cdot \frac{1}{\card{D}} \cdot \sum_{v \in D} \Ex_{\Prot}\,\kl{\rM(v) \mid \Prot}{\rM(v)}} \tag{as $\unif_D$ is uniform distribution  over $D$} \\
	&\hspace{2cm}= \sqrt{\frac12 \cdot \frac{1}{\card{D}} \cdot \sum_{v \in D} \mi{\rM(v)}{\rProt}}, \label{eq:pc-ent}
%	&\hspace{2cm}= \sqrt{\frac12 \cdot \frac{1}{c} \cdot \sum_{v \in V_r} \paren{\log{c}-\en{\rM_r(v) \mid \rZ_{r-1},\rProt_{r-1}}}}, \label{eq:pc-ent}
\end{align}
where the last inequality  is by~\Cref{fact:kl-info}. We thus need to bound the mutual information in the RHS above. 
For any $v \in D$, 
\begin{align}
	\mi{\rM(v)}{\rProt} = \en{\rM(v)} - \en{\rM(v) \mid \rProt} = \log{\card{D}} -  \en{\rM(v) \mid \rProt}, \label{eq:pc-above}
\end{align}
as $\rM(v)$ is uniform over $D$ and thus we can apply~\itfacts{uniform}. Let us take a quick detour that would help put the rest of the argument in some context. 

\begin{remark*}
It is tempting now to apply the sub-additivity of entropy (\itfacts{sub-additivity}) to have,
	\[
	\sum_{v} \en{\rM(v) \mid \rProt} \geq \en{\rM \mid \rProt} \geq \en{\rM} - \en{\rProt}; 
	\]
	the problem with this approach is that $\rM$ is a random matching and not a random function and thus $\en{\rM} = \log{(\card{D}!)} = \card{D}\log{\card{D}} - \Theta(\card{D})$ by Stirling's approximation. This means that ignoring even subtraction of 
	$\en{\rProt}$, which means even if $Q$ does not communicate(!), the bound we get on the average entropy is $\log{\card{D}} - \Theta(1)$; plugging this in~\Cref{eq:pc-above} will bound the information term by $\Theta(1)$ which is too much (and quite loose 	as without $\rProt$, this term should be zero). 
\end{remark*}
	
	We will use Entropy Subset Inequality\footnote{We note that this part of the proof can also be done using a similar argument in~\cite{AssadiKSY20} on high-entropy random permutations, although we found the current approach simpler and more direct.} of~\Cref{fact:esi} to bound~\Cref{eq:pc-above}. Let $\beta \leq \card{D}/2$ be an integer to be determined later. For any $S \subseteq D$, define $\rM_S := \set{\rM(v) \mid v \in S}$.
	By Entropy Subset Inequality (\Cref{fact:esi}),
	\begin{align*}
		\frac{1}{\card{D}} \sum_{v \in D} \en{\rM(v) \mid \rProt_{r}} &\geq \frac{1}{\binom{\card{D}}{\beta}} \sum_{S \subseteq D: \card{S}=\beta} \frac{\en{\rM_S \mid \rProt}}{\beta} \\
		&\geq  \frac{1}{\binom{\card{D}}{\beta}} \sum_{S \subseteq D: \card{S}=\beta} \frac{\en{\rM_S} - \en{\rProt}}{\beta} \\  \tag{by the chain rule of entropy (\itfacts{ent-chain-rule})} \\
		&= \frac{\log{\paren{\frac{\card{D}!}{(\card{D}-\beta)!}}} - \en{\rProt}}{\beta} \tag{because $\rM_S$ is a uniform partial matching with one endpoint in $S$} \\
		&\geq \log{(\card{D}-\beta)}-\frac{\en{\rProt}}{\beta} \tag{as $\frac{a!}{(a-b)!} \geq (a-b)^{b}$} \\
		&\geq  \log{(\card{D}-\beta)}-\frac{r \cdot s}{\beta}. \tag{as $\Prot$ consists of $r$ messages of size $s$ bits each and by using \itfacts{uniform}}
	\end{align*}
	
	The rest of the proof is simply calculations. By plugging this bound in~\Cref{eq:pc-above}, we have, 
	\begin{align*}
		\frac1{\card{D}} \cdot \sum_{v \in D} \mi{\rM_r(v)}{\rProt} &\leq \log{\card{D}}- \log{(\card{D}-\beta)} + \frac{r \cdot s}{\beta} \\
		&= \log{\paren{1+\frac{\beta}{\card{D}-\beta}}} + \frac{r \cdot s}{\beta} \\
		&\leq \frac{\beta}{\card{D}-\beta} \cdot \log{(e)} + \frac{r \cdot s}{\beta} \tag{as $1+x \leq e^x$} \\
		&\leq \frac{4\beta}{\card{D}} + \frac{r\cdot s}{\beta} \tag{as $\beta \leq \card{D}/2$ and $\log(e) \leq 2$}. 
	\end{align*} 
	We can now set $\beta = \sqrt{r \cdot s \cdot \card{D}} < \card{D}/2$ and obtain that 
	\[
		\frac1{\card{D}} \cdot \sum_{v \in D} \mi{\rM_r(v)}{\rProt} \leq 5\sqrt{\frac{r \cdot s}{\card{D}}}. 
	\]
	Plugging in further in~\Cref{eq:pc-ent} proves that, 
	\[
		\Ex_{\Prot}\,\Ex_{v \sim \unif_{D}}\tvd{\unif_R}{(\rM(v) \mid \Prot)} \leq \sqrt{\frac12 \cdot 5\sqrt{\frac{r \cdot s}{\card{D}}}} < \gamma \cdot \frac1\pcd, 
	\]
	as $s < \paren{\frac1{100r} \cdot (\frac\gamma \pcd)^4 \cdot \pcw}$ and $\card{D} = \pcw$. This concludes the proof. \Qed{lem:clutter} 
	
\end{proof}

By using~\Cref{lem:clutter} and~\Cref{clm:qr+1} in~\Cref{eq:pc-good-one}, we obtain that, 
\[
	\Ex_{Z_{r-1},\Prot^r_{r+1}}\,\Ex_{v \sim \rP_{r+1}(s) \mid Z_{r-1}}\tvd{\rP_{r+2}(s)}{(\rM_{r+1}(v) \mid Z_{r-1},\Prot^r_{r+1})}< \gamma \cdot \frac1\pcd + \gamma \cdot \frac{r-1}\pcd = \gamma \cdot \frac r \pcd. 
\]
Plugging this bound into~\Cref{eq:pc-so-far}, we get that, 
\[
	\Ex_{Z_r}\tvd{\rP_{r+2}(s)}{(\rP_{r+2}(s) \mid Z_r)} < \gamma \cdot \frac{r}{\pcd},
\]
finalizing the proof of the induction step and thus~\Cref{lem:pc-ind}. 

We now conclude the proof of~\Cref{prop:pc-lower} as follows.

\begin{proof}[Proof of~\Cref{prop:pc-lower}]
	If $p > \pcd-1$ we are already done so let us assume that the total number of rounds of the protocol (or passes of the streaming algorithm) is $p=\pcd-1$. 
	By~\Cref{lem:pc-ind} for the last round $r=p = \pcd-1$, 
	\begin{align*}
		\Ex_{Z_p}\tvd{\rP_{\pcd+1}(s)}{(\rP_{\pcd+1}(s) \mid Z_p)} < \gamma.
	\end{align*}
	On the other hand, the last message of the protocol $\prot$, included in $Z_p$, specifies the answer, which depends on $\rP_{\pcd+1}$. Let $O(Z_p) \in \set{X,Y}$ denote the answer of the protocol.
	As such, 
	\begin{align*}
		\Pr\paren{\text{$\prot$ is correct}} &= \Ex_{Z_p}\Pr_{P_{\pcd+1} \mid Z_p}\paren{P_{d+1}(s) \in O(Z_p)} \\
		&=  \Ex_{Z_p}\Pr_{P_{\pcd+1} \mid Z_p} \paren{\text{$P_{\pcd+1}(s) \in$ a fixed choice of $X$ or $Y$}} \tag{because $O(Z_p)$ is deterministically fixed} \\
		&\leq \Ex_{Z_p} \bracket{\Pr_{P_{\pcd+1}} \paren{\text{$P_{\pcd+1}(s) \in$ a fixed choice of $X$ or $Y$}} + \tvd{\rP_{\pcd+1}(s)}{(\rP_{\pcd+1}(s) \mid Z_p)}}  \tag{by~\Cref{fact:tvd-small}} \\
		&= \frac12 + \Ex_{Z_p}\tvd{\rP_{\pcd+1}(s)}{(\rP_{\pcd+1}(s) \mid Z_p)} \tag{as $X,Y$ is an equipartition of $V_{\pcd+1}$ and $P_{\pcd+1}(s)$ is uniform over $V_{\pcd+1}$}\\
		&< \frac12 + \gamma, 
	\end{align*}
	by the first equation above. This concludes the proof. %\Qed{prop:pc-lower}
\end{proof}

\subsection*{Acknowledgement} 
The first author is grateful to Gillat Kol, Raghuvansh Saxena, and Huacheng Yu, for their previous collaboration in~\cite{AssadiKSY20} that was the starting point of this project, and 
to David Wajc and Huacheng Yu for illuminating discussions regarding the Streaming XOR Lemma that also prompted us to include~\Cref{sec:xor-examples}.

\bibliographystyle{alpha}
\bibliography{general}

\newcommand{\etalchar}[1]{$^{#1}$}
\begin{thebibliography}{BRWY13}

\bibitem[ACK19]{AssadiCK19}
Sepehr Assadi, Yu~Chen, and Sanjeev Khanna.
\newblock Polynomial pass lower bounds for graph streaming algorithms.
\newblock In {\em Proceedings of the 51st Annual {ACM} {SIGACT} Symposium on
  Theory of Computing, {STOC} 2019, Phoenix, AZ, USA, June 23-26, 2019.}, pages
  265--276, 2019.

\bibitem[AG09]{AhnG09}
Kook~Jin Ahn and Sudipto Guha.
\newblock Graph sparsification in the semi-streaming model.
\newblock In {\em Automata, Languages and Programming, 36th Internatilonal
  Colloquium, {ICALP} 2009, Rhodes, Greece, July 5-12, 2009, Proceedings, Part
  {II}}, pages 328--338, 2009.

\bibitem[AGM12]{AhnGM12a}
Kook~Jin Ahn, Sudipto Guha, and Andrew McGregor.
\newblock Analyzing graph structure via linear measurements.
\newblock In {\em Proceedings of the Twenty-third Annual ACM-SIAM Symposium on
  Discrete Algorithms}, SODA '12, pages 459--467. SIAM, 2012.

\bibitem[AKL17]{AssadiKL17}
Sepehr Assadi, Sanjeev Khanna, and Yang Li.
\newblock On estimating maximum matching size in graph streams.
\newblock In {\em Proceedings of the Twenty-Eighth Annual {ACM-SIAM} Symposium
  on Discrete Algorithms, {SODA} 2017, Barcelona, Spain, Hotel Porta Fira,
  January 16-19}, pages 1723--1742, 2017.

\bibitem[AKSY20]{AssadiKSY20}
Sepehr Assadi, Gillat Kol, Raghuvansh~R. Saxena, and Huacheng Yu.
\newblock Multi-pass graph streaming lower bounds for cycle counting, max-cut,
  matching size, and other problems.
\newblock {\em CoRR}, abs/2009.03038. To appear in FOCS 2020, 2020.

\bibitem[BC17]{BeraC17}
Suman~K. Bera and Amit Chakrabarti.
\newblock Towards tighter space bounds for counting triangles and other
  substructures in graph streams.
\newblock In {\em 34th Symposium on Theoretical Aspects of Computer Science,
  {STACS} 2017, March 8-11, 2017, Hannover, Germany}, pages 11:1--11:14, 2017.

\bibitem[BCK{\etalchar{+}}18]{BravermanCKLWY18}
Vladimir Braverman, Stephen~R. Chestnut, Robert Krauthgamer, Yi~Li, David~P.
  Woodruff, and Lin~F. Yang.
\newblock Matrix norms in data streams: Faster, multi-pass and row-order.
\newblock In {\em Proceedings of the 35th International Conference on Machine
  Learning, {ICML} 2018, Stockholmsm{\"{a}}ssan, Stockholm, Sweden, July 10-15,
  2018}, pages 648--657, 2018.

\bibitem[BDV18]{BhaskaraDV18}
Aditya Bhaskara, Samira Daruki, and Suresh Venkatasubramanian.
\newblock Sublinear algorithms for {MAXCUT} and correlation clustering.
\newblock In {\em 45th International Colloquium on Automata, Languages, and
  Programming, {ICALP} 2018, July 9-13, 2018, Prague, Czech Republic}, pages
  16:1--16:14, 2018.

\bibitem[BFKP16]{BulteauFKP16}
Laurent Bulteau, Vincent Froese, Konstantin Kutzkov, and Rasmus Pagh.
\newblock Triangle counting in dynamic graph streams.
\newblock {\em Algorithmica}, 76(1):259--278, 2016.

\bibitem[BGGS19]{BafnaGGS19}
Mitali Bafna, Badih Ghazi, Noah Golowich, and Madhu Sudan.
\newblock Communication-rounds tradeoffs for common randomness and secret key
  generation.
\newblock In {\em Proceedings of the Thirtieth Annual {ACM-SIAM} Symposium on
  Discrete Algorithms, {SODA} 2019, San Diego, California, USA, January 6-9,
  2019}, pages 1861--1871, 2019.

\bibitem[BGW20]{BravermanGW20}
Mark Braverman, Sumegha Garg, and David~P. Woodruff.
\newblock The coin problem with applications to data streams.
\newblock {\em Electron. Colloquium Comput. Complex.}, 27. To appear in FOCS
  2020, 2020.

\bibitem[BJKS02]{Bar-YossefJKS02}
Ziv Bar{-}Yossef, T.~S. Jayram, Ravi Kumar, and D.~Sivakumar.
\newblock An information statistics approach to data stream and communication
  complexity.
\newblock In {\em 43rd Symposium on Foundations of Computer Science {(FOCS}
  2002), 16-19 November 2002, Proceedings}, pages 209--218, 2002.

\bibitem[BKLS20]{BrodyKLS20}
Joshua Brody, Jae~Tak Kim, Peem Lerdputtipongporn, and Hariharan Srinivasulu.
\newblock A strong {XOR} lemma for randomized query complexity.
\newblock {\em CoRR}, abs/2007.05580, 2020.

\bibitem[BKS02]{Bar-YossefKS02}
Ziv Bar{-}Yossef, Ravi Kumar, and D.~Sivakumar.
\newblock Reductions in streaming algorithms, with an application to counting
  triangles in graphs.
\newblock In {\em Proceedings of the Thirteenth Annual {ACM-SIAM} Symposium on
  Discrete Algorithms, January 6-8, 2002, San Francisco, CA, {USA.}}, pages
  623--632, 2002.

\bibitem[BOV13]{BravermanOV13}
Vladimir Braverman, Rafail Ostrovsky, and Dan Vilenchik.
\newblock How hard is counting triangles in the streaming model?
\newblock In {\em Automata, Languages, and Programming - 40th International
  Colloquium, {ICALP} 2013, Riga, Latvia, July 8-12, 2013, Proceedings, Part
  {I}}, pages 244--254, 2013.

\bibitem[BRWY13]{BravermanRWY13}
Mark Braverman, Anup Rao, Omri Weinstein, and Amir Yehudayoff.
\newblock Direct products in communication complexity.
\newblock In {\em 54th Annual {IEEE} Symposium on Foundations of Computer
  Science, {FOCS} 2013, 26-29 October, 2013}, pages 746--755, 2013.

\bibitem[BS15]{BuryS15}
Marc Bury and Chris Schwiegelshohn.
\newblock Sublinear estimation of weighted matchings in dynamic data streams.
\newblock In {\em Algorithms - {ESA} 2015 - 23rd Annual European Symposium,
  September 14-16, 2015, Proceedings}, pages 263--274, 2015.

\bibitem[CCE{\etalchar{+}}16]{ChitnisCEHMMV16}
Rajesh Chitnis, Graham Cormode, Hossein Esfandiari, MohammadTaghi Hajiaghayi,
  Andrew McGregor, Morteza Monemizadeh, and Sofya Vorotnikova.
\newblock Kernelization via sampling with applications to finding matchings and
  related problems in dynamic graph streams.
\newblock In {\em Proceedings of the Twenty-Seventh Annual {ACM-SIAM} Symposium
  on Discrete Algorithms, {SODA} 2016, January 10-12, 2016}, pages 1326--1344,
  2016.

\bibitem[CCM08]{ChakrabartiCM08}
Amit Chakrabarti, Graham Cormode, and Andrew McGregor.
\newblock Robust lower bounds for communication and stream computation.
\newblock In {\em Proceedings of the 40th Annual {ACM} Symposium on Theory of
  Computing, May 17-20, 2008}, pages 641--650, 2008.

\bibitem[CFPS20]{CzumajFPS19}
Artur Czumaj, Hendrik Fichtenberger, Pan Peng, and Christian Sohler.
\newblock Testable properties in general graphs and random order streaming.
\newblock In {\em Approximation, Randomization, and Combinatorial Optimization.
  Algorithms and Techniques, {APPROX/RANDOM} 2020, August 17-19, 2020, Virtual
  Conference}, pages 16:1--16:20, 2020.

\bibitem[CGMV20]{ChakrabartiG0V20}
Amit Chakrabarti, Prantar Ghosh, Andrew McGregor, and Sofya Vorotnikova.
\newblock Vertex ordering problems in directed graph streams.
\newblock In {\em Proceedings of the 2020 {ACM-SIAM} Symposium on Discrete
  Algorithms, {SODA} 2020, Salt Lake City, UT, USA, January 5-8, 2020}, pages
  1786--1802, 2020.

\bibitem[CGV20]{ChouGV20}
Chi{-}Ning Chou, Alexander Golovnev, and Santhoshini Velusamy.
\newblock Optimal streaming approximations for all boolean max-2csps.
\newblock {\em CoRR}, abs/2004.11796. To appear in FOCS 2020, 2020.

\bibitem[CJ17]{CormodeJ17}
Graham Cormode and Hossein Jowhari.
\newblock A second look at counting triangles in graph streams (corrected).
\newblock {\em Theor. Comput. Sci.}, 683:22--30, 2017.

\bibitem[CJMM17]{CormodeJMM17}
Graham Cormode, Hossein Jowhari, Morteza Monemizadeh, and S.~Muthukrishnan.
\newblock The sparse awakens: Streaming algorithms for matching size estimation
  in sparse graphs.
\newblock In {\em 25th Annual European Symposium on Algorithms, {ESA} 2017,
  September 4-6, 2017}, pages 29:1--29:15, 2017.

\bibitem[CMOS11]{CzumajMOS11}
Artur Czumaj, Morteza Monemizadeh, Krzysztof Onak, and Christian Sohler.
\newblock Planar graphs: Random walks and bipartiteness testing.
\newblock In {\em {IEEE} 52nd Annual Symposium on Foundations of Computer
  Science, {FOCS} 2011, Palm Springs, CA, USA, October 22-25, 2011}, pages
  423--432, 2011.

\bibitem[CRT05]{ChazelleRT05}
Bernard Chazelle, Ronitt Rubinfeld, and Luca Trevisan.
\newblock Approximating the minimum spanning tree weight in sublinear time.
\newblock {\em {SIAM} J. Comput.}, 34(6):1370--1379, 2005.

\bibitem[CT06]{CoverT06}
Thomas~M. Cover and Joy~A. Thomas.
\newblock {\em Elements of information theory {(2.} ed.)}.
\newblock Wiley, 2006.

\bibitem[EHL{\etalchar{+}}15]{EsfandiariHLMO15}
Hossein Esfandiari, Mohammad~Taghi Hajiaghayi, Vahid Liaghat, Morteza
  Monemizadeh, and Krzysztof Onak.
\newblock Streaming algorithms for estimating the matching size in planar
  graphs and beyond.
\newblock In {\em Proceedings of the Twenty-Sixth Annual {ACM-SIAM} Symposium
  on Discrete Algorithms, {SODA} 2015, January 4-6, 2015}, pages 1217--1233,
  2015.

\bibitem[FKM{\etalchar{+}}05]{FeigenbaumKMSZ05}
Joan Feigenbaum, Sampath Kannan, Andrew McGregor, Siddharth Suri, and Jian
  Zhang.
\newblock On graph problems in a semi-streaming model.
\newblock {\em Theor. Comput. Sci.}, 348(2-3):207--216, 2005.

\bibitem[FKM{\etalchar{+}}08]{FeigenbaumKMSZ08}
Joan Feigenbaum, Sampath Kannan, Andrew McGregor, Siddharth Suri, and Jian
  Zhang.
\newblock Graph distances in the data-stream model.
\newblock {\em {SIAM} J. Comput.}, 38(5):1709--1727, 2008.

\bibitem[GH09]{GuhaH09}
Sudipto Guha and Zhiyi Huang.
\newblock Revisiting the direct sum theorem and space lower bounds in random
  order streams.
\newblock In {\em Automata, Languages and Programming, 36th International
  Colloquium, {ICALP} 2009, Rhodes, Greece, July 5-12, 2009, Proceedings, Part
  {I}}, pages 513--524, 2009.

\bibitem[GKK{\etalchar{+}}07]{GavinskyKKRW07}
Dmitry Gavinsky, Julia Kempe, Iordanis Kerenidis, Ran Raz, and Ronald de~Wolf.
\newblock Exponential separations for one-way quantum communication complexity,
  with applications to cryptography.
\newblock {\em STOC}, pages 516--525, 2007.

\bibitem[GM08]{GuhaM08}
Sudipto Guha and Andrew McGregor.
\newblock Tight lower bounds for multi-pass stream computation via pass
  elimination.
\newblock In {\em Automata, Languages and Programming, 35th International
  Colloquium, {ICALP} 2008, July 7-11, 2008, Proceedings, Part {I:} Tack {A:}
  Algorithms, Automata, Complexity, and Games}, pages 760--772, 2008.

\bibitem[GM09]{GuhaM09}
Sudipto Guha and Andrew McGregor.
\newblock Stream order and order statistics: Quantile estimation in
  random-order streams.
\newblock {\em {SIAM} J. Comput.}, 38(5):2044--2059, 2009.

\bibitem[GNW11]{GoldreichNW11}
Oded Goldreich, Noam Nisan, and Avi Wigderson.
\newblock On yao's xor-lemma.
\newblock In {\em Studies in Complexity and Cryptography. Miscellanea on the
  Interplay between Randomness and Computation - In Collaboration with Lidor
  Avigad, Mihir Bellare, Zvika Brakerski, Shafi Goldwasser, Shai Halevi, Tali
  Kaufman, Leonid Levin, Noam Nisan, Dana Ron, Madhu Sudan, Luca Trevisan,
  Salil Vadhan, Avi Wigderson, David Zuckerman}, pages 273--301. 2011.

\bibitem[GO13]{GuruswamiO13}
Venkatesan Guruswami and Krzysztof Onak.
\newblock Superlinear lower bounds for multipass graph processing.
\newblock In {\em Proceedings of the 28th Conference on Computational
  Complexity, {CCC} 2013, K.lo Alto, California, USA, 5-7 June, 2013}, pages
  287--298, 2013.

\bibitem[GRZ20]{GirishRZ20}
Uma Girish, Ran Raz, and Wei Zhan.
\newblock Lower bounds for {XOR} of forrelations.
\newblock {\em CoRR}, abs/2007.03631, 2020.

\bibitem[GS20]{GolowichS20}
Noah Golowich and Madhu Sudan.
\newblock Round complexity of common randomness generation: The amortized
  setting.
\newblock In {\em Proceedings of the 2020 {ACM-SIAM} Symposium on Discrete
  Algorithms, {SODA} 2020, Salt Lake City, UT, USA, January 5-8, 2020}, pages
  1076--1095, 2020.

\bibitem[GT19]{GuruswamiT19}
Venkatesan Guruswami and Runzhou Tao.
\newblock Streaming hardness of unique games.
\newblock In {\em Approximation, Randomization, and Combinatorial Optimization.
  Algorithms and Techniques, {APPROX/RANDOM} 2019, September 20-22, 2019,
  Massachusetts Institute of Technology, Cambridge, MA, {USA}}, pages
  5:1--5:12, 2019.

\bibitem[GVV17]{GuruswamiVV17}
Venkatesan Guruswami, Ameya Velingker, and Santhoshini Velusamy.
\newblock Streaming complexity of approximating max 2csp and max acyclic
  subgraph.
\newblock In {\em Approximation, Randomization, and Combinatorial Optimization.
  Algorithms and Techniques, {APPROX/RANDOM} 2017, August 16-18, 2017,
  Berkeley, CA, {USA}}, pages 8:1--8:19, 2017.

\bibitem[Had75]{jHadlock75}
F.~Hadlock.
\newblock Finding a maximum cut of a planar graph in polynomial time.
\newblock {\em SIAM J. Comput.}, pages 221--225, 1975.

\bibitem[HP16]{HuangP16}
Zengfeng Huang and Pan Peng.
\newblock Dynamic graph stream algorithms in o(n) space.
\newblock In {\em 43rd International Colloquium on Automata, Languages, and
  Programming, {ICALP} 2016, July 11-15, 2016, Rome, Italy}, pages 18:1--18:16,
  2016.

\bibitem[Imp95]{Impagliazzo95}
Russell Impagliazzo.
\newblock Hard-core distributions for somewhat hard problems.
\newblock In {\em 36th Annual Symposium on Foundations of Computer Science,
  Milwaukee, Wisconsin, USA, 23-25 October 1995}, pages 538--545, 1995.

\bibitem[IW97]{ImpagliazzoW97}
Russell Impagliazzo and Avi Wigderson.
\newblock \emph{P = BPP} if \emph{E} requires exponential circuits:
  Derandomizing the {XOR} lemma.
\newblock In {\em Proceedings of the Twenty-Ninth Annual {ACM} Symposium on the
  Theory of Computing, El Paso, Texas, USA, May 4-6, 1997}, pages 220--229,
  1997.

\bibitem[JPY12]{JainPY12}
Rahul Jain, Attila Pereszl{\'{e}}nyi, and Penghui Yao.
\newblock A direct product theorem for the two-party bounded-round public-coin
  communication complexity.
\newblock In {\em 53rd Annual {IEEE} Symposium on Foundations of Computer
  Science, {FOCS} 2012, October 20-23, 2012}, pages 167--176, 2012.

\bibitem[JRS03]{JainRS03}
Rahul Jain, Jaikumar Radhakrishnan, and Pranab Sen.
\newblock A direct sum theorem in communication complexity via message
  compression.
\newblock In {\em Automata, Languages and Programming, 30th International
  Colloquium, {ICALP} 2003, June 30 - July 4, 2003. Proceedings}, pages
  300--315, 2003.

\bibitem[KK15]{KoganK15}
Dmitry Kogan and Robert Krauthgamer.
\newblock Sketching cuts in graphs and hypergraphs.
\newblock In {\em Proceedings of the 2015 Conference on Innovations in
  Theoretical Computer Science, {ITCS} 2015, Rehovot, Israel, January 11-13,
  2015}, pages 367--376, 2015.

\bibitem[KK19]{KapralovK19}
Michael Kapralov and Dmitry Krachun.
\newblock An optimal space lower bound for approximating {MAX-CUT}.
\newblock In {\em Proceedings of the 51st Annual {ACM} {SIGACT} Symposium on
  Theory of Computing, {STOC} 2019, Phoenix, AZ, USA, June 23-26, 2019}, pages
  277--288, 2019.

\bibitem[KKP18]{KallaugherKP18}
John Kallaugher, Michael Kapralov, and Eric Price.
\newblock The sketching complexity of graph and hypergraph counting.
\newblock In {\em 59th {IEEE} Annual Symposium on Foundations of Computer
  Science, {FOCS} 2018, Paris, France, October 7-9, 2018}, pages 556--567,
  2018.

\bibitem[KKS14]{KapralovKS14}
Michael Kapralov, Sanjeev Khanna, and Madhu Sudan.
\newblock Approximating matching size from random streams.
\newblock In {\em Proceedings of the Twenty-Fifth Annual {ACM-SIAM} Symposium
  on Discrete Algorithms, {SODA} 2014, Portland, Oregon, USA, January 5-7,
  2014}, pages 734--751, 2014.

\bibitem[KKS15]{KapralovKS15}
Michael Kapralov, Sanjeev Khanna, and Madhu Sudan.
\newblock Streaming lower bounds for approximating {MAX-CUT}.
\newblock In {\em Proceedings of the Twenty-Sixth Annual {ACM-SIAM} Symposium
  on Discrete Algorithms, {SODA} 2015, San Diego, CA, USA, January 4-6, 2015},
  pages 1263--1282, 2015.

\bibitem[KKSV17]{KapralovKSV17}
Michael Kapralov, Sanjeev Khanna, Madhu Sudan, and Ameya Velingker.
\newblock {(1} + {\(\Omega\)}(1))-approximation to {MAX-CUT} requires linear
  space.
\newblock In {\em Proceedings of the Twenty-Eighth Annual {ACM-SIAM} Symposium
  on Discrete Algorithms, {SODA} 2017, Barcelona, Spain, Hotel Porta Fira,
  January 16-19}, pages 1703--1722, 2017.

\bibitem[KMNT20]{KapralovMNT20}
Michael Kapralov, Slobodan Mitrovic, Ashkan Norouzi{-}Fard, and Jakab Tardos.
\newblock Space efficient approximation to maximum matching size from uniform
  edge samples.
\newblock In {\em Proceedings of the 2020 {ACM-SIAM} Symposium on Discrete
  Algorithms, {SODA} 2020, Salt Lake City, UT, USA, January 5-8, 2020}, pages
  1753--1772, 2020.

\bibitem[KMPV19]{KallaugherMPV19}
John Kallaugher, Andrew McGregor, Eric Price, and Sofya Vorotnikova.
\newblock The complexity of counting cycles in the adjacency list streaming
  model.
\newblock In {\em Proceedings of the 38th {ACM} {SIGMOD-SIGACT-SIGAI} Symposium
  on Principles of Database Systems, {PODS} 2019, Amsterdam, The Netherlands,
  June 30 - July 5, 2019}, pages 119--133, 2019.

\bibitem[KN97]{KushilevitzN97}
Eyal Kushilevitz and Noam Nisan.
\newblock {\em Communication complexity}.
\newblock Cambridge University Press, 1997.

\bibitem[Lev85]{Levin85}
Leonid~A. Levin.
\newblock One-way functions and pseudorandom generators.
\newblock In {\em Proceedings of the 17th Annual {ACM} Symposium on Theory of
  Computing, May 6-8, 1985, Providence, Rhode Island, {USA}}, pages 363--365,
  1985.

\bibitem[LNW14]{LiNW14}
Yi~Li, Huy~L. Nguyen, and David~P. Woodruff.
\newblock Turnstile streaming algorithms might as well be linear sketches.
\newblock In {\em Symposium on Theory of Computing, {STOC} 2014, New York, NY,
  USA, May 31 - June 03, 2014}, pages 174--183, 2014.

\bibitem[Lov79]{Lovasz79}
L{\'{a}}szl{\'{o}} Lov{\'{a}}sz.
\newblock On determinants, matchings, and random algorithms.
\newblock In {\em Fundamentals of Computation Theory, {FCT} 1979, Proceedings
  of the Conference on Algebraic, Arthmetic, and Categorial Methods in
  Computation Theory, Berlin/Wendisch-Rietz, Germany, September 17-21, 1979},
  pages 565--574, 1979.

\bibitem[LW16]{LiW16}
Yi~Li and David~P. Woodruff.
\newblock On approximating functions of the singular values in a stream.
\newblock In {\em Proceedings of the 48th Annual {ACM} {SIGACT} Symposium on
  Theory of Computing, {STOC} 2016, Cambridge, MA, USA, June 18-21, 2016},
  pages 726--739, 2016.

\bibitem[MMPS17]{MonemizadehMPS17}
Morteza Monemizadeh, S.~Muthukrishnan, Pan Peng, and Christian Sohler.
\newblock Testable bounded degree graph properties are random order streamable.
\newblock In {\em 44th International Colloquium on Automata, Languages, and
  Programming, {ICALP} 2017, July 10-14, 2017, Warsaw, Poland}, pages
  131:1--131:14, 2017.

\bibitem[MV16]{McGregorV16}
Andrew McGregor and Sofya Vorotnikova.
\newblock Planar matching in streams revisited.
\newblock In {\em Approximation, Randomization, and Combinatorial Optimization.
  Algorithms and Techniques, {APPROX/RANDOM} 2016, September 7-9, 2016}, pages
  17:1--17:12, 2016.

\bibitem[MV18]{McGregorV18}
Andrew McGregor and Sofya Vorotnikova.
\newblock A simple, space-efficient, streaming algorithm for matchings in low
  arboricity graphs.
\newblock In {\em 1st Symposium on Simplicity in Algorithms, {SOSA} 2018,
  January 7-10, 2018}, pages 14:1--14:4, 2018.

\bibitem[MVV16]{McGregorVV16}
Andrew McGregor, Sofya Vorotnikova, and Hoa~T. Vu.
\newblock Better algorithms for counting triangles in data streams.
\newblock In {\em Proceedings of the 35th {ACM} {SIGMOD-SIGACT-SIGAI} Symposium
  on Principles of Database Systems, {PODS} 2016, San Francisco, CA, USA, June
  26 - July 01, 2016}, pages 401--411, 2016.

\bibitem[MWY13]{MolinaroWY13}
Marco Molinaro, David~P. Woodruff, and Grigory Yaroslavtsev.
\newblock Beating the direct sum theorem in communication complexity with
  implications for sketching.
\newblock In {\em Proceedings of the Twenty-Fourth Annual {ACM-SIAM} Symposium
  on Discrete Algorithms, {SODA} 2013, New Orleans, Louisiana, USA, January
  6-8, 2013}, pages 1738--1756, 2013.

\bibitem[NO08]{NguyenO08}
Huy~N. Nguyen and Krzysztof Onak.
\newblock Constant-time approximation algorithms via local improvements.
\newblock In {\em 49th Annual {IEEE} Symposium on Foundations of Computer
  Science, {FOCS} 2008, October 25-28, 2008, Philadelphia, PA, {USA}}, pages
  327--336, 2008.

\bibitem[NW91]{NisanW91}
Noam Nisan and Avi Wigderson.
\newblock Rounds in communication complexity revisited.
\newblock In {\em Proceedings of the 23rd Annual {ACM} Symposium on Theory of
  Computing, May 5-8, 1991, New Orleans, Louisiana, {USA}}, pages 419--429,
  1991.

\bibitem[PRV99]{PonzioRV99}
Stephen Ponzio, Jaikumar Radhakrishnan, and Srinivasan Venkatesh.
\newblock The communication complexity of pointer chasing: Applications of
  entropy and sampling.
\newblock In {\em Proceedings of the Thirty-First Annual {ACM} Symposium on
  Theory of Computing, May 1-4, 1999, Atlanta, Georgia, {USA}}, pages 602--611,
  1999.

\bibitem[PS18]{PengS18}
Pan Peng and Christian Sohler.
\newblock Estimating graph parameters from random order streams.
\newblock In {\em Proceedings of the Twenty-Ninth Annual {ACM-SIAM} Symposium
  on Discrete Algorithms, {SODA} 2018, New Orleans, LA, USA, January 7-10,
  2018}, pages 2449--2466, 2018.

\bibitem[PVZ12]{PhillipsVZ12}
Jeff~M. Phillips, Elad Verbin, and Qin Zhang.
\newblock Lower bounds for number-in-hand multiparty communication complexity,
  made easy.
\newblock In {\em Proceedings of the Twenty-Third Annual {ACM-SIAM} Symposium
  on Discrete Algorithms, {SODA} 2012, Kyoto, Japan, January 17-19, 2012},
  pages 486--501, 2012.

\bibitem[RS16]{RaoS16}
Anup Rao and Makrand Sinha.
\newblock A direct-sum theorem for read-once branching programs.
\newblock In {\em Approximation, Randomization, and Combinatorial Optimization.
  Algorithms and Techniques, {APPROX/RANDOM} 2016, September 7-9, 2016, Paris,
  France}, pages 44:1--44:15, 2016.

\bibitem[Sha03]{Shaltiel03}
Ronen Shaltiel.
\newblock Towards proving strong direct product theorems.
\newblock {\em Comput. Complex.}, 12(1-2):1--22, 2003.

\bibitem[She11]{Sherstov11}
Alexander~A. Sherstov.
\newblock Strong direct product theorems for quantum communication and query
  complexity.
\newblock In {\em Proceedings of the 43rd {ACM} Symposium on Theory of
  Computing, {STOC} 2011, San Jose, CA, USA, 6-8 June 2011}, pages 41--50,
  2011.

\bibitem[TS78]{Te78}
H~Te~Sun.
\newblock Nonnegative entropy measures of multivariate symmetric correlations.
\newblock 1978.

\bibitem[Tut47]{Tutte47}
William~T Tutte.
\newblock The factorization of linear graphs.
\newblock {\em Journal of the London Mathematical Society}, 1(2):107--111,
  1947.

\bibitem[VW08]{ViolaW08}
Emanuele Viola and Avi Wigderson.
\newblock Norms, {XOR} lemmas, and lower bounds for polynomials and protocols.
\newblock {\em Theory Comput.}, 4(1):137--168, 2008.

\bibitem[VY11]{VerbinY11}
Elad Verbin and Wei Yu.
\newblock The streaming complexity of cycle counting, sorting by reversals, and
  other problems.
\newblock In {\em Proceedings of the Twenty-Second Annual {ACM-SIAM} Symposium
  on Discrete Algorithms, {SODA} 2011, January 23-25, 2011}, pages 11--25,
  2011.

\bibitem[Wei15]{Weinstein15}
Omri Weinstein.
\newblock Information complexity and the quest for interactive compression.
\newblock {\em {SIGACT} News}, 46(2):41--64, 2015.

\bibitem[Yao82]{Yao82a}
Andrew~Chi{-}Chih Yao.
\newblock Theory and applications of trapdoor functions (extended abstract).
\newblock In {\em 23rd Annual Symposium on Foundations of Computer Science,
  Chicago, Illinois, USA, 3-5 November 1982}, pages 80--91, 1982.

\bibitem[Yeh16]{Yehudayoff16}
Amir Yehudayoff.
\newblock Pointer chasing via triangular discrimination.
\newblock {\em Electronic Colloquium on Computational Complexity {(ECCC)}},
  23:151, 2016.

\end{thebibliography}

\clearpage
\appendix
\part*{Appendix}
% !TeX root = main.tex 
%!TEX root = main.tex

\section{Basic Tools From Information Theory}\label{sec:info}

We now briefly introduce some definitions and facts from information theory that are needed in this paper. We refer the interested reader to the text by Cover and Thomas~\cite{CoverT06} for an excellent introduction to this field. 

For a random variable $\rA$, we use $\supp{\rA}$ to denote the support of $\rA$ and $\distribution{\rA}$ to denote its distribution. 
When it is clear from the context, we may abuse the notation and use $\rA$ directly instead of $\distribution{\rA}$, for example, write 
$A \sim \rA$ to mean $A \sim \distribution{\rA}$, i.e., $A$ is sampled from the distribution of random variable $\rA$.

We denote the \emph{Shannon Entropy} of a random variable $\rA$ by
$\en{\rA}$, which is defined as: 
\begin{align}
	\en{\rA} := \sum_{A \in \supp{\rA}} \Pr\paren{\rA = A} \cdot \log{\paren{1/\Pr\paren{\rA = A}}} \label{eq:entropy}
\end{align} 
\noindent
The \emph{conditional entropy} of $\rA$ conditioned on $\rB$ is denoted by $\en{\rA \mid \rB}$ and defined as:
\begin{align}
\en{\rA \mid \rB} := \Ex_{B \sim \rB} \bracket{\en{\rA \mid \rB = B}}, \label{eq:cond-entropy}
\end{align}
where 
$\en{\rA \mid \rB = B}$ is defined in a standard way by using the distribution of $\rA$ conditioned on the event $\rB = B$ in Eq~(\ref{eq:entropy}).

The \emph{mutual information} of two random variables $\rA$ and $\rB$ is denoted by
$\mi{\rA}{\rB}$ and  defined as:
\begin{align}
\mi{\rA}{\rB} := \en{A} - \en{A \mid  B} = \en{B} - \en{B \mid  A}. \label{eq:mi}
\end{align}
\noindent
The \emph{conditional mutual information} $\mi{\rA}{\rB \mid \rC}$ is $\en{\rA \mid \rC} - \en{\rA \mid \rB,\rC}$ and hence by linearity of expectation:
\begin{align}
	\mi{\rA}{\rB \mid \rC} = \Ex_{C \sim \rC} \bracket{\mi{\rA}{\rB \mid \rC = C}}. \label{eq:cond-mi}
\end{align}

\subsection{Useful Properties of Entropy and Mutual Information}\label{sec:prop-en-mi}

We use the following basic properties of entropy and mutual information throughout. 

\begin{fact}[cf.~\cite{CoverT06}]\label{fact:it-facts}
  Let $\rA$, $\rB$, $\rC$, and $\rD$ be four (possibly correlated) random variables.
   \begin{enumerate}
  \item \label{part:uniform} $0 \leq \en{\rA} \leq \log{\card{\supp{\rA}}}$. The right equality holds
    iff $\distribution{\rA}$ is uniform.
  \item \label{part:info-zero} $\mi{\rA}{\rB}[\rC] \geq 0$. The equality holds iff $\rA$ and
    $\rB$ are \emph{independent} conditioned on $\rC$.
  \item \label{part:cond-reduce} \emph{Conditioning on a random variable reduces entropy}:
    $\en{\rA \mid \rB,\rC} \leq \en{\rA \mid  \rB}$.  The equality holds iff $\rA \perp \rC \mid \rB$.
    \item \label{part:sub-additivity} \emph{Subadditivity of entropy}: $\en{\rA,\rB \mid \rC}
    \leq \en{\rA \mid C} + \en{\rB \mid  \rC}$.
   \item \label{part:ent-chain-rule} \emph{Chain rule for entropy}: $\en{\rA,\rB \mid \rC} = \en{\rA \mid \rC} + \en{\rB \mid \rC,\rA}$.
  \item \label{part:chain-rule} \emph{Chain rule for mutual information}: $\mi{\rA,\rB}{\rC \mid \rD} = \mi{\rA}{\rC \mid \rD} + \mi{\rB}{\rC \mid  \rA,\rD}$.
  \item \label{part:data-processing} \emph{Data processing inequality}: for a deterministic function $f(\rA)$, $\mi{f(\rA)}{\rB \mid \rC} \leq \mi{\rA}{\rB \mid \rC}$. 
   \end{enumerate}
\end{fact}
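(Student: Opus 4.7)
The statement is a compendium of standard information-theoretic identities and inequalities, all of which appear as textbook results (e.g., in Cover and Thomas~\cite{CoverT06}). My plan is therefore not to develop new machinery but to indicate the shortest self-contained derivation of each part from the definitions~\eqref{eq:entropy}--\eqref{eq:cond-mi}, relying on a single analytic workhorse: the non-negativity of KL divergence (Gibbs' inequality), which follows from Jensen's inequality applied to $-\log$.

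First I would dispatch parts~\ref{part:uniform} and~\ref{part:info-zero} together. The lower bound $\en{\rA}\geq 0$ is immediate from $\Pr(\rA=A)\in[0,1]$ in~\eqref{eq:entropy}. For the upper bound, one writes $\log\card{\supp{\rA}} - \en{\rA} = \kl{\distribution{\rA}}{\unif}$ where $\unif$ is uniform on $\supp{\rA}$, and invokes $\kl{\cdot}{\cdot}\geq 0$ with equality iff the distributions match. Part~\ref{part:info-zero} is obtained the same way: unpacking the definition gives $\mi{\rA}{\rB}[\rC] = \Ex_C[\kl{\distribution{\rA,\rB\mid\rC}}{\distribution{\rA\mid\rC}\otimes\distribution{\rB\mid\rC}}]$, which is $\geq 0$ with equality iff the conditional joint factorizes, i.e., iff $\rA\perp\rB\mid\rC$.

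Next I would handle the chain rules~\ref{part:ent-chain-rule} and~\ref{part:chain-rule} since they are purely algebraic manipulations of the definitions. For~\ref{part:ent-chain-rule}, expand $\en{\rA,\rB\mid\rC}$ using~\eqref{eq:entropy} and $\Pr(A,B\mid C)=\Pr(A\mid C)\cdot\Pr(B\mid A,C)$, then split the resulting log; the two sums are exactly $\en{\rA\mid\rC}$ and $\en{\rB\mid\rA,\rC}$ by~\eqref{eq:cond-entropy}. Part~\ref{part:chain-rule} follows by applying~\ref{part:ent-chain-rule} twice to rewrite $\mi{\rA,\rB}{\rC\mid\rD} = \en{\rA,\rB\mid\rD}-\en{\rA,\rB\mid\rC,\rD}$ and regrouping. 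From~\ref{part:ent-chain-rule} one also immediately gets $\en{\rA\mid\rB}-\en{\rA\mid\rB,\rC} = \mi{\rA}{\rC\mid\rB}\geq 0$ by part~\ref{part:info-zero}, which is part~\ref{part:cond-reduce}; the equality condition carries over from~\ref{part:info-zero}. Subadditivity~\ref{part:sub-additivity} is then $\en{\rA,\rB\mid\rC} = \en{\rA\mid\rC}+\en{\rB\mid\rA,\rC}\leq \en{\rA\mid\rC}+\en{\rB\mid\rC}$, using chain rule plus~\ref{part:cond-reduce}.

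Finally, for data processing~\ref{part:data-processing}, the cleanest route is via the chain rule: $\mi{\rA}{\rB\mid\rC} = \mi{f(\rA),\rA}{\rB\mid\rC} = \mi{f(\rA)}{\rB\mid\rC} + \mi{\rA}{\rB\mid f(\rA),\rC} \geq \mi{f(\rA)}{\rB\mid\rC}$, where the first equality uses that $f$ is deterministic (so $f(\rA)$ adds no information once $\rA$ is given) and the inequality uses~\ref{part:info-zero}. There is no real obstacle in any of this; the only mildly delicate point is keeping the conditioning consistent across parts~\ref{part:ent-chain-rule}--\ref{part:chain-rule}, which is why I would prove those first and then derive everything else as one-line corollaries. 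Given the textbook nature of the statement, the cleanest presentation is simply to note the above derivations and refer the reader to~\cite{CoverT06} for full details.
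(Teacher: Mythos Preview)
Your proposal is correct. The paper does not actually prove this fact; it simply states it as background material with a citation to Cover and Thomas~\cite{CoverT06}, so your sketch---standard derivations from Gibbs' inequality and the chain rule, concluding with a reference to the textbook---is entirely in the same spirit and in fact supplies more detail than the paper itself.
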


We also use the following generalization of sub-additivity of entropy. 

\begin{fact}[Entropy Subset Inequality~\cite{Te78}]\label{fact:esi}
	For any set of $n$ random variables $\rX_1,\ldots,\rX_n$ and set $S \subseteq [n]$, we define $\rX_S := \set{\rX_i \mid i \in S}$. For every $k \in [n]$, define: 
	\[
		\HH^{(k)}(\rX) := \frac{1}{\binom{n}{k}} \sum_{S \subseteq [n]: \card{S}=k} \frac{\en{\rX_S}}{k}.
	\]
	Then, ${\HH^{(1)}(\rX)} \geq \cdots \geq {\HH^{(n)}(\rX)}$. This equation also holds for conditional entropy. 
\end{fact}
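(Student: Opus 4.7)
The plan is to reduce the entire decreasing chain $\HH^{(1)}(\rX) \geq \cdots \geq \HH^{(n)}(\rX)$ to a single ``local'' inequality applied level by level. Specifically, I will first establish the bound
\begin{align*}
(m-1) \cdot \en{\rX_T} \leq \sum_{i \in T} \en{\rX_{T \setminus \{i\}}} \qquad \text{for every } T \subseteq [n] \text{ with } |T| = m,
\end{align*}
and then deduce $\HH^{(k+1)}(\rX) \leq \HH^{(k)}(\rX)$ by averaging this bound over all $T$ of size $k+1$ and swapping the order of summation.

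The local inequality is the classical first Han inequality and follows from two complementary uses of the chain rule for entropy. First, \itfacts{ent-chain-rule} in the form $\en{\rX_T} = \en{\rX_{T \setminus \{i\}}} + \en{\rX_i \mid \rX_{T \setminus \{i\}}}$ summed over $i \in T$ yields $m \cdot \en{\rX_T} = \sum_{i \in T} \en{\rX_{T \setminus \{i\}}} + \sum_{i \in T} \en{\rX_i \mid \rX_{T \setminus \{i\}}}$. Second, fixing any ordering of $T$ and expanding $\en{\rX_T}$ as a telescoping chain of conditional entropies with growing conditioning prefixes, then applying \itfacts{cond-reduce} to enlarge each conditioning set to $T \setminus \{i\}$, gives $\en{\rX_T} \geq \sum_{i \in T} \en{\rX_i \mid \rX_{T \setminus \{i\}}}$. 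Subtracting the second bound from the first identity produces exactly $(m-1)\en{\rX_T} \leq \sum_{i \in T} \en{\rX_{T \setminus \{i\}}}$.

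For the averaging step, I will apply the local inequality with $m = k+1$ to every $T$ with $|T| = k+1$ and sum. In the resulting double sum on the right, each $S$ of size $k$ is of the form $T \setminus \{i\}$ for exactly $n - k$ pairs $(T,i)$, since $i$ must be chosen from $[n] \setminus S$ and then $T = S \cup \{i\}$ is forced. Hence the right-hand side collapses to $(n-k) \sum_{|S| = k} \en{\rX_S}$. Using the identity $\binom{n}{k+1}(k+1) = \binom{n}{k}(n-k)$ to match normalizations then gives $\HH^{(k+1)}(\rX) \leq \HH^{(k)}(\rX)$ directly, and chaining these bounds across $k = 1, \ldots, n-1$ produces the full decreasing sequence.

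The conditional version demands no new argument: both ingredients (the chain rule and \itfacts{cond-reduce}) hold verbatim under an additional conditioning random variable $\rY$, and the combinatorial double counting is oblivious to what is conditioned on, so replacing $\en{\cdot}$ by $\en{\cdot \mid \rY}$ throughout carries the proof through. The only place needing any attention is the bookkeeping of binomial coefficients and the multiplicity $n-k$ in the final step, which is routine arithmetic rather than a genuine obstacle.
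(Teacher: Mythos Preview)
Your proof is correct and follows the standard argument for Han's inequality. Note, however, that the paper does not actually prove this statement: it is listed as a cited \textbf{Fact} from~\cite{Te78} in the appendix on information-theoretic tools, with no accompanying proof. So there is no ``paper's own proof'' to compare against; you have simply supplied a valid proof where the paper opted to cite the result.
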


\noindent

We also use the following two standard propositions, regarding the effect of conditioning on mutual information.

\begin{proposition}\label{prop:info-increase}
  For random variables $\rA, \rB, \rC, \rD$, if $\rA \perp \rD \mid \rC$, then, 
  \[\mi{\rA}{\rB \mid \rC} \leq \mi{\rA}{\rB \mid  \rC,  \rD}.\]
\end{proposition}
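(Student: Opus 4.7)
The plan is to apply the chain rule for mutual information (\itfacts{chain-rule}) to the joint variable $(\rB, \rD)$ in two different orders, compare the two expansions, and then use the independence hypothesis together with non-negativity of mutual information.

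Concretely, I would first expand $\mi{\rA}{\rB, \rD \mid \rC}$ in two ways. Using the chain rule with $\rD$ taken first gives
\[
\mi{\rA}{\rB, \rD \mid \rC} = \mi{\rA}{\rD \mid \rC} + \mi{\rA}{\rB \mid \rC, \rD},
\]
while taking $\rB$ first gives
\[
\mi{\rA}{\rB, \rD \mid \rC} = \mi{\rA}{\rB \mid \rC} + \mi{\rA}{\rD \mid \rC, \rB}.
\]
Equating the two right-hand sides yields the identity
\[
\mi{\rA}{\rB \mid \rC, \rD} - \mi{\rA}{\rB \mid \rC} = \mi{\rA}{\rD \mid \rC, \rB} - \mi{\rA}{\rD \mid \rC}.
\]

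Next I would invoke the hypothesis $\rA \perp \rD \mid \rC$, which by \itfacts{info-zero} implies $\mi{\rA}{\rD \mid \rC} = 0$. Substituting this in, the right-hand side reduces to $\mi{\rA}{\rD \mid \rC, \rB}$, which is non-negative by \itfacts{info-zero}. Hence
\[
\mi{\rA}{\rB \mid \rC, \rD} - \mi{\rA}{\rB \mid \rC} \geq 0,
\]
which is the desired inequality.

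There is no real obstacle here; the entire argument is a two-line manipulation using \itfacts{chain-rule} and \itfacts{info-zero}. The only thing to be mindful of is that the independence $\rA \perp \rD \mid \rC$ must be used in its ``raw'' form (equivalent to $\mi{\rA}{\rD \mid \rC}=0$) rather than being strengthened to $\rA \perp \rD \mid \rC, \rB$, which need not hold; in particular, the additional term $\mi{\rA}{\rD \mid \rC, \rB}$ may be strictly positive, and it is precisely this term that quantifies the increase in information about $\rA$ that knowledge of $\rB$ can ``unlock'' from $\rD$.
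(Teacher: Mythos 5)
Your proof is correct, but it takes a genuinely different route from the paper. The paper's proof works at the level of conditional entropies: it uses $\rA \perp \rD \mid \rC$ to get $\HH(\rA \mid \rC) = \HH(\rA \mid \rC, \rD)$, then applies \itfacts{cond-reduce} to get $\HH(\rA \mid \rC, \rB) \geq \HH(\rA \mid \rC, \rB, \rD)$, and subtracts. Your proof instead uses the chain rule (\itfacts{chain-rule}) to expand $\mi{\rA}{\rB,\rD \mid \rC}$ in two orders (tacitly using the symmetry of mutual information, since the paper states the chain rule on the first argument), cancels to get the identity $\mi{\rA}{\rB \mid \rC, \rD} - \mi{\rA}{\rB \mid \rC} = \mi{\rA}{\rD \mid \rC, \rB} - \mi{\rA}{\rD \mid \rC}$, and then applies the hypothesis and \itfacts{info-zero}. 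A small advantage of your approach is that it produces an exact expression for the gap, $\mi{\rA}{\rD \mid \rC, \rB}$, which makes the interpretive remark at the end of your write-up precise; the paper's entropy-difference proof is a hair more elementary in that it only invokes the definition of mutual information and monotonicity of conditional entropy rather than the chain rule.
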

 \begin{proof}
  Since $\rA$ and $\rD$ are independent conditioned on $\rC$, by
  \itfacts{cond-reduce}, $\HH(\rA \mid  \rC) = \HH(\rA \mid \rC, \rD)$ and $\HH(\rA \mid  \rC, \rB) \ge \HH(\rA \mid  \rC, \rB, \rD)$.  We have,
	 \begin{align*}
	  \mi{\rA}{\rB \mid  \rC} &= \HH(\rA \mid \rC) - \HH(\rA \mid \rC, \rB) = \HH(\rA \mid  \rC, \rD) - \HH(\rA \mid \rC, \rB) \\
	  &\leq \HH(\rA \mid \rC, \rD) - \HH(\rA \mid \rC, \rB, \rD) = \mi{\rA}{\rB \mid \rC, \rD}. \qed
	\end{align*}
	
\end{proof}

\begin{proposition}\label{prop:info-decrease}
  For random variables $\rA, \rB, \rC,\rD$, if $ \rA \perp \rD \mid \rB,\rC$, then, 
  \[\mi{\rA}{\rB \mid \rC} \geq \mi{\rA}{\rB \mid \rC, \rD}.\]
\end{proposition}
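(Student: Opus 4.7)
The plan is to mirror the short proof of Proposition~\ref{prop:info-increase}, but using the chain rule for mutual information instead of directly manipulating entropies. The key identity is a two-way expansion of $\mi{\rA}{\rB, \rD \mid \rC}$.

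First I would apply \itfacts{chain-rule} in two different orderings:
\[
\mi{\rA}{\rB, \rD \mid \rC} \;=\; \mi{\rA}{\rB \mid \rC} + \mi{\rA}{\rD \mid \rB, \rC} \;=\; \mi{\rA}{\rD \mid \rC} + \mi{\rA}{\rB \mid \rC, \rD}.
\]
Next I would use the hypothesis $\rA \perp \rD \mid \rB, \rC$ together with \itfacts{info-zero} to conclude that $\mi{\rA}{\rD \mid \rB, \rC} = 0$. Substituting into the identity above gives
\[
\mi{\rA}{\rB \mid \rC} = \mi{\rA}{\rD \mid \rC} + \mi{\rA}{\rB \mid \rC, \rD}.
\]
Finally, since $\mi{\rA}{\rD \mid \rC} \geq 0$ by \itfacts{info-zero}, rearranging yields the desired inequality.

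There is no real obstacle here: the statement is a standard information-theoretic fact and follows in a few lines from the chain rule and non-negativity of mutual information. The only thing to be mindful of is using the hypothesis in the correct chain-rule expansion — that is, splitting off $\rD$ \emph{after} $\rB$ so that the vanishing term is exactly $\mi{\rA}{\rD \mid \rB, \rC}$ rather than $\mi{\rA}{\rD \mid \rC}$ (the latter need not be zero, and is in fact the non-negative slack that makes the inequality go in the direction claimed).
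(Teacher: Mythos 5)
Your proof is correct, but it takes a genuinely different route from the paper's. The paper works directly with conditional entropies: it rewrites $\mi{\rA}{\rB\mid\rC}=\HH(\rA\mid\rC)-\HH(\rA\mid\rB,\rC)$, uses the hypothesis (via the equality case of ``conditioning reduces entropy'') to replace $\HH(\rA\mid\rB,\rC)$ by $\HH(\rA\mid\rB,\rC,\rD)$, and then uses monotonicity of conditional entropy once more to lower-bound $\HH(\rA\mid\rC)$ by $\HH(\rA\mid\rD,\rC)$. Your argument instead expands $\mi{\rA}{\rB,\rD\mid\rC}$ by the chain rule in the two possible orders, kills the term $\mi{\rA}{\rD\mid\rB,\rC}$ using the hypothesis, and reads off the inequality from non-negativity of the leftover term $\mi{\rA}{\rD\mid\rC}$. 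The chain-rule version has a small pedagogical advantage: it produces an exact identity
\[
\mi{\rA}{\rB\mid\rC}=\mi{\rA}{\rD\mid\rC}+\mi{\rA}{\rB\mid\rC,\rD},
\]
which makes the slack explicit and tells you precisely when the inequality is tight (namely when $\rA\perp\rD\mid\rC$), whereas the paper's proof bundles the slack into one application of monotonicity. The paper's version is slightly more self-contained in the sense that it only invokes the entropy form of conditioning-reduces-entropy (\itfacts{cond-reduce}) plus the definition of mutual information, and it is stylistically parallel to the proof of the companion \Cref{prop:info-increase}, which is likely why the authors chose it. Both proofs are valid; your warning at the end about which chain-rule split to use is exactly the right thing to be careful about.
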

 \begin{proof}
 Since $\rA \perp \rD \mid \rB,\rC$, by \itfacts{cond-reduce}, $\HH(\rA \mid \rB,\rC) = \HH(\rA \mid \rB,\rC,\rD)$. Moreover, since conditioning can only reduce the entropy (again by \itfacts{cond-reduce}), 
  \begin{align*}
 	\mi{\rA}{\rB \mid  \rC} &= \HH(\rA \mid \rC) - \HH(\rA \mid \rB,\rC) \geq \HH(\rA \mid \rD,\rC) - \HH(\rA \mid \rB,\rC) \\
	&= \HH(\rA \mid \rD,\rC) - \HH(\rA \mid \rB,\rC,\rD) = \mi{\rA}{\rB \mid \rC,\rD}.  \qed
 \end{align*}
 
\end{proof}

\subsection{Measures of Distance Between Distributions}\label{sec:prob-distance}

We will use the following two standard measures of distance (or divergence) between distributions. 

\paragraph{KL-divergence.} For two distributions $\mu$ and $\nu$, the \emph{Kullback-Leibler divergence} between $\mu$ and $\nu$ is denoted by $\kl{\mu}{\nu}$ and defined as: 
\begin{align}
\kl{\mu}{\nu}:= \Ex_{a \sim \mu}\Bracket{\log\frac{\Pr_\mu(a)}{\Pr_{\nu}(a)}}. \label{eq:kl}
\end{align}
The following states the relation between mutual information and KL-divergence. 
\begin{fact}\label{fact:kl-info}
	For random variables $\rA,\rB,\rC$, 
	\[\mi{\rA}{\rB \mid \rC} = \Ex_{(b,c) \sim {(\rB,\rC)}}\Bracket{ \kl{\distribution{\rA \mid \rB=b,\rC=c}}{\distribution{\rA \mid \rC=c}}}.\] 
\end{fact}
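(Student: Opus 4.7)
The identity to prove is a well-known decomposition of conditional mutual information; the plan is to reduce it to the definitions of the quantities involved and then collect terms. The cleanest route is to first use the conditional form of mutual information, namely $\mi{\rA}{\rB \mid \rC} = \Ex_{c \sim \rC}\bracket{\mi{\rA}{\rB \mid \rC=c}}$, which follows directly from the definitions in Eq.~\eqref{eq:cond-mi}. After this reduction, it suffices to prove the unconditional statement: for any two random variables $\rX, \rY$,
\[
\mi{\rX}{\rY} = \Ex_{y \sim \rY}\bracket{\kl{\distribution{\rX \mid \rY=y}}{\distribution{\rX}}},
\]
applied with $\rX = (\rA \mid \rC=c)$ and $\rY = (\rB \mid \rC=c)$; taking a further expectation over $c \sim \rC$ and using the tower property of expectation will yield the claim.

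For the unconditional identity, I would unfold both sides from first principles. Starting from the definition $\mi{\rX}{\rY} = \en{\rX} - \en{\rX \mid \rY}$ in Eq.~\eqref{eq:mi} and expanding each entropy via Eq.~\eqref{eq:entropy} and Eq.~\eqref{eq:cond-entropy}, I obtain
\[
\mi{\rX}{\rY} = \Ex_{(x,y) \sim (\rX,\rY)}\bracket{\log \frac{\Pr(\rX=x \mid \rY=y)}{\Pr(\rX=x)}}.
\]
Rewriting the outer expectation over $(x,y)$ as an iterated expectation first over $y \sim \rY$ and then over $x \sim \rX \mid \rY=y$, the inner expectation becomes exactly $\kl{\distribution{\rX \mid \rY=y}}{\distribution{\rX}}$ by the definition in Eq.~\eqref{eq:kl}. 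This gives the desired identity in the unconditional case.

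Plugging back in with $\rX \leftarrow \rA$, $\rY \leftarrow \rB$, and everything conditioned on $\rC=c$, we get
\[
\mi{\rA}{\rB \mid \rC=c} = \Ex_{b \sim \rB \mid \rC=c}\bracket{\kl{\distribution{\rA \mid \rB=b, \rC=c}}{\distribution{\rA \mid \rC=c}}}.
\]
Finally, taking expectation over $c \sim \rC$ and combining it with the inner expectation over $b \sim \rB \mid \rC=c$ (which together are just expectation over $(b,c) \sim (\rB,\rC)$) yields the stated identity. There is essentially no obstacle here: the entire proof is a chain of definitional unfoldings, and the only thing to be careful about is bookkeeping the densities/probabilities so that ratios of conditionals line up with the KL divergence as defined in Eq.~\eqref{eq:kl}. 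Since the paper uses this fact only as a utility in the proof of Lemma~\ref{lem:clutter}, a short two-line derivation should suffice.
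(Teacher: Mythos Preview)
Your derivation is correct and entirely standard. The paper itself does not prove this statement: it is listed as a \textbf{Fact} in the information-theory preliminaries appendix and is invoked without proof (as is customary for such background identities), so there is no ``paper's own proof'' to compare against. Your two-step unfolding---reducing to the unconditional case via Eq.~\eqref{eq:cond-mi}, then expanding $\mi{\rX}{\rY}$ as $\Ex_{(x,y)}[\log(\Pr(\rX=x\mid\rY=y)/\Pr(\rX=x))]$ and recognizing the inner expectation as the KL-divergence---is exactly the standard justification, and nothing is missing.
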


\paragraph{Total variation distance.} We denote the total variation distance between two distributions $\mu$ and $\nu$ on the same 
support $\Omega$ by $\tvd{\mu}{\nu}$, defined as: 
\begin{align}
\tvd{\mu}{\nu}:= \max_{\Omega' \subseteq \Omega} \paren{\mu(\Omega')-\nu(\Omega')} = \frac{1}{2} \cdot \sum_{x \in \Omega} \card{\mu(x) - \nu(x)}.  \label{eq:tvd}
\end{align}
\noindent
We use the following basic properties of total variation distance. 
\begin{fact}\label{fact:tvd-small}
	Suppose $\mu$ and $\nu$ are two distributions for $\event$, then, 
	$
	\Pr_{\mu}(\event) \leq \Pr_{\nu}(\event) + \tvd{\mu}{\nu}.
$
\end{fact}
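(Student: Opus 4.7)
The plan is to derive this fact directly from the variational definition of total variation distance given in Eq~(\ref{eq:tvd}), namely $\tvd{\mu}{\nu} = \max_{\Omega' \subseteq \Omega}\paren{\mu(\Omega') - \nu(\Omega')}$, where $\Omega$ is the common support of $\mu$ and $\nu$. This definition is exactly what the statement is asking about, just phrased for a generic subset rather than a specific event, so the proof is essentially a one-line unpacking.

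Concretely, I would proceed as follows. First, identify the event $\event$ with the subset $\Omega_\event := \set{x \in \Omega : x \in \event}$, so that $\Pr_\mu(\event) = \mu(\Omega_\event)$ and $\Pr_\nu(\event) = \nu(\Omega_\event)$. Next, observe that since $\Omega_\event$ is a particular choice of subset,
\[
\Pr_\mu(\event) - \Pr_\nu(\event) \;=\; \mu(\Omega_\event) - \nu(\Omega_\event) \;\leq\; \max_{\Omega' \subseteq \Omega}\paren{\mu(\Omega') - \nu(\Omega')} \;=\; \tvd{\mu}{\nu},
\]
and rearranging gives the claimed inequality $\Pr_\mu(\event) \leq \Pr_\nu(\event) + \tvd{\mu}{\nu}$.

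There is no real obstacle in this proof: the statement is essentially a restatement of the variational definition of $\tvd{\cdot}{\cdot}$, and one could equivalently derive it from the half-$L_1$ form $\tvd{\mu}{\nu} = \frac{1}{2}\sum_{x \in \Omega}\card{\mu(x) - \nu(x)}$ by splitting the sum at the set where $\mu(x) > \nu(x)$. The only nuance worth mentioning is that the inequality is tight (achieved by the Bayes-optimal distinguishing event), which justifies the use of $\tvd{}{}$ in later arguments such as~\Cref{fact:tvd-sample} where one passes from distinguishing advantage to total variation distance.
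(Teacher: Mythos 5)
Your proof is correct and is the natural one-line derivation from the variational definition $\tvd{\mu}{\nu} = \max_{\Omega' \subseteq \Omega}\paren{\mu(\Omega')-\nu(\Omega')}$ given in Eq~(\ref{eq:tvd}); the paper states this fact without proof precisely because it is this immediate. Nothing more needs to be said.
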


\begin{fact}\label{fact:tvd-sample}
	Suppose $\mu$ and $\nu$ are two distributions over the same support $\Omega$; then, given one sample $s$ from either $\mu$ or $\nu$, the probability we can decide whether $s$ came from $\mu$ or $\nu$ 
	is $\frac12 + \frac12\cdot\tvd{\mu}{\nu}$. 
\end{fact}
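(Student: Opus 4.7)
The plan is to reduce the optimization over all (possibly randomized) decision rules to a maximization over subsets of $\Omega$, which is precisely the definition of total variation distance given in~\eqref{eq:tvd}.

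First I would set up notation. A distinguisher is a function $f:\Omega\to\{\mu,\nu\}$, possibly randomized. Without loss of generality I may assume $f$ is deterministic, since any randomized rule is a convex combination of deterministic rules and its success probability is the corresponding convex combination of their success probabilities, so the maximum is achieved at a deterministic rule. Given a deterministic $f$, let $A=\{s\in\Omega: f(s)=\mu\}$ be the ``accept $\mu$'' region; then $A$ and $\Omega\setminus A$ partition the sample space, and conversely any $A\subseteq\Omega$ defines such an $f$.

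Next I would write out the success probability. Since the prior on which distribution is used is uniform (the problem says ``from either $\mu$ or $\nu$'' with no further specification, which I interpret as the standard $1/2,1/2$ prior), the success probability of the rule associated with $A$ is
\begin{align*}
\Pr[\text{correct}] &= \tfrac12\Pr_{s\sim\mu}[f(s)=\mu]+\tfrac12\Pr_{s\sim\nu}[f(s)=\nu] \\
&= \tfrac12\mu(A)+\tfrac12\nu(\Omega\setminus A) \\
&= \tfrac12\mu(A)+\tfrac12(1-\nu(A)) \\
&= \tfrac12+\tfrac12\bigl(\mu(A)-\nu(A)\bigr).
\end{align*}

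Finally I would take the supremum over $A$. Maximizing the last line over $A\subseteq\Omega$ yields $\tfrac12+\tfrac12\max_{A\subseteq\Omega}(\mu(A)-\nu(A))$, which by the first equality in~\eqref{eq:tvd} equals $\tfrac12+\tfrac12\tvd{\mu}{\nu}$. The optimum is attained by $A^*=\{s:\mu(s)\geq\nu(s)\}$, the maximum likelihood estimator, so the bound is tight. The only mild subtlety is justifying the reduction to deterministic rules; this is immediate once one writes a randomized rule as a distribution over deterministic rules and observes that the success probability is linear in this mixture, so the maximum lies at a vertex. No other step requires work.
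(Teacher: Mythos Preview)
Your proof is correct and is the standard argument; the paper itself states this as a fact without proof, so there is nothing to compare against.
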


Finally, the following Pinsker's inequality bounds the total variation distance between two distributions based on their KL-divergence, 

\begin{fact}[Pinsker's inequality]\label{fact:pinskers}
	For any distributions $\mu$ and $\nu$, 
	$
	\tvd{\mu}{\nu} \leq \sqrt{\frac{1}{2} \cdot \kl{\mu}{\nu}}.
	$ 
\end{fact}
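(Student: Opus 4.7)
\medskip

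\noindent\textbf{Proof proposal for Pinsker's inequality (Fact~\ref{fact:pinskers}).} The plan is to execute the classical two-step reduction: first, collapse the inequality to the Bernoulli case by means of a ``best distinguishing event,'' and second, verify the Bernoulli version by an elementary convexity argument. Throughout, I assume $\mu$ and $\nu$ have a common support (the general case reduces to this by absorbing null sets into $\kl{\mu}{\nu}=\infty$, where the inequality is trivial).

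\emph{Step 1 (reduction to two points).} Define the ``separating'' event $A^{\star} := \{x : \mu(x) \geq \nu(x)\}$. By definition of total variation (Equation~\eqref{eq:tvd}), one easily checks $\tvd{\mu}{\nu} = \mu(A^{\star}) - \nu(A^{\star})$. Set $p := \mu(A^{\star})$ and $q := \nu(A^{\star})$, so that $\tvd{\mu}{\nu} = p-q$. Next I would show that the coarsening ``$x \in A^{\star}$ vs.\ $x \notin A^{\star}$'' can only decrease KL-divergence, i.e.\ $\kl{\mu}{\nu} \geq p\log(p/q) + (1-p)\log((1-p)/(1-q))$. This is an instance of the log-sum inequality: for nonnegative $a_i, b_i$, $\sum_i a_i \log(a_i/b_i) \geq (\sum_i a_i) \log((\sum_i a_i)/(\sum_i b_i))$, applied separately to the partition $\{A^\star, A^{\star c}\}$. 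The log-sum inequality itself is a one-line consequence of Jensen's inequality applied to the convex function $t \mapsto t\log t$. Together, Step 1 reduces the problem to showing $2(p-q)^2 \leq p\log(p/q) + (1-p)\log((1-p)/(1-q))$ for $p,q \in [0,1]$.

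\emph{Step 2 (the Bernoulli case).} Fix $q \in (0,1)$ and define
\[
g(t) := (q+t)\log\!\paren{\tfrac{q+t}{q}} + (1-q-t)\log\!\paren{\tfrac{1-q-t}{1-q}} - 2t^2
\]
for $t$ in a neighborhood of $0$ with $q+t \in (0,1)$. A direct computation gives $g(0) = 0$ and $g'(0) = 0$, and
\[
g''(t) = \frac{1}{q+t} + \frac{1}{1-q-t} - 4 = \frac{1}{(q+t)(1-q-t)} - 4.
\]
Since $(q+t)(1-q-t) \leq 1/4$ by AM--GM, we get $g''(t) \geq 0$ throughout, so $g$ is convex with a minimum at $t=0$, whence $g(t) \geq 0$. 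Setting $t = p - q$ yields exactly the Bernoulli form of Pinsker. Combined with Step~1, this gives $2\tvd{\mu}{\nu}^2 \leq \kl{\mu}{\nu}$, which is equivalent to the claimed bound after taking square roots.

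\emph{Expected difficulty.} The proof is entirely mechanical once the correct decomposition is identified; the only subtlety is the Step~1 reduction, which is a version of the data processing inequality for KL-divergence and needs to be justified from scratch here since the excerpt only records data processing for mutual information (Fact~\ref{fact:it-facts}-(\ref{part:data-processing})). The cleanest way to supply it is through the log-sum inequality as above, which is self-contained. The calculus in Step~2 is completely routine, and the key algebraic trick---recognizing that $1/(x(1-x)) \geq 4$ for $x \in (0,1)$---is what makes the constant $2$ (equivalently, the $\tfrac{1}{2}$ under the square root) tight.
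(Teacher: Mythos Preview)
The paper does not actually prove this statement: Fact~\ref{fact:pinskers} is listed as a standard background result in the information-theory appendix, with no proof given (the reader is referred to Cover--Thomas). So there is no ``paper's own proof'' to compare against.

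Your argument is the standard textbook proof and is correct. The reduction in Step~1 via the log-sum inequality (i.e., data processing for KL under the coarsening $x \mapsto \mathbf{1}[x \in A^\star]$) is exactly right, and the Step~2 calculus is clean; in particular the identity $\tfrac{1}{q+t}+\tfrac{1}{1-q-t}=\tfrac{1}{(q+t)(1-q-t)}\geq 4$ is what pins down the constant. One small caveat: your computation implicitly uses the natural logarithm (the derivative of $x\ln x$ is $\ln x + 1$), so the inequality you establish is $2\,\tvd{\mu}{\nu}^2 \leq \kl{\mu}{\nu}$ with KL measured in nats. The paper's $\log$ is not explicitly based, but the stated form of Pinsker matches the nats convention, so this is consistent.
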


\subsection{XOR and Biases}\label{sec:xor-bias} 

For a $0/1$ random variable $\rX$, we define the \emph{bias} of $\rX$ as
\[
	\bias(\rX) := {\card{\Pr(\rX=0)-\Pr(\rX=1)}} = 2\argmax_{x \in \set{0,1}} \Pr(\rX=x) - 1,
\]
which measures how much $\rX$ is biased a particular value. For $b := \argmax_{x \in \set{0,1}} \Pr(\rX=x)$:
\begin{align*}
	\Pr\paren{\rX=x} = \begin{cases} 
	\frac{1}{2} + \frac{\bias(\rX)}{2} \qquad \textnormal{if $x=b$} \\ 
	\frac{1}{2} - \frac{\bias(\rX)}{2} \qquad \textnormal{if $x=1-b$}
	\end{cases}.
\end{align*}

We use the following standard equality that shows taking XOR of independent random variables significantly dampen the resulting bias\footnote{This is basically the intuition why one would expect any form of XOR Lemma to hold in the first place.}. 
\begin{proposition}\label{prop:xor-dampen-bias}
	For \emph{independent} random variables $\rX_1,\ldots,\rX_t$, 
	\[
		\bias(\rX_1 \oplus \cdots \oplus \rX_t) = \prod_{i=1}^{t} \bias(\rX_i). 
	\]
\end{proposition}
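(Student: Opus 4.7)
The plan is to use the standard $\pm 1$ encoding (the Fourier character on the Boolean cube) that turns XOR on $\{0,1\}$ into ordinary multiplication on $\{-1,+1\}$. Specifically, for each $i$, I would define the $\pm 1$-valued random variable $\rY_i := (-1)^{\rX_i}$, so that
\[
\Ex[\rY_i] \;=\; \Pr(\rX_i = 0) - \Pr(\rX_i = 1).
\]
Since $\bias(\rX_i) = \bigl|\Pr(\rX_i=0) - \Pr(\rX_i=1)\bigr|$ by definition, this immediately gives $|\Ex[\rY_i]| = \bias(\rX_i)$. Applying the same identity to the $0/1$ random variable $\rX := \rX_1 \oplus \cdots \oplus \rX_t$ yields $|\Ex[(-1)^{\rX}]| = \bias(\rX)$.

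The next step is the key algebraic identity $(-1)^{a \oplus b} = (-1)^a \cdot (-1)^b$ for all $a,b \in \{0,1\}$, which extends by induction on $t$ to
\[
(-1)^{\rX_1 \oplus \cdots \oplus \rX_t} \;=\; \prod_{i=1}^{t} \rY_i.
\]
Then, since the $\rX_i$'s are independent by assumption, so are the $\rY_i$'s, and therefore
\[
\Ex\!\left[\prod_{i=1}^{t} \rY_i\right] \;=\; \prod_{i=1}^{t} \Ex[\rY_i].
\]
Taking absolute values on both sides and substituting the two bias identities from the first paragraph yields $\bias(\rX_1 \oplus \cdots \oplus \rX_t) = \prod_{i=1}^{t} \bias(\rX_i)$, as desired.

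There is essentially no obstacle to this proof: it is a one-line Fourier-analytic computation once the $\pm 1$ encoding is set up, and the only things to check are that $\bias(\rX)$ as defined in~\Cref{sec:prelim} really coincides with $|\Ex[(-1)^{\rX}]|$ and that independence of the $\rX_i$'s transfers to independence of the $\rY_i$'s (both immediate). As a sanity check, one could alternatively prove the statement by induction on $t$: the base case $t=1$ is trivial, and for the inductive step one computes $\Pr(\rX_1 \oplus \rv{Z} = 0) - \Pr(\rX_1 \oplus \rv{Z} = 1)$ with $\rv{Z} := \rX_2 \oplus \cdots \oplus \rX_t$ via the law of total probability, obtaining $\bigl(\Pr(\rX_1=0) - \Pr(\rX_1=1)\bigr)\cdot\bigl(\Pr(\rv{Z}=0) - \Pr(\rv{Z}=1)\bigr)$ after a short expansion that uses independence. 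Either route is routine and short.
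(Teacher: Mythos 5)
Your proof is correct, and it takes a genuinely different (and arguably cleaner) route than the paper. The paper proves the $t=2$ case by directly expanding $\Pr[\rX_1 \oplus \rX_2 = b_1 \oplus b_2]$ via the law of total probability and algebraic simplification of $\frac14\bigl((1+\beta_1)(1+\beta_2)+(1-\beta_1)(1-\beta_2)\bigr) = \frac12(1+\beta_1\beta_2)$, and then handles general $t$ by a separate induction. Your approach passes to the $\pm 1$ encoding $\rY_i = (-1)^{\rX_i}$, identifies $\bias(\rX) = |\Ex[(-1)^{\rX}]|$, and then the whole statement collapses to the single fact that the expectation of a product of independent random variables factors, with the character identity $(-1)^{a\oplus b}=(-1)^a(-1)^b$ converting XOR into multiplication. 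What this buys you: no base-case computation, no induction bookkeeping, and the proof makes transparent exactly where independence is used. The paper's version is more self-contained in the sense that it never leaves $\{0,1\}$-land, which may be the reason they chose it, but both are short and elementary.
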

\begin{proof}
Let $\beta_i := \bias(\rX_i)$ and $b_i = \argmax_{x\in \{0,1\}}\Pr[\rX_i=x]$, and so $\Pr[\rX_i=b_i]=\frac 12 (1+\beta_i)$. We prove this proposition by induction. Consider the base case of $t=2$. We have, 
\begin{align*}
\Pr[\rX_1\oplus \rX_2=b_1\oplus b_2] &= \Pr[\rX_1=b_1 \wedge \rX_2=b_2] + \Pr[\rX_1=1-b_1 \wedge \rX_2=1-b_2] \\
&=\frac 12(1+\beta_1) \frac 12(1+\beta_2)+\frac 12(1-\beta_1) \frac 12(1-\beta_2) \\
&=\frac 14\paren{(1+\beta_1)(1+\beta_2)+(1-\beta_1) (1-\beta_2)} \\
&=\frac 14\paren{1+\beta_1 + \beta_2 + \beta_1\beta_2 + 1-\beta_1 -\beta_2 + \beta_1\beta_2} \\
&= \frac12\paren{1+\beta_1\beta_2} = \frac12 \paren{1+\bias(\rX_1)\bias(\rX_2)}. 
\end{align*}
Hence $\rX_1\oplus \rX_2$ has bias $\bias(\rX_1)\bias(\rX_2)$. For the induction step $k$, we can set $\rY = \rX_1 \oplus \ldots \oplus \rX_{k-1}$ and by the induction hypothesis, we have  $\bias (\rY)= \Pi_{i=1}^{k-1}\bias(\rX_i)$. 
We can then apply the above argument again for random bits $\rY$ and $\rX_k$, and conclude the proof. 
\end{proof}

% !TeX root = main.tex 
%!TEX root = main.tex

\section{Strong vs Weak XOR Lemmas and Optimality of~\Cref{thm:xor-lemma}}\label{sec:xor-examples}

The literature on XOR Lemmas distinguishes between the notion of ``weak'' XOR Lemmas vs ``Strong'' XOR Lemmas. Roughly speaking, in a weak XOR Lemma, the goal is to show that the advantage over random guessing for $\fxor$ 
is exponentially smaller than that of $f$, given \emph{the same resources} (or sometimes even less resources). In a strong XOR Lemma however, the goal is to show that this exponential drop
in the advantage happens even if we are given almost \emph{$\ell$ times more resources} than what is needed for solving $f$ itself. The reason to expect strong XOR Lemmas to hold, and they do hold in certain setting such as query
complexity~\cite{BrodyKLS20}, is that we expect the $\ell$-copy algorithm to have to ``spread'' its resources across each copies of $f$ and thus get to spend $\ell$ times less resources per each individual copy. Which of these two categories 
our~\Cref{thm:xor-lemma} belongs to?

Syntactically speaking, our Streaming XOR Lemma is clearly a weak XOR lemma as it uses the same exact resource of $p$-pass and $s$-space for both $f$ and $\fxor$. On the other hand, it is easy to see that 
one of the resources of interest here, \emph{space}, is inherently different from the other resources considered in XOR lemmas such as circuit size, communication cost, or query complexity: unlike all these resources, \emph{space is reusable}. As a result, it is pretty simple to show some 
examples where \Cref{thm:xor-lemma} is in fact optimal. For instance: 
\begin{itemize}[label=$-$, leftmargin=15pt]
	\item \emph{Example 1.} Let $f: \set{0,1}^{2m} \rightarrow \set{0,1}$ be the inner product of the first half and second half of $x \in \set{0,1}^{2m}$, i.e., $f(x) = \sum_{i=1}^{m} x_i \cdot x_{m+i} \mod 2$. 
	
	It follows from standard communication complexity lower bounds of the inner product problem (see, e.g.~\cite{KushilevitzN97}) that any single-pass $m/3$-space algorithm for $f$ over the uniform distribution $\mu$ on $\set{0,1}^{2m}$  
	has probability of success $\leq 1/2+2^{-\Theta(m)}$. 
	
	On the other hand, for the problem $\fxor(x^1,\ldots,x^\ell)$, there is a trivial single-pass $(m+O(1))$-space algorithm that succeeds with probability one: Simply compute each $f(x^i)$ using $m$ space and then maintain a running XOR of 
	these values. 
	
	This example shows that in any streaming XOR lemma, we cannot increase the space of the algorithm for $\fxor$ by more than a factor of $3$. As a result for any $\ell > 3$, there is no hope of getting any strong form of XOR lemma, at least
	for single pass algorithms. 
	
	\item \emph{Example 2.} Let $f$ be a lopsided multi-party pointer chasing problem as follows: 
	\[
		f: [m^2]^m \times \underbrace{[m^2]^{m^2} \times \cdots [m^2]^{m^2}}_r \rightarrow \set{0,1},
	\]
	and for a given $h: [m^2]^m \rightarrow [m^2]^{m^2}$ and $g^1,\ldots,g^r:  [m^2]^{m^2} \rightarrow [m^2]^{m^2}$, 
	compute the parity of 
	\[
	h(g^1(g^2(\cdots(g^r(1))))).
	\] 
	
	It again follows from standard communication complexity lower bounds for pointer chasing (e.g.~\cite{Yehudayoff16}; see also~\cite{GuhaM08} for extension to multi-party version and streaming) that over the uniform distribution
	over $h,g^1,\ldots,g^r$ (with this order of arrival in the stream), any $(r-1)$-pass streaming algorithm requires $\Omega_r(m)$ space while any $(r-2)$-pass algorithm requires $\Omega_r(m^2)$ space and both bounds are tight (because size of $h$ is small 
	enough to be ``shortcut'' by an $\Omega(m)$ space algorithm). 
	
	Again, let $s$ be such that any $(r-1)$-pass algorithm for $f$ with space $s$ only succeeds with probability, say, $\leq 2/3$. It suffices to have $s = \alpha(r) \cdot m$ for a sufficiently small $\alpha(r)$ as a function of $r$. 
	Now, for the problem $\fxor$, an algorithm that has $\ell \cdot m$ space for $\ell > \alpha(r)^{-1}$ can solve the problem even in $(r-2)$ passes breaking a strong XOR Lemma.   
	
	This example shows that in any streaming XOR lemma, we cannot increase the space of the algorithm for $\fxor$ by some large factor (which is at least proportional to the number of passes). Note that we chose $f$ in this example so that 
	the algorithm for $\fxor$ has to still use $(r-2)$ passes; we can also play with number of $h$ and $g$ functions in definition of $f$ to change this quantity to any other number of passes. This suggests that for \emph{any} number of passes, 
	one cannot hope for a strong streaming XOR Lemma. 
\end{itemize}
\noindent
The above examples point out the optimality of~\Cref{thm:xor-lemma} in various cases. More conceptually, the intuition behind XOR Lemmas is that the naive algorithm that attempts to solve each subproblem individually (and thus independently), 
is more or less optimal. In most settings, this corresponds to a strong XOR Lemma. However, as above examples suggest, in the streaming, even the naive algorithm that solve each problem individually does not need to spend 
more resources per each subproblem as space is reusable. As such, it seems that semantically speaking, our~\Cref{thm:xor-lemma} is the strongest type of XOR Lemma in our setting\footnote{Although we should note that one can consider
more general type of~\Cref{thm:xor-lemma} which, for instance, is defined by \emph{interleaving} the streams of subproblems instead of \emph{concatenating} them; such an approach is not particularly meaningful for our application in this paper but is an interesting
question on its own.}.

\begin{remark}
	It is worth pointing out  the work of Shaltiel~\cite{Shaltiel03} who gave several nice examples in showing limitations of strong XOR Lemmas in general (e.g., circuit complexity, communication complexity, or query complexity). However, those examples and ours seem entirely different. In particular, Shaltiel's examples are based on working with distributions that are ``often'' easy and just become hard with small probability, which allows the algorithm for $\fxor$ to spend more of its resources on a particular subproblem among the $\ell$ ones. Our examples do not have such a property and in fact they hold even for the notion of ``fair'' algorithms studied in~\cite{Shaltiel03} that are algorithms that use the same amount of resources for each of the $\ell$ subproblems in $\fxor$ (and break the examples of~\cite{Shaltiel03}). We believe these differences are  rooted in the different nature of space as a resource. % compared to other resources.  
\end{remark}

Finally, let us conclude this section by a (rather unrelated) comment about~\Cref{thm:xor-lemma}.

\begin{remark}
\Cref{thm:xor-lemma} is stated as a distributional lower bound, which we found more natural for our purpose in this paper. However, it can also be stated as a worst-case lower bound by applying Yao's minimax principle twice: once to get a hard distribution for $f$, then applying~\Cref{thm:xor-lemma}, and another one to get a worst-case lower bound for $\fxor$.  
\end{remark}

% !TeX root = main.tex 
%!TEX root = main.tex

\section{An  Optimal Algorithm for Noisy Gap Cycle Counting}\label{sec:alg-cycle} 

We give a short and simple proof that demonstrate the asymptotic optimality of our lower bounds for noisy gap cycle counting in~\Cref{thm:ngc}. 

\begin{proposition}\label{prop:alg-cycle}
	For any $p,k \in \IN^+$ such that $k$ divides $2p$, there is a $p$-pass streaming algorithm for Noisy Gap Cycle Counting $\NGC_{n,k}$ that uses $O_{p,k}((n/k)^{1-2p/k} \cdot \log{n})$ bits of space and outputs the correct answer with probability at least $2/3$. 
\end{proposition}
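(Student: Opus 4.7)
The plan is to implement the ``sample-and-chase'' algorithm sketched informally right after~\Cref{res:ngc}. Write $m := k/(2p) \in \IN^+$ (reading the divisibility hypothesis as $2p \mid k$, the only interpretation that makes the claimed bound non-trivial). The algorithm first samples a multiset $S$ of $s := C \cdot (n/k)^{1-1/m}$ vertices of $V$, drawn independently and uniformly, where $C = C(p,k)$ is a sufficiently large constant. Then, in each pass $j \in [p]$ and for every $v \in S$, it maintains the depth-$j$ BFS neighborhood $N_j(v)$: starting from $N_{j-1}(v)$ stored at the end of the previous pass, it scans the stream and records every edge incident to a vertex of $N_{j-1}(v)$, adding the new endpoints to obtain $N_j(v)$. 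Since $G$ has maximum degree two, $|N_p(v)| \le 2p+1$, so each sample costs $O(p \log n)$ bits and the total space is $O(sp\log n) = O_{p,k}((n/k)^{1-2p/k}\log n)$, matching the claim.

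After the $p$ passes, set $H := \bigcup_{v \in S} N_p(v)$, viewed as a subgraph of $G$, and output ``$k$-cycle case'' iff $H$ contains a simple cycle of length exactly $k$. Soundness is immediate: in the $2k$-cycle case, $G$ is a disjoint union of $(2k)$-cycles and $(k-1)$-paths, so $H \subseteq G$ contains no $k$-cycle and the algorithm is correct with probability one. For completeness in the $k$-cycle case, it suffices to show that with probability at least $2/3$, some $k$-cycle $C$ of $G$ lies entirely inside $H$. Fix any partition of $C$ into $m$ disjoint arcs of $2p$ consecutive vertices. Then $C \subseteq H$ iff each arc contains at least one sampled vertex: a sample anywhere inside an arc sweeps out the entire arc, since its depth-$p$ BFS reaches $2p$ vertices on each side of it along the cycle, and the BFS's from samples in adjacent arcs meet, so together the $m$ samples cover $C$.

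The main quantitative step is then to show that, with probability $\ge 2/3$, some $k$-cycle has a sample in each of its $m$ arcs. A fixed arc of length $2p$ is missed by all $s$ samples with probability $(1-2p/n)^s \le e^{-2ps/n}$, and conditioning on the number of samples falling in $C$ and using that the arcs of $C$ partition its $2pm = k$ vertices gives $\Pr[C \subseteq H] = \Theta_{p,k}((2ps/n)^m)$. Summing over the $2t = \Theta(n/k)$ disjoint $k$-cycles and substituting $s = C(n/k)^{1-1/m}$ together with $k = 2pm$, the expected number of fully covered cycles equals
\[
\Theta_{p,k}\!\left((n/k)\cdot(s/n)^m\right) = \Theta_{p,k}\!\left((C/m)^m\right),
\]
which can be made arbitrarily large by choice of $C = C(p,k)$. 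The step I expect to require the most care is turning this into a $2/3$ lower bound on $\Pr[N \ge 1]$, where $N$ counts fully covered $k$-cycles. Because distinct $k$-cycles are vertex-disjoint, their coverage indicators $\mathbf{1}[C \subseteq H]$ are negatively associated; once $E[N] \ge 2$, the second moment bound $\Pr[N \ge 1] \ge E[N]^2/E[N^2] \ge E[N]/(1+E[N])$ yields the claim. The cleanest rigorous route in the write-up is probably to Poissonize the sample set so that the sample counts in disjoint arcs become genuinely independent, carry out the second-moment estimate in that model, and depoissonize at the end, absorbing all resulting multiplicative constants into $O_{p,k}$.
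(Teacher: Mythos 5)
Your overall strategy matches the paper's (sample, chase depth-$p$ BFS trees over $p$ passes, accept if a $k$-cycle appears), and the space accounting and one-sided soundness are fine. The gap is in the geometric coverage criterion: you claim $C \subseteq H$ \emph{if and only if} each of the $m = k/(2p)$ arcs of $2p$ consecutive vertices of $C$ contains a sample, justified by ``a sample anywhere inside an arc sweeps out the entire arc, since its depth-$p$ BFS reaches $2p$ vertices on each side.'' A depth-$p$ BFS along a cycle reaches only $p$ vertices on each side of the sample, not $2p$, so a sample near one end of a length-$2p$ arc does not sweep the whole arc, and the ``if'' direction of your equivalence is false. Concretely, for $k=8$, $p=2$, arcs $\{v_0,\dots,v_3\}$ and $\{v_4,\dots,v_7\}$, sampling $v_3$ and $v_4$ puts one sample in each arc, yet the union of the two depth-$2$ BFS trees is $\{v_1,\dots,v_6\}$, which misses $v_0,v_7$ and the three incident edges, so $C \not\subseteq H$. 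What is true is the converse: having a sample in every length-$2p$ window is \emph{necessary} for $C \subseteq H$. Your computation therefore upper-bounds $\Pr[C\subseteq H]$, which is the wrong direction for completeness.

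The paper's sufficient condition, which you should use instead, is to fix the $m$ specific equispaced vertices $v_0, v_{2p}, v_{4p}, \dots, v_{(m-1)\cdot 2p}$ of $C$ and require that \emph{all} of them be sampled: consecutive pairs are at distance exactly $2p$, so their depth-$p$ BFS trees are contiguous and their union contains $C$. Under Bernoulli$(q)$ per-vertex sampling (as the paper does) this event has probability exactly $q^m$, and since the $k$-cycles are vertex-disjoint the events for different cycles are fully independent, giving $(1-q^m)^{2t} \le e^{-2tq^m}$ with no need for the second-moment or Poissonization machinery. With your i.i.d.-uniform $s$-sample scheme the same criterion also works, followed by the second-moment/negative-association argument you sketched, but it is strictly more work for the same bound. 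Either way the exponent remains $m$; the per-slot hitting probability drops from $\Theta(2pq)$ (your arc) to $\Theta(q)$ (one specific vertex), costing a factor of order $(2p)^m$ that is absorbed into $C(p,k)$, so the algorithm and the claimed space bound are unaffected --- only the proof of completeness needs repair.
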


For simplicity of exposition, we presented~\Cref{prop:alg-cycle} for the case when $2p \mid k$. One can trivially extends the bounds to an $(n/k)^{1-O(p/k)}$ algorithm using this result (by picking the $\tilde p< p$ such that $2 \tilde p \mid k$ and running the 
algorithm for $\tilde p$ passes). A bit more careful calculation can also give a sharper bound of $O_{p,k}((n/k)^{1-\frac{1}{\ceil{k/2p}}}\log{n})$ for every  $p \leq k$; we omit the details. 

\begin{proof}[Proof of~\Cref{prop:alg-cycle}]
	
	Consider the following algorithm: 
	
	\medskip
	
\begin{algorithm}[H]
\caption{An  optimal algorithm for the noisy gap cycle counting problem}\label{alg}
\label{algo:psi}

\begin{algorithmic}[1]

\medskip

\State Sample each vertex independently  with probability $q := 36 \cdot (n/k)^{-2p/k}$ to get a set $U$ -- if $\card{U} > 100  q\cdot n$ terminate and return FAIL. 

\smallskip

\State For every vertex $u_i \in U$, find the depth-$p$ BFS tree of $u_i$ in $G$ in $p$ passes.  

\smallskip

\State If there is a $k$-cycle among the visited BFS trees, output $G$ is a ``$k$-cycle case''; otherwise output it is a ``$(2k)$-cycle case''. 

\smallskip

\end{algorithmic}

\end{algorithm}
	
	\medskip
	
The pass complexity of~\Cref{alg} is clearly $p$. For space complexity, considering the graph consists of only cycles and paths, every depth-$p$ BFS tree can have at most $O(p)$ vertices and so the total space needed by the algorithm is 
$O(p \card{U} \log{n}) = O_{p,k}((n/k)^{1-2p/k} \log{n})$ bits by the condition on terminating the algorithm. 

We now prove the correctness. In the following, for simplicity, we remove the condition in the algorithm that terminates whenever $U$ is large; as by Markov bound this event happens with probability $1/100$ anyway, 
this can only change the final probability of success calculated this way by a value of $-1/100$. Moreover, considering the algorithm has a one-sided error, i.e., never outputs ``$k$-cycle'' on a graph that does not have a $k$-cycle, 
we only need to prove that when the graph has a $k$-cycle, the algorithm finds it with probability at least $2/3$, which we do in the following. 

Recall that $n= 6t \cdot k$ in $\NGC_{n,k}$ and we are interested in the case $G$ has $2t$ vertex-disjoint $k$-cycles. Let $C_1,\ldots,C_{2t}$ denote these cycles. Fix an arbitrary cycle $C= (v_0,v_1,\ldots,v_{k-1})$ from this list. Pick 
the following $\beta := (k/2p)$ vertices from $C$: $v_0,v_{2 \cdot p},v_{4\cdot p},\ldots,v_{2\beta \cdot p}(=v_{k})$. The distance between any consecutive pairs of vertices in this list inside $G$ is at most $2p$ and thus their depth-$p$ BFS trees
intersect with each other. As such, if we sample all these vertices in the algorithm, the set of BFS trees returned by the algorithm contains a $k$-cycle. As each vertex is sampled independently with probability $q$, the probability that this 
event happens is $q^{\beta}$. 

On the other hand, since all cycles $C_1,\ldots,C_{\beta}$ are vertex-disjoint, the above event happens for each of them independently. As such, the probability that we do not see a $k$-cycle is at most
\[
	(1-q^{\beta})^{2t} \leq \exp\paren{-q^{\beta} \cdot 2t} = \exp\paren{-(36 \cdot (k/n)^{2p/k})^{k/2p} \cdot (n/3k)} \leq \exp\paren{-12} < 1/100. 
\]
As such~\Cref{alg} outputs the correct answer with probability at least $1-2/100 > 2/3$. 
\end{proof}
% !TeX root = main.tex 
%!TEX root = main.tex

\section{A Technical Comparison with the Lower Bound of~\cite{AssadiKSY20}} \label{sec:technical-comparison}
\cite{AssadiKSY20} previously proved a multi-pass lower for the (no-noise) gap cycle counting problem. Their work starts by proving a lower bound for a generalization of pointer chasing wherein the goal is to chase multiple pointers simultaneously; this lower bound is also 
in the low-probability regime, i.e., $1/\poly(n)$ advantage over random guessing, but \emph{only works for single-pass algorithms}. This is then plugged into an intricate \emph{round/pass-elimination} argument
that, informally speaking, shows that solving the problem in $p$ passes for depth-$k$ pointers is 
as hard as solving the problem in $p-1$ passes but for depth $k/2$ instead. Applying this step inductively gives an $\Omega(\log{k})$ pass lower bound for $n^{o(1)}$-space algorithms. 
 A quick technical summary of~\cite{AssadiKSY20} is then the following: prove a ``strong'' single-pass lower bound first and then increase the number of passes without ``weakening'' the lower bound too much. 
This somewhat the exact opposite of what we do in this paper: we  prove a ``weak'' multi-pass lower bound first and then plug it into our streaming XOR Lemma to
amplify its hardness and obtain a ``strong'' lower bound via a reduction that slightly reduces the number of  passes. This natural hardness amplification approach leads to 
exponentially stronger lower bound of $\Omega(k)$ passes for $n^{o(1)}$-space algorithms with a considerably simpler proof.

It is worth pointing out two other differences (both technical and conceptual) as well. Firstly, owing to the introduction of noise, we obtain a clear reduction from pointer chasing instead of the implicit connection in~\cite{AssadiKSY20} that is scattered throughout the proof. 
Secondly, we directly consider streaming algorithms as opposed to  working with the  two-party communication model in~\cite{AssadiKSY20}
and reducing it to streaming only at the end; this is critical in our work as our XOR Lemma is for streaming algorithms and not (two-party) communication  (obtaining  XOR Lemmas in communication complexity
is an interesting open question; see, e.g.~\cite[Open Problem 6.3]{Weinstein15}).

\end{document}